\documentclass[runningheads,envcountsame]{llncs}

\newif\ifeditmode
\editmodefalse

\newif\iflipics
\lipicsfalse

\title{Episodic Loops: Finitary Event Structures and Operational Semantics for
C11 Programs with Retries}
\titlerunning{Episodic Loops}

\iflipics
  \author{Christian Kissig}
         {School of Computing, University of Kent, Canterbury, UK}
         {c.kissig@kent.ac.uk}
         {0009-0000-9396-4084}
         {}
  \author{Jay Richards}
         {School of Computing, University of Kent, Canterbury, UK}
         {}{}{}
  \author{Mark Batty}
         {School of Computing, University of Kent, Canterbury, UK}
         {m.j.batty@kent.ac.uk}
         {0000-0001-7053-4364}
         {}
  \authorrunning{Kissig, Richards, and Batty}
  \Copyright{Christian Kissig, Jay Richards, and Mark Batty}
  \ccsdesc[500]{Theory of computation → Concurrency → Concurrent algorithms}
  \ccsdesc[500]{Theory of computation → Logic and verification → Program verification}
  \ccsdesc[500]{Theory of computation → Semantics and reasoning → Program semantics}
  \keywords{relaxed memory concurrency, C11, lock-free synchronisation,
    finitary verification}
\else
  \author{Christian Kissig\inst{1}\orcidID{0009-0000-9396-4084} \and
          Jay Richards\inst{1} \and
          Mark Batty\inst{1}\orcidID{0000-0001-7053-4364}}
  \authorrunning{C. Kissig, J. Richards, and M. Batty}
  \institute{School of Computing, University of Kent, Canterbury, UK\\
    \email{\{c.kissig,m.j.batty\}@kent.ac.uk}}
\fi

\usepackage{amsmath, stmaryrd}
\iflipics\else
  \usepackage{amssymb} %
\fi
\usepackage{bussproofs}
\usepackage{cancel}
\usepackage{etoolbox}
\usepackage{color}
\usepackage[T1]{fontenc}
\usepackage{graphicx}
\usepackage{adjustbox}
\usepackage{varwidth}
\newcommand{\fitwidth}[1]{%
  \adjustbox{max width=\linewidth}{\begin{varwidth}{2\linewidth}#1\end{varwidth}}}
\usepackage{lineno}
\usepackage{microtype}
\usepackage[frozencache=true,cachedir=_minted]{minted}
\usepackage{subcaption}
\usepackage{thmtools, thm-restate}
\usepackage{tikz}
\usepackage{wasysym}
\usepackage[dvipsnames]{xcolor}
\iflipics
  \usepackage{hyperref}
\else
  \usepackage[hypertexnames=false,hidelinks]{hyperref} %
\fi
\usepackage{cleveref}

\usepackage{caption}
\usetikzlibrary{
  arrows,
  arrows.meta,
  automata,
  calc,
  cd,
  decorations.pathmorphing,
  decorations.pathreplacing,
  fit,
  matrix,
  positioning,
  shapes.geometric
}

\tikzset{
    ncbar angle/.initial=90,
    ncbar/.style={
        to path=(\tikztostart)
        -- ($(\tikztostart)!#1!\pgfkeysvalueof{/tikz/ncbar angle}:(\tikztotarget)$)
        -- ($(\tikztotarget)!($(\tikztostart)!#1!\pgfkeysvalueof{/tikz/ncbar angle}:(\tikztotarget)$)!\pgfkeysvalueof{/tikz/ncbar angle}:(\tikztostart)$)
        -- (\tikztotarget)
    },
    ncbar/.default=0.5cm,
  treenode/.style = {shape=rectangle, rounded corners, draw, align=center, top color=white, bottom color=blue!20},
  decision/.style = {treenode, diamond, aspect=2, bottom color=red!30},
  root/.style     = {treenode, font=\bfseries, bottom color=green!30},
  env/.style      = {treenode, bottom color=yellow!30},
  dummy/.style    = {circle,draw},
  square left brace/.style={ncbar=0.5cm},
  square right brace/.style={ncbar=-0.5cm}
}
\newcommand{\hide}[1]{}

\def\FWD{{\color{\fwdcolor} f}}
\def\WE{{\color{\fwdcolor} we}}
\def\fwdcolor{NavyBlue}
\def\ordop[#1]#2{\mathop{\mathord{#1}#2}}
\newcommand{\Allocs}{\mathcal A}

\newcommand{\Branches}{\mathcal B}
\newcommand{\CO}{\mathsf{co}}
\newcommand{\DP}{\dep}
\newcommand{\Deallocs}{\mathcal D}
\newcommand{\Effects}{\mathcal{W}^{\!*}}

\newcommand{\Downclosed}[1]{#1\mathord{\Downarrow}}
\newcommand{\ECO}{\mathsf{eco}}
\newcommand{\Events}{\mathcal E}

\newcommand{\Expressions}{\mathcal E}
\newcommand{\ProgExpressions}{\mathcal E_\prog}
\newcommand{\Registers}{\mathit{Reg}}
\newcommand{\FR}{\mathsf{fr}}
\newcommand{\Fences}{\mathcal F}
\newcommand{\HB}{\mathsf{hb}}

\newcommand{\Labels}{\mathit{CLabel}}

\newcommand{\Loc}{\mathit{Loc}}

\newcommand{\PSet}{\mathcal P}
\newcommand{\RF}{\mathsf{RF}}

\newcommand{\Reads}{\mathcal R}
\newcommand{\Set}{\text{Set}}

\newcommand{\Symbols}{\mathcal{S}}
\newcommand{\Threads}{T}

\newcommand{\Val}{\text{Val}}
\newcommand{\Var}{\text{Var}}
\newcommand{\Writes}{{\mathcal W}}

\newcommand{\acq}{\mathsf{acq}}

\newcommand{\bool}{\mathbb{B}}

\newcommand{\cas}{{\code{CAS}}}
\newcommand{\codefree}{\code{free}}

\newcommand{\codeskip}{\code{skip}}
\newcommand{\cond}{\mathsf{cond}}

\newcommand{\dep}{{\color{orange} \mathsf{dp}}}

\newcommand{\embeds}{\hookrightarrow}
\newcommand{\enterElse}{\code{enterElse}}
\newcommand{\enterThen}{\code{enterThen}}

\newcommand{\fadd}{\code{FAA}}
\newcommand{\freeze}{\hyperref[def:freeze]{\ifmmode\mathit{freeze}\else\textit{freeze}\fi}}
\newcommand{\funup}[3]{#1[#2:=#3]}

\newcommand{\fwdrel}[1][j]{\hyperref[def:fwd-ctx]{\xrightarrow{F_{#1}}}}

\newcommand{\horizon}{|}
\newcommand{\hp}[1]{\hyperref[hp#1]{\texttt{#1}}}
\newcommand{\idmap}{\text{id}}
\newcommand{\iter}{\hyperref[not:iter]{\text{iter}}}
\newcommand{\justifies}{\vdash}
\newcommand{\loc}{\mathsf{loc}}
\newcommand{\loopfun}{\hyperref[not:loopfun]{\mathsf{loops}}}
\newcommand{\innerloop}{\mathsf{inner}}

\newcommand{\memory}{\text{mem}}
\newcommand{\mordor}{\text{MoRDor}}
\newcommand{\nat}{\mathbb{N}}

\newcommand{\nta}{\mathsf{nta}}
\newcommand{\uafrel}{\mathsf{uaf}}
\newcommand{\poord}{\mathord{\po}}
\newcommand{\ppoord}{\mathord{\ppo}}
\newcommand{\origin}[1]{\mathit{O} (#1)}
\newcommand{\pc}{\mathsf{pc}}
\newcommand{\poinv}{\sqsupseteq}

\newcommand{\pormw}{\sqsubseteq^{\mathsf{rmw}}}

\newcommand{\po}{\sqsubseteq}

\newcommand{\ppoalias}[1][P]{\ordop[\ppo]{_\mathsf{alias}^{#1}}}

\newcommand{\ppormw}[1][P]{\ordop[\ppo]{_\mathsf{rmw}^{#1}}}

\newcommand{\pposync}{\ordop[\ppo]{_\mathsf{sync}}}
\newcommand{\ppo}{\hyperref[def:ppo]{\preceq}}
\newcommand{\pred}{\hyperref[def:ppo]{\mathsf{pred}}}

\newcommand{\prog}{\mathcal{P}}
\newcommand{\rcu}[1]{\hyperref[rcu#1]{\texttt{#1}}}
\newcommand{\relabeq}[2]{\hyperref[def:rel-eq]{\xrightarrow{#1,#2}}}
\newcommand{\rel}{\mathsf{rel}}
\newcommand{\remap}{\hyperref[def:fwd-ctx]{\ifmmode\mathit{remap}\else\textit{remap}~\fi}}
\newcommand{\restrictmap}{\mapsto}
\newcommand{\restrict}[2]{{{#1}\mathord{\upharpoonright_{#2}}}}
\newcommand{\restrictloop}[3]{{{#1}\mathord{\upharpoonright_{#2\restrictmap #3}}}}
\newcommand{\rf}{\mathsf{rf}}
\newcommand{\rlx}{\mathsf{rlx}}

\newcommand{\rmw}{\mathsf{rmw}}
\newcommand{\skipcmd}{\mathsf{skip}}
\newcommand{\smrd}{{\sc SMRD}}
\newcommand{\mrd}{{\sc MRD}}
\newcommand{\symbols}{\mathit{syms}}
\newcommand{\thread}{\mathsf{thread}}
\newcommand{\upclosed}[1]{#1\mathord{\uparrow}}
\newcommand{\valres}{\hyperref[def:es-prefix]{\mathsf v}}
\newcommand{\val}{\mathsf{val}}

\newcommand{\tstamps}{\mathbb{Q}}
\newcommand{\tst}{\mathsf{tst}}
\newcommand{\writeset}{\mathsf{W}}
\newcommand{\tview}[1][t]{\mathsf{tview}_{#1}}
\newcommand{\mview}[1][\hat w]{\mathsf{mview}_{#1}}
\newcommand{\OW}{\mathsf{OW}}
\newcommand{\anchors}{\mathsf{anc}}
\newcommand{\freshts}{\mathsf{fresh}}
\newcommand{\viewcomb}{\otimes}
\newcommand{\syncview}{\mathsf{sync}}
\newcommand{\Wrel}{\Writes_{\mathsf{rel}}}
\newcommand{\Racq}{\Reads_{\mathsf{acq}}}
\newcommand{\SW}{\mathsf{sw}}
\newcommand{\werel}[1][j]{\hyperref[def:fwd-rels]{\xrightarrow{\text{WE}_{#1}}}}

\renewcommand{\and}{\wedge}

\newcommand{\ifCond}{\code{ifCond}}

\tikzstyle{rf} = [->, dashed, thick, draw=white!50!Green]
\tikzstyle{po} = [->, draw=white!50!black]
\tikzstyle{ppo} = [->, draw=white!50!magenta]
\tikzstyle{dp} = [->, thick, draw=white!50!orange]
\tikzstyle{rmw} = [->, thick, draw=white!30!blue]
\tikzstyle{ppormw} = [->, densely dashed, draw=white!50!magenta]
\tikzstyle{conflict} = [-, draw=black!30!red, line width=0.9, decorate, decoration={zigzag, segment length=7, amplitude=2.6, pre=lineto, pre length=1pt, post=lineto, post length=1pt}]

\tikzstyle{lableft} = [xshift=-0.5cm, yshift=-0.1cm,color=green!50!black,font=\scriptsize]
\tikzstyle{labright} = [xshift=0.5cm, yshift=-0.1cm,color=green!50!black,font=\scriptsize]
\tikzstyle{basic} = [rectangle, rounded corners=3pt, thin,draw=black, fill=blue!5, sibling distance=10mm, minimum size=3ex,inner sep=0.25em]
\def\edgerf[#1]#2#3{\draw (#2) edge[style=rf,#1] node[] {\scriptsize\color{Green}rf} (#3);}
\def\edgerfopt[#1]#2#3#4{\draw (#2) edge[style=rf,#1] node[#4] {\scriptsize\color{Green}rf} (#3);}
\def\edgepo[#1]#2#3{\draw (#2) edge[style=po,#1] node[] {} (#3);}
\def\edgeppo[#1]#2#3{\draw (#2) edge[style=ppo,#1] node[] {\scriptsize\color{magenta}$\le$} (#3);}
\def\edgedp[#1]#2#3{\draw (#2) edge[style=dp,#1] node[] {\scriptsize\color{orange}dp} (#3);}
\def\edgeconflict[#1]#2#3{\draw (#2) edge[style=conflict,#1] node[] {\scriptsize\color{red}} (#3);}

\def\nodeRacq#1#2#3#4{\node (#1) at (#4) [label={[style=lableft]#1}, style=basic] {$R^\text{acq}~#2~#3$};}

\def\nodeWrel#1#2#3#4{\node (#1) at (#4) [label={[style=lableft]#1}, style=basic] {$W^\text{rel}~#2~#3$};}
\def\nodeB#1#2#3#4{\node (#1) at (#4) [label={[style=lableft]#1}, style=basic] {$[#3]$};}

\def\nodeP#1#2#3{\node (#1) at (#3) [style=basic] {#2};}

\def\semif#1#2#3{\ensuremath{\code{if}~(#1)~\{#2\}~\code{else}~\{#3\}}}
\def\semwhile#1#2{\ensuremath{\code{while}~(#1)~\{#2\}}}
\def\semseq#1#2{\ensuremath{#1\,\code{;}\,#2}}
\def\semregst#1#2{\ensuremath{\reg{#1}~\code{:=}~#2}}
\newcommand\semglobst[3][]{\ensuremath{#2~\code{:=}_{#1}~#3}}
\newcommand\semglobld[3][]{\ensuremath{\reg{#2}~\code{:=}_{#1}~#3}}
\def\semamp#1{\ensuremath{\code{\&}#1}}
\def\semderef#1{\ensuremath{\code{*}#1}}
\newcommand\semfence[1][]{\ensuremath{\code{fence}_{#1}}}
\def\semfadd[#1][#2]#3#4#5{\ensuremath{\reg{#3}~\code{:=}~\fadd_{#1,#2}(#4,#5)}}
\def\semcas[#1][#2]#3#4#5#6{\ensuremath{\reg{#3}~\code{:=}~\cas_{#1,#2}(#4,#5,#6)}}
\def\semcasf[#1][#2][#3]#4#5#6#7{\ensuremath{\reg{#4}~\code{:=}~\cas_{#1,#2,#3}(#5,#6,#7)}}
\def\semmalloc#1#2{\ensuremath{\reg{#1}~\code{:=}~\code{malloc}(#2)}}
\def\semfree#1{\ensuremath{\codefree(\reg{#1})}}
\def\sempar#1#2{\ensuremath{#1\parallel#2}}
\def\semskip{\ensuremath{\code{skip}}}

\makeatletter
\def\join[#1]#2{\@ifnextchar\bgroup{#2#1\join[#1]}{#2}}
\makeatother
\def\code#1{{\fontfamily{lmtt}\selectfont\textbf{\color{black}#1}}}
\def\reg#1{r_{#1}}
\def\expr{\varepsilon}
\def\bitand{\mathbin{\code{\&}}}
\def\bitor{\mathbin{\code{|}}}
\def\bitxor{\mathbin{\code{\textasciicircum}}}
\def\den#1{\llbracket#1\rrbracket}

\def\evalreg#1#2{\den{#1}_{#2}}
\def\evalenv#1#2{\den{#1}_{#2}}
\def\eval#1#2#3{\evalenv{#1}{[#2\,\mapsto\,#3]}}
\def\exeq{\hyperref[def:sem-equiv]{\equiv}}

\def\elab#1{\ensuremath{\textnormal{\textbf{G}}_\text{#1}}}

\newcommand{\ovrule}[2]{%
  \scalebox{0.#1}{%
    \begin{minipage}{\dimexpr\linewidth/#1*100\relax}%
      \centering#2%
    \end{minipage}}}
\newcommand{\ovsub}[5]{%
  \subcaptionbox{#4\label{#5}}[#1]{\ovrule{#2}{#3}}}

\iflipics\else
  \spnewtheorem{observation}{Observation}{\bfseries}{\rmfamily}
  \spnewtheorem{notation}{Notation}{\bfseries}{\rmfamily}
\fi

\iflipics\else
  \AtEndEnvironment{proof}{\ifvmode\noindent\fi\unskip\nobreak\hfill$\square$}
\fi

\begin{document}

\maketitle

\begin{abstract}

  Lock-free synchronisation algorithms are often implemented with fallible
  operations, such as compare-and-swap (CAS). Unbounded retry loops give these
  fallible operations an eventual success semantics. Verifying the correctness
  of such algorithms requires considering arbitrarily many failing iterations
  before a successful attempt. Retry loops thus yield large state spaces,
  compounded further by the interleavings of concurrent threads. Prior work
  discarded failing iterations, arguing that they leave no trace in the
  post-loop state.

  Modern compilers and hardware architectures optimise execution by reordering
  instructions; load-store reordering is of particular concern here.
  These reorderings may cross the boundaries of failing iterations, introducing
  subtle bugs in concurrent settings. We demonstrate one such bug, making
  use-after-free possible in a previously verified variant of Read-Copy-Update
  -- a synchronisation primitive widely adopted in the Linux Kernel -- and we
  provide and verify a fix.

  This motivates a closer study of retry loops under instruction reordering. We
  find that practical implementations of retries in lock-free algorithms adhere
  to a common pattern. We introduce \emph{episodic loops}, a semantic
  characterisation of unbounded retry loops which adhere to a syntactically
  recognisable pattern in many practical cases, and \emph{synchronisation
  points}, operations that bound both instruction reordering and state space
  within episodic loops. We show that \smrd{} -- a symbolic event structure
  semantics for C11 programs which allows for load-store reordering
  optimisations -- admits a finite representation in programs where unbounded
  loops are episodic. We further introduce a finitary operational semantics that
  allows safety properties to be verified in finitely many steps. For the
  use-after-free bug we demonstrate, this verification takes a single pass over
  the program, making it linear in the program size for safety properties of
  this shape.

  We provide a reference implementation of \smrd{} reproducing the
  use-after-free bug and verifying the fix, and mechanise the operational
  semantics together with the minimal bug and its fix in
  Isabelle/HOL~\cite{kissig2026rcuopsem}.

  \iflipics\else
    \keywords{relaxed memory concurrency \and C11 \and
      lock-free synchronisation \and finitary verification}
  \fi

\end{abstract}

\newcommand{\conditionalpagebreak}{%
  \ifeditmode
    \pagebreak
  \fi
}

\section{Introduction}

Systems code must be fast, and its correctness is critical.
Tuning for performance leads to subtle behaviour that can harbour bugs:
non-blocking synchronisation primitives are preferred over simple locks,
compiler optimisations are applied, and the code is left to the whims of the
relaxed memory model of the target processor.
Verification can ensure correctness, but that requires a faithful specification
of the programming language. Right now, language specifications like that of
C/C++ are lacking: they avoid specifying behaviours introduced by compiler
optimisation, and even allow so-called \emph{out-of-thin-air} (OOTA) values,
that prevent reasoning at
all~\cite{Batty2015TheProblem,Vafeiadis2015CompilerOpt}.
Thin-air values arise from reordering loads with later stores, irrespective of
program dependencies.
In weak memory models like RC11~\cite{rc11}, thin-air values are excluded by
forbidding \emph{load buffering}, the execution shape in which each of two
threads reads the value the other stores later in its own program order.
A verification result is only as faithful as the model it builds on. Results
established over models that forbid load buffering, such as RC11~\cite{rc11}, do
not by themselves carry over to hardware: compiling relaxed loads and stores to
plain \code{ldr}/\code{str} instructions does not preserve the absence of load
buffering~\cite{Vafeiadis2015CompilerOpt,Geeson24Telechat}, so the
load-store reordering the model rules out can still be observed on modern
processors~\cite{AlglaveEtAl2014HerdingCats}. Such guarantees are not unsound but
\emph{conditional}: soundness with respect to hardware is recovered by a stronger
compilation scheme -- for instance a branch after the load to enforce the
dependency -- or by separately proving the program free of load-buffering races.

Low-level non-blocking algorithms are used for performance, but they require
atomic synchronisation primitives such as \emph{compare-and-swap} (\cas{}) and
\emph{fetch-and-add} (\fadd{}) operations~\cite{lawsoforder}, typically within
\emph{retry loops}. Retry loops
are a foundational programming pattern which can be found across the Linux
Kernel~\cite{linux_kernel}, the Boost Libraries~\cite{boost}, Meta's Folly
library~\cite{folly}, and beyond.
This paper shows that in the presence of compiler optimisations, the failed
iterations of retry loops have a bearing on the correctness of code.

The \cas{}-in-a-retry-loop idiom is pervasive in multi-threaded systems code:
lock-free data structures such as lock-free queues and deques update shared
state through a \cas{} operation in a retry loop. In the Linux 5.6 source
tree~\cite[v5.6]{linux_kernel} alone, 238 of the 1,274 static call sites of the
\code{cmpxchg} family of compare-and-swap primitives occur inside a retry
loop.\footnote{Identified by an AST-based scan over the \code{v5.6}
tag~\cite{kissig2026caspattern}: a \code{cmpxchg}/\code{try\_cmpxchg} call site
qualifies when it is lexically nested within a \code{while}/\code{do}/\code{for}
loop (209 sites) or within a backward \code{goto}-retry region (29 sites). The
remaining call sites are single-shot CAS, initialisers, and other non-loop
uses.}
\emph{Read-Copy-Update} (RCU)~\cite{mckenney2004exploiting} is a prominent
instance, used pervasively in the Linux Kernel; while the majority of RCU code
paths rely on quiescent-state-based reclamation, RCU-style lock-free structures
synchronise with exactly this \cas{} retry pattern.
The version of the RCU algorithm in Gotsman et al.~\cite{Gotsman23grace} has
been verified over a model of C++, but only under
RC11z~\cite{Semenyuk2023RCU}, an allocation-aware variant of RC11~\cite{rc11}
in which allocation and deallocation events are modelled through write events,
and which forbids load-buffering.
That exclusion is the hole: load-store reordering is exactly the behaviour C++
leaves unspecified and that survives compilation to plain loads and stores on
hardware, so a guarantee established without it is silent on the executions
where it occurs.

Symbolic Modular Relaxed Dependencies (\smrd{})~\cite{Richards25SMRD} is a model
of C++ concurrency that fills this hole, accounting for real-world compiler
optimisations by allowing load-buffering, while still forbidding thin-air
values. The behaviours \smrd{} newly admits include an execution in which a
failed retry attempt leads to a use-after-free (UAF), in the idiomatic use of
\cas{} in algorithms like RCU and spinlocks.
The bug arises from a load-store reordering that RC11z forbids but \smrd{}
admits -- one of several real-world optimisations \smrd{} validates, alongside
store-to-load forwarding, dead/redundant store elimination, and value-range
refinement. Under RC11z the offending execution cannot occur, so the prior
verification holds for that model; the defect surfaces only under \smrd{}.
Demonstrating it nonetheless lies beyond \smrd{}'s existing reasoning, since the
reordering occurs within an \emph{unbounded} retry loop, whose event
structure~\cite{Winskel87EventStructures} is unbounded. Supplying a finite
handle on such loops is the technical problem the remainder of this paper
solves.
Our findings do not imply a fault in the implementation of RCU in the Linux
kernel; practical implementations of lock-free synchronisation algorithms
typically employ more aggressive synchronisation already. Instead, our work
contributes to the foundation of a more principled optimisation of
synchronisation algorithms, with the UAF bug serving as a succinct working
example of our verification method. The scale of this space is suggested by RCU
alone, whose API accounts for 16,931 static uses across the Linux 5.6 source
tree~\cite{McKenney20RCU18}.

Our verification work introduces the concept of \emph{episodic loops}, a common
semantic property of unbounded loops that implement retry-logic. Episodic loops
can be found in lock-free synchronisation algorithms, including
RCU~\cite{Gotsman23grace}, hazard pointers~\cite{HazardPointers},
seqlocks~\cite{hemminger2002seqlock}, and
spinlocks~\cite[\S8.5]{hennessy1996computer}. In these and other practical
applications, episodic loops adhere to a syntactic and statically recognisable
pattern. When unbounded loops are episodic, they admit a finite representation
of their event structure semantics.

\medskip \noindent Our contributions are as follows.
\begin{enumerate}

   \item We enable formal reasoning about safety properties for
      code subject to load-buffering weak memory behaviour, reconciling the
      details of the allowed behaviour with programming patterns used in
      practice to build non-locking concurrent systems components.

    \item We exhibit an execution of a previously verified implementation of
      RCU~\cite{Semenyuk2023RCU} in which an idiomatic use of Compare-and-Swap
      leads to a use-after-free (Section~\ref{sec:bug}). \smrd{} admits the
      execution and RC11z forbids it, so the prior verification remains sound
      for the model it was carried out over, and is silent on this defect. C++
      permits the reorderings the execution depends on, and a production
      compiler already performs two of the three (Section~\ref{s:eval}).

    \item We define a criterion -- \emph{episodicity} -- for unbounded loops
      that captures the intent of a retry pattern typical in lock-free
      concurrent systems code (Section~\ref{s:episodic}). Many episodic loops
      can be classified purely syntactically, without consideration of the
      complex semantics of the program.

    \item We present a finitary operational semantics for the behaviour of
      programs with unbounded loops in the relaxed concurrent setting in
      Section~\ref{s:opsem}.
      The operational semantics provides the foundation for an ergonomic proof
      method for concurrent algorithms with episodic loops.
      We show that this method can be used to prove safety properties, such as
      the absence of use-after-free.

    \item We provide a tool, \mordor{}, described in Section~\ref{s:eval}, that
      takes a program and calculates the event structure semantics using a
      finite step-counter, applies sufficient episodicity criteria, and detects
      use-after-free by enumerating the valid executions of that bounded
      unravelling.

    \item We mechanise the finitary operational semantics together with the
      minimal use-after-free bug and its fix in
      Isabelle/HOL~\cite{kissig2026rcuopsem}, machine-checking that the bug is
      reachable in the buggy variant and excluded in the fixed one. Extending
      the mechanised development to an end-to-end no-use-after-free theorem for
      full RCU remains future work.

\end{enumerate}

\conditionalpagebreak
\section{CAS Semantics and a Use-After-Free
Bug}\label{sec:cas}\label{sec:bug}

\begin{figure}[t]
  \scalebox{0.8}{
  \begin{minipage}{0.45\textwidth}
      \input{snippets/rcu-inc-tidy}
    \captionof{figure}{Excerpt of Read-Copy-Update \code{inc()} with full
    listing in Appendix~\ref{app:rcu}}
    \label{fig:rcu-inc-code}
  \end{minipage}

  \hspace{0.2cm}

  \begin{minipage}{0.35\textwidth}
    \begin{center}
      \input{snippets/rcu-reclaim}
    \end{center}
    \captionof{figure}{Simplified and inlined \code{reclaim()} function with
    full listing in Appendix~\ref{app:rcu}}
    \label{fig:rcu-reclaim-code}
  \end{minipage}

  \hspace{0.2cm}

  \begin{minipage}{0.36\textwidth}
    \begin{flushright}
      \begin{tikzpicture}
\nodeRacq{read}{C}{\beta}{0, -1}
\nodeB{branch}{}{\beta = \alpha}{0, -2}
\nodeWrel{write}{C}{\varepsilon}{-1.25, -3}
\node (r3) at (1.25, -3) [label={[text=blue,font=\scriptsize]below:$\rho[r]:=\mathit{false}$}] {\vdots};
\node (r4) at (-1.25, -4) [label={[text=blue,font=\scriptsize]below:$\rho[r]:=\mathit{true}$}] {\vdots};
\edgepo[]{branch}{write}
\edgepo[]{read}{branch}
\draw (branch) edge[style=po] node[yshift=3pt,xshift=-5pt] {\scriptsize if} (write);
\draw (branch) edge[style=po] node[yshift=5pt,xshift=5pt] {\scriptsize else} (r3);
\edgepo[]{write}{r4}
\end{tikzpicture}
    \end{flushright}
    \vspace{-0.5cm}
    \captionof{figure}{Semantics of \code{r:=}\cas{}$^{\text{rel,acq}}$\code{(\&C,s,n)} where
    \code{s} has value $\alpha$ and \code{n} has $\varepsilon$. \cas{} reads
    $\beta$ from \code{C}.}
    \label{fig:cas-event-structure}
  \end{minipage}
}
\end{figure}

\cas{} is an atomic read-modify-write operation comprising a read instruction, a
branch, and a conditionally executed write instruction. The event structure
semantics of \cas{} with release-acquire memory order on success and acquire on
failure is depicted in Figure~\ref{fig:cas-event-structure}, where the
subsequent event structure, denoted \scalebox{0.6}{$\vdots$}, is interpreted with a register
state $\rho$ updated at the register \code{r}. Each instruction carries a memory
order annotation -- acquiring for the read and releasing for the write -- which
together define the memory semantics of \cas{} as a whole. Because the write
instruction only executes when the branch succeeds, the semantics is asymmetric
across the two outcomes: \cas{} cannot carry a release memory ordering in the
failure case~\cite[7.17.7.4]{iso-c11}. This allows memory operations to reorder
below the \cas{} in the failure case.

\begin{figure}[h]

  \begin{minipage}{\textwidth}
    \centering
    \scalebox{0.8}{\begin{tikzpicture}[node distance=1cm, auto]

\node[text width=2.5cm] (t1) at (0, 0) {Thread 1};
  \node[text width=2.5cm] (t1_sync) [below of=t1, yshift=-2cm]
  {\code{while}(\code{rcu}[$t_2$]);};
  \node[text width=2.5cm] (t1_free) [below of=t1_sync] {\code{free(s)}};

\node[text width=2.5cm] (t2) at (4, 0) {Thread 2};
  \node[text width=2.5cm] (t2_vs_placeholder) [below of=t2] {\textvisiblespace};
  \node[text width=2.5cm] (t2_cas) [below of=t2_vs_placeholder]
  {$\cas{}^\text{acq}(C, s, n_2)\text{\lightning}$};
  \node[text width=2.5cm] (t2_rcu_exit) [below of=t2_cas] {$\code{rcu}[t_2]:=0$};
  \node[text width=2.5cm] (t2_vs) [below of=t2_rcu_exit, text=red] {v:=*s};

\draw[->] (t2_rcu_exit) -- (t1_sync) node[midway, above] {$\rf$};
\draw[->] (t1_sync) -- (t1_free) node[midway, above] {};
\draw[->, color=red] (t1_free) -- (t2_vs) node[midway, above] {UAF};

\draw[decorate,decoration={brace,amplitude=5pt,mirror,raise=10pt}]
  (t2_cas.south east) -- (t2_vs_placeholder.north east)
  node[midway,right,xshift=13pt] {i};
\draw[decorate,decoration={brace,amplitude=5pt,mirror,raise=10pt}]
  (t2_rcu_exit.south east) -- (t2_rcu_exit.north east)
  node[midway,right,xshift=13pt] {i+1};
\draw[decorate,decoration={brace,amplitude=5pt,mirror,raise=10pt}]
  (t2_vs.south east) -- (t2_vs.north east) node[midway,right,xshift=13pt] {i};
\draw[decorate,decoration={brace,amplitude=5pt,raise=0pt},xshift=-3pt]
  (t1_free.south west) -- (t1_sync.north west) node[midway,left,xshift=-3pt]
  {{\small \rotatebox{90}{\texttt{reclaim(s)}}}};

\draw[->, color=blue] ([xshift=-15pt]t2_vs_placeholder.east) to[out=-45, in=45,
  looseness=0.8] ([xshift=-15pt]t2_vs.east);
\end{tikzpicture}}
    \par\smallskip
    \begin{minipage}{0.86\textwidth}
      \footnotesize
      The dereference \code{v := *s} can reorder ({\color{blue} blue} arrow) over the
      failing~(\lightning) \cas{} branch, the RCU exit command, and thus over
      \code{while(rcu[}$t_2$\code{])} in \code{reclaim} and \code{free(s)}.
      \code{programs/uaf-bug.lit} in \mordor{}~\cite{kissig2026mordor}.
    \end{minipage}
    \caption{Use-after-free bug}\label{fig:use-after-free-bug-overview}
  \end{minipage}

  \vspace{\dimexpr 13pt + \bigskipamount\relax}

  \begin{minipage}{\textwidth}
    \centering
    \scalebox{0.8}{\begin{tikzpicture}[node distance=1cm, auto]

\node[text width=2.5cm] (t1) at (0, 0) {Thread 1};
  \node[text width=2.5cm] (t1_sync) [below of=t1, yshift=-2cm]
  {$\code{while}(\code{rcu}[t_2]);$};
  \node[text width=2.5cm] (t1_free) [below of=t1_sync] {\code{free(s)}};

\node[text width=2.5cm] (t2) at (4, 0) {Thread 2};
  \node[text width=2.5cm] (t2_cas) [below of=t2] {$\cas{}^\text{acq}(C, s, n_2)\text{\lightning}$};
  \node[text width=2.5cm] (t2_vs) [below of=t2_cas] {\code{v:=*s}};
  \node[text width=2.5cm] (t2_rcu_exit) [below of=t2_vs] {$\code{rcu}[t_2]:=^{\text{\color{blue} rel}}0$};

\draw[->] (t2_rcu_exit) -- (t1_sync) node[midway, above] {$\rf$};
\draw[->] (t1_sync) -- (t1_free) node[midway, above] {};
\draw[->] (t2_vs) -- (t2_rcu_exit) node[midway] {$\ppo$};

\draw[decorate,decoration={brace,amplitude=5pt,mirror,raise=0pt},xshift=5pt]
  (t2_vs.south east) -- (t2_cas.north east) node[midway,right,xshift=10pt] {i};

\draw[decorate,decoration={brace,amplitude=5pt,mirror,raise=0pt},xshift=5pt] (t2_rcu_exit.south east) -- (t2_rcu_exit.north east) node[midway,right,xshift=10pt] {i+1};

\draw[decorate,decoration={brace,amplitude=5pt,raise=0pt},xshift=-1pt] (t1_free.south west) -- (t1_sync.north west) node[midway,left,xshift=-10pt] {{\small \rotatebox{90}{\texttt{reclaim(s)}}}};

\end{tikzpicture}}
    \par\smallskip
    \begin{minipage}{0.86\textwidth}
      \footnotesize
      As the RCU exit command has a release semantics, the dereference
      \code{v := *s} is $\ppo$-ordered before the RCU exit command, and thus
      ordered before
      \code{while(rcu[}$t_2$\code{])}. \code{programs/uaf-bug-fixed.lit} in
      \mordor{}~\cite{kissig2026mordor}.
    \end{minipage}
    \caption{Use-after-free bug fixed with releasing RCU exit}\label{fig:use-after-free-bug-fixed}
  \end{minipage}
\end{figure}

\paragraph{Overview of the UAF bug.} Suppose two threads concurrently attempt to
increment a shared value at a memory location \code{C} using the algorithm in
Figure~\ref{fig:rcu-inc-code}.
Each thread first retrieves the current location of \code{C} as \code{s} in the
fetch-and-add instruction \code{s:=}\fadd{}$^{\text{rel,acq}}$\code{(\&C, 0)},
updates the value, stores the updated value at a new memory location \code{n},
and attempts to swap the current memory location of \code{C} for \code{n} using
\cas{}. \fadd{} serves to add release semantics to the read through the atomic
read-don't-modify-write (RdMW) instruction, which orders the preceding
quiescent period before the \fadd{} and is thereby foundational for the overall
correctness of the RCU variant as a lock-free synchronisation primitive, and in
particular ABA-freedom, cf.~\cite{Semenyuk2023RCU}.
Each thread performs a \cas{} contending on \code{C}, and only one can succeed.
Here \code{s}, \code{n}, \code{v} and \code{r} are \emph{registers} --
thread-private local variables, set and read without memory operations -- as
opposed to the shared memory locations \code{C} and \code{rcu}.
The per-thread flag \code{rcu[tid]} marks whether thread \code{tid} is inside an
RCU read-side critical section, i.e.~currently accessing the object reached
through \code{C}; a concurrent \code{reclaim} waits,
\code{while(rcu[}$i$\code{])}, for every reader $i$ to clear its flag before
freeing. Each iteration closes the previous critical section and opens a new one
through the paired \code{rcu[tid]:=0; rcu[tid]:=1} at the top of the loop body.
The \code{rcu[tid]:=1} is not redundant: the \fadd{}, dereference, and \cas{}
that access the shared object all execute while the flag is set, and it is
cleared only by the next iteration's \code{rcu[tid]:=0} or by the final
\code{rcu[tid]:=0} after the loop. This per-iteration exit and re-entry is the
read-side quiescent point at each loop boundary.
If the \cas{} on Thread~1 succeeds, the \cas{} on Thread~2 must read from
Thread~1's \cas{} write, witnessing the intervening change to \code{C}, and so
must fail.
When Thread~2's \cas{} fails in this way, load-store reordering allows its
dereferencing instruction \code{v := *s} to move across the failing \cas{}
branch and the RCU exit command \code{rcu[}$t_2$\code{] := 0}, both on Thread~2.
Because Thread~1's \code{reclaim} waits for this RCU exit through
\code{while(rcu[}$t_2$\code{])} before it calls \code{free(s)}, the dereference
thereby reorders past \code{while(rcu[}$t_2$\code{])} and \code{free(s)} on
Thread~1, as indicated by the {\color{blue} blue} reordering arrow in
Figure~\ref{fig:use-after-free-bug-overview}.

Figure~\ref{fig:use-after-free-bug-overview} shows an execution that witnesses
the use-after-free bug. Thread~2 is within its loop where its \cas{} operation
fails (indicated by \lightning{}), leading to another iteration of the loop,
where \code{rcu[$t_2$]} is cleared and set, and the shared resource \code{C} is
accessed. The \cas{} on Thread~1 succeeds, and is followed by a call to
\code{reclaim}. \code{reclaim} reads 0 from \code{rcu[$t_2$]}, assumes that the
thread has completed its (failed) attempt to increment the value at \code{C},
and frees the previous location of \code{C}, which Thread~2 is about to
dereference through \code{s}, a use-after-free bug. We add the code with the
occurrence of use-after-free as \code{uaf-bug.lit} to the \mordor{} test suite.

The obvious fix is to prevent the dereferencing instruction from reordering
across the RCU exit operation, for instance by adding a release memory order
annotation to the RCU exit operation, as in
Figure~\ref{fig:use-after-free-bug-fixed}. We add the code with the fix as
\code{uaf-bug-fixed.lit} to the \mordor{} test suite.

\begin{figure}[h]
      \scalebox{0.8}{%
  \begin{minipage}{0.62\textwidth}
    \begin{center}
        \begin{tikzpicture}[node distance=1cm, auto, every node/.style={align=center}]

  \node[text width=3cm] (t1) at (0,0) {Thread 1};
  \node[text width=3cm] (t1_cas) [below of=t1, yshift=-1cm] {\cas};
  \node[text width=3cm] (t1_readrcu) [below of=t1_cas] {\code{r[i] := rcu[tid]}};
  \node[text width=3cm] (t1_free) [below of=t1_readrcu] {\code{free(s)}};

  \node[text width=2.5cm] (t2) at (4,0) {Thread 2};
  \node[text width=2.5cm] (t2_deref) [below of=t2] {\code{v := *s}};
  \node[text width=2.5cm] (t2_cas) [below of=t2_deref] {\cas};
  \node[text width=2.5cm] (t2_rcuexit) [below of=t2_cas] {\code{rcu[tid] := 0}};

  \draw[->] (t1_cas) -- (t2_cas) node[midway, above] {$\rf$};
  \draw[->, dashed, color=red] (t1_free) -- (t2_deref) node[midway, above]
  {$\uafrel$};
  \draw[->, color=red] (t2_rcuexit) -- (t1_readrcu) node[midway, below] {$\rf$};

  \draw[->] (t1_cas) -- (t1_readrcu) node[midway, left] {$\po$};
  \draw[->, color=red] (t1_readrcu) -- (t1_free) node[midway, left] {$\po$};

  \draw[->] (t2_deref) -- (t2_cas) node[midway, right] {$\po$};
  \draw[-{Stealth}, color=red] (t2_deref.south east) to[out=-45, in=45]
  (t2_rcuexit.north east) node[right] {$\po$};
  \draw[->] (t2_cas) -- (t2_rcuexit) node[midway, right] {$\po$};

\end{tikzpicture}
    \end{center}
    \captionof{figure}{\code{v := *s} is $\po$-ordered before \code{rcu[tid] := 0},
    so the {\color{red} red} $\poord\cup\rf$ chain places the dereference before
    \code{free(s)} in $\nta_{\text{RC11z}}$. The hypothesised use-after-free
    ({\color{red} red}, dashed) runs against that chain, and
    \texttt{no-thin-air} denies the resulting cycle, so RC11z forbids the UAF
    bug}
    \label{fig:uaf-bug-oota-cycle-rc11}
  \end{minipage}%
      }
  \hfill
      \scalebox{0.8}{%
  \begin{minipage}{0.62\textwidth}
    \begin{center}
        \begin{tikzpicture}[node distance=1cm, auto, every node/.style={align=center}]

  \node[text width=3cm] (t1) at (0,0) {Thread 1};
  \node[text width=3cm] (t1_cas) [below of=t1, yshift=-1cm] {\cas};
  \node[text width=3cm] (t1_readrcu) [below of=t1_cas] {\code{r[i] := rcu[tid]}};
  \node[text width=3cm] (t1_free) [below of=t1_readrcu] {\code{free(s)}};

  \node[text width=2.5cm] (t2) at (4,0) {Thread 2};
  \node[text width=2.5cm] (t2_deref) [below of=t2] {\code{v := *s}};
  \node[text width=2.5cm] (t2_cas) [below of=t2_deref] {\cas};
  \node[text width=2.5cm] (t2_rcuexit) [below of=t2_cas] {\code{rcu[tid] := 0}};

  \draw[->] (t1_cas) -- (t2_cas) node[midway, above] {$\rf$};
  \draw[->] (t1_free) -- (t2_deref) node[midway, above] {$\uafrel$};
  \draw[->] (t2_rcuexit) -- (t1_readrcu) node[midway, below] {$\rf$};

  \draw[->] (t1_cas) -- (t1_readrcu) node[midway, left] {$\ppo$};
  \draw[->] (t1_readrcu) -- (t1_free) node[midway, left] {$\DP$};

  \draw[->] (t2_cas) -- (t2_rcuexit) node[midway, right] {$\ppo$};
  \draw[->, -{Stealth}, dashed, color=blue] (t2_deref.south east) to[out=-45, in=45]
  (t2_rcuexit.north east);
  \draw[->,dashed,blue] (t2_deref) -- (t2_cas) node[midway, right] {};

\end{tikzpicture}
    \end{center}
    \captionof{figure}{\code{v := *s} is not ordered before \code{rcu[tid] := 0},
    indicated by the {\color{blue} blue} dashed arrows, so that no
    $\nta_{\smrd}=(\ppoord\cup\DP\cup\rf)^+$ chain runs from the dereference to
    \code{free(s)}, and the use-after-free contradicts no
    \texttt{thin-air-cycle} which would forbid the UAF bug in \smrd}
    \label{fig:uaf-bug-oota-cycle-smrd}
  \end{minipage}%
      }
  \end{figure}

\paragraph{Representing the use-after-free.}
An execution exhibits a use-after-free when it holds an access of a deallocated
location that it does not order before the deallocation -- the $\uafrel$-edge of
Figures~\ref{fig:uaf-bug-oota-cycle-rc11} and~\ref{fig:uaf-bug-oota-cycle-smrd}.
Nothing reads from a deallocation, so this is an ordering property and not a
read-from. The order is the relation the model's \texttt{no-thin-air} axiom
declares acyclic, written $\nta_{\text{RC11z}}~\triangleq~(\poord\cup\rf)^+$ for
RC11z, and $\nta_{\smrd}~\triangleq~(\ppoord\cup\DP\cup\rf)^+$ for \smrd{},
where $R^+\triangleq\bigcup_{n\geq1}R^n$ denotes the transitive closure of a
relation $R$, relating events joined by a chain of one or more $R$-steps.
Definition~\ref{def:uaf} states the property in full, and
Example~\ref{ex:uaf-rc11z} works the execution below through both orders, in
Appendix~\ref{app:defs}.

\paragraph{The UAF bug in RC11z.}
(\code{programs/uaf-bug-rc11.lit} in \mordor{}~\cite{kissig2026mordor}) The code
in Figures~\ref{fig:rcu-inc-code} and~\ref{fig:rcu-reclaim-code} has been taken
from the variant of RCU in Gotsman et al.~\cite{Gotsman23grace}. The same
variant has previously been verified~\cite{Semenyuk2023RCU} in RC11z. The UAF
bug does not appear in the program in RC11z, because $\nta_{\text{RC11z}}$
orders the dereference before the deallocation instruction.
Figure~\ref{fig:uaf-bug-oota-cycle-rc11} depicts the argument.

\paragraph{The UAF bug in \smrd{}.}
(\code{programs/\allowbreak uaf-bug-smrd.lit} in
\mordor{}~\cite{kissig2026mordor}) \smrd{}~\cite{Richards25SMRD} allows for
load-store-reorderings by relaxing the \texttt{no-thin-air} axiom, forbidding
instead \texttt{thin-air-cycles} in $\ppoord\cup\DP\cup\rf$. \emph{Preserved
program order} $\ppo$ and \emph{semantic dependency} $\DP$ are intra-thread
ordering relations that refine the program order $\po$: rather than enforcing
all of $\po$ as RC11 does, \smrd{} preserves only the edges dictated by the
chosen justification (Section~\ref{s:essem}). We provide the full constructive
definition of both taken from \cite{Richards25SMRD} in Appendix~\ref{app:defs}.
We argue in Figure~\ref{fig:uaf-bug-oota-cycle-smrd} that \smrd{} is sensitive
to the UAF bug, as $\nta_{\smrd}$ does not order the dereference before the
deallocation.

\paragraph{Reclamation in other models.}
\smrd{}'s first-class treatment of allocation and deallocation sets it apart
from three other ways of modelling reclamation. Allocation-aware models such as
RC11z keep deallocation out of the memory state altogether: \code{free} steps a
provenance-style allocation map $A\colon\Pi\rightarrow\Loc$ through a
$\mathsf{kill}$ action that leaves the RC11 state $\sigma$ untouched, so a
deallocation is neither a write event nor anything an $\rf$ can source, and
memory safety is discharged as an ownership invariant -- no thread retains a
read capability on a freed location -- rather than as a property of the
execution~\cite{Semenyuk2023RCU}. Separation logics instead track reclamation
through ghost state and per-location protocols, detecting the error as the
violation of an ownership invariant~\cite{Tassarotti2015RCU,Turon2014GPS}. Other
thin-air-free models omit reclamation altogether, and so cannot express a
use-after-free at all~\cite{Svendsen2018SepLogic}. In \smrd{}, by contrast,
allocation and deallocation are genuine events, allocation introducing fresh
symbolic locations under distinctness constraints (Definitions~\ref{def:gen-es}
and~\ref{def:freeze} in Appendix~\ref{app:defs}); the use-after-free is then an
access that $\nta_{\smrd}$ does not place before the deallocation
(Definition~\ref{def:uaf}).

\paragraph{Optimisation-induced bugs.}
The execution of RCU presented here shows that a hole in the C++ language
specification, where the impact of optimisation is disregarded, is hiding
erroneous program behaviours. The problem is not specific to RCU, but follows
from the use of \cas{} in a retry loop -- a code pattern that is common in
non-blocking concurrent code. Section~\ref{s:eval} reports which of the
reorderings this execution depends on a production compiler performs, and at
what rate the resulting defect is then observed.

\conditionalpagebreak
\section{Event Structure Semantics of Episodic
Loops}\label{s:essem}

\paragraph{Event Structure Semantics and Executions.}
Memory accesses, fences and branch conditions (collectively \emph{actions}) are
modelled by events, ordered by \emph{program order} $\po$, with \emph{conflict}
representing alternative branching outcomes. \smrd~\cite{Richards25SMRD}
supports dynamic memory management by giving a semantics to allocation and
deallocation instructions as first-class events.
\smrd{} extends \mrd{} with symbolic values: read events read symbolic values
and allocation events produce symbolic memory locations, each introducing a
fresh symbol by convention. Symbolic executions subsequently constrain these
symbols through value restrictions, $\rf$-relations, and a justification
mechanism for the write events in the execution. Justification of write events
has been introduced in \cite{Richards25SMRD}. We give a brief overview below,
and a formal definition in Appendix~\ref{s:app-defs-justs}.

\smrd{} weakens program order into the $\ppo$ and $\DP$ relations on each
thread, derived from the justification set of the execution. Valid
justifications are obtained through \emph{elaborations} that model standard
compiler optimisations: \emph{forwarding} models store-to-load forwarding and
redundant-load elimination, \emph{write elision} models dead-store elimination,
\emph{value assignment} models constant propagation, and \emph{lifting} models
hoisting a memory access out of a conditional branch. Because forwarding can
resolve the same accesses in several, potentially conflicting, ways, \smrd{}
admits several conflicting dependency relations over the same set of events.
A \emph{symbolic execution} -- which we refer to simply as an \emph{execution}
-- fixes one such choice, manifesting the dependencies over a maximal
conflict-free set of events constrained by the ordering relations and coherence
axioms of the underlying memory
model~\cite{pichon-pharabod-sewell-2016,castellan-2016}. One can picture a
symbolic execution as an equivalence class of traces, quotiented by the
reordering consistent with the independence relation induced by those dependency
relations, analogous to Mazurkiewicz traces. Executions must additionally
satisfy the memory model's consistency axioms, notably \emph{no-thin-air}, which
forbids cycles in $\ppoord\cup\DP\cup\rf$.

Concretely, \smrd~interprets a program compositionally in continuation-passing
style~\cite{Reynolds1972CPS}, where each command is interpreted relative to a
continuation constructed inductively from the end of terminating executions,
mapping register state and value restrictions to the event structure of the
remaining program. The \emph{register state} accumulates the assignment of
symbolic values to registers in \code{set}-commands; the \emph{value
restrictions} accumulate constraints evaluating branching conditions under
branching decisions.
\smrd{} therefore covers only terminating executions. In the presence of
unbounded loops, executions in the program semantics are in general unbounded.

\paragraph{Next enabled actions.}
To step through an execution operationally we define: a \emph{history}, $H$, as
a $\ppo$ and $\DP$ down-closed subset of an execution (i.e.~an execution
prefix), and the \emph{next enabled actions} of a history, $\horizon\Phi_{H}$,
as the minima of the $\ppo$ and $\DP$ later events.
Both relations order events of a single thread, so a future is a per-thread
order and each future set in $\Phi$ splits by thread. The one inter-thread
dependency of the model, $\rf$, does not contribute to $\Phi$, but constrains
$\Phi$ through the \texttt{no-thin-air} and \texttt{extended coherence} axioms.
Within the retry loops we consider, we will show that the next enabled actions
at a given line of code are equivalent in any two iterations.
We define episodicity to leverage this symmetry, enabling a finite
representation of loop semantics.

\subsection{Episodic Loops}\label{s:episodic}

In the source, we attach a \emph{loop identifier} to each loop. We assume
throughout that the loops of a program are indexed from $1$ upwards, so that $0$
is available for an event that lies outside every loop.
We support nested loops: for event $e$, $\loopfun(e)$ is the set of loop
identifiers for the loops nesting the line of code that performs $e$.
We define a loop iteration function $\iter(e):\loopfun(e)\to\nat$ that
identifies the iteration count of each nested loop.
The boundaries $\iter$ draws in the event structure do not have to align with
the start and end of the loop body in the syntax.
For each event, the combination of the program counter, event type, and
$\iter(e)$ is unique.
The loops nesting a given line of code are totally ordered by containment, so
$\loopfun(e)$ is a chain; we write it $\ell_1\prec\dots\prec\ell_k$ from the
outermost enclosing loop inwards. We require two properties of $\iter$.

\begin{enumerate}

  \item\label{iter:mono} \emph{Monotonicity.} In each execution, fix a loop
    $\ell$. For all events $e_1\po e_2$ with
    $\ell\in\loopfun(e_1)\cap\loopfun(e_2)$ that agree on the iteration count of
    every loop enclosing $\ell$, $\iter(e_1)(\ell)\leq\iter(e_2)(\ell)$.

  \item\label{iter:nesting} \emph{Compatibility with nesting.} For loops
    $\ell\prec\ell'$, the boundary $\iter$ draws for $\ell$ does not fall
    strictly within one execution of the body of $\ell'$: any two events
    performed by a single execution of the body of $\ell'$ agree on
    $\iter(\cdot)(\ell)$.

\end{enumerate}

\noindent Condition~\ref{iter:mono} compares the counts of a single loop, and
only between events of one iteration of all loops enclosing this loop.
Condition~\ref{iter:nesting} ensures that the iteration count of a nested loop
is reset at the start of the next iteration of an enclosing loop. Comparing
iterations in nested loops is equivalent to a lexicographic ordering of
iterations in all loops globally.
Where no loop nests within another, $\loopfun(e)$ is a singleton,
Condition~\ref{iter:nesting} does not apply and Condition~\ref{iter:mono} is the
pointwise comparison.

We now define episodicity as a semantic property, quantified over all
executions. Its conditions distil a careful study of retry loops, such as those
in RCU, generalised just far enough to yield the finitary event-structure
semantics of Theorem~\ref{t:finite-post-futures}.
Defining episodicity over executions, rather than over program syntax, is
essential: the $\rf$-relation is defined in executions, equivalence of memory
locations is defined relative to the constraints of an execution, branching
conditions evaluate relative to the constraints of the executions, and the
dependency relations between events are defined relative to executions, not the
event structure.
The syntactic, statically checkable patterns that witness episodicity in
practice are therefore \emph{sufficient conditions}, not its definition.

\medskip\noindent\begin{minipage}{\linewidth}
\begin{definition}[Episodic Loops]\label{def:episodic}

  A loop $\ell$ in a program is \emph{episodic} if all of the following
  conditions are met.

  \begin{enumerate}

    \item\label{episodic:reg} Registers are only accessed if written to
      $\po$-before within the same loop iteration, or before the loop.

    \item\label{episodic:mem}
      Reads within the loop must read from:

      \begin{enumerate}

      \item \label{episodic-casea} a $\po$-earlier write from the same
        iteration, or a write from before the loop,

      \item \label{episodic-caseb} a write $w$ on another thread which no
        \emph{earlier} iteration of $\ell$ reaches, that is, writing $t$ for the
        thread executing $\ell$ and $i$ for the iteration of the reading event,
        \[
          \forall e.~\thread(e)=t\wedge\iter(e)(\ell)<i
          \implies(e,w)\notin{(\DP\cup\rf)}^+
        \]
        or

      \item \label{episodic-casec} a \emph{read-don't-modify-write} -- an
        atomic read-modify-write that stores back the value it read unchanged,
          and is therefore unobservable -- whose value is read from a write
          satisfying case~\ref{episodic-casea}, \ref{episodic-caseb}, or
          \ref{episodic-casec}.

      \end{enumerate}

    \item\label{episodic:cond} The branching conditions of an iteration do not
      constrain values read before the loop. Writing $\varphi_\ell$ for the
      conjunction of the conditions of the branching events of one iteration of
      $\ell$, and $\restrict{\cdot}{\emptyset}$ for the restriction of a
      predicate to the symbols read before the loop, as defined in
      Appendix~\ref{s:app-proofs-restricted},
      \[
        \restrict{\varphi_\ell}{\emptyset}~=~\top
      \]
      The requirement is on the conjunction rather than on each condition
      separately, as conditions that pin no such value on their own may do so
      jointly.

    \item\label{episodic:events} Events from prior loop iterations are ordered
      before events of later loop iterations by the transitive closure of $\ppo$
      and $\DP$:
      \[
        \forall e_1,e_2.
        \iter(e_1)(\ell)<\iter(e_2)(\ell)\implies
        (e_1,e_2)\in(\ppoord\cup\DP)^+
      \]

  \end{enumerate}

\end{definition}
\end{minipage}\par\bigskip

Conditions~\ref{episodic:reg} and~\ref{episodic:mem} forbid passing a value
from one iteration to the next, both directly within the same thread and
indirectly through another thread. Within a thread values flow along $\DP$;
between threads they flow only along $\rf$, and case~\ref{episodic-caseb}
admits a read from another thread's write only when no earlier iteration of the
loop reaches that write. It therefore bounds where the origin of a value read in
the $i$-th iteration can lie: tracing $\DP$ and $\rf$ backwards from such a read
never reaches an event of the loop's own thread in an earlier iteration, and by
case~\ref{episodic-casea} the steps of that trace within the thread stay in the
$i$-th iteration or leave the loop altogether. No state is carried between
iterations through memory. A loop's own writes may still be read by other
threads -- episodicity restricts only what the loop reads, not who reads its
writes -- but the written value cannot be read back into the loop.

Expressions are evaluated against the register state.
Conditions~\ref{episodic:reg} and~\ref{episodic:events} ensure that the
expressions of corresponding events agree from one iteration to the next, up to
the renaming of the symbols each iteration introduces.
In \smrd, a branch is an event whose condition is recorded as a value
restriction on the execution, and an execution is admitted only if
its accumulated restrictions are satisfiable. Condition~\ref{episodic:cond}
forbids these restrictions from constraining \emph{non-retriable} reads, that
is values read before the loop. Otherwise a branch can pin such a value: in
\code{r:=*x; while(r>5)}, the symbol for \code{r} is read once, before the
loop, yet the condition forces $r\le5$ on every admitted execution -- the
semantics models only terminating executions and discards those in which the
loop never exits -- so a value fixed before the loop dictates the loop's
behaviour, breaking the retry pattern. A symbol re-read \emph{within} each
iteration carries no such information across the boundary.
The condition is asked of the conditions of an iteration jointly because
pinning a pre-loop value takes no single branch: with $\alpha$ read into
\code{r0} before the loop and $\beta$ into \code{r1} within it,
\code{if(r1==r0)} and a nested \code{if(r1==5)} each leave $\alpha$
unconstrained on their own, $\beta$ being free in each, while together they
force $\alpha=5$ -- a value fixed before the loop, pinned inside it, and
retained by a reset that goes by the symbols a restriction mentions.
Condition~\ref{episodic:events} precludes interference across loop boundaries,
ensuring that read events do not distribute across the boundary. Values
generated by read events within the loop are thus a function of the program
counter, and are constant across loop iterations.

The criteria of episodicity range over all executions. Establishing them
directly would therefore require the whole event structure semantics of a
program, which is unbounded in the presence of unbounded loops.
We provide sufficient conditions which can be checked successively on a
partially calculated event structure, using the inherent modularity of \smrd{},
for instance in \mordor{}.
In practice the episodicity conditions are met by code patterns that are
syntactically -- indeed statically -- checkable.
Below we discuss the individual conditions, and Section~\ref{s:eval} describes
the checks implemented in \mordor{}.

\subsection{Identifying Episodic Loops}\label{s:identifying}

We have confirmed -- by hand and using \mordor{} -- that the retry loops in all
of the following algorithms adhere to the episodicity criteria:
RCU (\code{rcu-1.lit}) following Gotsman et al.~\cite{Gotsman23grace} per code
listing in Appendix~\ref{app:rcu}, hazard pointers (\code{hp-1.lit}) adapted
from Folly~\cite{folly} by inlining all functions per Appendix~\ref{app:hp},
seqlock~\cite{hemminger2002seqlock} (\code{seqlock-1.lit}) as in
Appendix~\ref{app:seqlock}, and
spinlock~\cite[\S8.5]{hennessy1996computer} (\code{spinlock-1.lit}) as in
Appendix~\ref{app:spinlock}. The four programs are those of
\code{programs/episodicity/} in \mordor{}~\cite{kissig2026mordor}, which are
the ones \mordor{}'s test suite measures.
In RCU and hazard pointers they hold the increment operation alone. What follows
the retry loop -- \code{sync} and \code{reclaim} in RCU, \code{retire} and
\code{scan} in hazard pointers -- is elided, since the episodicity criteria are
conditions on the loops.

Condition~\ref{episodic:reg} can be verified statically on register variables.
The remaining conditions of episodicity require a case-by-case analysis. Static
verification is immediate in the four listings: RCU in Example~\ref{ex:rcu},
hazard pointers in Example~\ref{ex:hazptr}, the seqlock in
Example~\ref{ex:seqlock}, and the spinlock in Example~\ref{ex:spinlock}.

Condition~\ref{episodic:mem} requires testing the equivalence of the memory
locations of reads and writes. Memory locations are symbolic. Their concrete
equivalence depends on the constraints of the execution in context. The
equivalence holds trivially when a pair of accesses use one pointer variable,
and the variable is not modified. This is the case in all four algorithms: the
spinlock contends on the single location \code{mutex}
(Example~\ref{ex:spinlock}); the seqlock's \code{rseq} and \code{rdata} are
fixed at allocation and never reassigned (Example~\ref{ex:seqlock}); the hazard
pointer loop reads and writes the hazard slot through \code{hp[tid]} for a
fixed \code{tid} (Example~\ref{ex:hazptr}); and in RCU the pointer \code{s}
obtained from the \fadd{} is not reassigned before it is dereferenced
(Example~\ref{ex:rcu}). See Example~\ref{ex:episodic-rcu} below for a
discussion on RCU. These are instances of the fragment of
Lemma~\ref{l:loc-decidable}, in which the equivalence is decidable rather than
trivial: a pointer that is reassigned stays decidable as long as what is written
to it is again a location and a constant offset.

Case~\ref{episodic-caseb} of Condition~\ref{episodic:mem} is stated
semantically, over the $\DP$ and $\rf$ of a given execution, but it is testable
by a data flow analysis. By Definition~\ref{def:freeze} every $\DP$ edge runs from
$\origin\alpha$ to a write, allocation or deallocation whose justification
mentions $\alpha$, and symbols originate at read and allocation events, so each such edge is subsumed by a
def--use edge of the thread's own data and control flow. Every $\rf$ edge is
subsumed by the may-alias relation on shared locations. Writing $\leadsto$ for
the closure of those two syntactic relations,
${(\DP\cup\rf)}^+\subseteq~\leadsto$ on every execution, and a loop in which
$\leadsto$ admits no path from a write in the body back to a read in the body
across the loop's back edge therefore satisfies case~\ref{episodic-caseb}.

What this asks for is loop-carried dependence analysis, extended across threads
through shared locations: intra-thread def--use chains, a may-alias relation for
the cross-thread hops, and the question of whether a chain crosses the back
edge. Stating the case relative to earlier \emph{iterations} rather than to the
thread as a whole is what keeps it in that form; a formulation over the thread
would ask instead for reachability across the whole program.

One correlation is needed beyond the standard analysis. A conditional
read-modify-write whose write effect occurs only on the branch that leaves the
loop, as in the $\cas$ of Example~\ref{ex:rcu}, contributes no loop-carried
edge, because no further iteration follows it. Recovering that requires relating
the write effect to the result the loop condition tests, rather than treating
them as independent statements. An analysis without the correlation remains
sound -- it over-approximates -- but rejects the retry loops of
Appendix~\ref{app:rcu} and Appendix~\ref{app:hp}.

Condition~\ref{episodic:cond} requires calculating the constraints that the
branching conditions of an iteration imply together. In each of the algorithms
we consider, the loop condition compares two values read in the same iteration
and no other branch occurs in the loop, so that the episodicity condition is
trivially satisfied.

Condition~\ref{episodic:events} generally requires calculating the event
structure semantics of the program in order to establish a $\ppo$-ordering
between events in successive iterations of the loop. In the case of the
algorithms we considered, the ordering follows statically: the \fadd{} operation
in RCU and the memory fence in Hazard Pointers have release-acquire annotation.
In spinlock the read events reference the same literal memory location. In
seqlock every event in the failing case is release annotated.

Comparing the equality of symbolic memory locations requires tracing the
assignment of pointers through the program, and is undecidable in general.
Where a program takes no pointer offsets other than by constants, and compares
pointers only for equality, it is decidable, as
Appendix~\ref{s:app-proofs-locations} shows. All four algorithms considered here
are of that shape, and \mordor{} decides the query for them.

\subsection{Synchronisation Points}\label{s:sync-points}

\emph{Synchronisation points} are events that strongly order loop iterations.
\fadd{} in RCU and the memory fence in hazard pointers serve as
synchronisation points. They separate events occurring $\po$-earlier and
$\po$-later in program order $\po$ by preserved program order $\ppo$. When
placed at the beginning of each loop iteration, synchronisation points therefore
separate events across different iterations as drawn by $\iter$, making them
sufficient -- though not necessary -- for satisfying
Condition~\ref{episodic:events} of episodic loops.

\begin{definition}[Synchronisation Points]\label{def:sp}

  A \emph{synchronisation point} $e^\text{SP}$ is an event in the event
  structure such that $e\po e^\text{SP}$ iff $e\ppo e^\text{SP}$ and
  $e^\text{SP}\po e$ iff $e^\text{SP}\ppo e$ for all events $e$ in all
  executions.

\end{definition}

Consider as an example the \fadd{} instruction in RCU, which orders events
$\po$-earlier and $\po$-later by $\ppo$. Two of the three relations whose
closure forms $\ppo$ are at work. The first, $\pposync$, accounts for
statically declared memory order, carrying $\po$ across a releasing write or an
acquiring read. The second, $\ppormw$, accounts for the atomicity of
read-modify-write operations. The read and write events of an RMW are recorded
by the relation $\pormw\subseteq E\times\Expressions\times E$, which pairs
them with the condition under which the operation is atomic -- $\top$ for
\fadd{}, the branching condition for \cas{} -- so that $\fadd_R\pormw\fadd_W$
under $\top$.

$\pormw$ does not itself order the write event before the read event. It
contributes to $\ppo$ only through $\ppormw$, and there the write-to-read pair
occurs solely in composition with $\pposync$, on one side or the other.
$\ppormw$ thus extends $\pposync$ \emph{across} the RMW: events
$\pposync$-before the write are ordered ahead of the read, and the write ahead
of events $\pposync$-after the read, which is what keeps accesses to the same
location from being ordered between the read and the write of the RMW.
Appendix~\ref{s:app-defs-ppo} gives the formal definitions, and
Figures~\ref{fig:cas-rmw} and~\ref{fig:faa-rmw} there contrast the \cas{}
case, where $\ppormw$ holds on the succeeding branch only, with the
unconditional \fadd{} case.

\begin{example}\label{ex:rcu-sp}\emph{(Semantics of \fadd{} as a synchronisation
  point).}

  \noindent
  \begin{minipage}[c]{0.6\textwidth}
    In order to establish that \fadd{} separates events in executions by $\ppo$, we
  select two events $e_1$ and $e_2$, such that $e_1$ is $\po$-before \fadd,
  i.e.~$e_1\po\fadd^\rel_W$, and $e_2$ is $\po$-after \fadd,
  i.e.~$\fadd^\acq_R\po e_2$. By definition of the semantics of \fadd~in terms
  of $\pormw$, $\fadd^\rel_W\ppormw[\top] \fadd^\acq_R$, so that
    $e_1\ppo\fadd^\rel_W$ and $\fadd^\acq_R\ppo e_2$. Thus \fadd{} meets
  the condition of synchronisation points as the diagram on the right commutes.
  \end{minipage}%
  \hfill
  \scalebox{0.8}{
  \begin{minipage}[c]{0.35\textwidth}
  \begin{figure}[H]
    \centering
        \begin{tikzpicture}[node distance=3cm, auto]
  \node (e1) at (0,3) {$e_1$};
  \node (faar) at (3,2) {$\fadd^\acq_R$};
  \node (faaw) at (3,1) {$\fadd^\rel_W$};
  \node (e2) at (0,0) {$e_2$};

  \draw[<-] (faar) -- node[right] {$\ppormw[\top]$} (faaw);

  \draw[<-] (faar.north) to[out=90,in=0] node[above] {$\ppo$} (e1.east);
  \draw[<-] (faaw.west) to[out=180,in=270] node[below,xshift=1cm,yshift=-0.5cm] {$\pposync$} (e1.south);
  \draw[<-] (e2.east) to[out=0,in=270] node[below] {$\ppo$} (faaw.south);
  \draw[<-] (e2.north) to[out=90,in=180] node[above,xshift=1cm,yshift=0.5cm] {$\pposync$} (faar.west);
\end{tikzpicture}
  \end{figure}
  \end{minipage}
}
\end{example}

\subsection{Loop Boundaries}

Loop boundaries determine which $\po$-connected sections of executions fall into
one iteration as opposed to an earlier or later iteration. Once we allow
instructions to reorder across the loop condition -- breaking consistency with
$\po$ -- it becomes important to distinguish the syntactic loop boundaries, as
given in the abstract syntax tree, from the semantic loop boundaries given by
the $\iter$-function underlying our definition of episodicity above.

In the case of the RCU variant in this paper, the unbounded \code{while}-loop in
the \code{inc()} function is episodic under an $\iter$-function which draws the
loop boundary just before the \fadd{}, so that the release-acquire
synchronisation point at the \fadd{} separates iterations and
Condition~\ref{episodic:events} is met. Each iteration thus runs from the
\fadd{} through the RCU-exit \code{rcu[tid]:=0} and RCU-enter \code{rcu[tid]:=1}
that close it.

Figure~\ref{fig:rcu-loop-boundary} shows the event structure of the loop, each
event annotated with its code line; the \cas{} branches into a failing iteration
that continues the loop and a successful one that exits it. The braces on the
right, labelled $\iter(\cdot)=0$ and $\iter(\cdot)=1$, mark the first two
semantic iterations drawn by $\iter$. The {\color{black}black} brace on the left
marks one \emph{syntactic} loop body. It is offset from the semantic iteration
by the RCU-exit \code{rcu[tid]:=0} and the RCU-enter \code{rcu[tid]:=1}: the
syntactic loop boundary does not coincide with the semantic one drawn by
$\iter$.

\begin{figure}[t]
  \centering
  \begin{minipage}[t]{0.54\textwidth}
    \vspace{0pt}
    \centering
    \scalebox{0.82}{\begin{tikzpicture}[every node/.style={font=\small}]
  \node[align=left, text width=2.5cm] (a0) at (0,0)    {\code{rcu[tid]:=}$^{\text{\color{blue}rel}}$\code{0}};
  \node[align=left, text width=2.5cm] (a1) at (0,-0.8) {\code{rcu[tid]:=1}};
  \node[align=left, text width=2.5cm] (f1) at (0,-1.6) {\code{s:=FAA(\&C,0)}};
  \node[align=left, text width=2.5cm] (d1) at (0,-2.4) {\code{v:=*s}};
  \node[align=left, text width=2.5cm] (i1) at (0,-3.2) {\code{*n:=v+1}};
  \node[align=left, text width=2.5cm] (c1) at (0,-4.0) {\code{r:=CAS(\&C,s,n)}};
  \node[align=left, text width=2.5cm] (x1) at (0,-5.0) {\code{rcu[tid]:=}$^{\text{\color{blue}rel}}$\code{0}};
  \node[align=left, text width=2.5cm] (e1) at (0,-6.0) {\code{rcu[tid]:=1}};
  \node[align=left, text width=2.5cm] (f2) at (0,-7.0) {\code{s:=FAA(\&C,0)}};
  \node[align=center, text width=2.5cm] (vdF) at (0,-7.8) {$\vdots$};
  \node (vdS) at (1.75,-5.0) {$\vdots$};

  \node (pb0) at (-1.75,0)    {~};
  \node (pbc) at (-1.75,-4.0) {~};
  \node (pra) at (-2.7,-1.6)  {~};
  \node (prx) at (-2.7,-6.2)  {~};

  \draw[->] (a0)--(a1); \draw[->] (a1)--(f1); \draw[->] (f1)--(d1);
  \draw[->] (d1)--(i1); \draw[->] (i1)--(c1);
  \draw[->] (c1)--(x1) node[midway,left=1pt] {\scriptsize failure};
  \draw[->] (c1.south east) -- node[midway,above right=-2pt] {\scriptsize success} (vdS);
  \draw[->] (x1)--(e1); \draw[->] (e1)--(f2); \draw[->] (f2)--(vdF);

  \draw[decorate,decoration={brace,amplitude=4pt,raise=0pt}]
    (pbc.south) -- (pb0.north);
  \draw[red,decorate,decoration={brace,amplitude=4pt,raise=0pt}]
    (prx.south) -- (pra.north);
  \draw[decorate,decoration={brace,amplitude=4pt,mirror,raise=0pt}]
    (3.0,-5.2) -- (3.0,-0.8);
  \draw[decorate,decoration={brace,amplitude=4pt,mirror,raise=0pt}]
    (3.0,-8.0) -- (3.0,-5.7);

  \node[font=\scriptsize,rotate=90]     at (-2.1,-1.6)  {syntactic body};
  \node[red,font=\scriptsize,rotate=90] at (-3.05,-3.8) {syntactic body after
  rewriting};
  \node[font=\scriptsize,rotate=90]     at (3.45,-2.9)  {$\iter(\cdot)=0$};
  \node[font=\scriptsize,rotate=90]     at (3.45,-7.0)  {$\iter(\cdot)=1$};
\end{tikzpicture}}
    \captionof{figure}{Event structure of the \code{inc()} loop. The \cas{}
    branches into a failing iteration (continuing the loop) and a successful
    one. Each semantic \emph{iteration} drawn by $\iter$ coincides
    with the rewritten syntactic loop body ({\color{red}red}) but is offset from
    the original syntactic loop body ({\color{black}black}); all enclose the same
    events.}\label{fig:rcu-loop-boundary}
  \end{minipage}\hfill
  \begin{minipage}[t]{0.44\textwidth}
    \vspace{0pt}
    \centering
    \input{snippets/rcu-inc-episodic}
    \captionof{figure}{The \code{inc()} loop rewritten (changes in
    {\color{red} red}) so the syntactic loop boundary coincides with the
    semantic boundary drawn by $\iter$.}\label{fig:rcu-inc-episodic}
  \end{minipage}
\end{figure}

Note that if one rewrites the program code so that the syntactic loop boundaries
match the semantic loop boundaries, then the event structure is maintained.
Rewriting the program code is not necessary to meet episodicity with our
definition using $\iter$. We only present it for illustration.
Figure~\ref{fig:rcu-inc-episodic} shows such a rewrite: the RCU-exit
\code{rcu[tid]:=}$^{\text{rel}}$\code{0} is moved to before the loop and,
for the remaining iterations, relocated to the end of the loop body under an
\code{if} over the failing outcome of the \cas{}, while the RCU-enter
\code{rcu[tid]:=1} stays at the head of the body.

In the resulting event structure the syntactic loop body (the {\color{red}red}
brace in Figure~\ref{fig:rcu-loop-boundary}) coincides with the semantic
iteration, while the events themselves are unchanged.

\begin{example}\label{ex:episodic-rcu}
  \emph{(Episodic loop in RCU).}
  The \code{while}-loop in \code{inc()} is episodic. We choose an $\iter$ that
  places the loop boundary just before \fadd{}. Each of the conditions in
  Definition~\ref{def:episodic} of episodic loops holds:

  \begin{enumerate}

    \item[\ref{episodic:reg}.] The registers used in the \code{while}-loop are $s$,
      $v$, and $r$, all of which are assigned first in the same iteration of the
      loop.

    \item[\ref{episodic:mem}.] $C$ is the only memory location read in
      \code{inc()}: in $\fadd~(I_7)$, the dereferencing instruction \code{v:=*s}
      ($I_8$), and finally in $\cas~(I_{10})$. $C$ is written in the same
      loop iteration by $\fadd~(I_7)$, which does not observably modify the
      value at $C$.

    \item[\ref{episodic:cond}.] The loop condition only depends on the values
      read in \fadd~and \cas~in the same iteration of the loop. There is no
      additional branching in the loop, which could constrain symbols read
      outside of the loop.

    \item[\ref{episodic:events}.]
      \fadd~occurs at the beginning of the loop boundary defined by
      $\iter$. Its acquire and release annotations add to $\ppo$,
      making the \fadd{} a synchronisation point as defined above
      (per Example~\ref{ex:rcu-sp}),
      satisfying Condition~\ref{episodic:events}.

  \end{enumerate}
\end{example}

\subsection{De Bruijn-style Indexing of Symbols}\label{s:de-bruijn}

The finitary quotient of the event structure semantics of programs with
unbounded loops identifies next enabled actions across iterations through a
mapping $\gamma$ on events, where $\gamma$ is a direct extension of a de
Bruijn-style indexing~\cite{deBruijn1972indexing} of the symbols introduced by
read and allocation events, starting from the end of executions of the program.
The de Bruijn-style indexing is aware of the program structure in the sense that
it identifies symbols introduced in the same thread, and either outside of loops
or in the same iteration of a loop, counting from the end of the program across
executions. We first assign a base index $\iota_0$ to each such event
inductively from the end of the program, and then diagonalise it by loop index
and thread to obtain the de Bruijn index $\iota:\Symbols\rightharpoonup\nat$ on
the symbols these events introduce.

\begin{definition}[De Bruijn-style Indexing of Symbols]\label{def:de-bruijn}

  Let $\mathbb X$ be an execution over a set $X$ of events. Write
  $X^\sigma\subseteq X$ for the \emph{symbol-introducing} events of $X$, that is
  the read events $(e\colon~R~x~\alpha)$ and the allocation events
  $(e\colon~\Allocs~\alpha~\expr)$, each of which introduces a fresh symbol
  $\alpha$. We assign to each $e\in X^\sigma$ an index $\iota_0(e)\in\nat$
  inductively from the end of the execution $\mathbb X$ by traversing the
  execution in program order, $\po$, from the $\po$-largest event in descending
  order, enumerating symbol-introducing events $e$ categorised by $\loopfun(e)$.

  Write $\upclosed e$ for the smallest subset of $X$ upward-closed under $\po$
  and containing $e$, and let $\hat X_e$ collect the symbol-introducing events
  that follow $e$ within the same loops:
  \[
    \hat X_e~\triangleq~(\upclosed e\setminus\{e\})\cap
    \{e'\in X^\sigma\mid\loopfun(e)=\loopfun(e')\}
  \]
  \noindent The index $\iota_0(e)$ of events $e$ is then given by
  \[
    \iota_0(e)\triangleq
    \left\{
      \begin{array}{ll}
        0 & \text{if}~\hat X_e=\emptyset \\
        \max\left(\{\iota_0(e')~\mid~e'\in\hat X_e\}\right)+1 & \text{otherwise} \\
      \end{array}
    \right .
  \]

  We then obtain the de Bruijn indexing of symbol-introducing events, and thus
  of symbols, by diagonalising $\iota_0$ over the innermost nesting loop and the
  thread. Write $\innerloop(\loopfun(e))$ for the index of the inner most loop
  containing $e$, and $0$ when $\loopfun(e)=\emptyset$:

  \[
  \iota(e) \triangleq
  N\cdot\left((L+1)\cdot\iota_0(e)+\innerloop(\loopfun(e))\right) + \thread(e)
  \]

  \noindent where $L$ is the number of loops in $\prog$, $N$ is the number of
  threads, numbered $0,\ldots,N-1$, and $\thread(e)$ denotes the thread on which
  $e$ occurs. Since loops are indexed from $1$, $\innerloop(\loopfun(e))$ ranges
  over $0,\ldots,L$ -- one value per loop and one for the events outside every
  loop, which is the factor $L+1$ above. Each of $\iota_0(e)$,
  $\innerloop(\loopfun(e))$ and $\thread(e)$ is therefore recovered from
  $\iota(e)$, so that $\iota$ is injective.
\end{definition}

The $\po$-largest event exists because $\poinv$ is a total order in executions.
The up-closure $\upclosed e$ exists because $\poinv$ is well-founded in event
structures generated for a finite step counter, these being finite.

\begin{corollary}[Properties of De Bruijn Indexing]\label{l:de-bruijn-props}
  Constructed as above, the following properties hold for a program
  $\prog$ with a loop $\ell$:
  \begin{enumerate}

    \item\label{prop:de-bruijn-tail} let $e$ be a symbol-introducing event in
      $\langle\prog\rangle_{n+1~\emptyset~\lambda\rho\,\varphi.\emptyset~\top}$ with
      $\iter(e)(\ell)=i+1$ and $e'$ a symbol-introducing event in
      $\langle\prog\rangle_{n~\emptyset~\lambda\rho\,\varphi.\emptyset~\top}$ with
      $\iter(e')(\ell)=i$, agreeing on the program counter and on the iteration
      count of every loop other than $\ell$. Then $e$ and $e'$ introduce the
      same symbol.

    \item\label{prop:de-bruijn-head} the symbols introduced at the same program
      counter before the loop are identical between executions for different
      step-counters.

  \end{enumerate}
\end{corollary}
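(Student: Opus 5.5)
Both properties follow from the construction of $\iota$ in Definition~\ref{def:de-bruijn}, once we compare the executions of $\langle\prog\rangle_{n+1}$ and $\langle\prog\rangle_{n}$ from their ends rather than from their beginnings. The plan is to exhibit, for each execution $\mathbb X$ of the larger unravelling in which the loop $\ell$ runs $i+1$ iterations past some point, a matching execution $\mathbb X'$ of the smaller unravelling that lacks one iteration of $\ell$. We then show that the map sending an event with $\iter(\cdot)(\ell)=i+1$ to the event at the same program counter with iteration count $i$ (fixing all other loop counts) preserves $\hat X_e$ up to this shift. Since $\iota_0$ is defined by well-founded recursion on the reverse program order starting from the $\po$-largest event, we can prove $\iota_0(e)=\iota_0(e')$ by induction on the length of the $\po$-suffix $\upclosed e$. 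The remaining components of $\iota$, namely $\innerloop(\loopfun(e))$ and $\thread(e)$, depend only on the program counter and the thread, so they agree trivially. Injectivity of the diagonalisation then says that $e$ and $e'$ carry the same index, and, because symbols are named by their index, the same symbol.

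The key steps in order are as follows.

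First, we observe that $\hat X_e$ only ranges over events with $\loopfun(e')=\loopfun(e)$ in the $\po$-future of $e$. By Compatibility with nesting (Condition~\ref{iter:nesting}) and Monotonicity (Condition~\ref{iter:mono}), these events lie in the same or later iterations of $\ell$, in the same iteration of every enclosing loop.

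Second, we show that the $\po$-suffix of $e$ in $\mathbb X$ and that of $e'$ in $\mathbb X'$ are isomorphic via the iteration shift. The unravelling with step counter $n+1$ differs from that with $n$ only by one additional iteration inserted at the front. By Conditions~\ref{episodic:reg}--\ref{episodic:cond} of Definition~\ref{def:episodic}, the continuation after that iteration receives the same register state up to symbol renaming and no additional constraints on pre-loop symbols. The CPS interpretation therefore produces the same tail structure.

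Third, we run the induction from the $\po$-largest event. Base cases with $\hat X_e=\emptyset$ give $0$ on both sides. In the inductive step, the max over corresponding sets agrees by the induction hypothesis.

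For Property~\ref{prop:de-bruijn-head} the same suffix argument applies to an event before the loop. Its suffix consists of the loop's events followed by the post-loop events, but $\hat X_e$ only counts events with the same $\loopfun$, i.e. events outside every loop (or in the same enclosing iterations). These are unaffected by how many iterations $\ell$ unrolls, so $\iota_0$ and hence $\iota$ coincide for both step counters.

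The main obstacle is the second step: making precise that adding one iteration at the front leaves the tail, and in particular which events of the tail are symbol-introducing, unchanged. The concern is that branching under different value restrictions could select different conflict-free tails. I would first try to argue this purely at the level of the event structure, where $\po$ and conflict depend only on the syntax unrolled by the step counter, so that episodicity is needed only to ensure that the renamed register state yields identical expressions. If this turns out to require semantic input, I would fall back on Condition~\ref{episodic:cond} to show that the tail's admissible executions are independent of the iteration count.
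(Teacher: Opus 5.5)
Your route is the one the paper intends. The paper gives nothing beyond the construction and Figure~\ref{fig:rcu-two-executions-debruijn}: counting from the end puts the extra iteration at the front, so paired events see the same tail, and $\innerloop(\loopfun(e))$ and $\thread(e)$ are read off the label.

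However, Step~1 makes a false claim, and your shift map depends on it. $\hat X_e$ only requires $\loopfun(e'')=\loopfun(e)$. So when $\ell$ is nested in an enclosing loop $\ell_0$, as the inner loop of the hazard-pointer listing is, $\hat X_e$ contains the events of $\ell$ in every later iteration of $\ell_0$, where the count of $\ell$ starts again from $0$. Conditions~\ref{iter:mono} and~\ref{iter:nesting} give same-or-later enclosing iterations, not the same one. On those events the shift ``$\iter(\cdot)(\ell)=j+1\mapsto j$, other counts fixed'' is undefined, so the suffixes are not isomorphic via it. The correspondence must shift only the iterations of $\ell$ inside the enclosing iteration of $e$, and be the identity on later enclosing iterations. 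With a matching that drops a single iteration of $\ell$, as you propose, $\hat X_e$ and $\hat X_{e'}$ then correspond. Dropping the first iteration of $\ell$ in every enclosing iteration would make them differ. The same slip appears in your parenthetical for Property~\ref{prop:de-bruijn-head}, though the conclusion there survives.

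Your ``main obstacle'' is not part of this corollary. At the level of event structures, the extra unrolling of $\langle\prog\rangle_{n+1}$ sits at the deepest nesting, not at the front, since Lemma~\ref{l:es-mono} embeds $\langle\prog\rangle_n$ by the identity on labels. ``At the front'' only describes executions after you pair them from their ends. Once $\mathbb X'$ is taken to be $\mathbb X$ with that iteration dropped, the tails coincide by construction. Moreover $\iota_0$ depends only on which symbol-introducing labels follow $e$, not on symbol names or value restrictions: since $\po$ is total on one thread of an execution, $\iota_0(e)=|\hat X_e|$, which also makes your induction a simple count. Whether that $\mathbb X'$ is an execution of the smaller unravelling is the content of Lemma~\ref{l:gamma-restr-execs}, and that is where episodicity is used. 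The corollary is stated for an arbitrary loop, so appealing to Conditions~\ref{episodic:reg}--\ref{episodic:cond} here would prove it only for episodic ones. Finally, equal components give equal $\iota$ directly by the formula; injectivity is the converse implication and is not needed.
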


\noindent Property~\ref{prop:de-bruijn-tail} identifies a symbol-introducing
event by its program counter together with the iteration counts of the loops
other than $\ell$ that nest it, and not by its program counter alone. The
distinction is immaterial unless a loop nests within $\ell$: a program counter
inside such a loop is reached once per iteration of it, so within one iteration
of $\ell$ it names as many events as that loop performs, and the counts of the
nested loops are what tell them apart. Where $\ell$ is the only loop nesting the
events in question the qualification is empty.

Figure~\ref{fig:rcu-two-executions-debruijn} illustrates the indexing on two
executions of the simplified RCU \code{inc()} loop under step-counters $n$
(left) and $n+1$ (right), the latter one iteration longer. Only the reads of
\fadd{} and \cas{} are shown; the code before the loop and after its exit is
elided as \scalebox{0.6}{$\vdots$}, as is the dereference between the \fadd{}
and the \cas{}. Program text is set in typewriter, distinguishing the pointer
\code{n} written by the \cas{} from the step-counter $n$. Each read carries the
symbol it introduces, and the number in brackets is the index $\iota_0$ counted
from the end. Under this indexing the extra iteration of the longer execution
appears at the \emph{start}, so symbols in the tails agree by de Bruijn index --
the identity of symbols asserted by Property~\ref{prop:de-bruijn-tail} of
Corollary~\ref{l:de-bruijn-props}.

\begin{figure}[htbp] \centering
  \scalebox{0.72}{\begin{tikzpicture}[scale=1.0]
\tikzset{
    lbrace/.style={decorate, decoration={brace, mirror, amplitude=6pt}},
    rbrace/.style={decorate, decoration={brace, amplitude=6pt}},
    lbracelabel/.style={midway, left=8pt, font=\small},
    rbracelabel/.style={midway, right=8pt, font=\small}
}

\matrix (exA) [matrix of nodes, row sep=0.28cm, column sep=0.9cm,
  nodes={align=left}, column 1/.style={nodes={minimum width=3.1cm, align=left}},
  anchor=north west] at (0,0) {
    \smash{\scalebox{0.6}{$\vdots$}}                & \smash{$\phantom{\alpha_0'\ [0]}$} \\
    \code{s:=}\fadd\code{(\&C,0)}   & $\alpha_0\ [3]$  \\
    \smash{\scalebox{0.6}{$\vdots$}}                & \\
    \code{r:=}\cas\code{(\&C,s,n)}  & $\alpha_0'\ [2]$ \\
    \code{s:=}\fadd\code{(\&C,0)}   & $\alpha_1\ [1]$  \\
    \smash{\scalebox{0.6}{$\vdots$}}                & \\
    \code{r:=}\cas\code{(\&C,s,n)}  & $\alpha_1'\ [0]$ \\
    \smash{\scalebox{0.6}{$\vdots$}}                & \\
};

\matrix (exB) [matrix of nodes, row sep=0.28cm, column sep=0.9cm,
  nodes={align=left}, column 1/.style={nodes={minimum width=3.1cm, align=left}},
  anchor=north west] at (8.2,0) {
    \smash{\scalebox{0.6}{$\vdots$}}                & \smash{$\phantom{\alpha_0'\ [0]}$} \\
    \code{s:=}\fadd\code{(\&C,0)}   & $\alpha_0\ [5]$  \\
    \smash{\scalebox{0.6}{$\vdots$}}                & \\
    \code{r:=}\cas\code{(\&C,s,n)}  & $\alpha_0'\ [4]$ \\
    \code{s:=}\fadd\code{(\&C,0)}   & $\alpha_1\ [3]$  \\
    \smash{\scalebox{0.6}{$\vdots$}}                & \\
    \code{r:=}\cas\code{(\&C,s,n)}  & $\alpha_1'\ [2]$ \\
    \code{s:=}\fadd\code{(\&C,0)}   & $\alpha_2\ [1]$  \\
    \smash{\scalebox{0.6}{$\vdots$}}                & \\
    \code{r:=}\cas\code{(\&C,s,n)}  & $\alpha_2'\ [0]$ \\
    \smash{\scalebox{0.6}{$\vdots$}}                & \\
};

\coordinate (hdr) at (0,0.28);
\node[anchor=base, font=\small] at (exA-1-1.center |- hdr) {step-counter $=n$};
\node[anchor=base, font=\small] at (exA-1-2.center |- hdr) {symbol read $[\iota_0]$};
\node[anchor=base, font=\small] at (exB-1-1.center |- hdr) {step-counter $=n+1$};
\node[anchor=base, font=\small] at (exB-1-2.center |- hdr) {symbol read $[\iota_0]$};

\draw[lbrace] ($(exA-2-1.north west)+(-0.1,0.1)$) -- ($(exA-4-1.south west)+(-0.1,-0.1)$)
  node[lbracelabel] {$i{=}0$};
\draw[lbrace] ($(exA-5-1.north west)+(-0.1,0.1)$) -- ($(exA-7-1.south west)+(-0.1,-0.1)$)
  node[lbracelabel] {$i{=}1$};

\draw[rbrace] ($(exB-2-2.north east)+(0.1,0.1)$) -- ($(exB-4-2.south east)+(0.1,-0.1)$)
  node[rbracelabel] {$i{=}0$};
\draw[rbrace] ($(exB-5-2.north east)+(0.1,0.1)$) -- ($(exB-7-2.south east)+(0.1,-0.1)$)
  node[rbracelabel] {$i{=}1$};
\draw[rbrace] ($(exB-8-2.north east)+(0.1,0.1)$) -- ($(exB-10-2.south east)+(0.1,-0.1)$)
  node[rbracelabel] {$i{=}2$};

\draw[-{Stealth[length=3mm]}, thick, blue!60!black]
  (exB-5-1.west) to[out=180,in=0,looseness=1.15] (exA-2-1.east);
\draw[-{Stealth[length=3mm]}, thick, blue!60!black]
  (exB-10-1.west) to[out=180,in=0,looseness=1.15] (exA-7-1.east);
\node[font=\itshape, blue!60!black]
  at ($(exA-2-1.east)!0.5!(exB-5-1.west)+(0,0.45)$) {$\gamma$};

\coordinate (ppoA0a) at ($(exA-2-1.east)+(-0.3,0)$);
\coordinate (ppoA0b) at ($(exA-4-1.east)+(-0.3,0)$);
\coordinate (ppoA1a) at ($(exA-5-1.east)+(-0.3,0)$);
\coordinate (ppoA1b) at ($(exA-7-1.east)+(-0.3,0)$);
\coordinate (ppoB1a) at ($(exB-5-1.west)+(0.3,0)$);
\coordinate (ppoB1b) at ($(exB-7-1.west)+(0.3,0)$);
\coordinate (ppoB2a) at ($(exB-8-1.west)+(0.3,0)$);
\coordinate (ppoB2b) at ($(exB-10-1.west)+(0.3,0)$);
\def\edgeppodp[#1]#2#3{\draw (#2) edge[style=ppo,#1]
  node[] {\scriptsize\color{magenta}$\ppo,\DP$} (#3);}
\edgeppodp[bend left=25, every node/.append style={fill=white, inner sep=1pt, name=pA0, anchor=west, xshift=3pt}]{ppoA0a}{ppoA0b}
\edgeppodp[bend left=25, every node/.append style={fill=white, inner sep=1pt, name=pA1, anchor=west, xshift=3pt}]{ppoA1a}{ppoA1b}
\edgeppodp[bend right=25, every node/.append style={fill=white, inner sep=1pt, name=pB1, anchor=east, xshift=-3pt}]{ppoB1a}{ppoB1b}
\edgeppodp[bend right=25, every node/.append style={fill=white, inner sep=1pt, name=pB2, anchor=east, xshift=-3pt}]{ppoB2a}{ppoB2b}

\draw[<->, dashed, blue!60!black] (pA0) to[out=0, in=180, looseness=1.8] (pB1);
\draw[<->, dashed, blue!60!black] (pA1) to[out=0, in=180, looseness=1.8] (pB2);

\coordinate (gm1) at ($(exA-2-1.east)!0.5!(exB-5-1.west)$);
\coordinate (gm2) at ($(exA-7-1.east)!0.5!(exB-10-1.west)$);
\node[blue!60!black] at ($(gm1)!0.5!(gm2)$) {\scalebox{0.6}{$\vdots$}};

\end{tikzpicture}}
  \caption{
    De Bruijn indexing of symbols in executions with step counter, and the
    $\gamma$ mapping between iterations, identifying symbols by de Bruijn index.
    The diagram shows only a simplified presentation of the execution; the index
    $\iota_0$ shown here enumerates the drawn reads, and the actual de Bruijn
    index $\iota$ is the result of diagonalising $\iota_0$ by loop index and
    thread. The $\ppo,\DP$ arrows are likewise simplified: each stands for the
    $\ppo$ and $\DP$ dependency edges within one iteration of the loop, not for
    a single edge between the two events it connects.
}\label{fig:rcu-two-executions-debruijn}
\end{figure}

\subsection{Finite Bound on Event Structure
Semantics}\label{s:essem-finite-bound}

\smrd{}~\cite{Richards25SMRD} uses step-counters to give a semantics to loops,
decrementing the counter on loop iteration. We take the counter \emph{per loop},
each loop carrying its own bound: a single counter shared between nesting levels
leaves a loop nested under $k$ iterations of an enclosing loop with a smaller
unravelling than under $k-1$, and the results below need successive iterations
of a loop to carry the same unravelling of the loops nested within them. The two
readings agree unless loops are nested; the refinement is set out in
Appendix~\ref{s:app-defs}.
The episodicity criteria guarantee a structural symmetry of the event structure
semantics between loop iterations. The following exploits the symmetry to
establish a finite bound on the next enabled actions in programs with episodic
loops.

\begin{lemma}\label{l:post-futures-narrowed}

  In programs where all unbounded loops are episodic, next enabled actions
  monotonically narrow down over loop iterations: in an episodic loop $\ell$,
  for any history $H$ ending in the $i+1$-st iteration in $\ell$,
  $\horizon\Phi_{H_{i+1}}\subseteq\horizon\Phi_{H_i}$ where $H_{i+1} =
  H\setminus\{e\mid i+1\leq\iter(e)(\ell)\}$ and $H_i = H\setminus\{e\mid
  i\leq\iter(e)(\ell)\}$.

\end{lemma}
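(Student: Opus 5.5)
The plan is to compare the two truncated histories through the iteration-shifting map $\gamma$ of Section~\ref{s:de-bruijn}. This map identifies symbols, and hence events, of iteration $i+1$ with the corresponding events of iteration $i$. The aim is to show that every next enabled action of $H_{i+1}$ has, under this identification, a counterpart among the next enabled actions of $H_i$. The inclusion $\horizon\Phi_{H_{i+1}}\subseteq\horizon\Phi_{H_i}$ is read up to this identification, which is how the paper uses it for Theorem~\ref{t:finite-post-futures}.

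\textbf{Step 1: down-closure.} By Condition~\ref{episodic:events}, every event of an earlier iteration is $(\ppoord\cup\DP)^+$-before every event of a later iteration. Removing all events with $\iter(e)(\ell)\ge i$, or all with $\iter(e)(\ell)\ge i+1$, therefore removes an up-closed set, so $H_i$ and $H_{i+1}$ are again histories. The same condition gives a second fact. A minimum of the $\ppo$/$\DP$-later events of $H_{i+1}$ either lies in iteration $i+1$ of $\ell$, or it lies outside $\ell$ and is not ordered after any removed event. Such events lie on other threads, or before the loop on the same thread. This splits $\horizon\Phi_{H_{i+1}}$ into a part internal to iteration $i+1$ and a residual part that does not depend on the iteration.

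\textbf{Step 2: the residual part.} Its elements belong to threads or program regions untouched by the truncation. Before the loop, Corollary~\ref{l:de-bruijn-props}, Property~\ref{prop:de-bruijn-head}, shows their symbols coincide between the two truncations. So they are literally members of $\horizon\Phi_{H_i}$, provided they remain minimal there. Minimality can only fail if some event of iteration $i$ precedes them. This is excluded because $\ppo$ and $\DP$ are per-thread, and on the loop's own thread Condition~\ref{episodic:events} places iteration $i$ after everything before the loop.

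\textbf{Step 3: the internal part, which I expect to be the main obstacle.} Take $e\in\horizon\Phi_{H_{i+1}}$ in iteration $i+1$. I will show that $\gamma(e)$, the event at the same program counter in iteration $i$, is in $\horizon\Phi_{H_i}$. First, the label and expression of $\gamma(e)$ agree with those of $e$ up to renaming. Condition~\ref{episodic:reg} shows registers are freshly assigned in each iteration, and Corollary~\ref{l:de-bruijn-props}, Property~\ref{prop:de-bruijn-tail}, shows the symbols agree. Second, I must show that the $\ppo$/$\DP$-predecessors of $\gamma(e)$ are exactly the $\gamma$-images of those of $e$, together with events in $H_i$. $\DP$ edges come from justifications mentioning symbols. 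By Condition~\ref{episodic:mem}, cases~\ref{episodic-casea}--\ref{episodic-casec}, no value read in iteration $i+1$ originates in iteration $i$ of the same thread, so every dependency of $e$ runs either within iteration $i+1$ or to pre-loop or foreign events. The analogous statement holds for $\gamma(e)$. $\ppo$ is generated statically from memory orders and RMW atomicity, so it is invariant under the shift. Condition~\ref{episodic:cond} ensures the branch constraints of an iteration do not restrict pre-loop symbols. Consequently the set of admissible branching outcomes, and hence which conflict-branch $e$ lies on, is the same in iterations $i$ and $i+1$; this is the delicate point, since otherwise $\gamma(e)$ might fail to exist in any consistent execution extending $H_i$. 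Minimality of $\gamma(e)$ then transfers from minimality of $e$. I would try first to establish this transfer by an induction on $\po$ within one iteration, matching justifications elaboration by elaboration.
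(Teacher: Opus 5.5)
Your Step~3 never produces what membership in $\horizon\Phi_{H_i}$ requires. By Definition~\ref{def:post-futures}, a next enabled action at $H_i$ is a minimal element of the posterior future of some complete execution, admitted by the memory model, that contains $H_i$. You must therefore exhibit such an execution $\mathbb Y$ containing $\gamma(e)$ and show $\gamma(e)$ minimal in $\mathbb Y\setminus H_i$.

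In the execution $\mathbb X$ witnessing $e$, iteration $i$ is non-final and may have taken different branches from iteration $i+1$ (a failing rather than a succeeding \cas{}, say). So the iteration-$i$ event at the program counter of $e$ need not lie in $\mathbb X$ at all. Until $\mathbb Y$ and its justification set are fixed, ``the $\ppo$/$\DP$-predecessors of $\gamma(e)$'' is not even defined, since both relations are frozen from a particular $J$.

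You flag this as the delicate point, but resolve it only by admissibility of branch outcomes. That gives satisfiability of value restrictions, but not an $\rf$ for every read, a justification set, or the two axioms. This missing construction is exactly what the paper's proof rests on:
\begin{itemize}
\item It compares $\langle\prog\rangle_{n+1}$ with $\langle\prog\rangle_n$ inside $\mathbb E_\infty$, aligned from the end by the de Bruijn indexing, and lets $\gamma$ \emph{restrict} an execution $\mathbb X'$ by deleting its first iteration and re-indexing the rest.
\item Lemma~\ref{l:gamma-restr-execs} shows $\gamma\mathbb X'$ is an execution.
\item Lemma~\ref{l:gamma-pres-consistency} shows it is consistent: restriction adds no edges, so cycles pull back.
\item Lemma~\ref{l:gamma-pres-post-futures} carries the whole posterior future at $H_{i+1}$ onto that at $\gamma H_{i+1}$. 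This yields whole horizons rather than single actions, and the inclusion is non-surjective because not every execution through $H_i$ continues.
\end{itemize}
Your citation of Property~\ref{prop:de-bruijn-tail} of Corollary~\ref{l:de-bruijn-props} to identify symbols already presupposes this two-structure comparison: it relates step-counters $n+1$ and $n$. Within one execution, iterations $i$ and $i+1$ introduce distinct fresh symbols.

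Two of your supporting claims are also false as stated, and they mark where the paper needs its heavier machinery.

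First, $\ppo$ is not static. Its alias and RMW components ask for satisfiability of $P\wedge\loc(e_1)=\loc(e_2)$ and for $c\equiv_P\top$ under the justification predicate $P$, and $\remap_\delta$ depends on the forwarding context. A pre-justification's predicate is $\valres(w)$, which conjoins the branch conditions of \emph{every} earlier iteration, so iterations $i$ and $i+1$ get their $\ppo$ under different predicates. The paper reconciles them through the decomposition $P'\equiv R'_0\wedge\gamma^{-1}P$ with $\restrict{R'_0}{\emptyset}\equiv\top$ (Invariant~\ref{gamma-pres-just:decomp} accompanying Lemma~\ref{l:gamma-pres-justs}) and the resulting transparency of $R'_0$. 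That, not admissibility of branch outcomes, is where Condition~\ref{episodic:cond} does its work.

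Second, $\DP$ is read off $\symbols(P_j)\cup\symbols(D_j)$. Through its value restriction, $e$ therefore has $\DP$-predecessors in every earlier iteration, including the reads feeding each loop condition. Condition~\ref{episodic:mem} constrains $\rf$, not $\DP$. These edges are harmless only because their sources lie in the history. Under the shift, those sourced in the first iteration have no image at all. This is why Lemma~\ref{l:gamma-pres-dp} only reflects $\DP$ and preserves it off the first iteration, and why Lemma~\ref{l:gamma-pres-post-futures} must assume the history covers that iteration and separately discharge residual justifications.
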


The unabridged proof is available in Appendix~\ref{s:app-proofs}. The following
is an outline of the proof.

The proof hinges on the de Bruijn-style indexing of the symbols introduced in
executions, counted from the end (Definition~\ref{def:de-bruijn}).
Consider two executions of the program: the first with step-counter $n$, and the
second, an extension of the first, with step-counter $n+1$. Under the de
Bruijn-style indexing scheme, the additional loop iteration afforded by
step-counter $n+1$ appears at the start of the existing iterations under
step-counter $n$ (Figure~\ref{fig:rcu-two-executions-debruijn}).
We construct a partial mapping, $\gamma$, between the events of the two
executions, extending the identity map on de Bruijn indices, such that events of
the $i+1$-st loop iteration under step-counter $n+1$ are mapped to events of the
$i$-th iteration under step-counter $n$. $\gamma$ preserves program counter,
event type, and follows branching decisions. As $\pc$, event type, and $\iter$
jointly uniquely identify events in executions, $\gamma$ is injective. $\gamma$
is undefined on the additional first iteration afforded by the larger
step-counter. As $\gamma$ follows branching decisions, it preserves the
iteration count of inner nested loops.

Using the episodicity criteria and the inductive constructions of the dependency
relations $\ppo$ and $\DP$, it follows that $\gamma$ preserves and reflects
$\ppo$, and reflects $\DP$. $\gamma$ does not preserve $\DP$ everywhere: value
restrictions record the branching conditions of every earlier iteration, and the
first iteration in the larger event structure is outside the domain of $\gamma$,
so it sources $\DP$-edges that $\gamma$ does not carry over. Those edges are
discarded in the posterior futures, whose sources lie outside the history, so
$\gamma$ still establishes an equivalence of next enabled actions across
successive iterations in separate event structures generated for successive
step-counters.

The step-counter semantics of unbounded loops defines their behaviour using
nested \code{if}-statements.
A larger step-counter extends this nesting, so the event structure generated for
a given step-counter embeds into those of larger ones.
The least fixed point in the lattice of all event structures with
$\subseteq$-inclusion is the event structure of all terminating executions.
Every event structure generated for a finite step-counter embeds into the fixed
point, and so does $\gamma$ as a partial map on events.
Figure~\ref{fig:es-embedding-fixpoint} shows the three structures side by side.
Each \cas{} that the step-counter does not cut off branches into the loop exit
$\bot$, taken when it succeeds, and the next iteration's \fadd{}, taken when it
fails; the embeddings $\mathit{id}$ and $\mathit{id}^{\mathrm{lim}}_k$ identify
each structure with an initial part of the next and of the fixed point, while
$\gamma$ runs the other way, back from an iteration to its predecessor.

\begin{figure}[htbp]
  \centering
  \scalebox{0.78}{\begin{tikzpicture}[
  ev/.style={inner sep=2pt},
  es/.style={->, draw=white!50!black},
  emb/.style={{Hooks[right]}-{Stealth[length=2mm]}, dashed, draw=black!55},
  map/.style={{Bar[width=4pt]}-{Stealth[length=2mm]}, dashed, draw=black!55},
  emblabel/.style={above, font=\small, inner sep=2pt},
  emblabelb/.style={below, font=\small, inner sep=2pt},
  gam/.style={-{Stealth[length=2.5mm]}, thick, blue!60!black},
  panel/.style={font=\small},
]

\node[ev] (a1f) at (0,0)      {\code{s:=}\fadd\code{(\&C,0)}};
\node[ev] (a1c) at (0,-1.1)   {\code{r:=}\cas\code{(\&C,s,n)}};
\node[panel] (lblA) at (0,1.7) {$\langle\prog\rangle_n$};

\node[ev] (b1f) at (5.8,0)    {\code{s:=}\fadd\code{(\&C,0)}};
\node[ev] (b1c) at (5.8,-1.1) {\code{r:=}\cas\code{(\&C,s,n)}};
\node[ev] (b1x) at (7.8,-1.9) {$\bot$};
\node[ev] (b2f) at (5.8,-2.5) {\code{s:=}\fadd\code{(\&C,0)}};
\node[ev] (b2c) at (5.8,-3.6) {\code{r:=}\cas\code{(\&C,s,n)}};
\draw[es] (b1c.south east) -- (b1x.north west);
\draw[es] (b1c) -- (b2f);
\node[panel] (lblB) at (5.8,1.7) {$\langle\prog\rangle_{n+1}$};

\node[ev] (c1f) at (11.6,0)    {\code{s:=}\fadd\code{(\&C,0)}};
\node[ev] (c1c) at (11.6,-1.1) {\code{r:=}\cas\code{(\&C,s,n)}};
\node[ev] (c1x) at (13.6,-1.9) {$\bot$};
\node[ev] (c2f) at (11.6,-2.5) {\code{s:=}\fadd\code{(\&C,0)}};
\node[ev] (c2c) at (11.6,-3.6) {\code{r:=}\cas\code{(\&C,s,n)}};
\node[ev] (c2x) at (13.6,-4.4) {$\bot$};
\node[ev] (c3)  at (11.6,-4.9) {\scalebox{0.7}{$\vdots$}};
\draw[es] (c1c.south east) -- (c1x.north west);
\draw[es] (c1c) -- (c2f);
\draw[es] (c2c.south east) -- (c2x.north west);
\draw[es] (c2c) -- (c3);
\node[panel] (lblC) at (11.6,1.7) {$\bigsqcup_k\langle\prog\rangle_k$};

\draw[map] (a1f.east) -- (b1f.west);
\draw[map] (a1c.east) -- (b1c.west);
\draw[map] (b1f.east) -- (c1f.west);
\draw[map] (b1c.east) -- (c1c.west);
\draw[map] (b2f.east) -- (c2f.west);
\draw[map] (b2c.east) -- (c2c.west);

\draw[emb] (lblA) -- node[emblabelb] {$\mathit{id}$}   (lblB);
\draw[emb] (lblB) -- node[emblabelb] {$\mathit{id}^{\mathrm{lim}}_{n+1}$} (lblC);
\draw[emb] (lblA) to[bend left=15] node[emblabel] {$\mathit{id}^{\mathrm{lim}}_{n}$} (lblC);

\draw[gam] (b2f.west) to[out=180, in=0, looseness=1.4] (a1f.south east);
\draw[gam] (b2c.west) to[out=180, in=0, looseness=1.4] (a1c.south east);
\node[font=\itshape, blue!60!black] at (3.6,-1.95) {$\gamma$};

\coordinate (gmA) at ($(a1f.south east)!0.5!(b2f.west)$);
\coordinate (gmB) at ($(a1c.south east)!0.5!(b2c.west)$);
\node[blue!60!black] at ($(gmA)!0.5!(gmB)$) {\scalebox{0.6}{$\vdots$}};

\end{tikzpicture}}
  \caption{Event structures for successive step-counters and their embedding
  into the least fixed point. Only the \fadd{} and \cas{} of each iteration are
  shown, and $\bot$ marks the loop exit. A larger step-counter extends the
  nesting, so $\langle\prog\rangle_n$ embeds into $\langle\prog\rangle_{n+1}$
  and every finite structure into the fixed point. Both embeddings are the
  identity on events: $\mathit{id}$ into the structure for the next
  step-counter, and $\mathit{id}^{\mathrm{lim}}_k$, its iteration to the limit,
  from $\langle\prog\rangle_k$ into the fixed point. $\gamma$ is a partial map
  within the fixed point.}\label{fig:es-embedding-fixpoint}
\end{figure}

Every execution generated for a finite step-counter that exits its loops is an
execution of the fixed point, so each $\gamma$ relates two executions of that
one event structure and induces an embedding of next enabled actions within it.
The embedding maps the next enabled actions of the next iteration into the
current iteration.
The embedding is monotonic over the loop iteration count. The embedding is not
surjective, as the $\rf$-relation in an execution may entail the termination of
the loop. In our example of RCU with a strong \cas{}, each thread's \cas{} may
either read another thread's \cas{} and fail, or succeed otherwise. In each
iteration of the loop, one of the threads must succeed. The correspondence
establishes the property of Lemma~\ref{l:post-futures-narrowed}.
The narrowing, and hence the bound, follows from the monotonicity of the
correspondence alone; it does not rely on any iteration eventually succeeding.
Furthermore, the monotonicity of the correspondence identifies the next enabled
actions in the first iteration as an upper bound on those of all later
iterations, a property captured by the following theorem. Note that the theorem
only provides a finite bound on next enabled actions, not on the execution
suffixes themselves, which continue to termination and are still generally
unbounded.

\begin{restatable}[Finite Bound on Next Enabled Actions]{theorem}{finitenextactions}\label{t:finite-post-futures}

  In a program $\prog$ where unbounded loops are episodic, there are only
  finitely many next enabled actions in the event structure semantics.

\end{restatable}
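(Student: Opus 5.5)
The proof proceeds by induction on the loop structure of $\prog$, with Lemma~\ref{l:post-futures-narrowed} as the engine. The aim is to show that every next-enabled-action set $\horizon\Phi_H$, for a history $H$ of an execution in the least fixed point $\bigsqcup_k\langle\prog\rangle_k$, is identified via $\gamma$ and the de Bruijn indexing of Definition~\ref{def:de-bruijn} with a next-enabled-action set from a bounded prefix of the unravelling. First, outside all loops the claim is immediate. A loop-free program yields a finite event structure, because $\langle\prog\rangle_n$ is finite for a finite step-counter. Its executions and histories are therefore finite in number, and so are the sets $\horizon\Phi_H$. By Corollary~\ref{l:de-bruijn-props}, Property~\ref{prop:de-bruijn-head}, the symbols introduced before any loop coincide across step-counters, so this finite collection is stable as $n$ grows.

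The central step treats a single episodic loop $\ell$. Take a history $H$ whose frontier lies in iteration $i$ of $\ell$. Iterating Lemma~\ref{l:post-futures-narrowed} downwards along the chain $H_i, H_{i-1},\dots,H_1$ gives $\horizon\Phi_{H_i}\subseteq\horizon\Phi_{H_{i-1}}\subseteq\dots\subseteq\horizon\Phi_{H_1}$, with each inclusion understood modulo the map $\gamma$. This map is injective, preserves program counter and event type, and by Corollary~\ref{l:de-bruijn-props}, Property~\ref{prop:de-bruijn-tail}, identifies symbols by de Bruijn index. Consequently every set of next enabled actions arising inside $\ell$ is, up to $\gamma$, a subset of one arising in the first iteration. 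The first iteration is a bounded fragment of the event structure: it has finitely many program counters, finitely many branching outcomes, and finitely many de Bruijn indices. Each action in it is determined by its program counter, event type, and the symbolic expressions it carries, and by Conditions~\ref{episodic:reg} and~\ref{episodic:events} these expressions range over finitely many indexed symbols. There are therefore finitely many candidate actions, and finitely many subsets of them. Histories that extend past the loop exit are handled by concatenation: the continuation after $\ell$ is interpreted under a register state that, by Condition~\ref{episodic:reg}, depends only on the last iteration and the pre-loop state, so it again ranges over finitely many symbolic configurations up to indexing. Condition~\ref{episodic:cond} ensures that the value restrictions accumulated across iterations do not pin pre-loop symbols, so the restriction carried past the loop is drawn from a finite set as well.

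For nested and sequenced loops I would induct on the containment order $\prec$, using Condition~\ref{iter:nesting} on $\iter$. This condition makes an inner loop restart its iteration count within each outer iteration, and since step-counters are taken per loop, successive outer iterations carry identical inner unravellings. Applying the single-loop argument to the innermost loops first yields a finite bound for each body. Those bounded bodies then serve as atomic finite blocks when the argument is applied to the enclosing loop. Parallel composition multiplies finitely many per-thread possibilities, because $\Phi$ splits by thread.

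The main obstacle is that the inclusions of Lemma~\ref{l:post-futures-narrowed} hold only modulo $\gamma$, and $\gamma$ does not preserve $\DP$ edges sourced in the first iteration. I must argue that "finitely many" is meant up to this identification, i.e.\ up to renaming of symbols by de Bruijn index, and that the discarded $\DP$ edges do not enlarge $\Phi$ because their sources lie inside the history. I would first try to establish that the set of triples (program counter, event type, de Bruijn-indexed expression) appearing in any $\horizon\Phi_H$ is finite, and deduce the theorem from that. The key point to verify is that the indices occurring in the expressions of enabled actions within one iteration are bounded independently of the step-counter.
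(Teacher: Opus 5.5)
Your top-level plan matches the paper's: carry histories back to the first iteration along $\gamma$ and count there. But the step that actually makes horizons finite is missing. You invoke Condition~\ref{episodic:events} only to bound expressions, which is the work of Lemma~\ref{l:episodic-symb} via Condition~\ref{episodic:reg}. Its essential role here is to confine horizons. Suppose an event $e_0$ of iteration $k$ of $\ell$ lies outside a history $H$. Every event $e$ of a later iteration is reached from $e_0$ by a $(\ppoord\cup\DP)^+$-path. Because $H$ is downward closed, the last edge of that path starts outside $H$, so $e$ has a predecessor outside $H$ and is not in $\horizon\Phi_H$. The events of $\ell$ in any horizon therefore all belong to the earliest iteration not wholly contained in $H$. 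Without this, nothing prevents a horizon at a history inside the first iteration from containing events of arbitrarily many later iterations, and ``the first iteration is a bounded fragment'' bounds nothing.

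Relatedly, your chain $H_i,\dots,H_1$ consists only of truncations at iteration boundaries, so it says nothing about $\horizon\Phi_H$ for an $H$ ending mid-iteration. The paper instead uses the correspondence at every history containing the first iteration (Lemma~\ref{l:gamma-pres-post-futures}, iterated as in Corollary~\ref{c:gamma-history-reset}). This carries a history ending inside iteration $i$ to one ending inside the first, where the confinement argument finishes. Histories past the loop exit go the same way, since $\gamma$ is the identity outside $\ell$. No register-state argument is needed there, and Condition~\ref{episodic:reg}, which restricts only reads inside the loop, would not supply one.

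Second, the finiteness check you propose would fail. The index $\iota_0$ of Definition~\ref{def:de-bruijn} counts symbol-introducing events from the \emph{end} of the execution. The indices of first-iteration symbols therefore grow with the number of iterations that follow: in Figure~\ref{fig:rcu-two-executions-debruijn}, $\alpha_0$ has $\iota_0=3$ under step-counter $n$ and $\iota_0=5$ under $n+1$. Since $\gamma$ is the identity on de Bruijn-indexed actions, it never identifies the first iterations of executions of different lengths. So whenever $\ell$ admits unboundedly many iterations, the triples (program counter, event type, de Bruijn-indexed expression) you intend to count are infinitely many. The step-counter-independent bound you hope to verify does not exist.

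What the paper counts instead are the events of the first iteration in the limit structure, identified by control label, of which there are finitely many. For nested loops this holds once each inner loop has been reduced by its own $\gamma$, as your induction on $\prec$ would arrange. Later iterations are related to these events by $\gamma$ as a shift of the iteration count, and horizons are counted as subsets of these events rather than through bounded indices.
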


\conditionalpagebreak
\section{Finitary Operational Semantics of Episodic Loops}\label{s:opsem}

In the previous section, we showed that episodicity gives rise to a finitary
event-structure semantics, which lets us reason uniformly about failing
iterations through the lens of next enabled actions.
The semantics loses track of individual iterations, restricting reasoning to
safety properties, that is, invariants that can be tested to hold independently
of past failing iterations.
We verify the fix to the UAF bug through such a safety property, stated in a
propositional logic over predicates in an ownership-based instance of
Owicki-Gries logic. These properties admit reasoning in an operational semantics
whose steps are derived from next enabled actions -- making that semantics
finitary as well.
The same applies to any safety property expressible in this ownership logic --
over register and memory values and the ownership predicates of the proof state;
use-after-free freedom is the instance we treat here, and ABA-freedom is
another. Liveness properties, such as a loop's eventual termination, lie outside
the scope of this semantics, which abstracts away the iteration count.

This is a \emph{symbolic} operational semantics, reusing the machinery of the
SMRD primer (Section~\ref{s:essem}): reads produce fresh symbols rather than
concrete values, memory locations are symbolic, and values are constrained in
the predicate $\sigma.\varphi$ in the program state $\sigma$, accumulated over
derivations. Transitions follow the preserved-program-order and dependency
relations $\ppo$ and $\DP$, stepping through an execution's next enabled
actions, while thread-local computation is carried in the register state $\rho$.

\emph{Derivations} in the operational semantics are contextual in the program
$\prog$. In the rules below, $\overline\prog$ denotes the \emph{atomic set
unravelling} of $\prog$ by projecting $\prog$ by thread $t\in\Threads$,
unravelling loops as nested \code{if}-statements, and assigning a unique label
to each step in $\prog$. We write $\overline\prog(t)$ for the projection onto
thread $t$. The labels are a composite of program counter, action type, and
iteration index per loop -- matching the uniqueness property of event labels in
the event structure semantics. The next instruction from the program is unpacked
from the atomic set unravelling.
Derivations track transitions
$(\sigma,\rho,H)\xrightarrow{\overline{a}}(\sigma',\rho',H')$ between
configurations consisting of \emph{program state} $\sigma$, \emph{register
state} $\rho$, and history $H$. Histories in the operational semantics match
histories in the event structure semantics, as events are labelled actions.

The \emph{one-step semantics} of commands is defined in terms of a one-step
semantics of actions. Actions modify the program state and commands modify the
register state.
Figure~\ref{fig:opsem:overview} collects the complete set of rules -- the
command one-step rules and action one-step rules together with the future
stepping rules -- for reference; the remainder of this section introduces and
explains them individually. Throughout, $\funup{f}{x}{v}$ denotes the
\emph{partial update of a function} $f$ at $x$ by $v$, such that
$(\funup{f}{x}{v})(y)$ is $v$ if $y=x$ and $f(y)$ otherwise; the rules use it
to update register states, viewfronts, and the components of $\sigma$.

\begin{figure}[p]
  \centering
  \small
  \captionsetup[sub]{font=small,skip=7pt}
  \captionsetup{skip=26pt}
  \setlength{\tabcolsep}{2pt}

  \framebox[\linewidth]{\textbf{Command one-step semantics}\hfill
    $\sigma\vdash\left(l\colon c,\rho\right)\xrightarrow{\overline a}_t\left(\skipcmd,\rho'\right)$\hfill}

  \bigskip

  \begin{tabular}{@{}cc@{}}
    \ovsub{0.48\linewidth}{90}{\input{rules/commands_set}}{Set register}{fig:command:set} &
    \ovsub{0.48\linewidth}{90}{\input{rules/commands_read-var}}{Read variable}{fig:command:read-var} \\[32pt]
    \ovsub{0.48\linewidth}{90}{\input{rules/commands_read-ref}}{Read reference}{fig:command:read-ref} &
    \ovsub{0.48\linewidth}{90}{\input{rules/commands_read-ptr}}{Read pointer}{fig:command:read-ptr} \\[32pt]
    \ovsub{0.48\linewidth}{90}{\input{rules/commands_write-var}}{Write variable}{fig:command:write-var} &
    \ovsub{0.48\linewidth}{84}{\input{rules/commands_write-ptr}}{Write pointer}{fig:command:write-ptr} \\[32pt]
    \ovsub{0.48\linewidth}{90}{\input{rules/commands_alloc}}{Allocate}{fig:command:alloc} &
    \ovsub{0.48\linewidth}{90}{\input{rules/commands_free}}{Free}{fig:command:free} \\[32pt]
    \ovsub{0.48\linewidth}{90}{\input{rules/commmands_fence}}{Fence}{fig:command:fence} &
    \ovsub{0.48\linewidth}{86}{\input{rules/commands_faa}}{FAA}{fig:command:faa} \\[32pt]
    \ovsub{0.48\linewidth}{77}{\input{rules/commands_cas-success}}{CAS success}{fig:command:cas-success} &
    \ovsub{0.48\linewidth}{80}{\input{rules/commands_cas-failure}}{CAS failure}{fig:command:cas-failure} \\
  \end{tabular}

  \caption{Overview of the finitary operational semantics: command one-step
  rules (updating the register state $\rho$, this page), action one-step rules
  (updating the program state $\sigma$) and the future stepping rules that drive
  derivations along the next enabled actions in $\horizon\Phi$ (next page). Each
  rule is introduced and explained individually in
  Section~\ref{s:opsem}.}\label{fig:opsem:overview}
\end{figure}

\begin{figure}[p]
  \ContinuedFloat
  \centering
  \small
  \captionsetup[sub]{font=small,skip=7pt}
  \captionsetup{skip=26pt}
  \setlength{\tabcolsep}{2pt}

  \framebox[\linewidth]{\textbf{Action one-step semantics}\hfill
    $\sigma\overset{a}{\rightsquigarrow}_t\sigma'$\hfill}

  \bigskip

  \begin{tabular}{@{}cc@{}}
    \ovsub{0.48\linewidth}{76}{\input{rules/actions_write}}{Write}{fig:action:write} &
    \ovsub{0.48\linewidth}{71}{\input{rules/actions_read}}{Read}{fig:action:read} \\[32pt]
    \ovsub{0.48\linewidth}{66}{\input{rules/actions_allocate}}{Allocate}{fig:action:allocate} &
    \ovsub{0.48\linewidth}{88}{\input{rules/actions_deallocate}}{Deallocate}{fig:action:deallocate} \\[32pt]
    \multicolumn{2}{c}{\ovsub{0.6\linewidth}{82}{\input{rules/actions_fence_branch}}{Fence / branch}{fig:action:fence-branch}} \\
  \end{tabular}

  \vspace{1.2cm}

  \framebox[\linewidth]{\textbf{Future stepping}\hfill
    $\overline\prog\vdash\left(\sigma,\rho,H\right)\rightarrow\left(\sigma',\rho',H'\right)$\hfill}

  \bigskip

  \begin{tabular}{@{}c@{}}
    \ovsub{\linewidth}{82}{\input{rules/step_non-lb_non-branch}}{Stepping outside loop boundaries or branching}{fig:step:non-lb_non-branch} \\[28pt]
    \ovsub{\linewidth}{62}{\input{rules/step_lb}}{Stepping at loop boundaries}{fig:step:lb} \\[28pt]
    \ovsub{\linewidth}{80}{\input{rules/step_branch-then}}{Branching: \code{then}}{fig:step:branch-then} \\[28pt]
    \ovsub{\linewidth}{80}{\input{rules/step_branch-else}}{Branching: \code{else}}{fig:step:branch-else} \\
  \end{tabular}

  \caption{Overview of the finitary operational semantics (continued): action
  one-step rules and future stepping rules.}
\end{figure}

\paragraph{Timestamped writes and viewfronts.} The program state $\sigma$
records the write actions performed so far, each stamped with a rational
\emph{timestamp} from $\tstamps$, in $\sigma.\writeset$. The timestamps totally
order the writes at each symbolic memory location, and that order is the
modification order (equivalently, the coherence order $\CO$) of the event
structure semantics: we write $\tst(\hat w)$ for the timestamp of a timestamped
write $\hat w=(w,q)$, and take $(w_1,w_2)\in\CO$ exactly when $w_1$ and $w_2$
write to locations equivalent under $\sigma.\varphi$ and
$\tst(\hat w_1)<\tst(\hat w_2)$.

Which of those writes a thread may read is not a global matter: writes are not
propagated to all threads at once, so each thread carries its own
\emph{viewfront}. The program state holds two families of viewfronts, both
mapping symbolic memory locations to timestamped writes:

\begin{itemize}

  \item $\sigma.\tview$, the viewfront of thread $t$. Thread $t$ may read any
    write at $x$ whose timestamp is not earlier than that of
    $\sigma.\tview(x)$, so the writes observable to $t$ at $x$ are
    \[
      \sigma.\OW(t,x)~\triangleq~
      \left\{
        (w,q)\in\sigma.\writeset
        \mid
        \loc(w)\equiv_{\sigma.\varphi}x
        \wedge
        \tst(\sigma.\tview(x))\leq q
      \right\}.
    \]
    Symbolic locations are compared under the constraint $\sigma.\varphi$ of the
    program state, as visible writes were: $\OW$ is the symbolic counterpart of
    the single last visible write, and collapses to it when every location is
    concrete and every thread's view is current.

  \item $\sigma.\mview[\hat w]$, the viewfront of the write $\hat w$ -- the
    viewfront its writing thread had when it performed $\hat w$. It is what a
    thread acquires when it synchronises with $\hat w$.

\end{itemize}

A command writing to a global variable and the underlying write action are given
by Figures~\ref{fig:command:write-var} and~\ref{fig:action:write}. The write
action picks a fresh timestamp $q'$ immediately after the timestamp $q$ of some
observable write, where
\[
  \sigma.\freshts(x,q,q')~\triangleq~
  q<q'\wedge
  \left(\forall (w,q'')\in\sigma.\writeset.~
    \loc(w)\equiv_{\sigma.\varphi}x\wedge q<q''\Rightarrow q'<q''\right),
\]
adds $(a,q')$ to $\sigma.\writeset$, advances the writing thread's viewfront to
it, and records that viewfront as the new write's $\mview$.

The read instruction $r :=_o x$ with memory order $o$ is interpreted in terms of
a read action $\Reads_o~x~\alpha$ reading the value of $x$ as a symbol $\alpha$
and updates the register state $\rho$ at $r$. The semantics of the read action
picks any observable write $\hat w\in\sigma.\OW(t,x)$ other than an anchor of
$t$ -- a write of $t$'s own that a loop boundary retained, introduced with the
reset rule below -- records the pair in $\sigma.\RF$, and adds a constraint equating $\alpha$
with the value written by that write to $\sigma.\varphi$. The command and the underlying
read action are given by Figures~\ref{fig:command:read-var}
and~\ref{fig:action:read}.

\paragraph{Synchronisation.} Reading also moves the reading thread's viewfront,
and this is the one place where information crosses threads. If the read is
plain, $\sigma.\tview$ merely advances at $x$ to the write that was read. If a
releasing write is read by an acquiring read -- $w\in\Wrel$ and $a\in\Racq$ --
the two \emph{synchronise}, and the reader additionally takes on the writer's
viewfront: the two viewfronts are combined by
\[
  (v_1\viewcomb v_2)(x)~\triangleq~
  \begin{cases}
    v_1(x) & \text{if }\tst(v_2(x))\leq\tst(v_1(x)) \\
    v_2(x) & \text{otherwise,}
  \end{cases}
\]
which takes the later write at every location. Writing
\[
  \sigma.\syncview_t(\hat w,a)~\triangleq~
  \begin{cases}
    \sigma.\tview\viewcomb\sigma.\mview
      & \text{if }w\in\Wrel\text{ and }a\in\Racq \\
    \sigma.\tview & \text{otherwise,}
  \end{cases}
\]
for the reader's viewfront after synchronising with $\hat w=(w,q)$, the read
rule advances it at $x$ to the write it read,
$\funup{\sigma.\syncview_t(\hat w,a)}{x}{\hat w}$. Everything the writer had
observed when it released is therefore observed by the reader afterwards, and
the stale writes it had already passed are no longer in $\OW$ for the reader.
This is what carries a release/acquire handshake -- the RCU reclaimer observing
a reader's quiescent exit write, and with it every write that reader made before
exiting.

Futures are built from $\ppo$ and $\DP$, both intra-thread, and $\rf$ -- the one
inter-thread dependency of the model -- does not contribute to the dependency
relations. Keeping $\rf$ out of $\Phi$ is what lets $\Phi$ split per thread, and
hence what lets the Owicki-Gries proof of Section~\ref{s:opsem-og} decompose
into a per-thread invariant. Cross-thread ordering is carried by the viewfronts
instead, and the synchronises-with edges they realise extend happens-before to
$\HB\triangleq(\DP\cup\ppoord\cup\SW)^+$ with
$\SW\triangleq\rf\cap(\Wrel\times\Racq)$.

Fences are transparent to viewfronts in this fragment
(Rule~\ref{fig:action:fence-branch}~(branch)): release and acquire are carried by the
annotations on the accesses themselves, as in the operational semantics of
\cite{Wright2023OpSem}, which likewise omits fences. A fence-based handshake
would need a further viewfront component in $\sigma$ recording the view at the
last read, and we do not treat it.

\paragraph{Future stepping.} The operational rules keep transitions consistent
with the dependency relations $\ppo$ and $\DP$ by following the next enabled
actions in $\horizon\Phi$ in future stepping rules. We write $(l\colon\overline
a)\in^*\horizon\Phi_H$ when the actions of a list $\overline a$ are enabled in
sequence -- each a next enabled action once its predecessors have been added to
the history $H$. Branching and fence actions are skipped in that test and not
recorded in $H$, since executions, and so $\Phi$ and histories, contain no such
events. Outside of loop boundaries and branching, future stepping is
given by Rule~\ref{fig:step:non-lb_non-branch}~(non-lb/non-branch).

\paragraph{Loop boundaries.} Loop boundaries are identified through a change of
iteration between history $H$ and control label $l$, such that $l$ is in the
loop $\ell$ and $\iter(e)(\ell)<\iter(l)(\ell)$ for all events $e$ in $H$.
$\iter(H)<\iter(l)$ denotes the lexicographic extension of $<$ to all loop
indices in $\loopfun(l)$. At loop boundaries, program state $\sigma$, register
state $\rho$, and history $H$ are reset to the beginning of the loop, where
$|^t_{\setminus\ell}$ removes terms added by the current thread $t$ in the loop.
On a program state it acts componentwise, on $\RF$ and $\varphi$ as before and
on the timestamped state by dropping the writes $t$ made in the loop from
$\sigma.\writeset$ together with their $\mview$s -- all but the one of greatest
timestamp at each location, which is retained as an \emph{anchor}. Episodicity
is what makes the removal harmless: by Condition~\ref{episodic:mem} no read
after the boundary observes a write of an earlier iteration, and the read rule
accordingly denies $t$ its own anchors, while leaving them observable to the
other threads as they were before the boundary. The anchor is what keeps $t$'s
viewfront on a write of the state; an entry of any other viewfront that pointed
at a removed write keeps that write's timestamp, entries being compared only by
timestamp. No viewfront therefore moves back, and the reset widens no thread's
reach (Section~\ref{s:opsem-sound-complete}).
Future stepping at loop boundaries is given by Rule~\ref{fig:step:lb}~(lb).

\paragraph{Branching.} Unlike the command rules above, which are directed by the
syntax of the instruction being executed, the two rules for branching are future
stepping rules: they resolve a branch taken by the executing thread $t$ rather
than interpreting an $\code{if}$-statement. The reason is that executions in the
event structure semantics, and thus the future set $\Phi$, do not contain
branching events, so there is no next enabled action for the rules to follow at
a branch. They are instead non-deterministic, and derivations evaluate both
branches simultaneously, as in Figures~\ref{fig:step:branch-then}
and~\ref{fig:step:branch-else}.
In these rules constraints from branching are accumulated in the predicate
$\sigma.\varphi$ under the program state $\sigma$. Impossible states with
unsatisfiable constraints are pruned.

The rules use the auxiliary functions $\ifCond$, $\enterThen$ and $\enterElse$
to detect branching instructions and extract the branching condition. $\ifCond$
extracts the expression in the $\code{if}$-statement. $\enterThen$ and
$\enterElse$ detect, for the executing thread $t$, the change from $H$ into
either the $\code{then}$ or the $\code{else}$ branch of an
$\code{if}$-statement, where all events in $H$ are $\po$-before the
$\code{if}$-statement, and $l$ is in the $\code{then}$ or the $\code{else}$
branch, respectively. $\enterThen$ and $\enterElse$ perform a similar task to
the iteration test $\iter(H)<\iter(l)$ in Rule~\ref{fig:step:lb}~(lb) above.
Example~\ref{ex:opsem-rcu} below applies both rules to the
$\code{if}$-statements of the RCU $\code{inc()}$ loop on a named thread.

\paragraph{Atomic read-modify-write (RMW) instructions.} RMW instructions such
as \fadd~and \cas~are interpreted in terms of a sequence of events. The
\fadd-instruction is interpreted in terms of a read and a write action. The
actions of a conditional RMW instruction such as \cas~depend on the outcome of
the condition. The failing branch of the \cas-instruction yields a read action,
the successful branch a read action followed by a write action, as given by
Figures~\ref{fig:command:cas-success} and~\ref{fig:command:cas-failure}.
Between the two sits a branching action carrying the outcome of the test --
$\alpha=\evalreg{\expr_s}{\rho}$ for success, its negation for failure -- which
the branch rule of Rule~\ref{fig:action:fence-branch}~(branch) adds to
$\sigma.\varphi$, as the rules for $\code{if}$-statements do for theirs. Both
\cas{} rules are therefore tried, and an outcome inconsistent with
$\sigma.\varphi$ is pruned. The test is decided after the read action, so
against a constraint that already equates $\alpha$ with the value of the write
that was read.

\begin{example}\label{ex:opsem-rcu}

  We demonstrate the operational semantics of \code{while}-loops on the example
  of \code{inc()} in Figure~\ref{fig:rcu-inc-code}.

  From $\overline\prog(t)$ each application of a future stepping rule selects a
  pair $l\colon c$ of label $l$ and instruction $c$. By convention $l$ is
  a composite of $\pc$ and $\iter$. For simplicity we denote $\iter$ by the
  iteration of the \code{while} loop in \code{inc()} only, and use the line
  numbers from Appendix~\ref{app:rcu} for $\pc$.

  Then consider a history $H_0$ including all events up to the write event in
  $I_{9}$. The register state $\rho_0$ at this point will have symbolic values
  for memory locations $\alpha_n$ and $\alpha$ for $n$ and $s$, respectively.
  \[ H_0=\Downclosed{\left((I^W_7,0),\Writes_\rel~C~\alpha\right)} \cup
  \left\{((I_8,0),\Reads~s~\beta),((I_{9},0),\Writes~n~\beta+1)\right\} \]
  We have chosen $H_0$ deliberately to include the events at $I_8$ and $I_9$, so
  that both outcomes, success and failure, are possible for \cas,
  i.e.~$(I_{10},\overline a_\text{success})\in^*\horizon\Phi_{H_0}$ with
  $\overline a_\text{success}=\{a_r,a_b,a_w\}$ and $(I_{10},\overline
  a_\text{failure})\in^*\horizon\Phi_{H_0}$ where $\overline
  a_\text{failure}=\{a_r,a_b\}$. The branching action $a_b$ is skipped by
  $\in^*$ and recorded in $\sigma.\varphi$ rather than in the history. At the
  history $H_1=H_0\cup\{(I_{10},a_r)\}$ the posterior future set
  $\Phi_{H_1}$ contains a future of a successful execution, in whose horizon the
  write $(I^W_{10},a_w)$ is a next enabled action, and a future of a failing
  execution, in whose horizon it is not. Both cases are covered by
  Figures~\ref{fig:command:cas-success} and~\ref{fig:command:cas-failure} as
  follows.

  The read action $a_r=\Reads~C~\alpha'$ in \cas~reads the latest memory
  location of $C$ as $\alpha'$, the branching action
  $a_b=\Branches~(\alpha=\alpha')$ compares the values of $s$ and $C$, and the
  write event $a_w=\Writes~C~\alpha_n$ swaps the location of $C$ for $n$ if the
  comparison $\alpha=\alpha'$ succeeds. The branching action adds
  $\alpha=\alpha'$ to $\varphi$ in the success case and $\neg(\alpha=\alpha')$
  in the failure case, each admitted if satisfiable with $\varphi_0$, the
  constraint after $H_0$ extended by the read action $\Reads~C~\alpha'$ of
  Rule~\ref{fig:action:read}~(read). The register state is updated by the
  \cas~result $r$, i.e.~$\rho_1=\funup{\rho_0}{r}{\top}$ in the success case and
  $\rho_1=\funup{\rho_0}{r}{\bot}$ in the failure case.

  The next command in $\overline\prog(t)$ is the \code{if}-statement around the
  quiescent period at the end of the loop body. Branching events are explicitly
  excluded from executions, and thus from the next-enabled actions in
  $\horizon\Phi$. If the rcu-exit command \code{rcu[tid]:=0} is selected,
  Rule~\ref{fig:step:branch-then}~(then) applies, and evaluates the rcu-exit and
  rcu-enter, \code{rcu[tid]=1} immediately after using
  Rule~\ref{fig:command:write-var}~(write-var) via
  Rule~\ref{fig:step:non-lb_non-branch}~(non-lb/non-branch).

  The next command is the \code{if}-statement obtained from unrolling the
  \code{while}-loop into nested \code{if}-statements. In the success case, by
  Rule~\ref{fig:step:non-lb_non-branch}~(non-lb/non-branch) via
  Rule~\ref{fig:step:branch-else}~(else), $H_2=H_1\cup\{(I_{10},a_w)\}$ the next
  command enabled in $\horizon\Phi_{H_2}$ is $(I_{11},\code{rcu[tid]:=0})$. The
  updated register state contains memory locations of $s$ and $n$, and the
  \cas~result $r=\top$.

  In the failure case where Rule~\ref{fig:step:branch-then}~(then) applies, the
  next command is the \fadd-instruction at label $(I_7,1)$. The label has a
  higher $\iter$ number than all events in the history at that point triggering
  Rule~\ref{fig:step:lb}~(lb) for future stepping at loop boundaries.
  Rule~\ref{fig:step:lb}~(lb) resets the timestamped writes in
  $\sigma.\writeset$ bar the anchors, constraint $\sigma.\varphi$, and read-from
  relations $\sigma.\RF$ in the program state $\sigma$, and register state
  $\rho$. The command \code{s:=}\fadd{}$^{\text{rel,acq}}$\code{(\&C,0)} is
  interpreted as two actions $\overline
  a=\Reads_\acq~C~\alpha,\Writes_\rel~C~\alpha$. The new history $H'$ then only
  contains events up to $\overline a$ at $(I_7,0)$ in the first iteration of the
  loop.

  \begin{figure}[htbp]
    \centering
    \scalebox{0.79}{%
      \begin{minipage}{1.265\textwidth}%
    \begin{prooftree}
      \AxiomC{($I_7\colon \code{s:=}\fadd^{\text{rel,acq}}\code{(\&C,0)})\in\overline\prog(t)$}
        \noLine
        \UnaryInfC{($I_7\colon\overline
        a)\in^*\horizon\Phi_{\{\ldots,(I_6:\Writes~\code{rcu[tid]}~1)\}}$}
      \AxiomC{$\sigma\overset{\overline a}{\rightsquigarrow}_t\sigma'$}
        \noLine
        \UnaryInfC{$\sigma'.\RF=\{\}$}
        \noLine
      \AxiomC{$\sigma'.\writeset=\sigma_\text{init}.\writeset\cup\sigma'.\anchors_t$}
        \noLine
        \UnaryInfC{$\sigma'.\varphi=\top$}
        \noLine
      \BinaryInfC{$\sigma\vdash(I_7\colon \code{s:=}\fadd^{\text{rel,acq}}\code{(\&C, 0)},\rho){\overset{\overline a}{\longrightarrow}}_t(\skipcmd,\rho')$}
        \AxiomC{$\rho'=\{n\mapsto\alpha_n\}$}
        \noLine
        \UnaryInfC{$H'=\{\ldots,\overline a\}
        $}
      \RightLabel{lb}
      \TrinaryInfC{$
        \overline\prog\vdash
        \left(\sigma,\rho,H\right)
        \rightarrow
        \left(\sigma',\rho',H'\right)
      $}
    \end{prooftree}
      \end{minipage}%
    }

    \caption{Future stepping at loop boundary in RCU}
    \vspace{0.3cm}
  \end{figure}
\end{example}

\subsection{Well-definedness and Finite Bound on Operational
Semantics}

\paragraph{Well-definedness.} The future stepping rules in the operational
semantics are contextual in the future set $\Phi$, calculated from executions in
the event structure semantics. For programs with unbounded loops, the future set
$\Phi$ is infinite. As shown in the previous section, the posterior future
horizons in $\Phi$ are symmetric between iterations in episodic loops: by
Corollary~\ref{c:gamma-history-reset} the horizons of the later iterations are
images of those of the first under $\gamma$. It therefore suffices to consider
the unravelling of the event structure up to step-counter 2, which includes, at
the first iteration, both a failing and a succeeding attempt at the fallible
operation of the retry loop. On that bounded unravelling the future stepping
rules are well defined. The bound agrees with the resets at loop boundaries in
 Rule~\ref{fig:step:lb}~(lb), which return the operational semantics to the first
iteration rather than carrying a third.

\paragraph{Finite bound.} Resetting histories, register states, and program
states at the loop boundaries makes the operational semantics finite, as the
following theorem shows. The formal proof is in Appendix~\ref{app:opsem-finite}.

\begin{restatable}{theorem}{opsemfinite}\label{t:opsem-finite}

  For any program $\prog$ with only episodic loops and associated future set
  $\Phi$, there are only finitely many configurations reachable from the initial
  configuration $\left(\sigma_0, \rho_0, H_0\right)$ in the operational
  semantics, up to order-isomorphism of timestamps.

\end{restatable}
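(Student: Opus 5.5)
The plan is to bound each component of a reachable configuration $(\sigma,\rho,H)$ separately, after first showing that every reachable history is drawn from a fixed finite set of labelled events. The driver is Theorem~\ref{t:finite-post-futures}, together with the well-definedness discussion above. Future stepping only adds events that are next enabled actions in $\horizon\Phi_H$, and by Corollary~\ref{c:gamma-history-reset} the horizons of later iterations are $\gamma$-images of those of the first. Hence it suffices to work over the unravelling up to step-counter~2. Rule~\ref{fig:step:lb}~(lb) fires exactly when a label's iteration count exceeds that of $H$, and it resets $H$ to the start of the loop. I would therefore prove, by induction on derivation length, the following invariant: every event in a reachable $H$ carries an iteration index at most the first iteration (up to $\gamma$) in every loop. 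Nested loops are handled lexicographically, using Condition~\ref{iter:nesting}. Combined with the uniqueness of $(\pc,\text{type},\iter)$, this bounds the set of possible events by a finite set $E_0$, and therefore bounds the reachable histories, which are subsets of $E_0$.

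With histories bounded, I would bound $\rho$ and $\sigma.\varphi,\sigma.\RF$ next. Symbols in $\rho$ and $\varphi$ are introduced only by read and allocation events of $H$ (or of code before the loop). After a reset, $|^t_{\setminus\ell}$ removes the terms that thread $t$ added inside the loop. Using de Bruijn indexing (Corollary~\ref{l:de-bruijn-props}), symbols reintroduced in the replayed iteration coincide with those of the first iteration. So the symbol supply is finite, and $\rho$ maps finitely many registers to expressions built from finitely many symbols by the finitely many program expressions. Condition~\ref{episodic:reg} guarantees that no register carries a value across the boundary, which rules out unbounded expression growth. The constraint $\varphi$ is a conjunction of read equations and branch conditions attached to events of $H$. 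Reset removes the loop's contributions, and Condition~\ref{episodic:cond} ensures that nothing pinned on pre-loop symbols must be retained. Thus $\varphi$ ranges over finitely many conjunctions, modulo syntactic identity, and $\RF\subseteq E_0\times\writeset$ is finite once $\writeset$ is.

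The main obstacle is the timestamped part: $\sigma.\writeset$, the anchors, and the viewfronts $\tview$, $\mview$. Resets drop a thread's loop writes but keep one anchor per location. Other threads' viewfronts may still refer by timestamp to dropped writes, so naively the set of timestamps only grows. My first attempt would be to show that the number of writes in $\writeset$ is bounded. These consist of the initial writes, writes of the events in $H$ (finitely many, by the first step), and at most one anchor per (thread, symbolic location). The location set is finite because the symbols are. A later anchor replaces an earlier one, since only the greatest timestamp is retained. Each viewfront entry either points into $\writeset$ or records a timestamp inherited from a removed write. For the latter, I would argue that only its order position relative to the current writes matters, because entries are compared only by timestamp, as noted in the paragraph on loop boundaries. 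So the state is determined, up to order-isomorphism of timestamps, by a total preorder on a bounded set of at most $|\writeset|+N\cdot|\Loc_{\text{sym}}|\cdot(N+|\writeset|)$ timestamp values. There are finitely many such preorders.

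Combining these bounds, the finite choice of $H$, $\rho$, $\varphi$ and $\RF$, together with finitely many order types of the timestamp data, yields finitely many configurations up to order-isomorphism. The delicate point I would check carefully is that $\freshts$ and $\viewcomb$ respect order-isomorphism, so that the quotient is a bisimulation. Without that, isomorphic configurations could have non-isomorphic successors.
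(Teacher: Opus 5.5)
Your proposal is correct and follows essentially the paper's proof. Both bound each component of the configuration separately, and in both the resets of Rule~\ref{fig:step:lb}~(lb), the supersession of anchors, and the identification of timestamps up to order-isomorphism (Lemmas~\ref{l:opsem-ts-congruence} and~\ref{l:opsem-ts-count}) do the work; the congruence is needed for Corollary~\ref{c:opsem-lfp} and not for the count itself. The only difference is the starting point: you draw history events, symbols and write actions from the finite step-counter-2 unravelling and bound $|\sigma.\writeset|$ by the initial writes, the writes of $H$ and the anchors, whereas the paper starts from symbols being a function of the program counter and bounds locations via Lemma~\ref{l:opsem-locs-across-boundary} -- and your route makes explicit the finiteness of histories and of $|\sigma.\writeset|$, which Lemma~\ref{l:opsem-ts-count} assumes as hypotheses.
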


\paragraph{Timestamps up to order-isomorphism.}
Timestamps are drawn from $\tstamps$, so Rule~\ref{fig:action:write}~(write)
has infinitely many choices of a fresh $q'$ at each step and the reachable
configurations are literally infinite in number.
The assignments of timestamps for a memory location yield a ranking of writes
by comparing timestamps by $\leq$. Comparing timestamp assignments by their
ranked form yields an equivalence relation. Timestamp assignments then fall
into equivalence classes.
These equivalence classes are finitely many because the writes and the viewfront
entries are, which follows from episodicity: the resets of
 Rule~\ref{fig:step:lb}~(lb) leave one iteration's writes at a time, and the
anchors they retain are denied to the thread that wrote them, so no read reaches
an iteration that thread has closed.
Ranking is stable under the rules because they read timestamps only through
$\leq$ -- in $\OW$, in $\freshts$, and in $\viewcomb$ -- and $\freshts(x,q,q')$
asks only for a point strictly between two adjacent timestamps at $x$, which the
density of $\tstamps$ supplies, so a step of one configuration is replayed in
any configuration ranked like it (Lemma~\ref{l:opsem-ts-congruence}).
What remains is a count of ranked states, which the finiteness of the writes,
register states, value restrictions and histories supplies; the timestamps
contribute the orbits of the order-preserving bijections of $\tstamps$ on tuples
of a given length, the Fubini number many, the group being oligomorphic
(Lemma~\ref{l:opsem-ts-count}).
This is the region construction of timed
automata~\cite{AlurDill1994TimedAutomata}, and orbit-finiteness in the sense of
nominal sets~\cite{BojanczykKlinLasota2014NominalAutomata}.
The theorem bounds the state space and not the length of derivations, the
semantics having no final configuration and imposing no fairness condition.

\subsection{Collapsing Cross-Thread Interleavings with Symbolic
Read-Froms}

The assignment of read-from edges can multiply states across threads. The
following example illustrates this for \fadd~and \cas~in RCU as in
Figure~\ref{fig:rcu-inc-code}. For the verification of RCU, it only matters
whether the \cas{} is reading from \fadd{} in the same iteration, leading to
success, or not, leading to failure. The distinction of the two cases is
symbolic, uniform across all iterations and threads, and verified outside of the
operational semantics.

\begin{example}[$\rf$-relation across threads]

  Consider two threads $t_1$ and $t_2$ performing a value increment on a shared
  counter $C$ using RCU, focusing on \fadd{} and \cas{} only. Let $C$ be
  initialised to the symbolic memory location $\beta_0$. Threads $t_1$ and $t_2$
  have pointers $n$ with symbolic memory locations $\beta_1$ and $\beta_2$
  respectively. By global assumptions from memory allocation
  $\beta_1\neq\beta_2$ and both are different from the initial value $\beta_0$
  of $C$.

  \begin{figure}[htbp]
    \centering
    \scalebox{0.95}{\begin{tikzpicture}[node distance=3cm, auto,
    lbl/.style={font=\footnotesize, inner sep=2pt}]

  \node (init-s) at (3.5,8.6) {$\text{init}$};
  \node (faa-s-1) at (0,7) {$t_1\colon\fadd~(\alpha_1, 0)$};
  \node (cas-s-1) at (0,5) {$t_1\colon\cas~(\alpha'_1, \alpha_1, \beta_1)$};
  \node (faa-s-2) at (7,7) {$t_2\colon\fadd~(\alpha_2, 0)$};
  \node (cas-s-2) at (7,5) {$t_2\colon\cas~(\alpha'_2, \alpha_2, \beta_2)$};

  \draw[->] (init-s) to node[lbl,above left] {$\rf: \alpha_1=\beta_0$} (faa-s-1);
  \draw[->] (faa-s-1) to node[lbl,left] {$\rf: \alpha'_1=\alpha_1=\beta_0$} (cas-s-1);
  \draw[->] (cas-s-1) to node[lbl,above,sloped,pos=0.5] {$\rf: \alpha_2=\beta_1$} (faa-s-2);
  \draw[->] (faa-s-2) to node[lbl,right] {$\rf: \alpha'_2=\alpha_2=\beta_1$} (cas-s-2);

  \node (init-f) at (3.5,3.1) {$\text{init}$};
  \node (faa-f-1) at (0,1.5) {$t_1\colon\fadd~(\alpha_1, 0)$};
  \node (cas-f-1) at (0,-0.5) {$t_1\colon\cas~(\alpha'_1, \alpha_1, \beta_1)$};
  \node (faa-f-2) at (7,1.5) {$t_2\colon\fadd~(\alpha_2, 0)$};
  \node (cas-f-2) at (7,-0.5) {$t_2\colon\cas~(\alpha'_2, \alpha_2, \beta_2)$};

  \draw[->] (init-f) to node[lbl,above left] {$\rf: \alpha_1=\beta_0$} (faa-f-1);
  \draw[->] (faa-f-1) to node[lbl,above] {$\rf: \alpha_2=\alpha_1=\beta_0$} (faa-f-2);
  \draw[->] (faa-f-1) to node[lbl,left] {$\rf: \alpha'_1=\alpha_1=\beta_0$} (cas-f-1);
  \draw[->] (cas-f-1) to node[lbl,above] {$\rf: \alpha'_2=\beta_1$} (cas-f-2);
\end{tikzpicture}}
    \caption{Read-from combinations and \cas{} outcomes}\label{f:rf-mult-example}
  \end{figure}

  Suppose $t_1$ performs the \fadd{} first, reading the initial value of $C$ as
  $\alpha_1$. Then there are two possible outcomes, shown in
  Figure~\ref{f:rf-mult-example}:

  \begin{enumerate}

    \item (top) $t_1$ performs the \cas{} first, which reads from $t_1$'s
      \fadd{}. The operational semantics adds $\alpha'_1=\alpha_1$ to $\varphi$
      during the read rule. Relative to $\varphi$, the \cas{} then
      succeeds, writing $\beta_1$ to $C$. The write during $t_1\colon\cas$ is
      then the only visible write to $C$, and the \fadd{} of $t_2$ will then
      read the result $\beta_1$ of the swap in $t_1\colon\cas$. The read-from
      edge adds $\alpha_2=\beta_1$ to $\varphi$. $t_2$ then performs the \cas{},
      which reads from $t_2$'s \fadd{}. The read action adds
      $\alpha'_2=\alpha_2$ to $\varphi$. Relative to $\varphi$ the \cas{} then
      succeeds.

    \item (bottom) $t_1\colon\fadd$ and $t_2\colon\fadd$ both read the initial
      value of $C$, so that $\alpha_2=\alpha_1=\beta_0$. $t_1$ performs the
      \cas{} first. The \cas{} reads from $t_1\colon\fadd$, which adds
      $\alpha'_1=\alpha_1$ to $\varphi$. Relative to $\varphi$, the \cas{}
      succeeds, writing $\beta_1$ to $C$. When $t_2$ reads $C$ during \cas{}, it
      will read $\beta_1$ from $C$, adding $\alpha'_2=\beta_1$ to $\varphi$. As
      $\beta_1$ and $\beta_0$ are necessarily distinct -- subsequently allocated
      memory locations without intermediate deallocation -- the comparison of
      $\alpha'_2$ against $\alpha_2$ fails, and the \cas{} of $t_2$ fails.

  \end{enumerate}

  Note that $t_2$ cannot distinguish if it reads from \code{init} or
  $t_1\colon\fadd$: as a read-don't-modify-write, $t_1\colon\fadd$ writes back
  the value $\beta_0$ it read from \code{init}. The assignment of $\rf$ then
  leads to two incompatible predicates $\varphi$ which correspond to the two
  incompatible outcomes of \cas{}.

\end{example}

For verifying the use-after-free property, this symbolic treatment replaces an
enumeration of the cross-thread interleavings of read-from assignments with the
finitely many symbolic outcomes that affect the property -- here the two
predicates $\varphi$ above. Since episodicity makes the reasoning for a single
iteration stand for all iterations, verifying the property then requires only a
single pass over the program, and is thus linear in the program size. This is
the reduction claimed in the abstract; it is specific to safety properties such
as the use-after-free bug and not a general linear bound, as the number of
interleavings remains exponential in the number of threads in the worst case.

\subsection{Correspondence with Event Structure
Semantics}\label{s:opsem-sound-complete}

We relate the operational semantics and the event structure semantics with
soundness and completeness results. The full proofs are in
Appendix~\ref{app:opsem-sound-complete}.

The future stepping rules in the operational semantics are defined along the
next enabled actions.
Our proof refines the proof of Wright et al.~\cite{Wright2023OpSem}. Their
operational semantics keeps track of all subsequent actions following the point
of execution, a choice that would lead to infinite derivations for loops.
By Lemma~\ref{l:post-futures-narrowed}, the next enabled actions of the body of
an episodic loop are consistent between loop iterations, so we can track them
for one iteration only, and reset the configuration at the loop boundary,
keeping derivations finite.

\begin{restatable}[Completeness]{theorem}{opsemcompleteness}\label{t:opsem-complete}

  For every complete execution in a program where unbounded loops are episodic
  there is a corresponding derivation in the operational semantics following the
  actions of the execution.

\end{restatable}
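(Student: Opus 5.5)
We prove Theorem~\ref{t:opsem-complete} by induction along a linearisation of the given complete execution $\mathbb X$, building the derivation step by step. The execution is complete, so it is drawn from the event structure for some finite step-counter and embeds into the least fixed point (Figure~\ref{fig:es-embedding-fixpoint}). The relation $\ppoord\cup\DP\cup\rf$ is acyclic by \texttt{no-thin-air}, and together with $\CO$ and extended coherence this gives a total order $e_1,e_2,\ldots,e_m$ on the events of $\mathbb X$. We choose it to extend $\ppoord\cup\DP\cup\rf\cup\CO$, and also $\HB$, so that writes are performed in coherence order and every read comes after the write it reads from. The histories $H_k$ are the down-closures of $\{e_1,\ldots,e_k\}$. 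Since $\ppo$ and $\DP$ are intra-thread, each $e_{k+1}$ is a next enabled action in $\horizon\Phi_{H_k}$.

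The invariant we maintain relates the configuration $(\sigma_k,\rho_k,H^{\mathrm{op}}_k)$ to $H_k$ as follows. For each thread $t$ currently in iteration $i$ of a loop $\ell$, the operational history of $t$ is the image of $H_k$ restricted to $t$ under the iterated map $\gamma^i$, which shifts iteration $i$ back to iteration $0$. This is well defined by Corollary~\ref{l:de-bruijn-props} and Corollary~\ref{c:gamma-history-reset}. The other components are related as follows:
\begin{itemize}
\item $\rho_k$ agrees with the register state of $\mathbb X$ up to the de~Bruijn renaming.
\item $\sigma_k.\RF$ and $\sigma_k.\varphi$ contain the $\rf$ edges and value restrictions of $H_k$, minus those the current thread added in earlier iterations.
\item $\sigma_k.\writeset$ contains the writes of $H_k$, minus the writes of closed iterations that are not anchors. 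Timestamps are assigned in the order of $\CO$.
\item Every $\sigma_k.\tview$ points at a write no later in $\CO$ than the $\HB$-latest write to that location seen by $t$.
\end{itemize}

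The inductive step is a case split on $e_{k+1}$ and the instruction that produces it.
\begin{itemize}
\item A plain step inside one iteration uses Rule~\ref{fig:step:non-lb_non-branch}~(non-lb/non-branch) with the matching command and action rule. Enabledness follows from Lemma~\ref{l:post-futures-narrowed}: $\horizon\Phi$ of the shifted history contains the $\gamma$-image of $e_{k+1}$.
\item A branch taken in $\mathbb X$ uses Rule~\ref{fig:step:branch-then}~(then) or Rule~\ref{fig:step:branch-else}~(else). The accumulated $\varphi$ stays satisfiable because it is implied by the constraints of $\mathbb X$ after renaming.
\item \cas{} is handled by the success or failure rule, according to the branching decision in $\mathbb X$.
\item When $e_{k+1}$ starts a new iteration of $\ell$ on thread $t$, we apply Rule~\ref{fig:step:lb}~(lb). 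This is where the invariant is re-established by shifting with $\gamma$. It needs Conditions~\ref{episodic:reg} and~\ref{episodic:cond}: the dropped registers and the dropped restrictions do not bear on later events.
\end{itemize}

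For a read $e_{k+1}$ reading from $w$, the action rule (read) needs three things.
\begin{enumerate}
\item $w$ must still be in $\sigma_k.\writeset$. By Condition~\ref{episodic:mem}, case~\ref{episodic-casea}, $w$ is either in the current iteration of $t$ or before the loop. Case~\ref{episodic-caseb} puts it on another thread in a part no earlier iteration reaches, so it was not dropped by a reset of $t$. Case~\ref{episodic-casec} reduces to the other two.
\item $w$ must not be an anchor of $t$. An anchor of $t$ comes from a closed iteration of $t$, which Condition~\ref{episodic:mem} excludes.
\item $w$ must be in $\OW(t,x)$. This follows from coherence of $\mathbb X$, that is, no $\HB;\ECO$ cycle, combined with the viewfront invariant. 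Acquire reads reproduce $\syncview$ exactly on $\SW$ edges.
\end{enumerate}
Writes use $\freshts$: we pick $q'$ between $\CO$-adjacent timestamps, which exist by density of $\tstamps$.

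We expect the main obstacle to be the viewfront part of the invariant across resets. Writes of other threads may have synchronised with writes of $t$ that were later dropped, and $t$'s own view must not regress. Our first approach is to use the anchor mechanism together with the fact that viewfront entries are compared only by timestamp. We would show that after an (lb) step, each $\tview$ entry still carries a timestamp at most that of the $\CO$-latest write in $\mathbb X$ that is visible under $\HB$. We would then check that $\viewcomb$ with an $\mview$ whose entries point at removed writes never hides a write that $\mathbb X$ reads. If this fails directly, the fallback is to strengthen the invariant to a timestamp-only simulation and rely on Lemma~\ref{l:opsem-ts-congruence}.
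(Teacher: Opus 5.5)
Your decomposition is the paper's. Its proof checks four things:
\begin{itemize}
\item the resets of Rule~\ref{fig:step:lb}~(lb) are compatible with $\mathbb X$: deleted writes implement Condition~\ref{episodic:mem}, the register reset Condition~\ref{episodic:reg}, the $\varphi$ reset Condition~\ref{episodic:cond}, and the history reset is $\gamma^i$ by Corollary~\ref{c:gamma-history-reset};
\item $\mathbb X$'s write at each read lies in $\OW$, by Lemma~\ref{l:co-axiom-rf-viswrite};
\item $\SW$ is realised by $\viewcomb$;
\item each branch gate is a sub-conjunction of the satisfiable $P\wedge\varphi_\rf$.
\end{itemize}
Your linearisation and invariant just make that induction explicit. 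The obstacle you anticipate is already settled by Lemma~\ref{l:opsem-anchor}. At a boundary no viewfront entry moves: the crossing thread's entry stays on its anchor, and every other entry keeps the timestamp of a deleted write. So $\OW$ only shrinks, and the coherence argument is unaffected.

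The step that fails as written is your first requirement for a read. You rule out deletion of the source $w$ only by resets of the reader $t$. But Rule~\ref{fig:step:lb}~(lb) applied by the writer $t'$ deletes all of $t'$'s loop writes at a location except the one of greatest timestamp.

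For a concrete case, let $t'$ write $x$ in every iteration of an episodic loop, and let $t$ end with a relaxed read, outside any loop, taking $x$ from $t'$'s iteration $0$. Nothing in $\ppoord\cup\DP\cup\rf\cup\CO\cup\HB$ stops your linearisation from placing that read after $t'$ has entered iteration $2$. By then the iteration-$0$ write has been superseded as anchor and deleted, so Rule~\ref{fig:action:read}~(read) cannot follow $\rf$. (The paper's item~(2) only checks membership in $\OW$ and passes over this as well.)

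The repair is to linearise $\ECO\cup\HB$, that is, to add $\FR$; Axiom~\ref{def:mem-model-axiom:co} keeps this relation acyclic. When the read of $w$ is scheduled, every write $\CO$-after $w$ is still pending. With timestamps in $\CO$ order, $w$ then carries the greatest timestamp at its location. Every reset of its writer therefore kept it as that writer's anchor, and anchors stay readable by other threads.

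That leaves $t'=t$, where your second requirement does not follow from Condition~\ref{episodic:mem} as you claim (nor as the paper claims). Case~\ref{episodic-casec} admits a read from $t$'s own read-don't-modify-write of a closed iteration, and that write is then either an anchor of $t$ or deleted. In RCU coherence excludes this: a retry means the failing \cas{} read a write $\CO$-after the iteration's \fadd{} write, and the next \fadd{} must read that write or a $\CO$-later one. In general you need an argument beyond Definition~\ref{def:episodic}.
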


\begin{restatable}[Soundness]{theorem}{opsemsoundness}\label{t:opsem-sound}

  For every derivation in the operational semantics there is a corresponding
  execution in the event-structure semantics such that the derivation follows
  the actions of the execution.

\end{restatable}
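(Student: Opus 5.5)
The plan is to prove soundness by induction on the length of the derivation. Alongside it I would carry a simulation invariant that relates each reachable configuration $(\sigma,\rho,H)$ to a history $\hat H$ of an execution in the step-counter unravelling $\langle\prog\rangle_k$, for some sufficiently large $k$. The invariant has four parts:
\begin{enumerate}
\item $H$ is the image of the current-iteration fragment of $\hat H$ under the composite of $\gamma$ maps, using Corollary~\ref{c:gamma-history-reset} and Lemma~\ref{l:post-futures-narrowed}.
\item $\sigma.\RF$ and $\sigma.\varphi$ are the $\rf$ and value restrictions of $\hat H$, restricted to that fragment and to the events before the loop.
\item The timestamp order on $\sigma.\writeset$, anchors included, extends to a coherence order $\CO$ of $\hat H$.
\item The viewfronts $\sigma.\tview$ are bounded below by the $\HB$-predecessor writes of $\hat H$.
\end{enumerate}
At the end I would extend $\hat H$ to a complete execution. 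Each path through the nested \code{if}-unravelling fixes the branch outcomes, and these outcomes select a maximal conflict-free set. The underlying $\langle\prog\rangle_k$ embeds into the fixed point $\bigsqcup_k\langle\prog\rangle_k$.

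The inductive step goes by case analysis on the last future-stepping rule.

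\emph{Rule (non-lb/non-branch).} The side condition $(l\colon\overline a)\in^*\horizon\Phi_H$ says that the actions are next enabled after $H$. Pulling back along $\gamma$ then gives events enabled after $\hat H$, so $\hat H$ extends by their preimages while staying $\ppo$/$\DP$-down-closed. For a read I take as $\rf$-source the write $\hat w\in\sigma.\OW(t,x)$ chosen by the rule. Coherence follows because $\hat w$ is not older than the thread's view. Synchronisation via $\viewcomb$ matches $\SW$, so extended coherence is maintained. No-thin-air holds because $\Phi$ only permits steps along $\ppo\cup\DP$, and each $\rf$ source already lies in the history.

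\emph{Rules (then) and (else).} These add the branch condition to $\sigma.\varphi$, and satisfiability is ensured by pruning. This selects the matching branch of the unravelled \code{if}.

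\emph{Rule (lb).} This is the case where the two sides diverge. The operational configuration is reset, but $\hat H$ keeps all the events of the earlier iteration and simply starts the next iteration, so I raise $k$ by one. To re-establish invariant~1, I would use that $\gamma$ maps iteration $i{+}1$ onto iteration $i$ while preserving and reflecting $\ppo$ and reflecting $\DP$. The $\DP$ edges that $\gamma$ fails to preserve come from the discarded iteration, which lies outside the history, so they do not affect the futures. Episodicity supplies the remaining parts:
\begin{itemize}
\item Conditions~\ref{episodic:reg} and~\ref{episodic:mem} show that nothing in the new iteration depends on registers or on the thread's own writes from the dropped iteration. This justifies both the register reset and denying $t$ its anchors.
\item Condition~\ref{episodic:cond} shows that dropping the iteration's constraints from $\sigma.\varphi$ loses no restriction on symbols read before the loop. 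Hence $\sigma.\varphi$ remains equivalent, on the retained symbols, to the restrictions of $\hat H$.
\item Condition~\ref{episodic:events} shows that all events of the earlier iteration are $(\ppo\cup\DP)^+$-before the new ones, so $\hat H$ remains down-closed.
\end{itemize}

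The main obstacle is keeping coherence and no-thin-air across resets for the other threads. The concern is that another thread may have read a removed write, or holds viewfront entries pointing at it. I would first show that the retained anchor has maximal timestamp at its location among $t$'s loop writes. Together with the fact that viewfronts compare only timestamps, this means every viewfront entry in $\sigma$ is dominated by a write that still exists in $\hat H$, so the $\CO$ witnessed by timestamps remains a coherence order of $\hat H$. If this fails for reads of intermediate writes, the fallback is to strengthen the invariant so that $\hat H$ retains every write ever performed together with its timestamp. $\sigma.\writeset$ is then only a projection, and the soundness witness is built from this ghost record instead of from $\sigma$.
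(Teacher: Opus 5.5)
You have a genuine gap at the step that carries your simulation: ``pulling back along $\gamma$ gives events enabled after $\hat H$, so $\hat H$ extends by their preimages''. The rules do not give you this. The premise $(l\colon\overline a)\in^*\horizon\Phi_H$ is existential over complete executions; for a \cas{}, the read and the write may even be enabled in different futures. Corollary~\ref{c:gamma-history-reset}, via Lemma~\ref{l:gamma-pres-post-futures}, transports \emph{sets} of posterior futures, not a particular execution. After Rule~\ref{fig:step:lb}~(lb) the configuration keeps nothing of the closed iterations' $\RF$, constraints or justification choices. So the execution witnessing a later step need not agree with the $\rf$ and dependencies your $\hat H$ carries, and your earlier prefixes need not be down-closed in its $\ppoord\cup\DP$. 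You end with $\hat H$ lying in \emph{some} execution, but not necessarily one whose order the whole derivation follows; closing this would need a gluing lemma you do not have.

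The paper avoids this altogether. It takes the events the derivation performs, with $\rf$ given by the $\RF$ choices and branch outcomes pinned by satisfiability of $\sigma.\varphi$. It then \emph{shows} that set is an execution (maximality by Lemma~\ref{l:pfh-maximal}, the freezing conditions from Rule~\ref{fig:action:read}~(read)), rather than assuming it lies in one. Your plan also has a tension here: if $\hat H$ really sat inside a fixed consistent execution, both axioms would come for free and your invariants 3 and 4 would be idle. The coherence obstacle arises precisely because $\rf$ and $\CO$ must be read off the derivation.

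That obstacle is the second gap. The claim that ``the $\CO$ witnessed by timestamps remains a coherence order of $\hat H$'' does not hold as stated. After a reset, Rule~\ref{fig:action:write}~(write) picks $\freshts$ relative only to the surviving writes, so later writes may reuse the timestamps of removed writes or fall anywhere among them. No single timestamp assignment, ghost record included, yields a coherence order over all writes of the execution.

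What is missing is Lemma~\ref{l:opsem-co-glue}:
\begin{enumerate}
\item Per location, order the spine of never-removed writes by timestamp.
\item Insert each block (one thread's writes in one iteration) at the positions its timestamps give it.
\item Break ties between blocks by the order in which their iterations open, consistently with Corollary~\ref{c:opsem-co-monotone}.
\item Rule out any $\rf\cup\CO\cup\FR\cup\HB$ edge from after a boundary back into the closed block.
\end{enumerate}
The case you worry about, another thread whose viewfront entry points at a removed write, is settled by entries keeping their timestamps. The case that needs the anchor is the crossing thread itself. Lemma~\ref{l:opsem-anchor} keeps its viewfront on the anchor, the $\CO$-greatest block write at that location, and the read rule denies it that anchor, so it opens no $\FR$ edge into the block.

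Cycles within a single configuration are then excluded shape by shape using the viewfront invariants of Lemma~\ref{l:opsem-co-adequate}. No-thin-air comes separately from Lemma~\ref{l:opsem-nta}: $\DP\cup\ppoord\cup\rf$ is contained in the derivation order $<_D$, across boundaries by Condition~\ref{episodic:events}.
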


The remaining crucial point of the proofs is to show that the resets at
boundaries of loop iterations in Rule~\ref{fig:step:lb}~(lb) accurately reflect the
symmetry between next enabled actions in successive loop iterations. This
follows from (1) the resets at boundaries of loop iterations in
 Rule~\ref{fig:step:lb}~(lb) are compatible with executions, and (2)
read-from relations establish $\RF$-pairs in the operational semantics in
 Rule~\ref{fig:action:read}~(read).

\emph{(1) Boundaries of loop iterations.} In order to prove the resets in
 Rule~\ref{fig:step:lb}~(lb) correct, we need to show that they do not restrict
configurations in a way that breaks compatibility with the event structure
semantics. Therefore, we need to show that the resets in the rule either
subsume episodicity conditions in Definition~\ref{def:episodic} of episodic
loops or correspond to $\gamma$, which identifies states across loop iterations
in the event structure semantics.

\emph{(2) Read-from relations.} The read-from relation $\rf$ in complete
executions assigns a visible write to every read.
Rule~\ref{fig:action:read}~(read) selects from visible writes in the
configuration.

The future stepping
Rules~\cref{fig:step:non-lb_non-branch,fig:step:lb,fig:step:branch-then,fig:step:branch-else}
proceed along next enabled actions. Following the next enabled actions in a
derivation constructs a history which is consistent with the dependency
relations under the branching decisions and assignments of visible writes to
read actions. Exhaustively following the next enabled actions constructs a
maximal such set, which is an execution in the event structure.

\subsection{Owicki-Gries Logic, Ownership, and RCU Verification}\label{s:opsem-og}

We can now verify the fix for the UAF bug in our operational semantics as a
safety property. Concretely, we verify the invariant that the deallocating
thread has exclusive ownership over the shared memory location at the time of
deallocation, given that the RCU flag together with the memory-order annotations
on memory operations in the critical section suffice to correctly transfer
ownership from the competing thread to the deallocating thread.

Figure~\ref{fig:use-after-free-bug-overview} presents the use-after-free bug.
The \cas{} in Thread~1 reads from the \fadd{} in either thread and succeeds with
a read and write action. The \cas{} in Thread~2 then reads from the write of
Thread~1's \cas{} and fails, and executes only a read action. The bug then
occurs after the following history:
\[ H_0= \left\{ \ldots, {t_1\colon\fadd^{\rel,\acq}_{R,W}},
{t_2\colon\fadd^{\rel,\acq}_{R,W}}, \ldots, {t_1\colon\cas^{\rel,\acq}_{R,B,W}},
{t_2\colon\cas^\acq_{R,B}} \right\} \]

Without release annotation on the RCU exit $t_2\colon\code{rcu[}t_2\code{]:=0}$,
the RCU exit is enabled after $H_0$, that is
$(I_5\colon t_2\colon\code{rcu[}t_2\code{]:=0})\in\horizon\Phi^\text{bug}_{H_0}$.
Thread~1 reading $\code{rcu[}t_2\code{]}$ from the RCU exit in Thread~2 then
enables the deallocation action in \code{free(s)}, which allows deallocation to be ordered
before dereferencing leading to use-after-free.

With the release annotation on the RCU exit in the fix, the RCU exit is not
enabled after $H_0$,
$(t_2\colon\code{rcu[}t_2\code{]:=0})\not\in\horizon\Phi^\text{fix}_{H_0}$
and becomes only enabled after a history including the dereferencing
instruction, e.g.~$H_0\cup\{t_2\colon\code{v:=*s}\}$. Thread~1 can only read
from the RCU exit $t_2\colon\code{rcu[}t_2\code{]:=0}$ after a history which has $s$
dereferenced, preventing the bug.
The two steps of that argument come from different places, and it is worth
separating them. That the exit follows the dereference is an ordering
\emph{within} Thread~2, and it is $\Phi$ that supplies it: $\ppo$ preserves the
edge into a releasing write, so the future of the fixed program orders the two
and the future of the buggy program does not. That Thread~1 sees the exit at
all, on the other hand, is not in $\Phi$ -- no future carries an edge between
threads. It is the program state that supplies it: the exit is in
$\sigma.\OW(t_1,\code{rcu[}t_2\code{]})$ only once Thread~2 has performed it,
and reading it with an acquiring read folds Thread~2's viewfront into
Thread~1's, so everything Thread~2 had done before releasing -- the dereference
among it -- is thereafter observed by Thread~1. The release annotation is what
makes that fold happen; a relaxed exit would still be read, and still leave
Thread~1's view of $s$ stale.

\medskip \paragraph{UAF bug freedom in RCU as an ownership-based safety
property.} Semenyuk et al.~\cite{Semenyuk23Ownership} introduced a canonical
format for Owicki-Gries local and global conditions based on ownership
predicates, describing which actor currently holds ownership over a resource,
and how ownership is transferred between actors. Applied to RCU, threads $t_1$
and $t_2$ are actors, and provably distinct memory locations are independent
resources, such as $n$ and $C$ before the \cas{}. Ownership constraints which
establish local correctness and global non-interference are chosen manually, but
arise naturally from the command semantics, as in the following examples:
\begin{figure}[htbp]
  \scalebox{0.8}{
  \begin{subfigure}{0.6125\textwidth}
  \begin{prooftree}
    \AxiomC{}
    \RightLabel{\shortstack{own:\\free}}
    \UnaryInfC{$
      \{\forall t'.t'\neq t_1\implies t'\not\in\text{own}(s)\}
      t_1\colon\code{free}(s)
      \{\}
    $}
  \end{prooftree}
\end{subfigure}
  \begin{subfigure}{0.6125\textwidth}
  \begin{prooftree}
    \AxiomC{}
    \RightLabel{\shortstack{own:\\deref}}
    \UnaryInfC{$
      \{t_2\in\text{own}(s)\}
      t_2\colon\semglobld{}{\semderef s}
      \{\}
    $}
  \end{prooftree}
  \end{subfigure}
}
\end{figure}

Ownership transitioning ensures that the dereferencing does not occur after the
$\code{free}$. A thread can only assume exclusive ownership of a resource after
any other thread has relinquished ownership. From individual commands it is not
evident when threads assume or relinquish ownership, as ownership is subject to
global visibility and depends on the program context. As subsequently allocated
memory locations without intermediate deallocation must be distinct, we can
assume that threads instantly assume ownership of newly allocated $n$. Similarly
threads have read ownership after reading the shared data structure $C$ in the
\fadd{} instruction. A thread cannot assume it has ownership of the reclaimable
previous location of $C$ after \cas, as other threads may still dereference the
old memory location as in the UAF bug example. We choose the RCU exit after the
failed \cas~and the $\code{free}$ instruction after the successful \cas~as the
points where the respective threads relinquish ownership over the memory
location $C$.
\begin{figure}[htbp]
  \scalebox{0.8}{
  \begin{subfigure}{0.6125\textwidth}
  \begin{prooftree}
    \AxiomC{}
    \RightLabel{rcu-exit}
    \UnaryInfC{$
      \{\}
      t_2\colon\code{rcu[}t_2\code{]:=}^\text{rel}\code{0}
      \{t_2\not\in\text{own}(s)\}
    $}
  \end{prooftree}
\end{subfigure}
  \begin{subfigure}{0.6125\textwidth}
  \begin{prooftree}
    \AxiomC{}
    \RightLabel{free}
    \UnaryInfC{$
      \{t_1\in\text{own}(s)\}
      t_1\colon\code{free(s)}
      \{t_1\not\in\text{own}(s)\}
    $}
  \end{prooftree}
\end{subfigure}
  }
\end{figure}

The UAF occurs if Thread~1 frees the memory location $s$ by executing
{\color{blue} $t_1\colon\code{free(s)}$}, while Thread~2 has not yet
relinquished ownership over $s$, as it has yet to dereference $s$ by executing
{\color{blue} $t_2\colon\code{v:=*s}$}. Thread 2 assumes ownership of $s$ in the
post-condition of the fetch-and-add instruction, that is: $\{\}~t_2\colon
\code{s:=}\fadd^{\text{rel,acq}}\code{(\&C,0)}~\{t_2\in\text{own}(s)\}$. Thread~1 is only safe to free $s$
if it meets the precondition $t_2\not\in\text{own}(s)$, that is:
$\{t_2\not\in\text{own}(s)\}~t_1\colon\code{free(s)}~\{\}$. A derivation in the
calculus follows future stepping and thus adheres to the dependencies between
events. The Hoare triples below show the sequencing of ownership transfer in the
fixed program.

\medskip
\noindent\scalebox{0.8}{%
  \begin{minipage}{1.25\textwidth}
\[
  \begin{array}{lll}
    \text{pre} & \text{action} & \text{post} \\
    \{\} & t_2\colon \code{s:=}\fadd^{\text{rel,acq}}\code{(\&C,0)} & \{t_2\in\text{own}(s)\} \\
    \{t_2\in\text{own}(s)\} & t_2\colon \code{v:=*s} & \{\} \\
    \{\} & t_2\colon\code{rcu[}t_2\code{]:=}^\text{rel}\code{0} & \{t_2\not\in\text{own}(s)\} \\
    \{\} & t_1\colon\code{sync()} & \{t_2\not\in\text{own}(s)\} \\
    \{t_2\not\in\text{own}(s)\} & t_1\colon\code{free(s)} & \{\} \\
  \end{array}
\]
  \end{minipage}%
}
\medskip

In the fixed variant of RCU, this pattern of ownership transfer is sufficient
to rule out the use-after-free.
The run below is the \emph{buggy} sequencing, in which the RCU exit of
Thread~2, $t_2\colon$\code{rcu[}$t_2$\code{]:=0}, precedes the dereference
$\code{v:=*s}$. It is a run and not a pair of triples: each step takes an action
from the horizon of the history reached so far, as future stepping does, and
carries the ownership assertion along with it. The exit carries no release here
-- that is what the bug is -- so $\ppo$ leaves the dereference unordered against
it and the horizon offers the exit first.

\begin{figure}[htbp]
  \centering
  \scalebox{0.8}{%
    \begin{minipage}{1.25\textwidth}%
    \centering
    \[
      \begin{array}{l@{\qquad}l@{~}l@{~}l}
        \text{action enabled in} & \text{pre} & \text{action} &
          \text{post}\\[3pt]
        \horizon\Phi^\text{bug}_{H_0} & \{\} &
          t_2\colon\code{rcu[}t_2\code{]:=$^\text{rel}$0} &
          \{t_2\not\in\text{own}(s)\}\\[2pt]
        \horizon\Phi^\text{bug}_{H_1} & \{t_2\not\in\text{own}(s)\} &
          t_1\colon\code{sync()} & \{t_2\not\in\text{own}(s)\}\\[2pt]
        \horizon\Phi^\text{bug}_{H_2} & \{t_2\not\in\text{own}(s)\} &
          t_1\colon\code{free(s)} & \{\}\\[2pt]
        \horizon\Phi^\text{bug}_{H_3} &
          \{{\color{red}t_2\in\text{own}(s)}\} & t_2\colon\code{v:=*s} &
          \mbox{\color{red}\lightning}\\
      \end{array}
    \]
    \medskip
    {\footnotesize Each step extends the history by the action it takes,
     $H_{i+1}=H_i\cup\{(l\colon a)\mid a\in\overline a_i\}$. The
     \lightning{} marks the step whose precondition the run has not
     established.}
\end{minipage}%
  }
\end{figure}

\noindent Every step is enabled: the dereference is still in
$\horizon\Phi^\text{bug}_{H_3}$, because no $\ppo$ edge orders it against a
relaxed exit, so $\Phi$ does not rule this run out. What rules it out is the
ownership assertion the run carries. The exit at the first step leaves
$t_2\not\in\text{own}(s)$, and nothing restores it before the last step, which
requires $t_2\in\text{own}(s)$.

In the fixed program the run cannot start at all: the exit is releasing, so
$(t_2\colon\code{rcu[}t_2\code{]:=}^\text{rel}\code{0})\not\in
\horizon\Phi^\text{fix}_{H_0}$, and the dereference is taken first.

We carry out this verification entirely in the operational semantics, as an
ownership-based Owicki-Gries proof. Because the operational semantics is sound
and complete with respect to the event structure semantics of
Section~\ref{s:essem} (Theorems~\ref{t:opsem-sound} and~\ref{t:opsem-complete}),
the safety guarantee it establishes holds in the event structure semantics as
well.

\subsection{Complexity of the UAF Bug Verification}
$\rf$ bounds executions of programs with episodic loops. In the case of a
2-threaded implementation of the RCU writer with a strong \cas{}, the thread
that succeeds first does not retry, while the other retries at most once, that
is, when the competing thread updates the value in the meantime. By a similar
argument, in a 3-threaded implementation the first thread to succeed does not
retry, the second retries at most once, and the third at most twice. We write each contention scenario as the multiset of the threads'
retry counts, in non-decreasing order; e.g.\ $(0,1,2)$ denotes one thread
succeeding without retrying, one retrying once, and one twice. As successes on
the shared location are serialised, the thread that succeeds $j$-th can have
failed at most $j-1$ times, once against each earlier success, so the realisable
patterns are exactly the non-decreasing sequences $0=c_1\le c_2\le\dots\le c_N$
with $c_j\le j-1$. A single-threaded program thus has one pattern, $(0)$; a
2-threaded program two, $(0,0)$ and $(0,1)$; and a 3-threaded program five,
$(0,0,0)$, $(0,0,1)$, $(0,0,2)$, $(0,1,1)$, and $(0,1,2)$. Their number is the
Catalan number $C_N=\frac{1}{N+1}\binom{2N}{N}$~\cite{stanley2015catalan}, which grows exponentially in
the number $N$ of threads. Moreover, each retry pattern represents a large
number of program traces, themselves exponential in the size of the program. In
our operational semantics, we can verify UAF-freedom in a single thread,
symbolically accounting for the scenario of a successful execution of \cas{},
and one where a competing thread wins -- here it does not matter for the
verification which of the other threads wins. Similarly, each failing iteration
of the incrementing loop is indistinguishable from the others, and only has to
be covered once -- making the verification finite even in programs where the
competing threads increase the shared counter arbitrarily often.

\conditionalpagebreak
\section{Implementation and Evaluation}\label{s:eval}

\paragraph{MoRDor.}
\mordor{}~\cite{kissig2026mordor} is a reference implementation of \smrd{},
developed for this work, with the corrected \cas{} semantics of
Section~\ref{sec:cas}, the finite step-counter semantics of
Section~\ref{s:essem}, use-after-free evaluation, and a checker for the
episodicity criteria of Definition~\ref{def:episodic} of episodic loops. It
confirms the paper's use-after-free example -- both the bug and its fix --
end-to-end on the event-structure semantics, by enumerating the valid symbolic
executions of \smrd{} for a given step-counter. \mordor{} admits allocation and
deallocation events as sources of $\rf$-edges. That is a detection device of the
tool, not a feature of \smrd{}, in which nothing reads from a deallocation: it
surfaces a use-after-free, in the sense of Definition~\ref{def:uaf}, and a read
from uninitialised memory, as an edge the tool can report. The full development
history is publicly available~\cite{kissig2026mordor}.

\paragraph{From the semantics to the tool.}
Each program is given in a custom \code{.lit} language, equivalent to a subset
of C with simplified thread declarations. From a program, \mordor{} builds the
\smrd{} event structure with a finite step counter, following the
event-structure semantics of \smrd{} literally (Section~\ref{s:essem}). It
enumerates the symbolic executions as the maximal conflict-free sets of events,
each augmented with a valid combination of justifications of its writes, again
following the paper's definitions, and computes the dependency relations per
symbolic execution. From the set of all symbolic executions \mordor{} computes
the future sets $\Phi$ that drive the operational semantics of
Section~\ref{s:opsem}.

A use-after-free is the property of Definition~\ref{def:uaf}: an access of a
deallocated location that $\nta$ does not order before the deallocation. Where
the access is a write, \mordor{} tests that directly, with $\nta$ computed as
the transitive closure of $\DP\cup\ppo\cup\rf$. Where it is a read, as in
Section~\ref{sec:bug}, \mordor{} admits the deallocation as a source of $\rf$
and reports the resulting edge. The edge pins the read to the freed location,
and it is what lets the memory model speak: \mordor{} computes $\ppo$ and $\DP$
after \smrd{} whichever model is selected, and the selection is a coherence
filter on executions, so an execution in which a read takes its value from a
deallocation is one RC11z's coherence axiom rejects and \smrd{}'s admits. That
is the difference Table~\ref{tab:eval-uaf} records.

Episodicity is a semantic property defined against the valid executions of the
program. \mordor{} makes use of the inherent modularity~\cite{Paviotti2020MRD}
of \smrd{} in order to check episodicity using the partially calculated
dependencies. The check is one-sided: where it passes the loop is episodic,
where it fails the loop may still be. It is incomplete for two independent
reasons.
First, the relations of Definition~\ref{def:episodic} are semantic. Equivalence
of memory locations, in particular, is the satisfiability of
$\loc(e_1)=\loc(e_2)$ under the constraints of an execution, as in $\ppoalias$
of Definition~\ref{def:ppo}. \mordor{} replaces these with the syntactic
over-approximations of Section~\ref{s:identifying}, as any implementation must.
Second, Case~\ref{episodic-caseb} of Condition~\ref{episodic:mem} quantifies
over every earlier iteration of the loop. Condition~\ref{episodic:events}
quantifies over iterations too, but collapses to a single loop boundary by
transitivity; no such reduction is known for Case~\ref{episodic-caseb}, since
Condition~\ref{episodic:events} orders iterations by ${(\ppoord\cup\DP)}^+$ while
Case~\ref{episodic-caseb} forbids reachability in ${(\DP\cup\rf)}^+$, and the
former does not bound the latter. A check on a bounded unravelling must
therefore approximate, and \mordor{} does so in the restriction to a loop's own
dependencies and in the treatment of writes in a last iteration, both of which
err towards reporting a violation.
The semantic loop boundary drawn by $\iter$ need not align with the syntactic
loop, so \mordor{} accommodates \emph{floating loop boundaries}: it tries every
bisection of the loop body compatible with the program's syntactic structure,
and reports a loop episodic if some bisection satisfies the conditions.

\paragraph{Results.}
We report two use-after-free results (Table~\ref{tab:eval-uaf}) and two
episodicity results (Table~\ref{tab:eval-episodic}).

\smallskip\noindent\emph{Use-after-free under \smrd{}.} On the minimal RCU
bug/fix programs, \mordor{} finds the use-after-free reachable under
\smrd{}~(\code{uaf-bug.lit}) and excluded after the fix of
Section~\ref{sec:bug}~(\code{uaf-bug-fixed.lit}).

\smallskip\noindent\emph{Use-after-free across memory models.} On the same buggy
program, \mordor{} finds the use-after-free reachable under
\smrd{}~(\code{uaf-bug.lit}) but forbidden under
RC11z~(\code{uaf-bug-rc11.lit}), confirming that the buggy execution is absent
under RC11z (Section~\ref{sec:bug}).

\begin{table}[htbp]
  \vspace{-8pt}
  \centering
  \begin{tabular}{lll}
    \hline
    Program & Memory model & Use-after-free \\
    \hline
    \code{uaf-bug.lit}       & \smrd{} & reachable \\
    \code{uaf-bug-fixed.lit} & \smrd{} & excluded (fix) \\
    \code{uaf-bug-rc11.lit}  & RC11z   & forbidden \\
    \hline
  \end{tabular}
  \caption{Use-after-free detection. The bug is reachable under \smrd{},
  excluded by the fix, and forbidden under RC11z.}
  \label{tab:eval-uaf}
  \vspace{-10pt}
\end{table}

\smallskip\noindent\emph{Episodicity under \smrd{}.} \mordor{} finds the retry
loops of all four algorithms of Section~\ref{s:identifying} episodic:
RCU~\cite{Gotsman23grace}~(\code{rcu-1.lit}), hazard
pointers~\cite{folly}~(\code{hp-1.lit}),
seqlock~\cite{hemminger2002seqlock}~(\code{seqlock-1.lit}), and
spinlock~\cite[\S8.5]{hennessy1996computer}~(\code{spinlock-1.lit}), all in
\code{programs/episodicity/}. The RCU and hazard pointer programs hold the
increment operation alone; the reclamation that follows the retry loop is
elided, as Section~\ref{s:identifying} describes.

\smallskip\noindent\emph{Non-episodic examples.} \mordor{} also rejects
non-episodic loops. Rejection does not itself imply non-episodicity; the
violations below are confirmed by hand. A \code{for}-loop whose counter passes a value from one
iteration to the next is reported as non-episodic
(Condition~\ref{episodic:reg}), as is a family of small counterexamples that each
isolate the violation of one of the remaining conditions of
Definition~\ref{def:episodic} of episodic loops: a read whose source lies
outside the loop's
permitted writes (Condition~\ref{episodic:mem}), a branch that constrains a value
read before the loop (Condition~\ref{episodic:cond}), and reads that are not
separated across iterations (Condition~\ref{episodic:events}). Because a single
satisfying bisection suffices for episodicity, each counterexample must violate
its condition under \emph{every} compatible loop boundary; otherwise moving the
boundary masks the intended violation and the loop is reported episodic.

\begin{table}[htbp]
  \vspace{-8pt}
  \centering
  \begin{tabular}{ll}
    \hline
    Program & Retry loops episodic \\
    \hline
    RCU (\code{rcu-1.lit})                          & yes \\
    Hazard pointers (\code{hp-1.lit})               & yes \\
    Seqlock (\code{seqlock-1.lit})                  & yes \\
    Spinlock (\code{spinlock-1.lit})                & yes \\
    \hline
  \end{tabular}
  \caption{Episodicity under \smrd{}: the retry loops of the four algorithms are
  episodic. Non-episodic contrast examples are discussed in the text.}
  \label{tab:eval-episodic}
  \vspace{-10pt}
\end{table}

\paragraph{Mechanised use-after-free.}
We have additionally mechanised the operational semantics of
Section~\ref{s:opsem} in Isabelle/HOL~\cite{kissig2026rcuopsem}. On the paper's
minimal bug/fix client we prove, with no \code{sorry}, that the use-after-free
is reachable in the buggy program (\code{uaf\_reachable\_bug}) and excluded in
the fixed program (\code{uaf\_excluded\_fixed}), both over a future set $\Phi$
that the mechanisation defines using the output of \mordor{}.

\paragraph{From the model to the machine.}
The use-after-free above is established in \smrd{}; we also wanted to know
whether it occurs outside it. The composite execution is far too rare to measure
directly, so we broke it into the three reorderings it depends on -- the
dereference \code{v := *s} sinking below the failing \cas{}, below the loop-back
branch, and below the RCU exit -- and put each to a memory model, a compiler and
hardware~\cite{kissig2026uaflitmus}. Under \code{herd7} C11 allows all three and
RC11 forbids all three, so whatever the machine does, the gap between the two is
where the defect lives. The AArch64 architecture's own model permits all three
individually, the composite included~\cite{kissig2026uaflitmus}.
GCC performs two of them on ordinary accesses: the hop over the RCU exit from
\code{-O1} upward, and the hop over the loop-back branch under
\code{-fallow-store-data-races}, which \code{-Ofast} implies. Both are then
witnessed running, in exactly the binaries carrying them -- the RCU exit $13$
times in $10^9$ rounds and the branch $71$ in $10^9$ on x86-64
(Intel Core i7-13700H), and the RCU exit $33$ in $10^9$ on Graviton2
and Graviton3 (Neoverse-N1 and Neoverse-V1). Clang performs neither, and neither
survives the fix of Section~\ref{sec:bug}.
The third, over the failing \cas{}, is performed by no compiler in the matrix
and was never observed, so the chain as a whole was not witnessed. The claim the
paper makes is accordingly one about the model. A proof over RC11z forbids all
three reorderings outright, so it is silent on the defect, and would stay silent
if a compiler took up the third tomorrow -- as it has already taken up two.
Rarity does not take away from the result in this paper. A defect appearing a
few dozen times in $10^9$ iterations is past what testing reaches, so a proof is
the only instrument that covers it, and it has to range over the executions in
which the defect arises.

\conditionalpagebreak
\section{Related Work}

This paper synthesises and extends three prior works.
First, Richards et al.\ introduce \smrd{}~\cite{Richards25SMRD}, a relaxed
concurrency model that accommodates compiler optimisations, more closely
matching the intent of the C++ specification~\cite{CPP11}.
Second, Wright et al.\ provide an Owicki-Gries logic~\cite{Wright2023OpSem} and
operational semantics, built above MRD~\cite{Paviotti2020MRD}, a concrete-valued
precursor to \smrd{}.
Third, Semenyuk et al.\ use an ownership-based proof system to verify a variant
of RCU over RC11z~\cite{Semenyuk2023RCU}.
Synthesising these works, we extend the ownership-based proof system to work
over an Owicki-Gries logic and operational semantics built above \smrd{}. Using
\smrd{}'s symbolic nature, we provide a finite bound on the verification of
programs with unbounded loops that follow the specific code shape of episodic
loops. With this finitary reasoning, we verify the fix of a bug arising from
load-store reordering in a failing iteration of a retry loop -- an execution
that \smrd{} admits but RC11z forbids, so the prior verification over RC11z
remains sound for that model, and is silent on this defect.

\paragraph{Work that relies on an absence of program-order reads-from cycles.}
The Owicki-Gries method is unsound under weak memory.
Lahav and Vafeiadis showed that its non-interference check implicitly assumes
that an interfering thread shares the asserting thread's view of memory, and
repaired it for the release-acquire fragment of C11 by quantifying stability
over every value a thread may read at a non-later
point~\cite{Lahav2015OwickiGries}.
The Owicki-Gries logics we build on descend from that repair, whose soundness
follows from forbidding $\poord\cup\rf$~cycles: the fragment makes every access
releasing or acquiring, so all of program order is preserved, and the load-store
reordering the bug of Section~\ref{sec:bug} relies on cannot arise.
Their verification of RCU is accordingly sound for that fragment and silent on
the defect, as is the verification of Semenyuk et al.\ over
RC11z~\cite{Semenyuk2023RCU}.
Much related work makes the same
assumption~\cite{Tassarotti2015RCU,Kaiser2017Iris,Dalvandi2020RAR,Dalvandi2022Isabelle,Doherty2019Operational,Doko2017FSL}
to forbid thin-air values: all program order is enforced, even when there is no
semantic dependency. This means either enforcing that order by inserting
additional memory fences, or leaving reasoning unsound over C++, where the
standard does not enforce the assumption, but instead appeals to something
weaker~\cite{CPP11}.
Our verification indicates, however, that the ordering these approaches enforce
is stronger than correctness requires: some accesses can be relaxed, avoiding
their accompanying performance cost. The release-acquire fragment is the
sharpest case, enforcing program order at every access whether or not a semantic
dependency needs it.

\paragraph{Alternative thin-air-free models, and their program logics.}
There are now several prospective solutions to the out-of-thin-air
problem~\cite{Chakraborty2019ThinAir,Kang2017Promising,Lee2020Promising2,Jeffrey2019EventStructures,Paviotti2020MRD,PichonPharabod2016}.
Our verification is based on \smrd{}~\cite{Richards25SMRD} because it supports
RCU's C-style dynamic memory use, because it allows $\poord\cup\rf$-cycles and the
optimisations that rely on this, and because justified executions provide
dependency relations that allow us to leverage symmetry between retry loop
iterations to identify a finite representation.

Svendsen et al.\ provide a separation logic built above the Promising
Semantics~\cite{Svendsen2018SepLogic}, which -- like \smrd{} -- permits the
load-store reordering our bug relies on, but lacks first-class dynamic
allocation, so it cannot express the UAF bug (Section~\ref{sec:bug}).

\paragraph{Replacing retry loops with blocking primitives.}
Prior work substitutes retry loops with syntactic blocking constructs. Lahav and
Margalit introduce a \emph{blocking CAS}~\cite{LahavMargalit2019} -- a language
primitive denoting a compare-and-swap busy-wait whose eventual success is
assumed -- so that their robustness analysis disregards the benign stale reads
of the spin loop, yielding a more precise notion of robustness that avoids
inserting unnecessary fences. VSync similarly restricts \emph{await loops} via
bounded-effect and bounded-length principles~\cite{Oberhauser2021VSync}.
Episodicity is more general: it is a semantic property of the failing iterations
themselves, so it covers non-blocking retry loops without re-expressing them as
a blocking primitive and, unlike either of these prior works, it holds over a
model that forbids thin-air cycles.

\paragraph{Retry loops in deployed systems code.}
The work in this paper formalises retry loops around fallible operations like
\cas{}, and is motivated by deployed systems code. We show in
Section~\ref{s:identifying} that four algorithms adhere to this pattern.
Hazard pointers show how far the episodicity criteria reach.
Like RCU, hazard pointers~\cite{HazardPointers} defer memory reclamation and
retry a \cas{} until it succeeds; they are provided by Meta's Folly
library~\cite{folly} and by the Haphazard library in Rust.
The loops in Folly's algorithm appear to be episodic, but they are more complex
than RCU's critical section loop because they are nested, and our approach
applies to them because our indexing scheme enumerates nested loop bodies
appropriately (Appendix~\ref{app:hp}). Furthermore, the \code{reset\_hazptr}
function appears to be the synchronisation point of the critical section, and it
is releasing, which most likely avoids a use-after-free bug similar to the one
described in this work.
We leave the verification of hazard pointers to future work.

\section{Conclusions}

We have examined the correctness of concurrent systems code in the presence of
compiler optimisations that violate program ordering, demonstrating that
reorderings across loop boundaries can introduce subtle bugs invisible to prior
verification approaches. The central contribution is the notion of episodic
loops -- a semantic characterisation of the retry loops ubiquitous in lock-free
systems code, recognised in many practical applications by a sufficient
syntactic condition -- together with a proof that their behaviour admits a
finite representation even under load-store reordering.

We identified this pattern concretely in four widely used algorithms
(Section~\ref{s:identifying}): a variant of Read-Copy-Update, Folly's hazard
pointers, a seqlock, and a basic spinlock.
Applying our framework to the previously verified RCU variant of Gotsman et al.,
we discovered a use-after-free bug arising from reordering across failing \cas{}
iterations -- a bug invisible under RC11z but observable in the presence of
load-store reordering.
We provided and verified a fix via a finitary operational semantics grounded in
Owicki-Gries ownership reasoning.
The fix is a release annotation on the RCU exit, not a fence. It suffices
because the bug needs one direction of reordering only: the dereference must not
sink below the exit, which is what a release forbids, while operations issued
after the exit remain free to move above it. Enforcing program order wholesale,
as the approaches above do, orders both directions and so rules out reorderings
the algorithm never relied upon; the annotation constrains strictly less and is
still enough to exclude the use-after-free.
The \mordor{} tool automates a sufficient
check for episodic loops and the detection of UAF errors. We provided a Python script to match
potential occurrences of the UAF bug in real-world C++ programs.

\paragraph{Future work.}
Building on our Isabelle/HOL mechanisation of the operational semantics and of
the minimal use-after-free bug and its fix~\cite{kissig2026rcuopsem}, we plan to
extend the mechanised development to an end-to-end no-use-after-free theorem for
full variants of RCU, and to formalise additional safety properties -- in
particular, ABA freedom for RCU and hazard pointers.

\vfill
\clearpage
\pagebreak
\iflipics
  \bibliographystyle{plainurl}
\else
  \bibliographystyle{splncs04}
\fi
\bibliography{main}

\vfill

\pagebreak
\appendix
\section{Appendix: Definitions}\label{s:app-defs}\label{app:defs}

The programs we consider are of a subset of the C programming language,
augmented with top-level thread-parallel composition. We use functions as
syntactic sugar and assume functions are implicitly inlined.

\subsection{Expressions in \smrd{}}

\begin{definition}[Expressions in Programs]\label{def:expressions}
  \emph{Expressions} $\expr$ can be arithmetic expressions $\expr_A$
  and boolean expressions $\expr_B$
  \[
    \begin{array}{rcl}
      \expr & ::= & \expr_A\mid\expr_B \\
      \expr_A & ::= & r\mid
      *\expr_A\mid\nat\mid\expr_A+\expr_A\mid
      \expr_A-\expr_A\mid\expr_A*\expr_A\mid
      \expr_A/\expr_A\mid \\
      & & \expr_A\bitand\expr_A\quad\mid\quad
      \expr_A\bitxor\expr_A\quad\mid\quad
      \expr_A\bitor\expr_A \\
      \expr_B & ::= & \expr_A=\expr_A~|~\expr_A\leq
      \expr_A~|~\neg \expr_B~|~\expr_B \wedge
      \expr_B~|~\expr_B \vee \expr_B
    \end{array}
  \]
\end{definition}

Expressions in the event structure semantics of programs differ from the
program expressions of Definition~\ref{def:expressions}: they contain no
registers and no pointer dereferences, which the semantics resolves against the
register state and evaluates to memory locations respectively, but they do
contain the symbols $\alpha$ introduced by read and allocation events.

\begin{definition}[Expressions in Event Structures]\label{def:es-expressions}
  \emph{Expressions} $\expr$ can be arithmetic expressions $\expr_A$
  and boolean expressions $\expr_B$
  \[
    \begin{array}{rcl}
      \expr & ::= & \expr_A \mid \expr_B \\
      \expr_A & ::= & \alpha\mid\nat\mid\Var\mid \expr_A+\expr_A\mid
      \expr_A-\expr_A\mid \expr_A\times \expr_A\mid
      \expr_A/\expr_A\mid \\
      & & \expr_A\bitand\expr_A\quad\mid\quad
      \expr_A\bitxor\expr_A\quad\mid\quad
      \expr_A\bitor\expr_A \\
      \expr_B & ::= &  \bool\mid\expr_A = \expr_A~|~\expr_A \leq
      \expr_A~|~\neg \expr_B~|~\expr_B \wedge
      \expr_B~|~\expr_B \vee \expr_B
    \end{array}
  \]

  A global variable $x\in\Var$ occurs as the location it denotes, and
  $\bool=\{\top,\bot\}$. Both are needed for the register states of
  Definition~\ref{def:gen-es}, which hold a location at
  $\semregst{}{\semamp{x}}$ and a boolean at the two outcomes of $\cas$.

  We denote by $\symbols(\expr)$ the set of symbols in an expression
  $\expr$.
\end{definition}

The two grammars call for two interpretations, which we write alike and
distinguish by their subscript: $\den{-}_\rho$ resolves the registers of a
program expression of Definition~\ref{def:expressions} against a register
state, while $\den{-}_f$ substitutes for the symbols of an expression already
one of Definition~\ref{def:es-expressions}. Only the latter is quantified over
in Definitions~\ref{def:sem-equiv} and~\ref{def:ppo} below.

\begin{definition}[Interpretation of Program Expressions]\label{def:prog-expr-sem}
  The \emph{interpretation of program expressions} $\expr$ of
  Definition~\ref{def:expressions} is defined as a function
  $\den{-}_\rho:\ProgExpressions\to\Expressions$ relative to a
  \emph{register state} $\rho:\Registers\to\Expressions$, homomorphic on the
  operators and with
  \[
    \begin{array}{rclrcl}
      \evalreg{n : \nat}{\rho}&=&n &\qquad
      \evalreg{r : \Registers}{\rho}&=&\rho(r)
    \end{array}
  \]
  Pointer dereferences $*\expr_A$ are not resolved here but by the read and
  write events the semantics of Definition~\ref{def:gen-es} generates for them.
\end{definition}

\begin{definition}[Semantics of Expressions]\label{def:expr-sem}
  The \emph{interpretation of expressions} $\expr$ from
  $\Expressions$ is defined as a function
  $\den{-}_f:\Expressions\to\Expressions$ relative to an environment
  $f:\Symbols\rightharpoonup\Expressions$ mapping \emph{symbols} from
  $\Symbols$ to \emph{expressions} from $\Expressions$, such that
  \[
    \begin{array}{rclrcl}
      \evalenv{v : \Val}{f}&=&v & \\

      \evalenv{\alpha:\Symbols}{f}&=&f(\alpha)~\text{if}~\alpha\in\text{Dom}(f)&
      \evalenv{\alpha:\Symbols}{f}&=&\alpha~\text{if}~\alpha\not\in\text{Dom}(f)\\

      \evalenv{\expr_{A_1}+\expr_{A_2}}{f}&=&\evalenv{\expr_{A_1}}{f} + \evalenv{\expr_{A_2}}{f} &\qquad
      \evalenv{\expr_{A_1}-\expr_{A_2}}{f}&=&\evalenv{\expr_{A_1}}{f} - \evalenv{\expr_{A_2}}{f}\\

      \evalenv{\expr_{A_1}\times \expr_{A_2}}{f}&=&\evalenv{\expr_{A_1}}{f} \times \evalenv{\expr_{A_2}}{f} &
      \evalenv{\expr_{A_1}/\expr_{A_2}}{f}&=&\evalenv{\expr_{A_1}}{f} / \evalenv{\expr_{A_2}}{f}\\

      \evalenv{\expr_{A_1} = \expr_{A_2}}{f}&=&\evalenv{\expr_{A_1}}{f} = \evalenv{\expr_{A_2}}{f} &
      \evalenv{\expr_{A_1}\leq \expr_{A_2}}{f}&=&\evalenv{\expr_{A_1}}{f}\leq\evalenv{\expr_{A_2}}{f}\\

      \evalenv{\neg \expr_B}{f}&=&\neg\evalenv{\expr_B}{f} &&& \\

      \evalenv{\expr_{B_1}\wedge \expr_{B_2}}{f}&=&\evalenv{\expr_{B_1}}{f}\wedge\evalenv{\expr_{B_2}}{f} &
      \evalenv{\expr_{B_1}\vee \expr_{B_2}}{f}&=&\evalenv{\expr_{B_1}}{f}\vee\evalenv{\expr_{B_2}}{f}\\
    \end{array}
  \]

  The bitwise operators act on the binary representation of a natural, which is
  finite, so they are total on $\nat$ and fix no word width.

  The \emph{values} $\Val\subseteq\Expressions$ are the naturals $\nat$, the
  locations $\Var$ and the booleans $\bool$, and, the interpretation being an eager
  partial evaluation~\cite{Richards25SMRD}, we identify a closed expression
  with the value it denotes. Hence $\evalenv{\expr}{f}\in\Val$ exactly when $f$ maps
  every symbol of $\expr$ to a closed expression, and in particular
  $\evalenv{\expr}{f}\in\Val$ for every $f$ defined on $\symbols(\expr)$
  with values in $\Val$.
\end{definition}

\begin{definition}[Semantics of Equivalence]\label{def:sem-equiv}
  \[
    (\expr_1\equiv\expr_2)~\triangleq~\forall f.
    \evalenv{\expr_1}{f},\evalenv{\expr_2}{f}\in\Val
    \Rightarrow
    \evalenv{\expr_1}{f}=\evalenv{\expr_2}{f}
  \]
  \[
    (\expr_1\equiv_P\expr_2)~\triangleq~(P\Rightarrow\expr_1=\expr_2)\equiv\top
  \]
\end{definition}

\subsection{Program Semantics in Event Structures}

\begin{figure}[htbp]
\[
\begin{aligned}
	\prog ::=&~~~ \join[~|~]
  {\codeskip}
  {\semseq{\prog}{\prog}}
  {\sempar{\prog}{\prog}}
  {\semif{\expr_B}{\prog}{\prog}}
  {\semwhile{\expr_B}{\prog} \\ &}
  {\semregst{i}{\expr}}
  {\semglobld[o]{i}{x}}
  {\semglobst[o]{x}{\expr}}
  {\semregst{i}{\semamp{x}}}
	{\semglobld[o]{i}{\semderef{\expr}}}
  {\semglobst[o]{\semderef{\expr_1}}{\expr_2} \\ &}
  {\semfence[o]}
  {\semfadd[o_r][o_w]{i}{x}{\expr}}
	{\semcas[o_r][o_w]{i}{x}{\expr_1}{\expr_2} \\ &}
  {\semmalloc{i}{\expr}}
  {\semfree{i}}
\end{aligned}
\]
  \caption{Program syntax (\smrd{} fragment)}\label{fig:grammar}
\end{figure}

\begin{definition}[Program Syntax]\label{def:prog-syntax}
  The syntax of programs $\prog$ is given in Figure~\ref{fig:grammar} as a
  subset of the C programming language with: $r$ ranging over registers holding
  thread local state; $x$ ranging over global variables; $\expr$ and $\expr_B$
  ranging over expressions and boolean expressions respectively, as given in
  Definition~\ref{def:expressions} of expressions in programs; and $o \in
  \{\rlx, \rel, \acq \}$ of C atomic memory orderings \cite[7.17.3 Order and
  consistency]{iso-c11}.

  Programs are built from the following constructs:

  \begin{itemize}

    \item the empty command $\codeskip$;

    \item sequential composition $\semseq{\prog}{\prog}$;

    \item parallel composition $\sempar{\prog}{\prog}$;

    \item branching $\semif{\expr_B}{\prog}{\prog}$;

    \item unbounded looping $\semwhile{\expr_B}{\prog}$;

    \item register assignment $\semregst{i}{\expr}$ and
      $\semregst{i}{\semamp{x}}$;

    \item atomic memory accesses $\semglobld[o]{i}{x}$,
      $\semglobld[o]{i}{\semderef{\expr}}$, $\semglobst[o]{x}{\expr}$, and
      $\semglobst[o]{\semderef{\expr_1}}{\expr_2}$;

    \item memory fence operations $\semfence[o]$;

    \item atomic read-modify-write operations
      $\semfadd[o_r][o_w]{i}{x}{\expr}$ and\\ $\semcas[o_r][o_w]{i}{x}{\expr_1}{\expr_2}$;

    \item and dynamic memory operations $\semmalloc{i}{\expr}$ and
      $\semfree{i}$.

  \end{itemize}
\end{definition}

\paragraph{Symbols.}
Reads from memory locations can yield arbitrary values, which may be further
constrained by the program. Symbolic MRD abstracts values as \emph{symbols}, and
defines constraints on symbols through value restrictions local to executions
and global guarantees.

The semantics of programs $\prog$ in event structures is defined inductively
over the \emph{atomic set unravelling} $\overline\prog$ of the program $\prog$.

\begin{definition}[Atomic Set
  Unravelling]\label{def:unrolling-prog}\label{def:atomic-set-unravel}
  The atomic set unravelling $\overline\prog$ of a program $\prog$ is defined
  through the following transformation steps.

  \begin{enumerate}

    \item Projecting the program by thread

    \item Unravelling \code{while} loops as nested \code{if}-statements of a
      maximal depth given by a step-counter $n$, as given by
      Equation~\ref{eq:while-sem} of the event structure semantics below

    \item Assigning to each step in $\overline\prog$ a unique label from a set
      $\Labels$

  \end{enumerate}

\end{definition}

Unravelling \code{while}-loops, we lose information about the structure of the
program. The labels preserve some of the information. We define the
syntax-derived functions from labels in the atomic set unravelling:

\begin{itemize}

  \item\label{not:loopfun} $\loopfun:\Labels\rightarrow\PSet(\nat)$ mapping
    control labels to the set of indices of the loops nesting them, and

  \item\label{not:iter} $\iter:\Labels\rightarrow\text{Pfn}(\nat,\nat)$ mapping
    control labels to loop iterations indexed by loop indices -- partial
    functions from $\text{Pfn}(\nat,\nat)$ on $\nat$.

\end{itemize}

Because the atomic set unravelling $\overline\prog$ of a program $\prog$ is
acyclic, the event structure semantics
$\langle\prog\rangle_{n~\rho~\kappa~\varphi}$ of the program can then be defined
inductively over $\overline\prog$, such that each operation is interpreted as
one or more events prefixing the event structure interpreting the tail of
$\overline\prog$ by Definition~\ref{def:gen-es} of the event structure semantics
below.

\begin{itemize}

  \item The register state $\rho$ maps register names to expressions over
    symbolic values.

  \item The continuation $\kappa$ maps a register state and a value restriction
    to an event structure interpreting the tail of the program. $\kappa$ is
    well-defined, as executions are limited by the step-counter to terminating
    executions.

  \item The predicate $\varphi$ accumulates value restrictions over branching
    statements inductively from the start of the program.

\end{itemize}

The resulting event structures are non-confluent tree structures over events
ordered by program order $\po$. In Section~\ref{sec:cas} we have defined the
semantics of a read-modify-write operation in terms of a structure of events.
Atomicity is defined in terms of an additional relation $\pormw$. Branching
introduces conflict between events following alternative outcomes of the
branching condition. Value restrictions $\valres(e)$ are predicates which
accumulate the outcomes of branching decisions up to the events $e$. Events are
in conflict if their value restrictions are incompatible. Because value
restrictions are unique, event structures cannot be confluent, and unbounded
loops cannot be modelled by recursion in event structures.

\begin{definition}[Symbolic Event Structures]\label{def:event-structures}
  A \emph{symbolic event structure} is a tuple
  $\mathbb E~=~(E,\po,\pormw,\valres)$ comprising:
  \begin{itemize}

    \item a set $E$ of events,

    \item a relation $\po~\subseteq E\times E$ denoting \emph{program order},

    \item a ternary relation $\pormw~\subseteq E\times\Expressions\times E$
      modelling \emph{atomicity of read-modify-write operations} such as $\cas$
      and $\fadd$, and

    \item a function $\valres:E\rightharpoonup\Expressions$ mapping events to boolean
      expressions denoting \emph{value restrictions}

  \end{itemize}
  where all three are defined in Definition~\ref{def:gen-es} of the event
  structure semantics below: $\po$ through the prefixing operation of event
  structures, $\pormw$ through the semantics of read-modify-write instructions,
  and $\valres$ through the predicate $\varphi$ accumulating branching
  conditions.
\end{definition}

The interpretation of a C11 program in event structures is defined through the
semantics of commands, sequential and parallel composition as follows.

\begin{definition}[Event Structure Semantics]\label{def:gen-es}
  \noindent\textbf{Semantics of Commands}:

  In the following let $n\in\nat$ be a finite step-counter, $\rho$ a register
  state mapping registers to expressions, $\kappa$ a continuation mapping
  register states and value restrictions to event structures, and $\varphi$
  accumulating value restrictions over branching conditions.

  \begin{center}
  \fitwidth{$\displaystyle
    \begin{array}{rcl}
			\langle \semregst{}{\expr}\rangle_{n~\rho~\kappa~\varphi} & \triangleq & \kappa(\rho[r\mapsto\evalreg{\expr}{\rho}],\varphi) \\
			\langle \semglobld[o]{}{x}\rangle_{n~\rho~\kappa~\varphi} & \triangleq &
      (e\colon~\Reads_o~x~\alpha)[\varphi] \cdot \kappa(\rho[r\mapsto\alpha],\varphi) \\
			\langle \semglobst[o]{x}{\expr}\rangle_{n~\rho~\kappa~\varphi} &
      \triangleq & (e\colon~\Writes_o~x~\evalreg{\expr}{\rho})[\varphi] \cdot \kappa(\rho,\varphi) \\
			\langle \semregst{}{\semamp{x}}\rangle_{n~\rho~\kappa~\varphi} & \triangleq & \kappa(\rho[r\mapsto x],\varphi) \\
			\langle \semglobld[o]{}{\semderef{\expr}}\rangle_{n~\rho~\kappa~\varphi} & \triangleq &
      (e\colon~\Reads_o~\evalreg{\expr}{\rho}~\alpha)[\varphi] \cdot
      \kappa(\rho[r\mapsto\alpha],\varphi) \\
			\langle \semglobst[o]{\semderef{\expr_1}}{\expr_2}\rangle_{n~\rho~\kappa~\varphi} &
      \triangleq &
      (e\colon~\Writes_o~\evalreg{\expr_1}{\rho}~\evalreg{\expr_2}{\rho})[\varphi]
      \cdot \kappa(\rho,\varphi) \\
			\langle\semfadd[o_r][o_w]{}{x}{\expr}\rangle_{n~\rho~\kappa~\varphi} &
      \triangleq &
      \rmw(\left(
        \begin{array}{ll}
          (e_r\colon~\Reads_{o_r}~x~\alpha)[\varphi] & \cdot \\
          (e_w\colon~\Writes_{o_w}~x~(\alpha+\evalreg{\expr}{\rho})[\varphi])[\varphi] & \cdot
          \\
          \kappa(\rho[r\mapsto\alpha+\evalreg{\expr}{\rho}],\varphi) & \\
        \end{array}
      \right)
      ,e_r,e_w,\top)
        \\
			\langle\semcas[o_r][o_w]{}{x}{\expr_1}{\expr_2}\rangle_{n~\rho~\kappa~\varphi}
      & \triangleq &
      \rmw(
      \left(
      \begin{array}{ll}
        (e_r\colon~\Reads_{o_r}~x~\alpha)[\varphi] & \cdot \\
        (e_c:[\alpha=\evalreg{\expr_1}{\rho}])[\varphi]& \cdot \\
        \left(
        \begin{array}{lll}
          & (e_w\colon~\Writes_{o_w}~x~\evalreg{\expr_2}{\rho})[\varphi_\top] &
          \cdot \\
          & \kappa(\rho[r\mapsto\top],\varphi_\top) & \\
          \\
          + & \kappa(\rho[r\mapsto\bot],\varphi_\bot) & \\
        \end{array}
        \right) & \\
      \end{array}
      \right)
      ,e_r,e_w,\alpha=\evalreg{\expr_1}{\rho}) \\
			\langle\semfence[o]\rangle_{n~\rho~\kappa~\varphi} & \triangleq & (e\colon~\Fences_o)[\varphi] \cdot \kappa(\rho,\varphi) \\
			\langle\semmalloc{}{\expr}\rangle_{n~\rho~\kappa~\varphi} &\triangleq& (e\colon~\Allocs~\alpha~\evalreg{\expr}{\rho})[\varphi]\cdot\kappa(\rho[r\mapsto\alpha],\varphi) \\
			\langle\semfree{}\rangle_{n~\rho~\kappa~\varphi} &\triangleq& (e\colon~\Deallocs~\evalreg{r}{\rho})[\varphi]\cdot\kappa(\rho,\varphi) \\
			\langle\semif{b}{\prog_1}{\prog_2}\rangle_{n~\rho~\kappa~\varphi} & \triangleq & (e \colon \evalreg{b}{\rho}) \cdot (\langle\prog_1\rangle_{n~\rho~\kappa~(\varphi\wedge\evalreg{b}{\rho})}) \\
			&&~+~(\langle \prog_2\rangle_{n~\rho~\kappa~(\varphi\wedge\neg\evalreg{b}{\rho})}) \\
  \end{array}
  $}
  \end{center}

  \noindent where $\varphi_\top$ and $\varphi_\bot$ abbreviate the two outcomes
  of the $\cas{}$ test,

  \[
    \varphi_\top~\triangleq~\varphi\wedge(\alpha=\evalreg{\expr_1}{\rho})
    \qquad
    \varphi_\bot~\triangleq~\varphi\wedge\neg(\alpha=\evalreg{\expr_1}{\rho})
  \]

  \noindent and the event structure prefix $\cdot$ is given by

  \[
    e[\varphi_e]\cdot(E,\po,\pormw,\valres)~\triangleq~(\{e\}\cup E,\poord\cup(
    \{e\}\times(\{e\}\cup E)),\pormw,\valres[e\mapsto\varphi_e])
  \]\label{def:es-prefix}

  \noindent and the coproduct $\mathbb E_1+\mathbb E_2$ of event structures
  $\mathbb E_1~=~(E_1,\po_1,\pormw_1,\valres_1)$ and
  $\mathbb E_2~=~(E_2,\po_2,\pormw_2,\valres_2)$ by

  \[
    \mathbb E_1+\mathbb E_2~\triangleq~(E_1\uplus E_2,\po_1\uplus\po_2,
    \pormw_1\uplus\pormw_2,\valres_1\uplus\valres_2)
  \]

  \medskip
  \noindent\textbf{While Loops}: The semantics of while loops is defined through a step-counter as follows.

  \begin{equation}
    \begin{array}{rcl}
      \langle\prog\rangle_{0~\rho~\kappa~\varphi} & \triangleq & \emptyset \\
      \langle\semskip\rangle_{n~\rho~\kappa~\varphi} & \triangleq & \kappa(\rho,\varphi) \\
			\langle\semwhile{b}{\prog}\rangle_{n~\rho~\kappa~\varphi} & \triangleq & \langle\semif{b}{\semseq{\prog}{\semwhile{b}{\prog}}}{\semskip}\rangle_{n-1~\rho~\kappa~\varphi} \\
    \end{array}
    \label{eq:while-sem}
  \end{equation}

  \medskip
  \noindent\textbf{Sequential Composition}:
  \begin{equation}
		\langle\semseq{\prog_1}{\prog_2}\rangle_{n~\rho~\kappa~\varphi}~\triangleq~\langle\prog_1\rangle_{n~\rho~(\lambda\rho\,\varphi.\langle\prog_2\rangle_{n~\rho~\kappa~\varphi})~\varphi}
  \end{equation}

  \medskip\noindent\textbf{Parallel Composition}:
  \begin{equation}
		\langle\sempar{\prog_1}{\prog_2}\rangle_{n~\rho~\kappa~\varphi}~\triangleq~\langle\prog_1\rangle_{n~\rho~\kappa~\varphi}+\langle\prog_2\rangle_{n~\rho~\kappa~\varphi}
  \end{equation}

  \medskip\noindent\textbf{Branching}: The branches of the \code{if} command
  above are combined with the coproduct of event structures. Events in the
  \code{then} branch are in conflict with events in the \code{else} branch,
  through the extension of the value restriction with $\evalreg{b}{\rho}$ and
  $\neg\evalreg{b}{\rho}$, respectively.

\end{definition}

\paragraph{The step-counter per loop.}
The step-counter of Equation~\ref{eq:while-sem} bounds the depth of the
unravelling, and thus ensures that $\kappa$ -- which is constructed from the end
of the program -- is well-defined.
A global step-counter would suffice for that, but it is shared between nesting
levels: a loop nested under $k$ iterations of an enclosing loop is unravelled
with counter $n-k$, so successive iterations carry successively smaller
unravellings of the loop nested within them.
As the proofs in Appendix~\ref{s:app-proofs} compare iterations of the same
loop, we instead read the subscript in Equation~\ref{eq:while-sem} as a map in
$\text{Pfn}(\nat,\nat)$ from loop indices to bounds, as for $\iter$, of which
the \code{while} rule decrements only the component of the loop it unravels, and
write $\langle\prog\rangle_n$ for the uniform choice assigning the bound $n$ to
every loop. The base case is read along with it: unravelling stops where the
component of the loop being unravelled is exhausted, that is
$\langle\semwhile{b}{\prog}\rangle_{n~\rho~\kappa~\varphi}\triangleq\emptyset$
whenever $n(\ell)=0$ for the loop $\ell$ of that \code{while}, and not where
every component is. As under a global step-counter, this drops the executions
that would need more than $n(\ell)$ iterations of $\ell$, while the exits
unravelled at smaller depths remain. Every iteration of a loop then contains the
same unravelling of the loops nested within it.
Executions generated under per-loop step-counters are contained in those
generated under a global step-counter, obtained by multiplying the counters of
nesting loops and taking the maximum over loops in sequence, so $\kappa$ remains
well-defined.
The two readings agree unless loops nest, and so differ on none of the
algorithms considered here except hazard pointers (Appendix~\ref{app:hp}), whose
retry loops are nested.

\paragraph{Ordering of control labels and
symbols.}\label{par:order-clab}\label{par:order-symb}
Without loss of generality, we assume that labels are enumerated from the start
of the program, so that $\Labels$ is ordered by a partial well-founded order.
Furthermore, we assume that control labels are assigned depth-first preferring
the path with an earlier termination of a loop.
By convention each read event and each allocation event introduces a fresh
symbol. We assume an enumeration of symbols by control label, so that symbols in
event structures are enumerated from the start of the program.

\paragraph{Auxiliary event accessor functions.}
We use functions $\loc$, $\val$, and $\cond$. $\loc$ accesses the memory
location in write, read, allocation, and deallocation events; for an allocation
it is the symbol $\alpha$ the event introduces, and for a deallocation the
expression naming the location it frees. $\val$ accesses the value expression:
the value written in a write event, the symbol a read event introduces, and the
size expression in an allocation event. A deallocation carries no value, and
$\val$ is the empty expression there, so that $\symbols(\val(d))=\emptyset$ for
every deallocation $d$; the same holds for fence and branching events. $\cond$
accesses the branching condition in branching events.

The functions $\loopfun$ and $\iter$ above extend naturally to events.
Additionally, we introduce a function $\pc:\Events\to\nat$ from events to
program counters. Program counters differ from lines of code in particular for
composite operations such as RMW operations.

\begin{example}
  Consider the following program
  \begin{minted}{c}
    r:=0;r:=r+1;x:=r;
  \end{minted}

  Starting with step-counter $n=1$, the register state is constructed from the
  start of the program as follows:

  \[
    \begin{array}{rc}
      \langle r:=0;r:=r+1;x:=r\rangle_{1~\mathbf{\{\}}~\lambda\rho\,\varphi.\emptyset~\top}&=\\
      \langle r:=0\rangle_{1~\mathbf{\{\}}~(\lambda\rho\,\varphi.\langle r:=r+1;x:=r\rangle_{1~\rho~\lambda\rho\,\varphi.\emptyset~\varphi})~\top}&=\\
      \langle r:=r+1;x:=r\rangle_{1~\mathbf{\{r\mapsto 0\}}~\lambda\rho\,\varphi.\emptyset~\varphi}&=\\
      \langle r:=r+1\rangle_{1~\mathbf{\{r\mapsto 0\}}~(\lambda\rho\,\varphi.\langle x:=r\rangle_{1~\rho~\lambda\rho\,\varphi.\emptyset~\varphi})~\varphi}&=\\
      \langle x:=r\rangle_{1~\mathbf{\{r\mapsto 1\}}~\lambda\rho\,\varphi.\emptyset~\varphi}&=\\
      (e:\mathcal W~x~1)[\top]\cdot\emptyset
    \end{array}
  \]

  The continuations are constructed cumulatively from the end of the program as
  follows:

  \[
    \begin{array}{l}
      \lambda\rho\,\varphi.\lambda\rho\,\varphi.\lambda\rho\,\varphi.(e\colon
      \mathcal W~x~\evalreg{r}{\rho})[\varphi]\cdot\emptyset~(\{r\mapsto
      1\},\top)(\{r\mapsto 0\},\top)(\{\},\top)\\[2pt]
      \qquad=~(e\colon\mathcal W~x~1)[\top]\cdot\emptyset
    \end{array}
  \]
\end{example}

\paragraph{Atomicity guarantees of read-modify-write operations.}
Atomicity of read-modify-write operations is captured by the ternary relation
$\pormw\subseteq E\times\Expressions\times E$ through the $\rmw$-function below

\[
  \rmw((E,\po,\pormw,\valres),e_r,e_w,b)~\triangleq~(E,\po,\pormw\cup\{(e_r,
  b,e_w)\},\valres)
\]

$\pormw$ does not order the write before the read of an RMW operation in
executions: it contributes to $\ppo$ only via $\ppormw$, and there the
write-to-read pair occurs solely in composition with $\pposync$
(Definition~\ref{def:ppo} of preserved program order). $\ppormw$ extends
$\pposync$ across the RMW, ordering events before the write ahead of the read
and the write ahead of events after the read, thereby preventing memory accesses
to the same location from being ordered between the read and write of the RMW.
How this works is illustrated for \fadd{} in Example~\ref{ex:rcu-sp} in
Section~\ref{s:sync-points}.

Figures~\ref{fig:cas-rmw} and~\ref{fig:faa-rmw} contrast the two directions on
the event structures of \cas{} and of \fadd{} respectively. In both, $\pormw$
runs from the read to the write, and the dashed $\ppormw$ arrow runs from the
write back to the read. The latter is drawn between the two events for
legibility only: it enters $\ppo$ solely in composition with $\pposync$, so it
contributes order to an execution only together with an event $\pposync$-before
the write or $\pposync$-after the read. The two instructions differ in the
condition carried by $\pormw$. For \fadd{} it is $\top$, so $\ppormw[\top]$
holds in every execution. For \cas{} it is the branching condition
$\beta=\alpha$, so $\ppormw$ holds only where that condition is $P$-equivalent
to $\top$, that is on the succeeding branch -- the asymmetry of conditional
read-modify-write operations noted above.

\begin{figure}[htbp]
  \centering
  \begin{minipage}[t]{0.48\textwidth}
    \centering
    \scalebox{0.85}{\begin{tikzpicture}
  \nodeP{casr}{$e_r\colon R^\text{acq}~C~\beta$}{0, 0}
  \nodeP{casb}{$e_c\colon[\beta=\alpha]$}{0, -1.2}
  \nodeP{casw}{$e_w\colon W^\text{rel}~C~\varepsilon$}{-1.4, -2.4}
  \node (case) at (1.4, -2.4) {$\vdots$};
  \edgepo[]{casr}{casb}
  \draw (casb) edge[style=po] node[above left,font=\scriptsize,inner sep=1pt] {if} (casw);
  \draw (casb) edge[style=po] node[above right,font=\scriptsize,inner sep=1pt] {else} (case);
  \draw[style=rmw] ([yshift=-1.5mm]casr.west) to[out=225,in=135,looseness=1.0]
    node[font=\scriptsize,fill=white,inner sep=1pt,align=center]
    {$\pormw$\\[1pt]$\scriptstyle\beta=\alpha$} ([yshift=1.5mm]casw.west);
  \draw[style=ppormw] ([yshift=-1.5mm]casw.west) to[out=180,in=180,looseness=2.2]
    node[font=\scriptsize,fill=white,inner sep=1pt] {$\ppormw$} ([yshift=1.5mm]casr.west);
\end{tikzpicture}}
    \captionof{figure}{Atomicity guarantees of
    \code{r:=}\cas{}$^{\text{rel,acq}}$\code{(C,s,n)}, where \code{s} has value
    $\alpha$, \code{n} has $\varepsilon$, and \cas{} reads $\beta$ from
    \code{C}. $\pormw$ carries the branching condition, and $\ppormw$ holds on
    the succeeding branch only.}
    \label{fig:cas-rmw}
  \end{minipage}
  \hfill
  \begin{minipage}[t]{0.48\textwidth}
    \centering
    \scalebox{0.85}{\begin{tikzpicture}
  \nodeP{faar}{$e_r\colon R^\text{acq}~C~\alpha$}{0, 0}
  \nodeP{faaw}{$e_w\colon W^\text{rel}~C~\alpha+\varepsilon$}{0, -2.4}
  \edgepo[]{faar}{faaw}
  \draw[style=rmw] ([yshift=-1.5mm]faar.west) to[out=225,in=135,looseness=1.0]
    node[font=\scriptsize,fill=white,inner sep=1pt,align=center]
    {$\pormw$\\[1pt]$\scriptstyle\top$} ([yshift=1.5mm]faaw.west);
  \draw[style=ppormw] ([yshift=-1.5mm]faaw.west) to[out=180,in=180,looseness=3.0]
    node[font=\scriptsize,fill=white,inner sep=1pt] {$\ppormw[\top]$} ([yshift=1.5mm]faar.west);
\end{tikzpicture}}
    \captionof{figure}{Atomicity guarantees of
    \code{s:=}\fadd{}$^{\text{rel,acq}}$\code{(\&C,}$\varepsilon$\code{)}, where
    \fadd{} reads $\alpha$ from \code{C}. $\pormw$ carries the condition
    $\top$, so $\ppormw[\top]$ holds unconditionally.}
    \label{fig:faa-rmw}
  \end{minipage}
\end{figure}

\subsection{Justifications}\label{s:app-defs-justs}

In a program semantics which allows for out-of-order execution of instructions,
traces form equivalence classes with events reordered up to certain dependency
relations: the semantic dependency relation $\DP$, the preserved program order
$\ppo$, and the read-from relation $\rf$. $\DP$ and $\ppo$ are thread-local,
$\rf$ constrains event ordering across threads through the axiomatic memory
consistency model. $\rf$ is populated from write events visible at the point of
read events. $\DP$ and $\ppo$ are defined relative to a set of justifications.

\begin{definition}[Justifications]\label{def:justs}

  Let $\mathbb E~=~(E,\po,\pormw,\valres)$ be an event structure. Write
  $\Effects~\triangleq~\Writes\cup\Allocs\cup\Deallocs$ for the
  \emph{memory-effectful} events -- those that act on memory rather than observe
  it. A justification of a memory-effectful event $w\in\Effects$ in $\mathbb E$
  is a tuple $j\colon(P,D)\justifies^\delta w$ consisting of:
  \begin{itemize}

    \item a predicate $P$,

    \item a set $D\subseteq E$ of events, and

    \item a forwarding context $\delta~=~(\FWD,\WE)$ consisting of a
      forwarding relation $\FWD\subseteq E\times E$ and a write elision relation
      $\WE\subseteq E\times E$.

  \end{itemize}

  The event $w$ recorded in a justification is a \emph{copy} of the event of
  $\mathbb E$ it justifies, carrying that event's control label together with
  its own location and value expressions. Over \smrd{}~\cite{Richards25SMRD},
  where only writes are justified, this admits allocations and deallocations as
  well: a deallocation depends on the expression naming the location it frees
  and on the control flow reaching it, and an allocation on its size expression,
  and neither dependency is expressible as a justification of a write.
  Elaborations rewrite the expressions of the copy; the event structure is
  generated once by Definition~\ref{def:gen-es} and is thereafter fixed. As no
  elaboration changes the label, and as labels are unique in $\overline\prog$
  and a conflict-free set of events therefore contains at most one event per
  label, the event of $\mathbb E$ that a justification speaks about is recovered
  from that label wherever this is needed. Applied to the write of a
  justification, $\loc$ and $\val$ denote the expressions of the copy; applied
  to an event of $\mathbb E$, they denote that event's own.

  Given a justification $j$ as above, define $P_j=P$, $D_j=D$, and
  $\delta_j=\delta$. For a set $D\subseteq E$ of events we write
  \[
    \symbols(D)~\triangleq~\{\alpha\in\Symbols\mid\origin\alpha\in D\}
  \]
  \noindent for the symbols whose origins lie in $D$, so that
  $\origin{\symbols(D)}=D$ whenever every event of $D$ is the origin of a
  symbol.
\end{definition}

The \emph{justification set} $\mathbb J$ of all such justifications over
$\mathbb E$ is generated inductively from a set $\mathbb J_0$ of
pre-justifications which justify memory-effectful events against the origin of
symbols used in their own location, value and size expressions, by closing
$\mathbb J_0$ under elaborations Value Assignment, Forwarding, Write Elision,
Lifting, Strengthening and Weakening as given in Definition~\ref{def:gen-just}
of the generation of justifications in Appendix~\ref{s:app-defs}.

\subsection{Preserved Program Order}\label{s:app-defs-ppo}

The preserved program order $\ppo$ is defined in the context of justifications.
Note that some elaborations are defined dependent on $\ppo$ and $\pred$, such as
\hyperref[def:elab-lift]{Lifting}. The definition of $\ppo$ becomes part of
the inductive definition of justifications in Definition~\ref{def:gen-just}.

Over the definition of \smrd~in~\cite{Richards25SMRD}, we change $\ppormw$ to
reflect the asymmetric nature of memory order in conditional read-modify-write
operations such as $\cas$, where the failing branch has an acquire, the
succeeding branch a release-acquire semantics.

\begin{definition}[Preserved Program Order]\label{def:ppo}\label{def:pred}
  \emph{Preserved program order} is defined relative to a predicate $P$
  and a \hyperref[def:fwd-ctx]{forwarding context} $\delta=(\FWD,\WE)$.

  The preserved program order $\ppo$ is defined as the closure of three basic
  relations $\pposync$, $\ppormw$, and $\ppoalias$ under the forwarding context
  $\delta$.

  \[
    \ordop[\ppo]{^P_\delta}~\triangleq~\remap_\delta\left(\pposync\cup\ppormw\cup\ppoalias\right)
  \]

  \noindent $\pposync$ accounts for statically declared memory order.

  \[
      \pposync~\triangleq~{\left(\po;\Delta_{\mathcal W_{\text{rel,sc}}}\cup\poord;\Delta_{\mathcal F_{\text{rel,sc}}};\po_{\setminus\mathcal R}\cup\Delta_{\mathcal R_{\text{acq,sc}}};\poord\cup\poord_{\setminus\mathcal W};\Delta_{\mathcal F_{\text{acq,sc}}};\po\right)}_{\setminus\mathcal F\cup\mathcal B}
  \]

  \noindent $\ppormw$ accounts for atomicity of read-modify-write operations.

  \[\begin{array}{rcl}
    \ppormw&\triangleq&\pposync;\left\{(e_w,e_r)\mid(e_r,c,e_w)~\in~\pormw~\land~c\exeq_P\top\right\}\\
    && \cup\left\{(e_w,e_r)\mid(e_r,c,e_w)~\in~\pormw~\land~c\exeq_P\top\right\};\pposync
  \end{array}
  \]

  \noindent $\ppoalias$ accounts for order induced by memory location.

  \[
			\ppoalias~\triangleq~\left\{(e_1,e_2)~\in~\sqsubseteq~\mid~\exists f.\evalenv{P\land\loc(e_1)~=~\loc(e_2)}{f}\exeq\top\right\}
  \]

  \noindent Additionally, we define a predecessor relation $\pred$, which
  relates events with no other event $\ppo$-between.

  \[
    \begin{array}{r@{~}l}
      \pred_\delta(e, P)~\triangleq~\{ e'~\mid & e'\ordop[\ppo]{^P_\delta} e~\wedge~e\neq e'~\wedge{} \\
      & \forall e''.e'\ordop[\ppo]{^P_\delta} e''\ordop[\ppo]{^P_\delta} e\Rightarrow (e'=e''\vee e''=e)\}
    \end{array}
  \]

  \noindent $R_{\setminus X}$ is shorthand for $R\cap{(E\setminus X) }^2$ for a
  binary relation $R\subseteq E\times E$, and $\Delta_X=\{(x,x)\mid x\in X\}$ is
  the diagonal relation on $X$.

  In the context of an \hyperref[def:executions]{execution} with a set $J$ of
  justifications, we write $e_1\ppo_{\!\!J}~e_2$ or $e_1\ppo e_2$, if $J$ is
  clear from context, as a shorthand for $e_1\ppo^{P_J}_{\delta_J}e_2$, where
  $\delta_J~=~\bigcup\limits_{(j\colon(P,D)\justifies^\delta w)\in J}\delta$ and
  $P_J~=~\bigwedge\limits_{(j\colon(P,D)\justifies^\delta w)\in J}P$.

\end{definition}

\subsection{Generating Justifications}\label{s:app-defs-gen-justs}

\begin{definition}[Generating Justifications]\label{def:gen-just}

  Justifications are generated from \emph{pre-justifications}

  \[
    \begin{array}{rcl}
      \mathbb J_0&\triangleq&\left\{
        (\valres(w),\origin x\cup\origin\expr)\justifies^{(\emptyset,\emptyset)}
        (w\colon W~x~\expr)\mid w\in\Writes\right\}\\[2pt]
      &\cup&\left\{
        (\valres(a),\origin\expr)\justifies^{(\emptyset,\emptyset)}
        (a\colon\Allocs~\alpha~\expr)\mid a\in\Allocs\right\}\\[2pt]
      &\cup&\left\{
        (\valres(d),\origin\expr)\justifies^{(\emptyset,\emptyset)}
        (d\colon\Deallocs~\expr)\mid d\in\Deallocs\right\}
    \end{array}
  \]

  \noindent in each case where the value restriction is satisfiable,
  $\valres(\cdot)\not\equiv\bot$, and where

  \begin{itemize}
    \item
      $\origin\expr~\triangleq~\{\origin\alpha~\mid~\alpha\in\symbols(\expr)\}$
    \item $\valres(e)$ is the value restriction accumulated over the event
      structure up to the memory-effectful event $e$
  \end{itemize}

  \noindent and by elaborations Value Assignments $\elab{va}$, Forwarding
  $\elab{fwd}$, Write Elision $\elab{we}$, Lifting $\elab{lift}$, Strengthening
  $\elab{str}$ and Weakening $\elab{weak}$ inductively, such that

  \[
    \begin{array}{l}
      \mathbb J_{i+1} \triangleq \mathbb J_i\cup\bigl\{j\mid
        \exists j_1,j_2\in\mathbb J_i.\\[2pt]
      \qquad G(j_1,j)~\text{where}~
        G\in\{\elab{va},\elab{fwd},\elab{we},\elab{str},\elab{weak}\}
        ~\text{or}~\elab{lift}(j_1,j_2,j)\bigr\}
    \end{array}
  \]

  \noindent and $P_j\wedge\Omega\not\equiv\bot$.

\end{definition}

\begin{definition}[Forwarding Context]\label{def:fwd-ctx}
  In the context of an execution $\mathbb X~=~(X, J, \rf)$, with justifications
  $j\colon(P,D)\justifies^\delta w$, a \emph{forwarding context} $\delta$ is a
  pair $(\FWD,\WE)$ of binary relations on $X$:
  \begin{itemize}
    \item a forwarding relation $\FWD\subseteq X\times X$ with edges introduced by
      \hyperref[def:elab-fwd]{forwarding elaborations} as in
      Definition~\ref{def:elab-fwd}
    \item a write elision relation $\WE\subseteq X\times X$ with edges
      introduced by \hyperref[def:elab-we]{write elision elaborations} as in
      Definition~\ref{def:elab-we}
  \end{itemize}
  \noindent For a forwarding context $\delta=(\FWD,\WE)$, we define a predicate
  \[
    \psi_\delta~\triangleq~\bigwedge\limits_{(e_1,e_2)\in\FWD}\val(e_1)=\val(e_2)
  \]
  \noindent and a recursively closed remapping function
  \[
      \mathrm{remap}_\delta(e)~\triangleq~\begin{cases}
        \mathrm{remap}_\delta(e_1) & \text{where}~(e_1,e)\in(\FWD\cup\WE) \\
        e & \text{otherwise}
      \end{cases}
  \]
\end{definition}

\begin{definition}[Value Assignments]\label{def:elab-va}
  \[
    \begin{array}{rcl}
      \elab{va}(j_1,j)&\triangleq&
      j_1\colon(P,D)\justifies^\delta(w\colon W_O~x~\expr)~\wedge~
      j\colon(P,D')\justifies^\delta(w\colon W_O~x'~\expr')~\wedge\\
      &&{\hyperref[def:sem-equiv]{\alpha\equiv_P v}}~\wedge~v\in\Val~\wedge\\
			&&D'=\origin{\symbols(x')}\cup\origin{\symbols(\expr')}~\wedge\\
			&&x'=\eval{x}{\alpha}{v}~\wedge~\expr'=\eval{\expr}{\alpha}{v}
    \end{array}
  \]
\end{definition}

Value assignment substitutes $v$ for $\alpha$ in the location and value
expressions of the write, and so removes the data dependency on $\alpha$, but
it leaves the predicate $P$ unchanged, as in~\cite[Definition
4.10]{Richards25SMRD}.

\begin{definition}[Strengthening]\label{def:elab-str}
  \[
    \begin{array}{rcl}
      \elab{str}(j_1,j)&\triangleq&j_1\colon(P,D)\justifies^\delta
      w~\wedge~j\colon(P',D)\justifies^\delta w~\wedge\\
      &&S=\origin{\symbols(P')}\setminus\origin{\symbols(P)}~\wedge~\remap_\delta(S)=S~\wedge\\
      &&\forall e\in S.\left((e\sqsubseteq w\vee w\sqsubseteq e)~\wedge~w\not\ppo^P_\delta e\right)~\wedge\\
      &&P'=P'~\wedge~P~\wedge\bigwedge\limits_{e\in S}v(e)\\
    \end{array}
  \]
\end{definition}

\begin{definition}[Forwarding Relations]\label{def:fwd-rels}
  \[
    \begin{array}{rcl}
      e_1 \xrightarrow{F'_j} e_2~&\triangleq&j\colon(P, D)\justifies^{\delta} w~\wedge
      e_1 \in \pred_{\delta}(e_2, P)~\wedge
      \loc(e_1)\equiv_{P\wedge\psi_\delta}\loc(e_2)\\
      e_1 \fwdrel e_2&\triangleq&e_1\xrightarrow{F'_j}e_2~\wedge
      (e_1, e_2) \in
      (\Writes\times\Reads_{\text{rlx}}\cup\Writes\times\Writes_{\text{rlx}}\cup\Reads\times\Reads)\\
      e_1 \werel e_2&\triangleq&e_1\xrightarrow{F'_j}e_2~\wedge (e_1, e_2) \in
      (\Writes\times\Writes)\\
    \end{array}
  \]

  \noindent We name the three shapes of $\fwdrel$: \emph{store
  forwarding} for $\Writes\times\Reads_\rlx$, \emph{store-store forwarding}
  for $\Writes\times\Writes_\rlx$, and \emph{load forwarding} for
  $\Reads\times\Reads$.
\end{definition}

\begin{definition}[Forwarding]\label{def:elab-fwd}
  \[
    \begin{array}{rcl}
      \elab{fwd}(j_1,j)&\triangleq&j_1\colon(P, D)\justifies^{(f,we)}
      (w\colon W_0~x~\expr)~\wedge\\
      && j\colon(P', D')\justifies^{\left(f
      \cup \left\{(e_1,e_2)\right\},we\right)} (w\colon W_0~x'~\expr')~\wedge\\
      && e_1 \fwdrel[j_1] e_2~\wedge\\
			&& g = [\text{val}(e_2) \mapsto \text{val}(e_1)]~\wedge\\
      && P' = \evalenv{P}{g}~\wedge~\expr' = \evalenv{\expr}{g}~\wedge~x' = \evalenv{x}{g}~\wedge\\
      && D' = \origin{\symbols(\expr')}\cup\origin{\symbols(x')}\\
    \end{array}
  \]
\end{definition}

Neither $\elab{va}$ nor $\elab{fwd}$ introduces an event. Both rewrite the
location and value expressions of the justification's own copy of $w$, and both
leave its control label, and the event structure, untouched.

\begin{definition}[Write Elision]\label{def:elab-we}
  \[
      \elab{we}(j_1,j)~\triangleq~j_1\colon(P,
      D)\justifies^{(\FWD,\WE)} w~\wedge~j\colon(P, D)\justifies^{\left(\FWD,\WE
      \cup\left\{(e_2,e_1)\right\}\right)}w~\wedge~e_1 \werel[j_1] e_2
  \]
\end{definition}

\begin{observation}\label{o:elab-fwd-value-eq}

  $\delta$ is composed during the forwarding elaboration $\elab{fwd}$ and the
  write elision elaboration $\elab{we}$. The forwarding elaboration models
  load forwarding, store forwarding, and store-store forwarding.
  $\elab{fwd}$ does not itself assert equality of values commonly required for
  forwarding optimisations. The values depend on the execution, and are only
  known once $\rf$ is assigned and are known to be consistent with the
  constraints and forwarding context of the justification set. The predicate
  $\varphi$ in Definition~\ref{def:freeze} asserts the equivalence of values.

\end{observation}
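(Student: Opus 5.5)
The observation is a structural claim read off the definitions, so I will prove it by inspecting the generation of justifications in Definition~\ref{def:gen-just} and checking each of its three parts in turn.

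\textbf{Where $\delta$ grows.} I will argue by induction on the stage $i$ of $\mathbb J_i$ that, for every justification $j\colon(P,D)\justifies^{(\FWD,\WE)}w$ in $\mathbb J_i$, every pair in $\FWD$ was added by some application of $\elab{fwd}$ and every pair in $\WE$ by some application of $\elab{we}$. In the base case, all pre-justifications in $\mathbb J_0$ carry $(\emptyset,\emptyset)$. For the step I go through the elaborations one at a time. $\elab{va}$ (Definition~\ref{def:elab-va}) and $\elab{str}$ (Definition~\ref{def:elab-str}) carry $\delta$ over unchanged, and I expect the same of $\elab{weak}$. $\elab{fwd}$ (Definition~\ref{def:elab-fwd}) extends only the first component, by the single pair $(e_1,e_2)$. $\elab{we}$ (Definition~\ref{def:elab-we}) extends only the second, by $(e_2,e_1)$. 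For $\elab{lift}$ I will need to check that it merges the contexts of its two premises without adding new edges. This is the one step whose definition is not displayed before the observation, so it is the main point to verify. The contexts $\delta_J$ attached to an execution are unions of these per-justification contexts (Definition~\ref{def:ppo}), so they inherit the same property.

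\textbf{Which optimisations $\elab{fwd}$ models.} Here I unfold the side condition $e_1\fwdrel[j_1]e_2$ from Definition~\ref{def:fwd-rels}. It restricts $(e_1,e_2)$ to $\Writes\times\Reads_\rlx$, $\Writes\times\Writes_\rlx$, or $\Reads\times\Reads$. These are exactly the three named shapes: store forwarding, store-store forwarding, and load forwarding. In each case $e_1\in\pred_\delta(e_2,P)$ and the two locations agree under $P\wedge\psi_\delta$, which is the usual adjacency and alias precondition for these compiler transformations.

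\textbf{No value equality is asserted.} The substitution $g=[\val(e_2)\mapsto\val(e_1)]$ in $\elab{fwd}$ rewrites $P$, $x$, and $\expr$. The new predicate is $P'=\evalenv{P}{g}$, and no conjunct $\val(e_1)=\val(e_2)$ is added to it. Nor is such a conjunct a side condition of $\elab{fwd}$: the only conditions are the forwarding relation and the definition of $D'$. The equality appears only in $\psi_\delta$ (Definition~\ref{def:fwd-ctx}), and there it is used solely to compare locations. Finally, I will point ahead to Definition~\ref{def:freeze}: its predicate $\varphi$ conjoins the $\rf$-induced constraints with the forwarding equalities. That is the place where value equality is enforced, once $\rf$ has been fixed.
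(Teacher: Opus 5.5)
Your proposal is correct and follows essentially the paper's own approach: the paper states this observation without a separate proof, as a direct reading of Definitions~\ref{def:gen-just}, \ref{def:elab-fwd}, \ref{def:elab-we}, \ref{def:fwd-ctx} and~\ref{def:freeze}, which is exactly the inspection you carry out. Two small corrections do not affect the argument. First, $\elab{lift}$ (Definition~\ref{def:elab-lift}) does not merge contexts; it requires both premises and the conclusion to carry the same $\delta$, so that step is immediate. Second, $\psi_\delta$ is not used solely to compare locations: the closed relabel-equivalence of Definition~\ref{def:rel-eq}, which Lifting invokes, also compares value expressions modulo $\psi_\delta$, though still only as a hypothesis and never as a conjunct added to $P$.
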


\begin{definition}[Closed Relabel-Equivalence]\label{def:rel-eq}
  In the following let a \emph{relabelling}
  $\Lambda:\Symbols\rightharpoonup\Symbols$ be an environment mapping symbols
  one-to-one from one branch to another.

  Then $P_1\colon\expr_1\relabeq\Lambda\delta P_2\colon\expr_2$ is
  a \emph{relabel equivalence on expressions} if
  \[
    \exists\expr.\left(
			\evalenv{P_1\Rightarrow\expr_1=\expr}{\Lambda}\wedge
      (P_2\Rightarrow\expr_2=\expr)
      \equiv_{\hyperref[def:fwd-ctx]{\psi_\delta}}\top
    \right)
  \]
  \noindent Then $P_1:e_1\relabeq{\Lambda}{\delta}P_2:e_2$ is a
  \emph{relabel equivalence} if
  \[
    \left\{
      \begin{array}{ll}
        P_1:\loc(e_1)\relabeq{\Lambda}{\delta}P_2:\loc(e_2)~\wedge
        & (e_1,e_2)\in(\mathcal R^2\cup\mathcal W^2\cup\mathcal A^2)\\
        ~~P_1:\val(e_1)\relabeq{\Lambda}{\delta}P_2:\val(e_2) & \\
        P_1:\loc(e_1)\relabeq{\Lambda}{\delta}P_2:\loc(e_2) &
        e_1,e_2\in\mathcal D \\
        \top & e_1,e_2\in\mathcal F \\
        \bot & \text{otherwise} \\
      \end{array}
    \right.
  \]
  \noindent and $P_1\colon e_1\relabeq{\Lambda}{\delta}^* P_2\colon e_2$ is
  a \emph{closed relabel equivalence} if
  \[
    \begin{array}{l}
      P_1\colon e_1\relabeq{\Lambda}{\delta} P_2\colon e_2~\wedge\\
      \emptyset~=~\pred_\delta(e_1,P_1)
      \Leftrightarrow
      \emptyset~=~\pred_\delta(e_2,P_2)~\wedge\\
      \forall e'_1\in\pred_\delta(e_1,P_1),e'_2\in\pred_\delta(e_2,P_2).
      P_1\colon e_1'\relabeq{\Lambda}{\delta}^* P_2\colon e'_2
    \end{array}
  \]

  \noindent The two clauses read $\loc$ and $\val$ of different things. Where
  $e_1$ and $e_2$ are the writes of justifications, as they are where
  \hyperref[def:elab-lift]{Lifting} below invokes the equivalence, the first
  clause compares the expressions of the copies each records, per
  Definition~\ref{def:justs}. The recursion through $\pred_\delta$ that closes
  the equivalence then descends into events of $\mathbb E$, which carry no
  justification of their own and are compared on their own expressions.
\end{definition}

\begin{definition}[Lifting]\label{def:elab-lift}
  \[
    \begin{array}{rcl}
      \elab{lift}(j_1,j_2,j)&\triangleq&
        j_1\colon(P_1,D_1)\justifies^\delta w_1~\wedge
        j_2\colon(P_2,D_2)\justifies^\delta w_2~\wedge\\
      && j\colon(\evalenv{P_1}{\Lambda}\vee P_2,D_2)\justifies^\delta w_2~\wedge\\
			&& P_1\colon w_1\relabeq{\Lambda}{\delta}^*P_2\colon w_2~\wedge\\
      && \{\origin{\evalenv{\alpha}{\Lambda}}\mid\alpha\in\symbols(D_1)\}=D_2~\wedge\\
      && \forall\alpha\in\symbols(D_1).~
        P_1\colon\origin\alpha
        \relabeq{\Lambda}{\delta}^*
        P_2\colon\origin{\evalenv{\alpha}{\Lambda}}\\
    \end{array}
  \]
\end{definition}

\begin{definition}[Weakening]\label{def:elab-weak}
  \[
    \begin{array}{rcl}
      \elab{weak}(j_1,j)&\triangleq&j_1 : (P'\wedge P, D)\justifies^\delta w~\wedge\\
      &&j\colon(P', D)\justifies^\delta w~\wedge\\
      &&\Omega\Rightarrow P\\
    \end{array}
  \]
\end{definition}

\noindent Weakening is the only elaboration that consults the global guarantees
$\Omega$. For the sake of a simple presentation we ignore Weakening in this
paper: by Observation~\ref{obs:elab-weak} it neither enables other
elaborations nor adds dependencies during freezing, so omitting it changes
neither the dependency relations nor the executions the semantics admits.

\noindent Note that
\begin{enumerate}
  \item\label{note:elabs-modifying-p} only Strengthening, Forwarding, Lifting, and Weakening modify $P$
  \item\label{note:elabs-modifying-delta} Forwarding and Write Elision modify the forwarding context
\end{enumerate}

\begin{observation}
  Strengthening and weakening take a special role among the elaborations:
  \begin{enumerate}
    \item\label{obs:elab-str}
      \hyperref[def:elab-str]{Strengthening} enables other elaborations.
      For instance, strengthening a predicate $P$ by $\alpha=v$ enables
      \hyperref[def:elab-va]{value assignment} for
      $\alpha\equiv_{P\wedge\alpha=v}v$. As defined, $\mathbb J$ is monotone
      in the sense that adding a justification $j$ to $\mathbb J_i$ for some
      $i$ will lead to a larger $\mathbb J$, and not remove any justifications.
      Applying \hyperref[def:elab-str]{Strengthening} adds more justifications,
      and thus more constrained executions via \hyperref[def:freeze]{freezing}.
    \item\label{obs:elab-weak}
      \hyperref[def:elab-weak]{Weakening} on its own does not enable other
      elaborations and does not add dependencies during \hyperref[def:freeze]{%
        freezing}, assuming that $\Omega$ is not false, i.e.~$\Omega\neq\bot$.
     In particular, consider a set $\mathbb J_n$ of elaborations closed under
      all elaborations except Weakening,
      $\elab{$\setminus$weak}(\mathbb J_n)=\mathbb J_n$.
      Applying Weakening yields a set $\mathbb J_{n+1}=\elab{weak}(\mathbb
      J_n)$, which is again closed under all elaborations except Weakening,
      $\elab{$\setminus$weak}(\mathbb J_{n+1})=\mathbb J_{n+1}$.

  \end{enumerate}
\end{observation}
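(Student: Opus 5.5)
The argument is for item~\ref{obs:elab-weak} of the observation. Item~\ref{obs:elab-str} is immediate: $\mathbb J_{i+1}\supseteq\mathbb J_i$ by construction, and a strengthened predicate $P'\wedge P$ makes $\equiv_{P'\wedge P}$ finer than $\equiv_P$. The idea for item~\ref{obs:elab-weak} is that Weakening alters exactly one component of a justification. From $j_1\colon(P'\wedge P,D)\justifies^\delta w$ it produces $j\colon(P',D)\justifies^\delta w$, with $D$, $\delta$ and the copy of $w$ untouched. The component it alters, the predicate, only ever differs from the original by a conjunct $P$ with $\Omega\Rightarrow P$. So every side condition that consults the predicate gives the same answer on $j$ and $j_1$ once it is read under the global guarantees $\Omega$: $P'\wedge\Omega\equiv P'\wedge P\wedge\Omega$. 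I would therefore first show that freezing (Definition~\ref{def:freeze}), and every elaboration side condition as used there, is evaluated relative to $\Omega$. Concretely, the pruning condition $P_j\wedge\Omega\not\equiv\bot$ of Definition~\ref{def:gen-just} and the execution predicate are always conjoined with $\Omega$. Hence replacing $P'\wedge P$ by $P'$ is invisible.

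\textbf{No new dependencies.} The $\DP$ edges extracted at freezing are read off $D_j$ and the forwarding context $\delta_j$, both of which are shared by $j$ and $j_1$. For $\ppo$ I would show that $\ppo^{P'}_\delta$ and $\ppo^{P'\wedge P}_\delta$ coincide modulo $\Omega$, taking the three generators of Definition~\ref{def:ppo} in turn. $\pposync$ does not mention $P$. For $\ppormw$, the condition $c\exeq_{P'}\top$ is equivalent to $c\exeq_{P'\wedge P}\top$ once $\Omega$ is assumed. For $\ppoalias$, satisfiability of $P'\wedge\loc(e_1)=\loc(e_2)$ together with $\Omega$ is satisfiability of $P'\wedge P\wedge\ldots\wedge\Omega$. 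The same then holds for $\pred_\delta$. Since $P_J$ in an execution is the conjunction over the chosen justifications, the ordering relations of any execution are unchanged. Together with $\Omega\neq\bot$, this shows that Weakening adds no dependency during freezing.

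\textbf{Closure.} Let $\mathbb J_n$ be closed under all elaborations except Weakening, and set $\mathbb J_{n+1}=\elab{weak}(\mathbb J_n)$. Take $G\in\{\elab{va},\elab{fwd},\elab{we},\elab{str},\elab{lift}\}$ applied with at least one argument of the form $\elab{weak}(j_1)$, $j_1\in\mathbb J_n$. I would show by case analysis that the result $j''$ is a weakening of $G$ applied to the unweakened arguments. That is, $j''=\elab{weak}(G(j_1,\ldots))$ with $G(j_1,\ldots)\in\mathbb J_n$ by closure, so $j''\in\mathbb J_{n+1}$. The cases fall into three kinds.
\begin{itemize}
\item Value Assignment and Write Elision leave $P$ untouched and read it only through $\equiv_P$ and $\pred$, which are invariant modulo $\Omega$ by the previous step.
\item Forwarding applies a substitution $g$ to $P$, and substitution distributes over the conjunct $P$. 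The residual side condition $\Omega\Rightarrow\evalenv{P}{g}$ follows because $\psi_\delta$ records the forwarded equality $\val(e_1)=\val(e_2)$, and freezing asserts it alongside $\Omega$.
\item Strengthening conjoins, so it commutes with dropping a conjunct. Its side condition $w\not\ppo^P_\delta e$ is again invariant by the $\ppo$ step.
\end{itemize}

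\textbf{Main obstacle.} I expect the hard case to be Lifting. Its predicate $\evalenv{P_1}{\Lambda}\vee P_2$ is a disjunction, and its side condition is a closed relabel equivalence recursing through $\pred_\delta$. I would first try to push the weakened conjunct out of the disjunction using $\Omega$: under $\Omega$, $(\evalenv{P_1\wedge P}{\Lambda})\vee(P_2\wedge P)\equiv\evalenv{P_1}{\Lambda}\vee P_2$. This needs $\Omega$ to be invariant under the relabelling $\Lambda$, which holds if global guarantees mention only allocation distinctness and no branch-local symbols. If that fails, the fallback is to restrict the claim to executions, where $\Omega$ is conjoined before any disjunct is chosen.
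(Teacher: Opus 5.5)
Your argument rests on a premise the definitions do not support. You assume that every elaboration side condition, and freezing, is evaluated relative to $\Omega$, so that $P'\wedge\Omega\equiv P'\wedge P\wedge\Omega$ makes Weakening invisible. The paper says explicitly that Weakening is the only elaboration that consults $\Omega$. Apart from it, $\Omega$ occurs only in the generation condition $P_j\wedge\Omega\not\equiv\bot$ of Definition~\ref{def:gen-just}. By contrast, $\equiv_P$ (Definition~\ref{def:sem-equiv}), the entailment in $\ppormw$, the satisfiability in $\ppoalias$ and Condition~(3) of Definition~\ref{def:freeze} are all stated on the bare predicate. So there is nothing to ``first show''. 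Without it your invariance steps fail: $\ppoalias[P']\supseteq\ppoalias[P'\wedge P]$, strictly whenever the dropped conjunct is what makes $\loc(e_1)=\loc(e_2)$ unsatisfiable. Hence neither $\ppo$ nor $\pred$ is invariant under Weakening, and the Value Assignment, Write Elision and Strengthening cases you route through that invariance inherit the gap.

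Your $\DP$ step also misreads Definition~\ref{def:freeze}: $\DP$ is read off $\symbols(P_j)\cup\symbols(D_j)$, not off $D_j$ and $\delta_j$. Weakening therefore does change $\DP$, but only by dropping edges, since $\symbols(P')\subseteq\symbols(P'\wedge P)$ --- and dropping dependencies is what Weakening is for. That inclusion is the argument for ``adds no dependencies''. Your claim that $\DP$ is ``unchanged'' is false, which shows the invisibility framing proves too much. Similarly, in your Forwarding case, $\Omega\Rightarrow\evalenv{P}{g}$ is a Weakening side condition checked at generation, where neither $\psi_\delta$ nor freezing is in play.

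The idea you need is the paper's: use monotonicity in one direction only, rather than making Weakening invisible in both. Every side condition that holds under the weakened predicate $P'$ also holds under the premise's stronger $P'\wedge P$:
\begin{itemize}
\item $\alpha\equiv_{P'}v$ gives $\alpha\equiv_{P'\wedge P}v$.
\item $\loc(e_1)\equiv_{P'\wedge\psi_\delta}\loc(e_2)$ gives the same under $P'\wedge P\wedge\psi_\delta$, and the paper carries the $\pred$ condition in the same direction.
\item For Lifting, each clause $P'_k\Rightarrow\expr_k=\expr$ of a relabel equivalence survives conjoining $P_k$. So an equivalence with one or both premises weakened yields the one between the unweakened premises.
\end{itemize}
Hence whatever $j$ enables, $j_1\in\mathbb J_n$ already enables. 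This needs no $\Omega$ inside the side conditions and no invariance of $\Omega$ under $\Lambda$, so the Lifting obstacle you anticipate does not arise. Note that the paper argues only this transfer of applicability, not your stronger commutation $G(\elab{weak}(j_1))=\elab{weak}(G(j_1))$. (Minor: in item~1, strengthening makes $\equiv_{P\wedge\alpha=v}$ relate more pairs than $\equiv_P$, i.e.\ coarser, not finer.)
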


\begin{proof}
  Observation~\ref{obs:elab-weak} follows through case distinction over the
  set of elaborations. Per Definition~\ref{def:elab-weak} of weakening, let
  $j_1\colon(P'\wedge P)\justifies^\delta w$ and
  $j\colon(P',D)\justifies^\delta w$, where the global guarantees $\Omega$
  imply $P$.

  \medskip
  \hyperref[def:elab-va]{\textbf{Value assignment}} $\elab{va}$: If
  $\alpha\equiv_{P'}v$, then $\alpha\equiv_{P'\wedge P}v$.

  \medskip
  \hyperref[def:elab-fwd]{\textbf{Forwarding}} $\elab{fwd}$:
  $e_1\xrightarrow{F_j}e_2$ iff
  $e_1\in\pred_\delta(e_2,P')$ and
  $\loc(e_1)\equiv_{P'\wedge\psi_\delta}\loc(e_2)$, then
  $e_1\in\pred_\delta(e_2,P'\wedge P)$ and
  $e_1\equiv_{P'\wedge P\wedge\psi_\delta}e_2$ iff
  $e_1\xrightarrow{F_{j_1}}e_2$.

  \medskip
  \hyperref[def:elab-we]{\textbf{Write elision}} $\elab{we}$:
  $e_1\xrightarrow{WE_j}e_2$ iff
  $e_1\in\pred_\delta(e_2,P')$ and $e_1\equiv_{P'\wedge\psi_\delta}e_2$,
  then $e_1\in\pred_\delta(e_2,P'\wedge P)$ and
  $e_1\equiv_{P'\wedge P\wedge\psi_\delta}e_2$ iff
  $e_1\xrightarrow{WE_{j_1}}e_2$.

  \medskip
  \hyperref[def:elab-lift]{\textbf{Lifting}} $\elab{lift}$: Let
  $j_1\colon(P'_1\wedge P_1,D_1)\justifies^\delta w_1$ and
  $j_2\colon(P'_2\wedge P_2,D_2)\justifies^\delta w_2$, where
  $\Omega\implies P_1$ and $\Omega\implies P_2$. If
  $P'_1\colon w_1\relabeq{\Lambda}{\delta}^*P'_2\wedge P_2\colon w_2$,
  $P'_1\wedge P_1\colon w_1\relabeq{\Lambda}{\delta}^*P'_2\colon w_2$, or
  $P'_1\colon w_1\relabeq{\Lambda}{\delta}^*P'_2\colon w_2$, then
  $P'_1\wedge P_1\colon w_1\relabeq{\Lambda}{\delta}^*P'_2\wedge P_2\colon w_2$,
  as $P'_1\implies\expr_1=\expr$ implies
  $P'_1\wedge P_1\implies\expr_1=\expr$,
  and $P'_2\implies\expr_2=\expr$ implies
  $P'_2\wedge P_2\implies\expr_2=\expr$.
\end{proof}

\subsection{Executions in Event Structures}

An execution records the memory effects of a run: its events are the reads,
writes, allocations and deallocations of $\Reads$, $\Writes$, $\Allocs$ and
$\Deallocs$. Branching and fence events are excluded, not because they are
inert, but because what they contribute is already recorded elsewhere and
including them would state it twice.

A branching event carries its condition, which Definition~\ref{def:gen-es} of
the event structure semantics accumulates into the value restriction $\valres$
of the events below it. That restriction reaches the execution twice over: the
definition below admits only justification sets consistent with
$\bigwedge_{e\in X}\valres(e)$, and pre-justifications take
$P=\valres(w)$ per Definition~\ref{def:gen-just}. The branching event itself
would add nothing beyond it.

A fence carries ordering rather than data, and that ordering is folded into
$\pposync$ in Definition~\ref{def:ppo} of preserved program order. Its
definition uses fences as intermediate points -- the summands
$\po;\Delta_{\Fences_{\text{rel,sc}}};\po_{\setminus\Reads}$ and
$\po_{\setminus\Writes};\Delta_{\Fences_{\text{acq,sc}}};\po$ -- and then
restricts the union to $_{\setminus\Fences\cup\Branches}$, so that the pairs
it contributes relate the accesses \emph{around} a fence and never the fence
itself. Excluding $\Fences$ and $\Branches$ from $X$ below is that same
decision, stated on the events rather than on the order. Consistently with
fences carrying no data, Definition~\ref{def:rel-eq} of closed
relabel-equivalence relates any two of them.

\begin{definition}[Executions in Symbolic Event Structures]\label{def:executions}

  An \emph{execution} in an event structure $\mathbb E~=~(E,\po,\pormw,
  \valres)$ is a tuple $\mathbb X~=~(X,J,\rf)$ comprising:
  \begin{itemize}
    \item a maximal conflict-free set $X\subseteq E\setminus(\Branches\cup
      \Fences)$ of events,
    \item a consistent set $J\subseteq\mathbb J$ of
      \hyperref[def:justs]{justifications} consistent with
      $\bigwedge\limits_{e\in X}\valres(e)$, and
    \item an injective read-from relation $\rf\subseteq\Writes\times\Reads$
      linking read events with write events.
  \end{itemize}
  with $J$ such that all memory-effectful events $w\in X\cap\Effects$ not elided by
  the forwarding context the justifications of $J$ share are uniquely justified
  by a justification in $J$, a justification being matched to the memory-effectful
  event of $X$ carrying the label of its copy of $w$ as in
  Definition~\ref{def:justs}. The exemption is the $\dagger$ of
  Definition~\ref{def:freeze}, and it reaches writes only, as those are the
  events write elision elides.

  An execution is \emph{complete} if additionally $\rf$ is surjective, that is
  if $\rf$ assigns each read event a write event.
\end{definition}

Given a set $J$ of justifications over a maximal conflict-free set $X$ with a
read-from relation, $\DP$ and $\ppo$ are defined through freezing with
additional constraints as follows.

\begin{definition}[Freezing Justifications]\label{def:freeze}

  Let

  \begin{itemize}

    \item $X$ be a maximal conflict-free set of events

    \item $J\subseteq\mathbb J$ be a set of justifications
      $j\colon(P,D)\justifies^\delta w$ of memory-effectful events in $X$, such that
      all justifications in $J$ share a common forwarding context $\delta$

    \item $\dagger\subseteq X$ be the events in $X$ elided by
      $\delta=(\FWD,\WE)$, i.e.~$\dagger=\pi_1\WE$

    \item all memory-effectful events in $X\cap\Effects\setminus\dagger$ not elided
      by $\delta$ be uniquely justified in $J$, where only writes are ever
      elided, $\dagger\subseteq\Writes$

    \item $\rf\subseteq X^2\cap(\Writes\times\Reads)$ be a read-from relation
      which assigns to read events unique write events

  \end{itemize}

  \noindent where a justification is matched to the memory-effectful event of $X$
  carrying the label of its copy of $w$, as in Definition~\ref{def:justs}, and
  $w_j$ denotes that event of $X$ throughout. Then define

  \begin{equation}
    \freeze(X, J, \rf)~\triangleq~(\DP, \ppo, \varphi)
    \label{eq:def:freeze}
  \end{equation}

  \noindent with the set $X$ of events, binary relations $\DP,
  \ppo~\subseteq~X\times X$ and a predicate $\varphi$, such that

  \begin{equation*}
    \begin{array}{rcl}
      \DP&\triangleq&\bigcup\limits_{j\in
      J}\{(\origin\alpha,w_j)\mid\alpha\in\symbols(P_j)\cup\symbols(D_j)\}\\
      P&\triangleq&\bigwedge\limits_{j\in J}
      P_j\land\psi_{\delta_j}\\
      \ppo&\triangleq&\bigcup\limits_{j\in J}\ppo_\delta^{P_j}\cap\left\{e\in X\mid
      e\po w_j\vee e=w_j\right\}^2\\
    \varphi&\triangleq&
      {\left(\pi_1(\rf\cup\DP)\cap\dagger=\emptyset\right)}^{(1)}\land
      {\left(\pi_2(\rf)=(X\cap\mathcal R)\right)}^{(2)}\land
      {\left(P\land\varphi_\rf\not\equiv\bot\right)}^{(3)}\\
    \end{array}
  \end{equation*}

  \noindent where $\varphi_\rf$ establishes

  \begin{itemize}

    \item the equality of locations and values of each read-write pair in $\rf$,
      i.e.~$\loc(r)=\loc(w)~\wedge~\val(r)=\val(w)$ for all $(r,w)\in\rf$. As
      $\rf$ relates events of $X$, these are the events' own expressions and not
      those of the copy recorded by the justification of $w$. Nothing is lost by
      that reading: of the elaborations only $\elab{va}$ and $\elab{fwd}$ rewrite
      the copy, the first under $\alpha\equiv_Pv$ and the second under
      $\psi_\delta$, and $P$ above conjoins both, so copy and event carry
      equivalent expressions under $P$

    \item enforces the disjointness of memory locations: distinct global
      locations denote distinct locations, and the symbolic location introduced
      by an allocation event is distinct from the global locations and from the
      symbolic location of any other allocation event without an intermediate
      deallocation event.

  \end{itemize}

  \noindent Then $\varphi$ establishes that

  \begin{itemize}

    \item (1) no events in the domain of $\rf$ and $\DP$ are elided by $\delta$,
      and in particular reads only read from visible writes

    \item (2) $\rf$ assigns a read-from to all the read events in $X$

    \item (3) $P\wedge\varphi_\rf$ is satisfiable

  \end{itemize}

\end{definition}

\paragraph{Axiomatic Memory Consistency Model.}\label{def:mem-model-axiom}
Consistent executions are subject to constraints from the axiomatic memory
consistency model. In this paper we consider MRD+C11 with

\begin{enumerate}

  \item\label{def:mem-model-axiom:nta} \texttt{No-Thin-Air} constraint:
    $\DP\cup\ppoord\cup\rf$ is acyclic. We write
    $\nta\triangleq\left(\DP\cup\ppoord\cup\rf\right)^+$ for the
    \emph{no-thin-air order} the constraint declares acyclic, so that the
    constraint states exactly that $\nta$ is a strict partial order. Here, as
    in Section~\ref{sec:cas}, $R^+\triangleq\bigcup_{n\geq1}R^n$ denotes the
    transitive closure of a relation $R$.

  \item\label{def:mem-model-axiom:co} \texttt{Coherence} axiom: $\ECO\cup\HB$ is
    acyclic, where the extended coherence order is
    $\ECO\triangleq(\rf\cup\CO\cup\FR)^+$ with $\FR\triangleq\rf^{-1}\circ\CO$,
    and happens-before is
    $\HB\triangleq\left(\DP\cup\ppoord\cup\SW\right)^+$ with the
    synchronises-with edges $\SW\triangleq\rf\cap(\Wrel\times\Racq)$ relating a
    releasing write to an acquiring read that takes its value.

\end{enumerate}

Each of the models compared in Section~\ref{sec:bug} carries such an order,
differing only in the intra-thread relation closed over $\rf$: \smrd{} takes
$\DP\cup\ppoord$, whereas RC11z takes the whole of program order,
$\nta_{\text{RC11z}}=(\poord\cup\rf)^+$. The two presentations of RC11z reach
that order by different routes. Declarative RC11 states it as an axiom,
forbidding cycles in $\poord\cup\rf$~\cite{rc11}. The RC11-RAR semantics that
Semenyuk et al.~\cite{Semenyuk2023RCU} verify RCU against is operational, with
no acyclicity axiom at all: a thread steps its commands in program order, so no
execution it generates reorders an access past a later one, and
$\poord\cup\rf$-acyclicity holds of the model rather than being imposed on it.
Either route yields the containment Example~\ref{ex:uaf-rc11z} needs.

$\SW$ is the only inter-thread constituent of $\HB$: $\DP$ and $\ppo$ are
per-thread, which is why the operational semantics of Section~\ref{s:opsem} can
take the whole of the future set $\Phi$ from $\DP$ and $\ppo$ and carry $\SW$ in
the viewfronts of the program state instead.

The following observation ensures that read events can only read from writes
observable to the reading thread, which will become important when we construct
the operational semantics from the event structure semantics.

\begin{lemma}\label{l:co-axiom-rf-viswrite}
  By Axiom~\ref{def:mem-model-axiom:co}, $\rf$ assigns to a read only a write
  that no $\HB$-earlier write at the same location is $\CO$-after -- that is,
  a write observable to the reading thread.
\end{lemma}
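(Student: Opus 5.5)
The plan is to argue by contradiction, exhibiting a cycle in $\ECO\cup\HB$ that Axiom~\ref{def:mem-model-axiom:co} forbids. Fix a complete execution and a pair $(w,r)\in\rf$. Suppose some write $w'$ with $\loc(w')\equiv\loc(r)$ under the execution's constraint satisfies $(w',r)\in\HB$ and $(w,w')\in\CO$, so that $w'$ is $\HB$-earlier than the read and $\CO$-after the write it reads from. The aim is to show that this configuration closes a cycle.

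The key step is to unfold $\FR\triangleq\rf^{-1}\circ\CO$. From $(w,r)\in\rf$ we get $(r,w)\in\rf^{-1}$, and composing with $(w,w')\in\CO$ gives $(r,w')\in\FR$. Since $\FR\subseteq\ECO$ by the definition $\ECO\triangleq(\rf\cup\CO\cup\FR)^+$, we have $(r,w')\in\ECO$. Together with the assumption $(w',r)\in\HB$, this yields the two-step cycle
\[
  w'\xrightarrow{\HB}r\xrightarrow{\ECO}w'
\]
in $\ECO\cup\HB$, contradicting Axiom~\ref{def:mem-model-axiom:co}. Consequently no $\HB$-earlier write at the location of $r$ is $\CO$-after $w$, which is exactly the notion of observability that the statement describes.

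I expect two side points to need care rather than to pose a real obstacle. The first is that $\CO$ relates only writes whose locations are equivalent under the execution's constraint. Since $\varphi_\rf$ equates $\loc(r)$ with $\loc(w)$, the hypothesis that $w'$ shares the location of $r$ puts $w$ and $w'$ at equivalent locations, so the $\CO$ edge is meaningful. The second is the degenerate case $w'=w$. Here $(w,w)\in\CO$ is excluded because $\CO$ is a strict order, and the statement in any case concerns only writes $\CO$-after $w$, so this case needs no separate argument. Finally, I would remark that this is the form used later for the $\OW$ sets of the operational semantics: viewfronts advance exactly along $\HB$, through plain reads and through $\SW$ via $\viewcomb$.
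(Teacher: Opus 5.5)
Your proof is correct and follows the paper's own argument: assume an $\HB$-earlier write at the read's location that is $\CO$-after the write read, and close the cycle $w'\xrightarrow{\HB}r\xrightarrow{\ECO}w'$ through the from-read edge, which the axiom forbids. Your labelling of that edge as $(r,w')\in\FR$ is in fact tidier than the paper's diagram, which writes $\FR$ on the edge from the read back to the write it reads from and $\FR;\CO$ on the edge you call $\FR$.
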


\begin{proof}

  A write $w_2$ covers a write $w_1$, if it writes to a location equivalent
  relative to latent constraints, and $w_1$ is ordered before $w_2$. Suppose
  towards contradiction, that a read event $r$ after $w_2$, that is $w_2$
  happens before $r$, reads from $w_1$. Then the extended coherence order $\ECO$,
  instantiated as $\FR;\CO$, forms a cycle with the $\HB$ relation as in the
  following diagram.

  \begin{figure}[H]
    \begin{tikzpicture}[node distance=3cm, auto]

      \node (w1) at (1,3) {$w_1\colon\Writes~x~\expr_1$};
      \node (w2) at (1,2) {$w_2\colon\Writes~x~\expr_2$};
      \node (r) at (4,1) {$r\colon\Reads~x~\alpha$};

      \draw[->] (w1) to node[left] {$\CO$} (w2);
      \draw[->] (r.north) to[out=120,in=0] node[above] {$\FR$} (w1.east);
      \draw[->] (w2.south) to[out=330,in=180] node[below] {$\HB$} (r.west);
      \draw[->] (r.north west) to[out=150,in=0] node[above] {$\FR;\CO$} (w2.east);

    \end{tikzpicture}
  \end{figure}

\end{proof}

\paragraph{Use-after-free.} The bug of Section~\ref{sec:bug} is a property of an
execution. In \smrd{} deallocation is a first-class action, and UAF is an
ordering property with the no-thin-air order $\nta$ of
Axiom~\ref{def:mem-model-axiom:nta}. As each of the models we consider, \smrd{}
and RC11z, supplies a no-thin-air order, we define UAF parametric in the model.

\begin{definition}[Use-after-free]\label{def:uaf}

  Let $M$ be a memory consistency model with no-thin-air order $\nta_M$. An
  execution $\mathbb X~=~(X,J,\rf)$ with
  $\freeze(X,J,\rf)~=~(\DP,\ppo,\varphi)$ \emph{exhibits a use-after-free} if
  there are a deallocation $d\in X\cap\Deallocs$ and an access $a\in
  X\cap(\Reads\cup\Writes)$ of the deallocated location,
  $\loc(a)\equiv_\varphi\loc(d)$, such that $(a,d)\notin\nta_M$, that is the
  access is not ordered before the deallocation.

\end{definition}

Intermediate reallocation of memory addresses is handled implicitly as follows.
Definition~\ref{def:freeze} forces symbolic memory locations apart only for
allocations \emph{without} an intervening deallocation, so an allocation may
reuse the location a deallocation released, and an access $\nta_M$-ordered after
that reallocation reads memory that is live again. Exempting it would take an
$a'\in X\cap\Allocs$ at $\loc(d)$ with $(d,a')\in\nta_M$ and $(a',a)\in\nta_M$,
which can exist only when $(d,a)\in\nta_M$; Definition~\ref{def:uaf} reports
that case too, and so over-approximates on programs that free and reallocate.
The use-after-free of Section~\ref{sec:bug} is not of that kind: there $a$ and
$d$ are unordered, no such $a'$ exists, and the two readings agree.

\begin{example}[The UAF bug under the two orders]\label{ex:uaf-rc11z}

  Take the execution of Figures~\ref{fig:rcu-inc-code}
  and~\ref{fig:rcu-reclaim-code} in which Thread~2's \cas{} fails, Thread~1's
  succeeds, and reclaim's guard read \code{r[i] := rcu[tid]} returns the $0$
  written by Thread~2's RCU exit \code{rcu[tid] := 0}. Four events matter:

  \[
    \text{\code{v := *s}}~,~\text{\code{rcu[tid] := 0}}~,~
    \text{\code{r[i] := rcu[tid]}}~,~\text{\code{free(s)}}
  \]

  \noindent the first two on Thread~2 and the last two on Thread~1, and the
  question is whether \code{v := *s} is $\nta$-before \code{free(s)}.

  Under $\nta_{\text{RC11z}}=(\poord\cup\rf)^+$ it is
  (Figure~\ref{fig:uaf-bug-oota-cycle-rc11}). All of program order
  counts, so \code{v := *s} $\po$ \code{rcu[tid] := 0} on Thread~2 and
  \code{r[i] := rcu[tid]} $\po$ \code{free(s)} on Thread~1, and the guard read
  takes the exit's $0$:

  \[
    \text{\code{v := *s}}~\po~\text{\code{rcu[tid] := 0}}
    ~\xrightarrow{~\rf~}~
    \text{\code{r[i] := rcu[tid]}}~\po~\text{\code{free(s)}}
  \]

  \noindent So $(\code{v := *s},\code{free(s)})\in\nta_{\text{RC11z}}$, the
  dereference is ordered before the deallocation, and
  Definition~\ref{def:uaf} is not met: no use-after-free.

  Under $\nta_{\smrd}=(\ppoord\cup\DP\cup\rf)^+$ it is not
  (Figure~\ref{fig:uaf-bug-oota-cycle-smrd}). Thread~2's \cas{} fails, and on
  the failing branch it carries no release, so \code{v := *s} is neither $\ppo$-
  nor $\DP$-before \code{rcu[tid] := 0}: the first edge of the chain is missing.
  \code{v := *s} is a read and has no outgoing $\rf$, so no other route to
  \code{free(s)} is open either, and $(\code{v :=
  *s},\code{free(s)})\notin\nta_{\smrd}$. Definition~\ref{def:uaf} is met, with
  \code{v := *s} the access and \code{free(s)} the deallocation: the
  use-after-free of Section~\ref{sec:bug}. Annotating the RCU exit with a
  release, as in Figure~\ref{fig:use-after-free-bug-fixed}, puts \code{v := *s}
  $\ppo$ \code{rcu[tid] := 0} back and restores the chain.

\end{example}

\mordor{} reports the violation as an edge out of the deallocation event, which
is a convenience of the tool: the property it witnesses is the one defined here.

\subsection{Futures}

\begin{definition}[Program Futures in Event Structures]\label{def:futures}

  The \textit{futures} $\phi_{\mathbb X}$ for an execution $\mathbb X~=~(X,J,
  \rf)$ in an event structure $\mathbb E$ is the set of pairs

  \[
    \phi_{\mathbb X}~\triangleq~X^2\cap(\ppoord\cup\DP)
  \]

  \noindent Both relations are per-thread, and a future is accordingly the order
  in which one thread must execute its own events. The third relation
  Axiom~\ref{def:mem-model-axiom:nta} keeps acyclic, $\rf$, is deliberately not
  among them: it is the only inter-thread dependency of the model, and admitting
  it would stop $\Phi$ from splitting per thread, which is what the
  Owicki-Gries decomposition of Section~\ref{s:opsem-og} rests on. The ordering
  a release/acquire handshake induces is carried by the viewfronts of the
  operational semantics instead (Section~\ref{s:opsem}), not by $\Phi$. Since
  $\ppoord\cup\DP\subseteq\ppoord\cup\DP\cup\rf$, acyclicity of $\phi_{\mathbb X}$
  still follows from the axiom, so the minima of Definition~\ref{def:horizons}
  exist.

  The future set $\Phi$ in an event structure $\mathbb E$ is the set of all
  futures $\phi_{\mathbb X}$ for \hyperref[def:executions]{executions} $\mathbb
  X$.

\end{definition}

\begin{definition}[Histories]\label{def:histories}
  A \emph{history} in an execution $\mathbb X~=~(X,J,\rf)$ is
  a set $H\subseteq X$ of events downward-closed in $\ppo_J\cup\DP_J$,
	i.e.~$\Downclosed{H}~=~H$. Equivalently, $H$ is downward-closed in the future
  $\phi_{\mathbb X}$ of Definition~\ref{def:futures}: a history is a prefix in
  the ordering a future induces.
\end{definition}

\begin{definition}[Posterior Futures Set]\label{def:post-futures}
  Let $\phi_{\mathbb X}$ be a future for an execution $\mathbb X~=~(X,J,\rf)$ in
  an event structure $\mathbb E$, and let $H$ be a
  history in $\mathbb X$.

  The \emph{posterior future} $\phi_{\mathbb X,H}$ is the
  set of pairs $(e_1,e_2)\in\phi_{\mathbb X}$ such that $e_1\not\in H$.

  The \emph{posterior future set} $\Phi_H$ is the set of posterior futures
  $\phi_{\mathbb X,H}$ for all complete executions $\mathbb X$ in $\mathbb E$.
\end{definition}

\begin{definition}[Future Horizons]\label{def:horizons}
  The \emph{future horizon} $\horizon\phi$ of the posterior
  future $\phi=\phi_{\mathbb X,H}$ for a history $H$ in an execution
  $\mathbb X~=~(X,\mathbb J,\rf)$ is the set of events $e\in X\setminus H$, such
  that there is no event $e'\in X\setminus H$ before $e$
  w.r.t.~$\ppoord\cup\DP$, i.e.~$(e',e)\in\phi$.

  $\horizon\Phi$ defines the set of future horizons $\horizon\phi$ for all
  future sets $\phi\in\Phi$.
\end{definition}

\vfill

\pagebreak
\section{Appendix: Proof of Finite Bound on Posterior Future Horizons}\label{s:app-proofs}

We prove a finitary bound on posterior future horizons, discharging
Theorem~\ref{t:finite-post-futures} by the argument outlined in
Section~\ref{s:essem-finite-bound}.

The \emph{next enabled actions} $\horizon\Phi_H$ of a history $H$, introduced in
Section~\ref{s:essem}, are the horizons of the posterior future set $\Phi_H$ in
the sense of Definition~\ref{def:horizons}: both denote the minima of the events
that follow $H$ under $\ppo$ and $\DP$, and we write $\horizon\Phi_H$ for
either. The main text uses the operational reading, \emph{next enabled actions},
and this appendix the event-structure one, \emph{posterior future horizons}.

\finitenextactions*

In general, programs with unbounded loops yield an unbounded number of posterior
future horizons. We prove the above theorem by showing that in programs where
unbounded loops are episodic, posterior future horizons are symmetric between
iterations and may in fact narrow down depending on the loop condition.

The symmetry between iterations is witnessed by a map $\gamma$ from the events
of the $i+1$-st iteration of an episodic loop, in an execution of the event
structure generated for a program with step-counter $n+1$, to those of the
$i$-th iteration of an execution of the event structure generated with
step-counter $n$. $\gamma$ is defined as an extension of the de Bruijn indexing
of symbols in the respective executions per Section~\ref{s:de-bruijn}. The map
$\gamma$ will preserve and reflect dependency relations, and thus posterior
future horizons.

As a result of the symmetry between iterations we can treat failing iterations
of episodic loops uniformly, for executions of arbitrary, though finite, depth.
No single event structure $\langle\prog\rangle_n$ holds all of them for a finite
$n$, and thus none holds all of the mappings $\gamma$ either. We therefore first
prove the event structures generated for increasing finite step-counters
monotonic, and establish a new event structure as the limit of the monotonically
increasing chain. The finiteness result is then obtained in that limit.

\subsection{Monotonicity of event structures}

In the previous work~\cite{Richards25SMRD}, a global step-counter was introduced
to restrict the semantics of programs with unbounded loops to the fragment of
terminating executions. Throughout the rest of the appendix we assume a
step-counter taken per loop, which allows us to compare iterations in executions in
such programs for varying step-counters.

The event structure semantics of programs $\prog$ is monotonic in the
step-counter $n$. Monotonicity is witnessed by identity maps embedding event
structures generated for increasing step-counters, i.e.
$\idmap_n:\langle\prog\rangle_n\embeds\langle\prog\rangle_{n+1}$. The identity
maps $\idmap_n$ identify events by control label from $\Labels$. Control labels
and symbols introduced are enumerated from the start of the program as in
Paragraph~\ref{par:order-clab}. Note that symbols are \emph{not} reassigned
under the de Bruijn indexing from Section~\ref{s:de-bruijn}.

\begin{lemma}[Event structures are monotonic in
  step-counters]\label{l:es-mono}

  For all $n$ let

  \begin{equation}
    \mathbb E_n =
    \langle\prog\rangle_{n~\emptyset~\lambda\rho\,\varphi.\emptyset~\top} =
    (E_n,\po_n,\pormw_n,\valres_n)
  \end{equation}

  \medskip\noindent then $\mathbb E_n\subseteq\mathbb E_{n+1}$ with
\medskip
  \begin{equation}
    E_n\subseteq
    E_{n+1},~~\po_n\subseteq\po_{n+1},~~\pormw_n\subseteq\pormw_{n+1},~~\text{and}~~\valres_n(e)=\valres_{n+1}(e)~\text{for}~e\in E_n
  \end{equation}
\end{lemma}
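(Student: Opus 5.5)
The plan is a structural induction on the program, or equivalently on its atomic set unravelling, proving a stronger claim in which the register state, continuation and value restriction are generalised. For every program fragment $\prog$, every register state $\rho$, every $\varphi$, and every pair of continuations $\kappa,\kappa'$ with $\kappa(\rho,\varphi)\subseteq\kappa'(\rho,\varphi)$ componentwise (events, $\po$, $\pormw$, and agreement of $\valres$ on common events) for all $\rho,\varphi$, I would show
\[
\langle\prog\rangle_{n~\rho~\kappa~\varphi}\subseteq\langle\prog\rangle_{n+1~\rho~\kappa'~\varphi}.
\]
Here the per-loop step-counter is incremented in every component. The lemma is then the instance $\rho=\emptyset$, $\kappa=\kappa'=\lambda\rho\,\varphi.\emptyset$, $\varphi=\top$.

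The first step is to check that the two event-structure constructors are monotone. Prefixing $e[\varphi_e]\cdot-$ adds the same event, the same $\po$ edges from $e$ to the remaining events (which grow with the argument), and the same value restriction $\varphi_e$. The coproduct $+$ and the $\rmw$ operation are likewise monotone in their arguments. Events are identified by control label, as in Paragraph~\ref{par:order-clab}, so the same command produces literally the same event $e$ and the same fresh symbol $\alpha$ in both structures. This relies on symbols being enumerated from the start of the program rather than de Bruijn indexed.

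The second step runs through the clauses of Definition~\ref{def:gen-es}. Most cases follow directly from the hypothesis on the continuations together with constructor monotonicity: assignment, reads, writes, fences, allocation, free, \fadd, \cas, the conditional, and parallel composition. Each of these passes $\rho$ and $\varphi$ unchanged, or updates them identically on both sides. Sequential composition needs its own inner induction. It builds the continuation $\lambda\rho\,\varphi.\langle\prog_2\rangle_{n~\rho~\kappa~\varphi}$, and the induction hypothesis for $\prog_2$ shows this is pointwise below the corresponding continuation at $n+1$ with $\kappa'$. The hypothesis for $\prog_1$ then applies. The base case $\langle\prog\rangle_0=\emptyset$ is trivially contained in anything, and $\semskip$ is immediate.

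The main obstacle is the \code{while} case. It unfolds to $\langle\semif{b}{\semseq{\prog}{\semwhile{b}{\prog}}}{\semskip}\rangle_{n-1}$, which mentions the same loop again, so structural induction alone does not close. I would therefore use a lexicographic induction: first on the step-counter component $n(\ell)$ of the loop being unravelled, and then on program structure. When $n(\ell)=0$ the left side is $\emptyset$ by the per-loop base case. Otherwise the unfolding decrements only the $\ell$-component on both sides, from $n$ and $n+1$ to $n-1$ and $n$. The recursive occurrence of $\semwhile{b}{\prog}$ therefore sits at a strictly smaller $\ell$-counter, while the body $\prog$ is a strict subterm, so the induction hypothesis covers both.

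The remaining care concerns labels. The $\iter$ component of a label must agree between the two unravellings, so that the identity map is well defined on labels. This holds because both unravellings assign labels depth-first from the start, and the iteration index is counted from the start, not from the bound. Finally, agreement of $\valres$ holds because every $\varphi$ is accumulated from the start of the program, and so is identical along corresponding paths.
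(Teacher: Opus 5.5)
Your proposal is correct and takes the same route as the paper: induction over the step-counter and program structure, generalised to continuations $\kappa\subseteq\kappa'$, with monotonicity of prefixing, coproduct and $\rmw$, concluding $\langle\prog\rangle_{n~\rho~\kappa~\varphi}\subseteq\langle\prog\rangle_{n+1~\rho~\kappa'~\varphi}$. You are more explicit than the paper on two points it leaves implicit: the \code{while} case (an inner induction on the loop's own counter component, with the body handled as a strict subterm) and why labels and symbols coincide across the two unravellings.
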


\begin{proof}

  The proof proceeds by induction over $n$ and the program structure. In the
  base case $\mathbb E_0\subseteq\mathbb E_1$ as $\mathbb E_0$ is empty. In
  particular for \code{while} statements,\\
  $\langle\semwhile{b}{\prog}\rangle_{0~\rho~\kappa~\varphi} = \langle
  \semwhile{b}{\prog}\rangle_{1~\rho~\kappa~\varphi}$ by definition of the
  semantics of \code{while} by unravelling into \code{if}-statements
  in Equation~\ref{eq:while-sem}.

  The semantics of all statements $c$ per Definition~\ref{def:gen-es} of the
  event structure semantics is monotonic in $\kappa$, so that whenever
  $\kappa(\rho,\varphi)\subseteq\kappa'(\rho,\varphi)$ for all $\rho$ and
  $\varphi$, then $\langle
  c\rangle_{n~\rho~\kappa~\varphi}\subseteq\langle
  c\rangle_{m~\rho~\kappa'~\varphi}$. The product and coproduct of event
  structures are monotonic as well, so that whenever $\mathbb
  E_1\subseteq\mathbb E'_1$ and $\mathbb E_2\subseteq\mathbb E'_2$, then
  $\mathbb E_1+\mathbb E_2\subseteq\mathbb E'_1+\mathbb E'_2$ and $\mathbb
  E_1\times\mathbb E_2\subseteq\mathbb E'_1\times\mathbb E'_2$.

  Thus $\langle\prog\rangle_{n~\rho~\kappa~\varphi}\subseteq\langle\prog
  \rangle_{n+1~\rho~\kappa'~\varphi}$ for all $\prog$ and $n$.

\end{proof}

The monotonic embeddings have a fixed point in the class of all event
structures by the Knaster-Tarski fixed point theorem.

\begin{corollary}\label{c:es-mono-fp}

  The identity mapping embedding
  \[
    \langle\prog\rangle_{n~\emptyset~\lambda\rho\,\varphi.\emptyset~\top} \embeds
    \langle\prog\rangle_{n+1~\emptyset~\lambda\rho\,\varphi.\emptyset~\top}
  \]
  then has a fixed point

  \begin{equation}
    \mathbb E_\infty =
    \bigcup\limits_{n\in\nat}
      \langle\prog\rangle_{n~\emptyset~\lambda\rho\,\varphi.\emptyset~\top}
    \triangleq
    \left(
      \bigcup\limits_n E_n,
      \bigcup\limits_n\po_n,
      \bigcup\limits_n\pormw_n,
      \bigcup\limits_n\valres_n
    \right)
  \end{equation}

  \noindent such that each event structure
  $\langle\prog\rangle_{n~\emptyset~\lambda\rho\,\varphi.\emptyset~\top}$ embeds into the
  limit via the identity mapping.

\end{corollary}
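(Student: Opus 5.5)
The plan is to read the corollary as a Kleene-style limit of the chain that Lemma~\ref{l:es-mono} supplies. I would not appeal to Knaster--Tarski directly. I take as the ambient order the componentwise inclusion order on tuples $(E,\po,\pormw,\valres)$, with $\valres$ ordered by inclusion of graphs of partial functions. This is a complete partial order: directed unions exist, as long as the partial functions being joined agree wherever their domains overlap. By Lemma~\ref{l:es-mono} the structures $\mathbb E_n=\langle\prog\rangle_{n~\emptyset~\lambda\rho\,\varphi.\emptyset~\top}$ form an $\omega$-chain $\mathbb E_0\subseteq\mathbb E_1\subseteq\cdots$ with $\mathbb E_0=\emptyset$. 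I then define $\mathbb E_\infty$ componentwise as in the statement.

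The steps, in order, are as follows.

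\emph{First,} I show that $\mathbb E_\infty$ is a well-formed symbolic event structure in the sense of Definition~\ref{def:event-structures}. The sets $\bigcup_n E_n$, $\bigcup_n\po_n$ and $\bigcup_n\pormw_n$ are unproblematic. For $\bigcup_n\valres_n$ to be a function, I use the clause $\valres_n(e)=\valres_{n+1}(e)$ for $e\in E_n$ of Lemma~\ref{l:es-mono}. By induction this gives $\valres_m(e)=\valres_n(e)$ for all $m\ge n$ and all $e\in E_n$, so the union is single-valued. Labels stay unique because the $\idmap_n$ identify events by control label from $\Labels$, so no two distinct events of the union share a label.

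\emph{Second,} I establish the embedding claim. Each inclusion $E_n\subseteq\bigcup_m E_m$, and likewise for $\po$ and $\pormw$, holds by construction. Agreement of $\valres$ follows from the first step. So the identity is an embedding $\mathbb E_n\embeds\mathbb E_\infty$, and $\mathbb E_\infty$ is the least upper bound of the chain.

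\emph{Third,} I show that $\mathbb E_\infty$ is a fixed point of the one-step operator $F$, which unravels one more level of each \code{while} via Equation~\ref{eq:while-sem}, so that $F(\mathbb E_n)=\mathbb E_{n+1}$. For this I prove that $F$ is $\omega$-continuous, meaning it preserves directed unions. The argument is by induction over the program structure, as in the proof of Lemma~\ref{l:es-mono}. Prefixing $e[\varphi]\cdot-$ and the coproduct $+$ commute with directed unions, since both are defined componentwise by disjoint or pointwise union. The continuation-passing clauses of Definition~\ref{def:gen-es} apply $\kappa$ pointwise in $(\rho,\varphi)$. Continuity then gives
\[
F(\mathbb E_\infty)=\textstyle\bigcup_n F(\mathbb E_n)=\bigcup_n\mathbb E_{n+1}=\mathbb E_\infty .
\]
Leastness follows from Kleene's theorem, because the chain starts at $\mathbb E_0=\emptyset$.

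I expect the third step to be the main obstacle, specifically continuity in $\kappa$ under per-loop step-counters. The subscript $n$ is a map in $\text{Pfn}(\nat,\nat)$, and $F$ decrements only the component of the loop being unravelled. I must therefore check that the uniform family $\langle\prog\rangle_n$ is cofinal in the directed set of all per-loop bounds. Given this cofinality, the union over uniform $n$ coincides with the union over all bound maps, and the continuity argument goes through unchanged. Every bound map is dominated pointwise by some uniform $n$, and Lemma~\ref{l:es-mono} applied componentwise gives monotonicity in each component, so I would establish cofinality by that route.
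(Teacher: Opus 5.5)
Your route differs from the paper's. The paper gives no argument beyond Lemma~\ref{l:es-mono} and the remark that the monotone embeddings have a fixed point ``in the class of all event structures by the Knaster--Tarski fixed point theorem''; it then writes $\mathbb E_\infty$ down as the componentwise union. You build that union of the $\omega$-chain directly and keep the fixed-point reading for a Kleene argument.

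Your Steps~1 and~2 are correct, and they cover everything the corollary is used for later. Lemma~\ref{l:execs-transfer}, and the reading of $\gamma$ as a map between executions of $\mathbb E_\infty$, need only that $\mathbb E_\infty$ is the componentwise union and that each $\mathbb E_n$ sits inside it by the identity. Your steps also supply two things the Knaster--Tarski remark skips. First, the clause $\valres_n(e)=\valres_{n+1}(e)$ is what makes $\bigcup_n\valres_n$ single-valued. Second, the fixed point is the union of the chain. Knaster--Tarski would not give that identification without continuity, even on a complete lattice. And the class of event structures under inclusion is not a complete lattice, since structures that disagree on a value restriction have no join.

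Step~3, however, does not go through as written, because your $F$ is not an operator on event structures. $\mathbb E_{n+1}$ is generated from the syntax at counter $n+1$; it is not computed from $\mathbb E_n$. Under the global reading of Equation~\ref{eq:while-sem}, $\langle\prog\rangle_0=\emptyset$ for every $\prog$. So for a loop-free $\prog$, $F$ would have to produce the whole program out of $\emptyset$, which no ``unravel one more level'' operation on a structure can do. You can define $F$ as the successor on the chain, but then $F(\mathbb E_\infty)$ is undefined and your continuity equation has no left-hand side.

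The fix is to move the recursion to where it lives. For a loop $\ell$ with guard $b$ and body $\prog'$, consider loop denotations $X$, maps from $(\rho,\kappa,\varphi)$ to event structures ordered pointwise by $\subseteq$. Let $\Psi_\ell$ send $X$ to the denotation that Definition~\ref{def:gen-es} gives $\semif{b}{\semseq{\prog'}{X}}{\semskip}$, with the bounds of inner loops held fixed. Then:
\begin{itemize}
\item the loop at bound $k$ is an iterate of $\Psi_\ell$ starting from $\lambda\rho\,\kappa\,\varphi.\emptyset$;
\item your induction over prefixing, coproduct and the continuation-passing clauses is exactly the proof that $\Psi_\ell$ is $\omega$-continuous;
\item Kleene gives a least fixed point, whose content is that the limit semantics satisfies the unfolding Equation~\ref{eq:while-sem}.
\end{itemize}

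Leastness is grounded in $\lambda\rho\,\kappa\,\varphi.\emptyset$, not in $\mathbb E_0=\emptyset$. Under the per-loop reading you adopt for cofinality, the base case empties only the exhausted loop. So $\mathbb E_0$ keeps any code preceding the first loop and is not the bottom. Your cofinality step also needs Lemma~\ref{l:es-mono} for an increment of a single component, which the lemma does not state. Its proof adapts, but you should say so.
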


Executions transfer along the chain of event structures in
Corollary~\ref{c:es-mono-fp} too, which allows us to lift $\gamma$ to the limit
event structure and to read $\gamma$ as a map between executions of arbitrary
depth.

\begin{lemma}[Executions transfer along the chain]\label{l:execs-transfer}

  Let $\mathbb X~=~(X,J,\rf)$ be an execution in $\mathbb E_n$ in which every
  loop of $\prog$ has been exited. Then $\mathbb X$ is an execution in $\mathbb
  E_{n+1}$, and hence in the fixed point $\mathbb E_\infty$ of
  Corollary~\ref{c:es-mono-fp}.

  Conversely, every execution in $\mathbb E_\infty$ is an execution in $\mathbb
  E_n$ for some $n$.

\end{lemma}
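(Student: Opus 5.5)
The statement has two directions, and I will prove each by unfolding Definition~\ref{def:executions} and checking its components one at a time against Lemma~\ref{l:es-mono}.

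\emph{Forward direction.} Let $\mathbb X=(X,J,\rf)$ be an execution in $\mathbb E_n$ in which every loop has been exited. By Lemma~\ref{l:es-mono} we have $X\subseteq E_n\subseteq E_{n+1}$. Program order, $\pormw$ and value restrictions agree on $E_n$, so $X$ remains conflict-free in $\mathbb E_{n+1}$.

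The substantive point is maximality. I argue by contradiction. Suppose some $e\in E_{n+1}\setminus E_n$ is not in conflict with $X$. Every new event arises from the deeper unravelling of some \code{while} in Equation~\ref{eq:while-sem}, under the per-loop counter. Such an event therefore lies $\po$-below a branching event on the loop condition $b$, on the \code{then} side of an iteration that $\mathbb X$ does not take. Because $\mathbb X$ exits the loop at that point, $X$ contains an event below the \code{else} outcome, whose value restriction contains $\neg b$, or else the loop was exited by a \cas{} outcome. In either case $\valres(e)$ and the restrictions of $X$ are jointly unsatisfiable, so $e$ conflicts with $X$. The awkward case is an exit that happens exactly at the cut-off depth $n(\ell)$, where $\mathbb E_n$ is empty. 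The hypothesis that every loop is exited rules this case out: the exit branch is present in $\mathbb E_n$ and must be taken.

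For the other components, I will show that the justification closure $\mathbb J$ only grows with $n$ when restricted to events of $E_n$. Pre-justifications depend only on $\valres$ and expressions, which are unchanged. Each elaboration consults only $\po$, $\ppo$, $\pred$ and the events involved. I expect this monotonicity of $\mathbb J$ to be the main obstacle. In particular, $\pred_\delta$ could in principle change if new events became $\ppo$-between two old ones. My plan is to show that $\ppo$ restricted to a conflict-free set containing $X$ only involves events of $X$, using Definition~\ref{def:freeze}, whose $\ppo$ is intersected with events $\po$-below $w_j$ in $X$. Given that, $J\subseteq\mathbb J_{n+1}$ and $\freeze(X,J,\rf)$ is literally unchanged, so consistency, $\varphi$ and the axioms carry over verbatim. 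Finally, $\mathbb E_\infty$ is the union of the chain by Corollary~\ref{c:es-mono-fp}, so induction on the step-counter gives the statement for $\mathbb E_\infty$.

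\emph{Converse.} An execution $\mathbb X$ in $\mathbb E_\infty$ has $X$ maximal conflict-free. Executions describe terminating runs, so $X$ is finite: each event has a finite label and iteration vector, since $\kappa$ is built from terminating unravellings. Take $n$ to be the maximum per-loop iteration count appearing in $X$, plus one. Then $X\subseteq E_n$ and every loop in $X$ is exited, and since each justification in $J$ is generated in finitely many steps over finitely many events, $J\subseteq\mathbb J_n$. Maximality in $\mathbb E_n$ follows because $E_n\subseteq E_\infty$. The remaining data again coincide, by the same restriction argument as in the forward direction.
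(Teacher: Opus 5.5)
Your skeleton is the paper's. You get the containments from Lemma~\ref{l:es-mono}, and maximality in $\mathbb E_{n+1}$ because every added event lies on the continuing side of a loop branch whose exit outcome $\mathbb X$ records in its value restrictions. For the converse you take an $n$ beyond the iteration counts and inherit maximality from $\mathbb E_\infty$.

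Where you go further than the paper, your plan does not work. The paper argues $\mathbb J_n\subseteq\mathbb J_{n+1}$ only for pre-justifications. The elaborations of Definition~\ref{def:gen-just}, however, are generated over the whole event structure before any execution is chosen. $\pred_\delta$ in Forwarding and in the closed relabel-equivalence of Lifting is computed from $\ppo$ on all of $\mathbb E_n$, and Lifting (Definition~\ref{def:elab-lift}) combines premises from conflicting branches by design. The intersection with $X$ in Definition~\ref{def:freeze} is applied only after $J$ is fixed, so it cannot show that the derivations of $J$ survive in $\mathbb E_{n+1}$.

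What you need is structural: $E_n$ is $\po$-downward closed in $E_{n+1}$, because the new events only fill the $\emptyset$ left at truncated leaves of the unravelling. So no new event lies $\po$-between two old ones. Since $\ppo$ refines $\po$, and the fence through which $\pposync$ composes lies $\po$-between its endpoints, the relations $\pposync$, $\ppormw$, $\ppoalias$, $\pred_\delta$ and closed relabel-equivalence agree on $E_n$ in both structures. Every elaboration step valid in $\mathbb E_n$ is then valid in $\mathbb E_{n+1}$. This gives $J\subseteq\mathbb J_{n+1}$ and an unchanged $\freeze(X,J,\rf)$.

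The converse has the same issue in reverse. An $n$ read off the iteration counts of $X$ alone does not give $J\subseteq\mathbb J_n$. A Lifting premise in the derivation of a justification in $J$ may lie in a conflicting branch deeper than any event of $X$; compare Remark~\ref{r:lift-earlier}, which lifts the post-loop RCU exit against the exit on the failing branch. $J$ justifies only events of the finite $X$, and each of its members has a finite derivation. So choose $n$ beyond the iteration counts of every event occurring in those derivations as well. Downward closure of $E_n$ in $\mathbb E_\infty$ then transports the derivations.

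Finally, finiteness of $X$ does not follow from each event carrying a finite label. The events along a path that never exits a loop can form a maximal conflict-free set of $\mathbb E_\infty$, each with a finite label. Termination of $\mathbb X$ must be assumed, as the paper does.
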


\begin{proof}

  For the first claim, $X\subseteq E_n\subseteq E_{n+1}$ and
  $\po_n\subseteq\po_{n+1}$ by Lemma~\ref{l:es-mono}, and $J\subseteq\mathbb
  J_n\subseteq\mathbb J_{n+1}$, as the generation of justifications in
  Definition~\ref{def:gen-just} is monotone in the step-counter:
  pre-justifications are generated for the memory effectful events, and
  $E_n\subseteq E_{n+1}$. The read-from relation, and the no-thin-air and
  coherence axioms remain unchanged.

  We show that $X$ is \emph{maximal} conflict-free in $\mathbb E_{n+1}$: We use
  the assumption that every loop has been exited. The events of
  $E_{n+1}\setminus E_n$ must belong to a further iteration afforded by the
  larger step-counter. As $\mathbb X$ exits each loop, $X$ contains for each
  loop the branching decision on which it exits, and by
  Definition~\ref{def:gen-es} of the event structure semantics the value
  restrictions of the additional events are incompatible with the value
  restrictions of events in $X$, so that $X$ remains maximal and $\mathbb X$ is
  an execution in $\mathbb E_{n+1}$, $\mathbb E_m$ for all $m>n$ and hence in
  $\mathbb E_\infty$.

  For the converse, let $\mathbb X$ be an execution in $\mathbb E_\infty$. As
  $\mathbb X$ terminates, it enters each loop finitely often; let $n$ exceed the
  greatest iteration count over the loops of $\prog$. Every event of $X$ then
  lies in $E_n$, and $X$ is maximal there, as any event of $\mathbb E_\infty$
  compatible with $X$ would already have been in $X$ by maximality in $\mathbb
  E_\infty$.

\end{proof}

\noindent Consequently two executions of $\mathbb E_\infty$ are executions of
the $\mathbb E_n$ for a step-counter $n$ that exit every loop in each execution,
and we may read Definition~\ref{def:gamma-constr} of $\gamma$ as relating two
executions of $\mathbb E_\infty$.

\subsection{Restricted Predicates}\label{s:app-proofs-restricted}

Predicates -- accumulated in the value restrictions of events
(Definition~\ref{def:gen-es}) and in the elaborations of justifications
(Definition~\ref{def:gen-just}) -- decide branching and define the dependency
relations.
The proofs below show branching decisions and dependency relations symmetric
between iterations of an episodic loop, and repeatedly rely on the fact that
only the symbols known before the loop or added in the current iteration of the
loop determine the outcome of branching decisions and justifications.
Formally, this is captured by \emph{predicate restrictions}. We define the
restriction $\restrictloop{P}{\ell}{i}$ of a predicate $P$ to the set

\begin{equation}
  \begin{array}{rcl}
    \Sigma_{\ell\mapsto \emptyset} & \triangleq &
    \{\alpha\in\Symbols\mid\iter(\origin\alpha)(\ell)=\text{undefined}\}\\
    \Sigma_{\ell\mapsto i} & \triangleq &
    \Sigma_{\ell\mapsto \emptyset}\cup\{\alpha\in\Symbols\mid\iter(\origin\alpha)(\ell)=i\} \\
  \end{array}
\end{equation}

\noindent of symbols read in the $i$-th iteration or before the loop,
$i=\emptyset$, as follows

\begin{equation}
  \restrictloop{P}{\ell}{i}~\triangleq~\bigwedge
  \left\{p\mid\symbols(p)\subseteq\Sigma_{\ell\mapsto i}\wedge P\implies p\right\}
\end{equation}

The subscript names both the loop $\ell$ and the iteration $i$ in $\ell$:
$\restrictloop{P}{\ell}{i}$ restricts to the $i$-th iteration \emph{of the loop
$\ell$}. For a nested or sibling loop $\ell'$ we write
$\restrictloop{P}{\ell'}{i}$; as we fix a single episodic loop $\ell$
throughout, we abbreviate $\restrictloop{P}{\ell}{i}$ to $\restrict{P}{i}$.

$\restrict{P}{i}$ is well-defined, as $P$ and $\Sigma_{\ell\mapsto i}$ are
finite, and the formulas are in a propositional logic over the program state.

The register Condition~\ref{episodic:reg} in Definition~\ref{def:episodic} of
episodic loops provides a syntactic criterion to decide episodicity in programs.
In the event structure semantics Condition~\ref{episodic:reg} implies the
following constraint on expressions in events.

\begin{lemma}\label{l:episodic-symb}\label{l:restrict-syms}

  In an episodic loop $\ell$, the symbols $\alpha$ in the memory location
  $\loc(e)$ or value $\val(e)$ of an event $e$ in the $i$-th iteration of $\ell$
  are read in the same iteration of the loop, i.e.

  \[
\symbols(\loc(e))\cup\symbols(\val(e))\subseteq\Sigma_{\ell\mapsto
  \iter(e)(\ell)}
  \]

\noindent where

  \[
  \Sigma_{\ell\mapsto i}=
  \left\{
      \iter(\origin\alpha)(\ell)=i~\vee\\
      \iter(\origin\alpha)(\ell)=\text{undefined}
  \right\}
  \]

\end{lemma}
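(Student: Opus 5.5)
I will prove Lemma~\ref{l:restrict-syms} by induction on the construction of $e$ in Definition~\ref{def:gen-es}, strengthened to an invariant on the register state. The invariant is: whenever the semantics interprets a command at a label in the $i$-th iteration of $\ell$, under register state $\rho$, every register $r$ that the command reads satisfies $\symbols(\rho(r))\subseteq\Sigma_{\ell\mapsto i}$.

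Granting the invariant, the lemma is read off the defining clauses case by case.
\begin{itemize}
\item For reads $(e\colon\Reads_o~\evalreg{\expr}{\rho}~\alpha)$, $\val(e)=\alpha$ is the symbol freshly introduced by $e$. Hence $\origin\alpha=e$ and $\iter(\origin\alpha)(\ell)=i$.
\item For allocations, $\loc(e)=\alpha$ is likewise fresh.
\item For all remaining events (writes, deallocations, the read and write of \fadd{} and \cas{}), $\loc(e)$ and $\val(e)$ are built as $\evalreg{\expr}{\rho}$ from a program expression $\expr$.
\end{itemize}
By Definition~\ref{def:prog-expr-sem}, $\evalreg{\expr}{\rho}$ is homomorphic and touches $\rho$ only at the registers of $\expr$. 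It therefore contributes only $\bigcup_{r\in\expr}\symbols(\rho(r))$, apart from a symbol $\alpha$ introduced by the RMW read itself, which again lies in iteration $i$. Global variables and constants contribute no symbols.

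The invariant itself follows from Condition~\ref{episodic:reg}. Any register read at iteration $i$ was last written $\po$-before it, either within iteration $i$ or before the loop. Consider that last assignment. It is one of three kinds:
\begin{itemize}
\item a \code{set}, $\rho[r\mapsto\evalreg{\expr'}{\rho'}]$, to which the same invariant applies inductively, ordered along $\po$ within the iteration;
\item a read, RMW or allocation, binding a fresh symbol whose origin lies in the same iteration or before the loop;
\item $\semregst{}{\semamp{x}}$, binding a location without symbols.
\end{itemize}
For assignments before the loop, the symbols all have origins with $\iter(\cdot)(\ell)$ undefined, so they lie in $\Sigma_{\ell\mapsto\emptyset}\subseteq\Sigma_{\ell\mapsto i}$.

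The delicate step, and the one I expect to be the main obstacle, is matching the \emph{syntactic} notion of ``last write in the same iteration'' with the \emph{semantic} iteration boundary drawn by $\iter$. As Figure~\ref{fig:rcu-loop-boundary} shows, $\iter$ may cut the loop body away from the syntactic back edge. Here I would use the two properties of $\iter$:
\begin{itemize}
\item Monotonicity (Property~\ref{iter:mono}) ensures that an assignment $\po$-before a use, within one run of the enclosing loops, has iteration count at most that of the use.
\item Condition~\ref{episodic:reg}, read with respect to $\iter$ as in Definition~\ref{def:episodic}, excludes counts strictly smaller.
\end{itemize}
For nested loops, Compatibility with nesting (Property~\ref{iter:nesting}) ensures that an assignment inside an inner loop carries the same $\iter(\cdot)(\ell)$ as the enclosing body of $\ell$. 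The register state threaded through the continuation $\kappa$ of the unravelled inner \code{if}-nest then preserves the invariant, by induction on the step-counter of that inner loop.

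A minor point remains: an enclosing outer loop $\ell'\prec\ell$. Its assignments before $\ell$ have $\iter(\cdot)(\ell)$ undefined, which already places them in $\Sigma_{\ell\mapsto\emptyset}$. So no further case is needed.
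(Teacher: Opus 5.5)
Your proposal is correct and takes essentially the paper's route. You use the same register-state invariant $\symbols(\rho(r))\subseteq\Sigma_{\ell\mapsto i}$, checked clause by clause over Definition~\ref{def:gen-es} with fresh symbols entering only at reads and allocations, and re-established at iteration boundaries by Condition~\ref{episodic:reg} read against the boundaries drawn by $\iter$ rather than the syntactic loop body. Your extra appeals to Monotonicity and to Compatibility with nesting of $\iter$ are harmless: the paper relies on Condition~\ref{episodic:reg} alone at this point and treats nesting separately in Remark~\ref{r:restrict-nested}.
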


\begin{proof}

  By Definition~\ref{def:gen-es} of the event structure semantics, the location
  and value expressions of an event are $\evalreg{\expr}{\rho}$ for the register
  state $\rho$ accumulated from the start of the program up to that event.
  Symbols therefore reach the expressions of an event only through $\rho$, and
  they enter $\rho$ only at read and allocation events, each introducing a fresh
  symbol by the convention of Paragraph~\ref{par:order-symb}. The lemma follows
  from an invariant on the register state, which we establish by induction over
  the derivation of the event structure. Condition~\ref{episodic:reg} of
  Definition~\ref{def:episodic} asserts that registers are written to before use
  in the same iteration of the loop.

  \medskip\noindent\textbf{Invariant.} For every register $r$ and every point in
  the $i$-th iteration of $\ell$ at which $\rho(r)$ is read,
  $\symbols(\rho(r))\subseteq\Sigma_{\ell\mapsto i}$.

  The induction then proceeds over the syntax of the programming language
  following Definition~\ref{def:gen-es} of the event structure semantics.

  \begin{itemize}

    \item $\semregst{}{\expr}$ sets $\rho'=\rho[r\mapsto\evalreg{\expr}{\rho}]$,
      so that $\symbols(\evalreg{\expr}{\rho})$ is the union of
      $\symbols(\rho(r'))$ over the registers $r'$ occurring in $\expr$. Each
      such $r'$ is accessed at this point, so by Condition~\ref{episodic:reg} it
      was written $\po$-before within the same iteration or before the loop, and
      the induction hypothesis gives
      $\symbols(\rho(r'))\subseteq\Sigma_{\ell\mapsto i}$.

    \item The read commands $\semglobld[o]{}{x}$ and
      $\semglobld[o]{}{\semderef\expr}$ set $\rho'=\rho[r\mapsto\alpha]$ for a
      symbol $\alpha$ introduced by a read event in the $i$-th iteration, so
      $\iter(\origin\alpha)(\ell)=i$ and $\alpha\in\Sigma_{\ell\mapsto i}$.
      Their location expressions are covered by the previous case.

    \item $\semmalloc{}{\expr}$ likewise binds a fresh symbol introduced by an
      allocation event of the $i$-th iteration.

    \item $\semregst{}{\semamp{x}}$ sets $\rho'=\rho[r\mapsto x]$ for the
      address of a global variable, which carries no symbols.

    \item $\semfadd[o_r][o_w]{}{x}{\expr}$ sets $\rho'=\rho[r\mapsto\alpha]$ for
      the symbol $\alpha$ of its read event, again of the $i$-th iteration. The
      value written, $\alpha+\evalreg{\expr}{\rho}$, is covered by the first
      case together with $\alpha\in\Sigma_{\ell\mapsto i}$.

    \item $\semcas[o_r][o_w]{}{x}{\expr_1}{\expr_2}$ sets
      $\rho'=\rho[r\mapsto\top]$ on the succeeding and
      $\rho'=\rho[r\mapsto\bot]$ on the failing branch, neither of which carries
      symbols. The symbol $\alpha$ of its read event instead occurs in the
      condition $\alpha=\evalreg{\expr_1}{\rho}$ of its branching event, which
      is of the $i$-th iteration alongside the read, so
      $\alpha\in\Sigma_{\ell\mapsto i}$; the value written,
      $\evalreg{\expr_2}{\rho}$, is covered by the first case.

    \item Write commands, $\semfence[o]$ and $\semfree{}$ leave $\rho$
      unchanged, and their expressions are covered by the first case.

    \item Sequential and parallel composition and branching thread $\rho$
      without introducing symbols.

  \end{itemize}

  \medskip\noindent\textbf{Loop boundaries.} Passing from the $i$-th iteration
  to the $i+1$-st, $\rho$ still holds the registers written in the $i$-th, whose
  values lie in $\Sigma_{\ell\mapsto i}$ and not in general in
  $\Sigma_{\ell\mapsto i+1}$. The invariant is re-established because
  Condition~\ref{episodic:reg} forbids reading such a register in the $i+1$-st
  iteration unless it is written there first, and a fresh write returns it to
  $\Sigma_{\ell\mapsto i+1}$ by the cases above. Here
  Condition~\ref{episodic:reg} is to be read with respect to the iterations
  drawn by $\iter$ and not to the syntactic loop body, which as
  Section~\ref{s:essem} discusses need not coincide with them; in RCU the writes
  to $\code{rcu}[\code{tid}]$ lie precisely in the offset between the two.

  \medskip\noindent For an event $e$ of the $i$-th iteration, $\loc(e)$ and
  $\val(e)$ are $\evalreg{\expr}{\rho}$ for expressions $\expr$ over registers
  read at $e$, so the invariant gives
  $\symbols(\loc(e))\cup\symbols(\val(e))\subseteq\Sigma_{\ell\mapsto i}$, which
  is the claim.

\end{proof}

Lemma~\ref{l:episodic-symb} above is stated of an event of $\mathbb E$. The
lemmas below apply it to an event justified in a justification, which by
Definition~\ref{def:justs} is a copy carrying its own location and value
expressions, and the bound has to be carried across that distinction. For a
write these are the location written and the value written; for an allocation,
the symbol it introduces and its size expression; for a deallocation, the
expression naming the location it frees, a deallocation carrying no value, so
that $\symbols(\val(\cdot))$ is empty there.

\begin{corollary}\label{c:episodic-symb-copy}

  Let $j\colon(P,D)\justifies^\delta w$ be a justification generated per
  Definition~\ref{def:gen-just} of an effectful event of the $i$-th iteration
  of an episodic loop $\ell$. Then the bound of Lemma~\ref{l:episodic-symb}
  holds of the copy $w$ records,
  \begin{equation}
    \symbols(\loc(w))\cup\symbols(\val(w))~\subseteq~\Sigma_{\ell\mapsto i}
  \end{equation}
  \noindent unless $\elab{fwd}$ has been applied along a load forwarding edge
  between two iterations.

\end{corollary}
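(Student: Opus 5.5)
The plan is an induction over the generation of justifications in Definition~\ref{def:gen-just}, carrying the invariant $\symbols(\loc(w))\cup\symbols(\val(w))\subseteq\Sigma_{\ell\mapsto i}$ for the copy $w$ recorded by each $j\in\mathbb J_k$ whose event lies in the $i$-th iteration of $\ell$.

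\textbf{Base case.} For a pre-justification in $\mathbb J_0$, the copy coincides with the event of $\mathbb E$ it justifies. The claim is then exactly Lemma~\ref{l:episodic-symb}, applied to writes, allocations and deallocations alike. For an allocation, the symbol it introduces is its own and so originates in iteration $i$. For a deallocation, $\symbols(\val(\cdot))=\emptyset$, as the accessor paragraph stipulates.

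\textbf{Inductive step.} I split the step by elaboration.
\begin{itemize}
\item Strengthening, Write Elision, Lifting and Weakening leave the copy's location and value expressions untouched. Strengthening and Weakening change only $P$. Write Elision changes only $\delta$. Lifting keeps $w_2$ with its expressions and only disjoins predicates. The invariant therefore transfers verbatim.
\item Value Assignment rewrites $x'=\eval{x}{\alpha}{v}$ and $\expr'=\eval{\expr}{\alpha}{v}$ with $v\in\Val$ closed. This can only remove symbols, so $\symbols(x')\cup\symbols(\expr')\subseteq\symbols(x)\cup\symbols(\expr)$.
\item Forwarding substitutes $\val(e_1)$ for $\val(e_2)$ along $e_1\fwdrel[j_1]e_2$, with $e_1\in\pred_\delta(e_2,P)$.
\end{itemize}

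\textbf{Forwarding, by shape of the edge.} For store and store-store forwarding, $e_1$ is a write. Its value expression is bounded by Lemma~\ref{l:episodic-symb} in its own iteration. Condition~\ref{episodic:events} and the $\ppo$-predecessor requirement then place $e_1$ either in iteration $i$ or before the loop. Indeed, a write of an earlier iteration $i'<i$ forwarding into iteration $i$ would make the read in iteration $i$ take its value from a write of iteration $i'$. Condition~\ref{episodic:mem} (case~\ref{episodic-casea}) excludes that for same-thread values, so either $\val(e_1)$ already lies in $\Sigma_{\ell\mapsto i}$ or the edge cannot occur. For load forwarding ($\Reads\times\Reads$), $\val(e_1)$ is the fresh symbol of $e_1$. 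It lies in $\Sigma_{\ell\mapsto i}$ precisely when $e_1$ is in iteration $i$ or outside the loop. The remaining case, $e_1$ in an iteration different from that of $e_2$, is exactly the exception the statement excludes.

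\textbf{Main obstacle.} I expect the main obstacle to be the store-forwarding case across an iteration boundary. Condition~\ref{episodic:mem} speaks about $\rf$ in executions, while forwarding is a justification-level rewrite. The first thing I would try is to argue via $\psi_\delta$: the predicate in Definition~\ref{def:freeze} equates $\val(e_1)$ with $\val(e_2)$, so a forwarded value from iteration $i'<i$ would induce a same-thread value flow into iteration $i$ that Condition~\ref{episodic:mem} forbids. If that fails, I would show that $\pred_\delta$ combined with Condition~\ref{episodic:events} leaves only intra-iteration or pre-loop predecessors for writes. In that case only the load-forwarding exception survives, matching the statement.
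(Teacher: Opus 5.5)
Your induction follows the paper's proof up to the forwarding case. The matching parts are: pre-justifications via Lemma~\ref{l:episodic-symb}; only $\elab{va}$ and $\elab{fwd}$ rewriting the copy; across a boundary, store forwarding excluded by Condition~\ref{episodic:mem}; and load forwarding left as the stated exception.

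The gap is that you fold store-store forwarding into the store-forwarding argument. In that shape $e_2$ is a relaxed \emph{write} ($\Writes\times\Writes_{\text{rlx}}$), so no read in iteration $i$ takes its value from iteration $i'<i$. Condition~\ref{episodic:mem} restricts only what reads may read from, so it says nothing here. Your fallback, that Condition~\ref{episodic:events} together with $\pred$ confines $e_1$ to iteration $i$ or to before the loop, is false. As Observation~\ref{o:fwd-crosses-boundaries} notes, that condition asks only for a path in ${(\ppoord\cup\DP)}^+$. A single $\ppo$ step between the two writes satisfies it while leaving them $\pred$-adjacent. So a store-store edge from an earlier iteration into iteration $i$ can exist, and neither of your two exits is established for it: you show neither that the value already lies in $\Sigma_{\ell\mapsto i}$ nor that the edge cannot occur.

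The paper does not exclude this edge; it shows the edge is harmless. By Observation~\ref{o:elab-fwd-value-eq}, freezing (through $\psi_\delta$ in Definition~\ref{def:freeze}) equates $\val(e_1)$ and $\val(e_2)$. By Lemma~\ref{l:episodic-symb}, the first is over $\Sigma_{\ell\mapsto i'}$ and the second over $\Sigma_{\ell\mapsto i}$, and these vocabularies meet only in $\Sigma_{\ell\mapsto\emptyset}$. Craig interpolation then fixes both values by pre-loop symbols or constants. The substitution $[\val(e_2)\mapsto\val(e_1)]$ therefore stays inside $\Sigma_{\ell\mapsto\emptyset}\subseteq\Sigma_{\ell\mapsto i}$ (Appendix~\ref{s:app-proofs-fwd}). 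Reaching for $\psi_\delta$ is the right instinct, but it must be combined with this interpolation step, not with Condition~\ref{episodic:mem}.
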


\begin{proof}

  By induction over Definition~\ref{def:gen-just} of the generation of
  justifications. A pre-justification records the expressions of the event
  itself, where Lemma~\ref{l:episodic-symb} applies directly. Of the
  elaborations only $\elab{va}$ and $\elab{fwd}$ rewrite the expressions of the
  copy: $\elab{str}$, $\elab{we}$ and $\elab{weak}$ justify the same event as
  their premise, and $\elab{lift}$ passes on the event of $j_2$ unchanged. The
  two that do rewrite are stated on a justified write, so for an allocation or a
  deallocation the pre-justification case is the whole argument.

  $\elab{va}$ substitutes a value $v\in\Val$ for a symbol, removing symbols
  without introducing any. $\elab{fwd}$ substitutes $\val(e_1)$ for the symbol
  $\val(e_2)$ along an edge $e_1\fwdrel[j_1]e_2$, and both are expressions of
  events. Where $e_1$ and $e_2$ lie in one iteration,
  Lemma~\ref{l:episodic-symb} puts both over $\Sigma_{\ell\mapsto i}$. Where
  they span a boundary, Appendix~\ref{s:app-proofs-fwd} excludes store
  forwarding outright and fixes the values of store-store forwarding over
  $\Sigma_{\ell\mapsto \emptyset}$, which is contained in every
  $\Sigma_{\ell\mapsto i}$. Load forwarding is the exception claimed, and is the
  one Lemma~\ref{l:restrict-elabs} already carries.

\end{proof}

In contrast to the location and value expressions, the predicate in
justifications is not iteration-pure, in that it may use symbols read in a prior
iteration of the loop. Responsible are Strengthening and Lifting, which can
relate symbols of different iterations. Remarks~\ref{r:str-earlier}
and~\ref{r:lift-earlier} below consider their effect on the ordering across
iterations.

\begin{remark}[Strengthening across iterations]\label{r:str-earlier}

  Let $\ell$ be an episodic loop, and let
  $j\colon(P',D)\justifies^\delta w$ arise by
  Definition~\ref{def:elab-str} of Strengthening from
  $j_1\colon(P,D)\justifies^\delta w$, with $S$ the origins it newly constrains.
  Let $e\in S$ be read in an earlier iteration of $\ell$ than $w$, that is
  $\ell\in\loopfun(e)\cap\loopfun(w)$ and $\iter(e)(\ell)<\iter(w)(\ell)$.
  In an execution $(X,J,\rf)$ with $j\in J$, Definition~\ref{def:freeze} of
  freezing adds the edge $(e,w)$ to $\DP$. The strengthening only adds an
  ordering which is already there by Condition~\ref{episodic:events} of
  Definition~\ref{def:episodic}.

  Let $J_1$ be $J$ with $j_1$ in place of $j$. Then $(X,J_1,\rf)$ is again an
  execution: $j_1$ justifies the same event with the same dependency set, so
  every effectful event of $X$ remains uniquely justified; $P'\implies P$, so
  consistency with the value restrictions of $X$ and
  Condition~(3) of freezing are preserved; the relation $\DP_1$ frozen from
  $J_1$ is contained in $\DP$, so Condition~(1) is preserved; and $\rf$ is
  unchanged. Condition~\ref{episodic:events} applied to $(X,J_1,\rf)$ gives
  \[
    (e,w)~\in~{(\ppoord_1\cup\DP_1)}^+
  \]
  \noindent where $\ppo_1$ is frozen from $J_1$. So $e$ precedes $w$ in the
  execution without the strengthening, and the edge $(e,w)$ it contributes
  orders nothing that execution leaves unordered.

  The argument appeals to Condition~\ref{episodic:events} on $(X,J_1,\rf)$ and
  not on $(X,J,\rf)$. On the latter it would be circular: the path witnessing
  the condition for $(e,w)$ may be the edge the strengthening adds.
  It covers origins read in an earlier iteration only. Origins read before the
  loop, for which $\iter(e)(\ell)$ is undefined, are not ordered by
  Condition~\ref{episodic:events}, and origins read in the iteration of $w$ lie
  in $\Sigma_{\ell\mapsto\iter(w)(\ell)}$ already.

  Definition~\ref{def:elab-str} also admits $w\sqsubseteq e$, and so an origin
  read in a \emph{later} iteration of $\ell$ than $w$. Such a strengthening is
  never used in a consistent execution. Condition~\ref{episodic:events} orders
  $w$ before $e$ in ${(\ppoord\cup\DP)}^+$, while freezing the strengthened
  justification puts $(e,w)$ in $\DP$, and the two close a cycle in
  $\DP\cup\ppoord\cup\rf$, which Axiom~\ref{def:mem-model-axiom:nta} forbids.
  The side condition $w\not\ppo^P_\delta e$ of Definition~\ref{def:elab-str}
  does not already exclude it, as it forbids a $\ppo$-edge and not a path.

\end{remark}

\begin{remark}[Lifting across iterations]\label{r:lift-earlier}

  Definition~\ref{def:elab-lift} of Lifting places no condition on the
  iterations in which $w_1$ and $w_2$ lie, and so admits lifting across the
  boundaries of iterations of a loop $\ell$. Condition~\ref{episodic:events} of
  Definition~\ref{def:episodic} does not exclude this, as it constrains the
  executions of a program and not the generation of its justifications. It
  does, however, prevent such a lifting from reordering events across a
  boundary.

  Let $\ell$ be an episodic loop and $(X,J,\rf)$ an execution with a
  justification $j\in J$ of $w$, generated with a lifting among its
  elaborations, which drops the dependency of $w$ on an origin $e$ read in an
  earlier iteration of $\ell$ than $w$, that is
  $\ell\in\loopfun(e)\cap\loopfun(w)$, $\iter(e)(\ell)<\iter(w)(\ell)$ and
  $(e,w)\notin\DP$. Condition~\ref{episodic:events} applied to $(X,J,\rf)$
  gives
  \[
    (e,w)~\in~{(\ppoord\cup\DP)}^+
  \]
  \noindent so another path orders $e$ before $w$. The lifting removes the edge
  $(e,w)$ from $\DP$, but not the ordering.

  Unlike Remark~\ref{r:str-earlier}, the argument appeals to
  Condition~\ref{episodic:events} on the execution with the elaboration rather
  than without it. It is not circular, as the ordering in question is the one
  of the execution containing the lifting, which is what the condition
  constrains.
  Lifting where Condition~\ref{episodic:events} is silent can still remove
  dependencies: of a write outside $\ell$, for which $\iter(w)(\ell)$ is
  undefined, as when the RCU exit after the loop, reached when the \cas{}
  succeeds, is lifted with the RCU exit reached when it fails. It also changes
  $\DP$ itself, and not only its closure with $\ppo$.
  Condition~\ref{episodic:events} constrains ordering and not predicates. It
  does not extend purity to lifted predicates, whose disjunction relates
  symbols of several iterations.

\end{remark}

\noindent Note that the predicate $\psi_\delta$ of Definition~\ref{def:fwd-ctx}
of the forwarding context is not pure in the iterations of $\ell$ either, and
for a boundary-crossing pair is genuinely not iteration-pure. It is not part of
$P$ during the generation of justifications; Definition~\ref{def:freeze} of
freezing justifications conjoins it only at freezing.

The symbols of events are bounded by iteration; those of predicates, as above,
are not. Instead the proofs below use the weaker property that the restricted
predicates suffice.

\begin{lemma}\label{l:restrict-props}

  For any two events $e_1$ and $e_2$ with $i=\iter(e_1)(\ell)$ if
  $\iter(e_1)(\ell)=\iter(e_2)(\ell)$ in an episodic loop $\ell$ and $i=\emptyset$
  otherwise, and for all expressions $\expr$, $\expr_1$, $\expr_2$ over symbols
  in $\Sigma_{\ell\mapsto i}$.

  \begin{equation}
    \begin{array}{rcl}
      \loc(e_1)\equiv_P\loc(e_2) & \text{iff} &
      \loc(e_1)\equiv_\restrict{P}{i}\loc(e_2) \\
      \val(e_1)\equiv_P\val(e_2) & \text{iff} &
      \val(e_1)\equiv_\restrict{P}{i}\val(e_2) \\
      \exists f.\evalenv{P\wedge\loc(e_1)=\loc(e_2)}{f}\equiv\top
      & \text{iff} &
      \exists
      f.\evalenv{\restrict{P}{i}\wedge\loc(e_1)=\loc(e_2)}{f}\equiv\top
      \\
      \expr\equiv_P\top & \text{iff} & \expr\equiv_\restrict{P}{i}\top \\
      \exists\expr.\left(
      \begin{array}{l}
        \evalenv{P_1\Rightarrow\expr_1=\expr}{\Lambda} \\
        \wedge(P_2\Rightarrow\expr_2=\expr)
      \end{array}
      \equiv_{\hyperref[def:fwd-ctx]{\psi_\delta}}\top
    \right)
      & \text{iff} &
    \exists\expr.\left(
      \begin{array}{l}
        \evalenv{\restrict{P_1}{i}\Rightarrow\expr_1=\expr}{\Lambda} \\
        \wedge(\restrict{P_2}{i}\Rightarrow\expr_2=\expr)
      \end{array}
      \equiv_{\hyperref[def:fwd-ctx]{\psi_\delta}}\top
    \right)
    \end{array}
  \end{equation}

\end{lemma}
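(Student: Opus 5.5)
The plan is to isolate a single ``interpolation'' property of the restriction and then derive every line of the statement from it. For a predicate $P$ and a formula $q$ with $\symbols(q)\subseteq\Sigma_{\ell\mapsto i}$, I would show
\[
  (P\Rightarrow q)\equiv\top \quad\text{iff}\quad (\restrict{P}{i}\Rightarrow q)\equiv\top .
\]
The right-to-left direction holds because $P\implies\restrict{P}{i}$: every conjunct $p$ of $\restrict{P}{i}$ satisfies $P\implies p$ by definition. The left-to-right direction holds because $q$ is itself one of the conjuncts collected in $\restrict{P}{i}$, since its symbols lie in $\Sigma_{\ell\mapsto i}$ and $P\implies q$. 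So $\restrict{P}{i}\implies q$ by construction. I would first settle the well-definedness caveat already noted after the definition of the restriction: the conjunction ranges over infinitely many syntactic $p$, but only finitely many up to $\equiv$ over the finite symbol set. I would read $\restrict{P}{i}$ as a conjunction of representatives.

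Next I would check that the relevant expressions really lie over $\Sigma_{\ell\mapsto i}$. For $e_1,e_2$ in the same iteration $i$, Lemma~\ref{l:episodic-symb} (and Corollary~\ref{c:episodic-symb-copy} for copies recorded in justifications) places $\symbols(\loc(e_k))$ and $\symbols(\val(e_k))$ in $\Sigma_{\ell\mapsto i}$. For events of different iterations the statement takes $i=\emptyset$. Here I would restrict to the case where the lemma's hypothesis ``expressions over $\Sigma_{\ell\mapsto i}$'' is assumed explicitly, rather than derived, and say so.

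With this in hand, the first, second and fourth lines are instances of the interpolation property with $q$ taken to be $\loc(e_1)=\loc(e_2)$, $\val(e_1)=\val(e_2)$ and $\expr$ respectively, after unfolding $\equiv_P$ through Definition~\ref{def:sem-equiv}. The third line is a satisfiability statement. I would obtain it by contraposition with $q\triangleq\neg(\loc(e_1)=\loc(e_2))$: the conjunction $P\wedge\loc(e_1)=\loc(e_2)$ is unsatisfiable iff $P\Rightarrow q$ is valid, iff $\restrict{P}{i}\Rightarrow q$ is valid by interpolation, iff $\restrict{P}{i}\wedge\loc(e_1)=\loc(e_2)$ is unsatisfiable.

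The main obstacle is the last line, the relabel equivalence. It has two difficulties: an existentially quantified witness $\expr$, and the background predicate $\psi_\delta$ together with the relabelling $\Lambda$. These may bring in symbols outside $\Sigma_{\ell\mapsto i}$. The right-to-left direction is still immediate, because replacing $P_k$ by the weaker $\restrict{P_k}{i}$ in an antecedent only strengthens the implication, so any witness on the right also serves on the left. For the left-to-right direction I would try, as a first step, to normalise the witness: replace $\expr$ by $\expr_2$ itself, which lies over $\Sigma_{\ell\mapsto i}$. Then the condition becomes the validity of
\[
  \psi_\delta\wedge P_1'\wedge P_2\Rightarrow \expr_1'=\expr_2 ,
\]
where primes denote the images under $\Lambda$. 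To this I would apply interpolation jointly to the conjunction $\psi_\delta\wedge\evalenv{P_1}{\Lambda}\wedge P_2$.

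This needs two facts that I have not yet established. First, $\Lambda$ must map $\Sigma_{\ell\mapsto i}$ into itself; this is plausible since it relabels one iteration's origins to another's and is one-to-one. Second, restriction must commute with this conjunction; I expect it to do so only after conjoining $\restrict{\psi_\delta}{i}$. If the normalisation fails, I would fall back to weakening the claim to witnesses $\expr$ ranging over $\Sigma_{\ell\mapsto i}$, which is the form in which Lifting actually invokes the equivalence.
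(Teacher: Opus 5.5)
Your proof of the first four lines is correct and matches the paper. The paper places the compared expressions over $\Sigma_{\ell\mapsto i}$ by Lemma~\ref{l:episodic-symb}, and reads $\restrict{P}{i}$ as the interpolant stated right after the lemma. Your observation, that a valid $P\Rightarrow q$ with $q$ over $\Sigma_{\ell\mapsto i}$ makes $q$ itself a conjunct of $\restrict{P}{i}$, is exactly that property, and your contraposition for the satisfiability line is fine. You are also right not to derive the hypothesis when $i=\emptyset$: for a cross-iteration pair the location and value expressions are not over $\Sigma_{\ell\mapsto\emptyset}$, as Remark~\ref{r:restrict-nested} itself concedes. One aside: the finiteness you cite for well-definedness is false over $\nat$, since $\alpha=0,\alpha=1,\dots$ are pairwise inequivalent. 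Your argument does not need it, only $\restrict{P}{i}$ read as a semantic, possibly infinitary, conjunction.

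The gap is the left-to-right direction of the last line, and the reduction you propose for it is unsound. Substituting $\expr_2$ for the witness is not implied by the left-hand side, because the witness need only agree with $\expr_1$ where $P_1$ holds and with $\expr_2$ where $P_2$ holds. Take $P_1=(\zeta=1)$, $P_2=(\zeta=0)$, $\expr_1=\alpha$, $\expr_2=0$, with $\Lambda$ and $\delta$ empty. The witness $\zeta\times\alpha$ makes the left-hand side valid, while $P_1\Rightarrow\alpha=0$ is not. The formula you then write, with $P_2$ in the antecedent, is not what that substitution yields either; it is only the overlap of the two antecedents. So restricting $\psi_\delta\wedge\evalenv{P_1}{\Lambda}\wedge P_2$ jointly aims at the wrong condition.

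No joint restriction is needed. Fix a witness $\expr$, and split the validity of $\psi_\delta\Rightarrow(A\wedge B)$ into that of $\psi_\delta\Rightarrow A$ and of $\psi_\delta\Rightarrow B$. Then curry $\psi_\delta$ into the consequent, as in $P_2\Rightarrow(\psi_\delta\Rightarrow\expr_2=\expr)$, and apply your interpolation property to each conjunct. For the relabelled conjunct, use $\restrict{\evalenv{P_1}{\Lambda}}{i}\equiv\evalenv{\restrict{P_1}{i}}{\Lambda}$, which the paper states after the lemma. This requires three side conditions:
\begin{itemize}
\item $\expr$ is over $\Sigma_{\ell\mapsto i}$, arguably what the lemma's quantification over $\expr$ intends;
\item $\psi_\delta$ is over $\Sigma_{\ell\mapsto i}$;
\item $\Lambda$ is one-to-one from $\Sigma_{\ell\mapsto i}$ onto itself.
\end{itemize}
The condition on $\Lambda$ holds when the two lifted writes lie in one iteration, as in the Lifting case of Lemma~\ref{l:restrict-elabs}. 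It does not hold for the reason you give: a $\Lambda$ relabelling one iteration's origins to another's would move symbols out of $\Sigma_{\ell\mapsto i}$.

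The conditions on the witness and on $\psi_\delta$ cannot be dropped. In the example above, with $\zeta$ read outside the window, both restrictions are $\top$, so the right-hand side would need $\alpha=0$ to be valid. Separately, take $\psi_\delta=(\beta=\alpha)$ and $P_1=P_2=(\beta=0)$, with $\beta$ read in the previous iteration; this is the boundary-crossing load forwarding that Lemma~\ref{l:restrict-elabs} excepts. The left-hand side then holds with the in-window witness $0$, yet no witness serves on the right. The paper's one-line proof, that the conjuncts discarded by $\restrict{P}{i}$ share no symbol with either side, does not address $\psi_\delta$, $\Lambda$ or the witness. So you are not missing an idea the paper has. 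Your fallback is the right repair, provided it restricts $\psi_\delta$ and $\Lambda$ as well as the witness.
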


\begin{proof}

  By Lemma~\ref{l:episodic-symb} the location, value and branching condition
  expressions contain only symbols from $\Sigma_{\ell\mapsto i}$, so that in each of the
  equivalences above the conjuncts of $P$ discarded by $\restrict{P}{i}$ share
  no symbol with either side.

\end{proof}

Each of the properties above is of the following structure: Given an
implication $P\implies p$, where $p$ is a predicate over symbols $\Sigma$, then
there is an interpolant $P\upharpoonright_\Sigma$ with $P\implies
P\upharpoonright_\Sigma\implies p$. Given that all predicates are defined in a
propositional logic, and $P$ and $\Sigma$ are finite, the interpolant exists and
is decidable, making the above an instance of the Craig interpolation
theorem~\cite{harrison2009handbook}.

\begin{corollary}
  Given a finite predicate $P$, $P\upharpoonright_\Sigma$ for a finite $\Sigma$
  is finite and finitely decidable.
\end{corollary}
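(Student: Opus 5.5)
The plan is to unfold the definition of the restricted predicate and argue that it is a finite conjunction up to logical equivalence. $\restrict{P}{\Sigma}$ is defined as $\bigwedge\{p\mid\symbols(p)\subseteq\Sigma\wedge P\implies p\}$. The set of such $p$ is syntactically infinite, so the first step is to quotient it by logical equivalence over $\Sigma$. The predicates in question are formulas of the propositional logic over the program state used throughout Appendix~\ref{s:app-proofs-restricted}. Their atoms are the comparisons $\expr_A=\expr_A$ and $\expr_A\leq\expr_A$ of Definition~\ref{def:es-expressions}.

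To make the quotient finite, I would restrict attention to the atoms that occur in $P$, after projecting away the symbols outside $\Sigma$. Any consequence $p$ of $P$ over $\Sigma$ is implied by the strongest such consequence. That strongest consequence is the Craig interpolant the preceding paragraph appeals to, taken between $P$ and $p$ with shared vocabulary $\Sigma$.

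Concretely, I will construct $\restrict{P}{\Sigma}$ as the existential projection $\exists\,(\symbols(P)\setminus\Sigma).\,P$, and then show two things.

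1. \emph{It is the strongest $\Sigma$-consequence.} Soundness holds because $P$ implies its projection. Completeness: if $P\implies p$ with $\symbols(p)\subseteq\Sigma$, then the projection implies $p$, since $p$ does not mention the eliminated symbols.
2. \emph{It is finite.} Since $P$ and $\Sigma$ are finite, eliminating finitely many symbols one at a time by quantifier elimination (or by case splitting in the purely propositional reading) yields a finite formula.

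Together these show that the infinite conjunction in the definition is equivalent to this single finite formula.

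Decidability then reduces to deciding validity of the finitely many implications involved, using the same satisfiability oracle the paper already assumes for $\equiv_P$ and for pruning unsatisfiable $\varphi$ (Definition~\ref{def:sem-equiv}).

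The main obstacle is the elimination step. If atoms contain arithmetic over $\nat$ with $\times$, $/$ and bitwise operators, quantifier elimination is not available in general. I would first try to stay within the paper's stated assumption that predicates are propositional, treating each atom as an opaque proposition. In that reading, projection eliminates the atoms that mention symbols outside $\Sigma$ via Shannon expansion, $P[a:=\top]\vee P[a:=\bot]$. This is finite and decidable. It may lose some consequences that hold only through arithmetic, but it still yields a sound, finite interpolant, which is what the uses in Lemma~\ref{l:restrict-props} require.
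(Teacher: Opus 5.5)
Your main route is the right one. Taking $\exists(\symbols(P)\setminus\Sigma).\,P$ and showing it is the strongest consequence of $P$ over $\Sigma$ exhibits $\restrict{P}{\Sigma}$ as a \emph{uniform} interpolant, which is exactly what the definition asks for. It is sharper than the paper's appeal to Craig interpolation, which gives one interpolant per consequence $p$ rather than a single finite formula equivalent to the whole conjunction. Under the paper's hypothesis that the predicates are propositional, with the restricted vocabulary being the propositional variables, Shannon expansion makes the projection finite and computable. That is all the paper's argument amounts to.

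The gap is the fallback you commit to. You treat each atom as opaque while restricting by the symbols \emph{inside} the atoms. This computes a formula $P^{\circ}$ that $P$ implies, but that is in general strictly weaker than $\restrict{P}{\Sigma}$ under the semantic entailment of Definition~\ref{def:sem-equiv}. Take $P=(\alpha=\beta)\wedge(\beta=5)$ and $\Sigma=\{\alpha\}$: both atoms mention $\beta$, so $P^{\circ}\equiv\top$, although $P\implies\alpha=5$. So $P^{\circ}$ is a sound over-approximation but not an interpolant for such $p$, since it does not imply $p$.

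Nor is $P^{\circ}$ what Lemma~\ref{l:restrict-props} requires. Its left-to-right directions, e.g.\ that $\expr_1\equiv_P\expr_2$ implies $\expr_1\equiv_{\restrict{P}{i}}\expr_2$, need the restriction to entail every $\Sigma$-consequence of $P$. They fail for $P^{\circ}$ with $\expr_1=\alpha$ and $\expr_2=5$. You would have proved finiteness of a different object.

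No weaker projection can rescue the arithmetic reading. Take $\Sigma$ disjoint from $\symbols(P)$ and $P$ a polynomial equation. Then the restriction is $\bot$ or $\top$ according as $P$ is unsatisfiable or not, so computing it decides Diophantine solvability. The paper itself notes this is undecidable (Appendix~\ref{s:app-proofs-locations}).

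To repair the proof, state the propositional reading as the hypothesis of the corollary. Then eliminating a symbol is eliminating a propositional variable, and your Shannon expansion is exact. Alternatively, restrict to a fragment with quantifier elimination and decidable entailment. Either way, drop the claim that a weaker projection suffices.
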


Two properties follow from the restriction being the strongest consequence over
its vocabulary, and are used for the elaborations that form a disjunction or
rename symbols. It distributes over disjunction,
\[
  \restrict{(P_1\vee P_2)}{i}~\equiv~\restrict{P_1}{i}\vee\restrict{P_2}{i}
\]
\noindent as $\restrict{P_1}{i}\vee\restrict{P_2}{i}$ is a predicate over
$\Sigma_{\ell\mapsto i}$ implied by $P_1\vee P_2$, and conversely every such
predicate implied by $P_1\vee P_2$ is implied by each disjunct, hence by
$\restrict{P_1}{i}$ and by $\restrict{P_2}{i}$. And it commutes with a
relabelling $\Lambda$ that maps $\Sigma_{\ell\mapsto i}$ onto itself,
\[
  \restrict{\evalenv{P}{\Lambda}}{i}~\equiv~\evalenv{\restrict{P}{i}}{\Lambda}
\]
\noindent as $\Lambda$ is one-to-one on symbols, so a predicate $p$ over
$\Sigma_{\ell\mapsto i}$ is implied by $\evalenv{P}{\Lambda}$ exactly when
$\evalenv{p}{\Lambda^{-1}}$, again over $\Sigma_{\ell\mapsto i}$, is implied by
$P$. Neither uses the shape of the predicates.

Firstly, we establish that the restricted predicates suffice to support
elaborations. Forwarding requires care, as load forwarding across loop
boundaries can leak symbols into the next iteration of the loop. We treat it in
Appendix~\ref{s:app-proofs-fwd} and use the result here. The exception is
legitimised using the $\rf$-relation when freezing the justification sets as
symbolic executions later on: by Observation~\ref{o:elab-fwd-value-eq} the two
symbols a load forwarding identifies are already equated by $\varphi_\rf$
whenever the two reads take their value from a common write, so the load
forwarding shape introduces no dependency that a $\rf$ assignment could not.

\begin{lemma}\label{l:restrict-elabs}

  For any elaboration $\elab{}$, excepting load forwarding across a loop
  boundary, justifications $j_1$ and $j_2$ of memory-effectful events $w_1$ and
  $w_2$ in the $i$-th iteration of the episodic loop $\ell$ the following
  commutes.

  \begin{figure}[htbp]
  \center
  \begin{tikzcd}
    \left\{
      j_k\colon(P_k,D_k)\justifies^{\delta_k} w_k
    \right\}_{k\in\{1,2\}}
    \ar[r, "\elab{}"]
    \ar[d, "\restrict{}{i}"]
    &
    j\colon(P,D)\justifies^{\delta}w
    \ar[d, "\restrict{}{i}"]
    \\
    \left\{
      j_k\colon(\restrict{P_k}{i},D_k)\justifies^{\delta_k} w_k
    \right\}_{k\in\{1,2\}}
    \ar[r, "\elab{}"]
    &
    j\colon(\restrict{P}{i},D)\justifies^{\delta}w
  \end{tikzcd}
    \caption{Predicate restrictions commute with
    elaborations}\label{fig:restrict-elabs}
  \end{figure}

\end{lemma}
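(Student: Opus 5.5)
We would prove the lemma by case analysis over the elaborations of Definition~\ref{def:gen-just}, showing for each $\elab{}$ two directions: if $\elab{}$ applies to the premises with predicates $P_k$ yielding $j$ with $P$, then it applies to the premises with $\restrict{P_k}{i}$ and yields a justification with predicate $\restrict{P}{i}$ (same $D$, $\delta$, $w$); and conversely. The uniform tool is Lemma~\ref{l:restrict-props}: every side condition of an elaboration is one of the listed judgements ($\equiv_P$ on locations or values, satisfiability of $P\wedge\loc(e_1)=\loc(e_2)$, $\expr\equiv_P\top$, relabel equivalence), evaluated on expressions which, by Lemma~\ref{l:episodic-symb} and Corollary~\ref{c:episodic-symb-copy}, lie over $\Sigma_{\ell\mapsto i}$ because $w_1,w_2$ are in the $i$-th iteration. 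The excluded case, load forwarding across a boundary, is exactly where that corollary fails.

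As a preliminary step we would show that the derived relations are insensitive to restriction on events of the iteration: $\ppo^P_\delta=\ppo^{\restrict{P}{i}}_\delta$ and hence $\pred_\delta(e,P)=\pred_\delta(e,\restrict{P}{i})$. For $\pposync$ there is nothing to do, as it ignores $P$; for $\ppoalias$ we use the satisfiability clause of Lemma~\ref{l:restrict-props}; for $\ppormw$ the condition $c\exeq_P\top$, with $c$ the \cas{} test over iteration-$i$ symbols, is the fourth clause. The remap under $\delta$ is unchanged since $\delta$ is. With this, the forwarding relations $\xrightarrow{F'_j}$, $\fwdrel$ and $\werel$ are also unchanged ($\loc\equiv_{P\wedge\psi_\delta}$ reduces to the first clause, noting $\psi_\delta$ is unaffected).

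The cases then go: $\elab{va}$ leaves $P$ fixed and needs $\alpha\equiv_P v$ with $\alpha$ of the copy, hence in $\Sigma_{\ell\mapsto i}$, so clause four applies. $\elab{we}$ leaves $P$ fixed and needs only $\werel$. $\elab{fwd}$ rewrites $P'=\evalenv{P}{g}$ with $g$ substituting one iteration-$i$ symbol by another iteration-$i$ expression (or a pre-loop one for store-store forwarding), and restriction commutes with such a substitution by the same argument as for relabellings, since $g$ maps $\Sigma_{\ell\mapsto i}$ into itself. $\elab{weak}$ drops a conjunct implied by $\Omega$, and restriction of a conjunction is implied by the restriction of the remaining conjunct together with $\Omega$; by the convention following Definition~\ref{def:elab-weak} we may in any case set it aside. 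For $\elab{str}$ the new conjuncts $\valres(e)$, $e\in S$, may mention symbols of earlier iterations; restriction of $P'=P\wedge\bigwedge_{e\in S}\valres(e)$ need not equal $\restrict{P}{i}\wedge\ldots$ syntactically, so we would argue up to $\equiv$, using that restriction is the strongest consequence over $\Sigma_{\ell\mapsto i}$, and for origins of earlier iterations appeal to Remark~\ref{r:str-earlier} and Condition~\ref{episodic:cond} to show the added constraints contribute nothing over $\Sigma_{\ell\mapsto i}$ beyond what already held; the side condition $w\not\ppo^P_\delta e$ is handled by the preliminary step.

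The main obstacle is $\elab{lift}$. Its predicate is $\evalenv{P_1}{\Lambda}\vee P_2$, and we use that restriction distributes over disjunction and commutes with a relabelling mapping $\Sigma_{\ell\mapsto i}$ onto itself; the difficulty is that $\Lambda$ need not preserve $\Sigma_{\ell\mapsto i}$ when $w_1$ and $w_2$ relate symbols of different iterations, and that the closed relabel equivalence recurses through $\pred_\delta$ into events possibly outside iteration $i$. We would first try to restrict to the case where both $w_1,w_2$ and all origins in $D_1,D_2$ lie in iteration $i$ or before the loop (using the hypothesis that $w_1,w_2$ are in the $i$-th iteration together with Condition~\ref{episodic:events} and Remark~\ref{r:lift-earlier} to confine the $\pred$-recursion), so that $\Lambda$ preserves $\Sigma_{\ell\mapsto i}$; then the last clause of Lemma~\ref{l:restrict-props}, applied inductively along the recursion, transfers closed relabel equivalence between $P_k$ and $\restrict{P_k}{i}$.
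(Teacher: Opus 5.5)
Your decomposition is the paper's. It is a case split over the elaborations, driven by Lemma~\ref{l:restrict-props}, with Lemmas~\ref{l:restrict-ppo} and~\ref{l:restrict-pred} for the order-theoretic side conditions. For Lifting it uses distributivity of restriction over disjunction, commutation with a relabelling that maps $\Sigma_{\ell\mapsto i}$ onto itself, and Lemma~\ref{l:restrict-relabeq} for the closed relabel-equivalence.

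The Strengthening case, however, fails as you plan it. You read Definition~\ref{def:elab-str} as $P'=P\wedge\bigwedge_{e\in S}\valres(e)$, which drops the strengthening conjunct itself. The strengthened predicate is an arbitrary $P'$ entailing $P$ and the $\valres(e)$, $e\in S$, and its own new content (such as $\alpha=v$, to enable value assignment) is the point of the elaboration. Once that content is admitted, your target claim fails: the restriction of the strengthened predicate is not recoverable up to $\equiv$ from $\restrict{P}{i}$ and what is added. Suppose $P$ entails $\beta=5$ for a symbol $\beta$ read in iteration $i-1$; this is allowed, since Condition~\ref{episodic:cond} protects only symbols read before the loop. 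If the strengthening adds $\beta=\alpha$ with $\alpha$ read in iteration $i$, the restricted strengthened predicate entails $\alpha=5$. That is entailed neither by $\restrict{P}{i}$, which need not mention $\alpha$, nor by $\restrict{(\beta=\alpha)}{i}\equiv\top$. Remark~\ref{r:str-earlier} cannot repair this, because it is a statement about ordering, not about predicates.

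The paper does not rebuild the restricted predicate from $\restrict{P_1}{i}$ and restricted additions. It lets the bottom row strengthen directly by the part of the strengthened predicate over $\Sigma_{\ell\mapsto i}$, which the arbitrariness of the conjunct permits. The work then goes into two places:
\begin{itemize}
\item The side condition $w\not\ppo^P_\delta e$, split on the iteration of $e$. Lemma~\ref{l:restrict-ppo} covers iteration $i$; earlier iterations are immediate since $\ppo$ refines $\po$; later ones are excluded by the cycle argument of Remark~\ref{r:str-earlier}. Your preliminary step covers only the first of these.
\item Showing that the $\DP$-edges $(e,w)$ which restriction discards, from origins outside iteration $i$, lose no ordering. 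This is exactly what Remark~\ref{r:str-earlier} provides.
\end{itemize}

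Your Forwarding case is also narrower than the statement. The lemma excepts only \emph{load} forwarding across a boundary, so a forwarded pair spanning iterations $i-1$ and $i$ in the other two shapes must be handled. For such a pair your substitution $g$ does not map $\Sigma_{\ell\mapsto i}$ into itself, and your preliminary $\pred$-invariance does not apply, since Lemma~\ref{l:restrict-pred} is stated for pairs within one iteration. The paper disposes of the two shapes separately:
\begin{itemize}
\item Store forwarding cannot arise. Forwarding from a write imposes exactly the constraint of reading from it, which Condition~\ref{episodic:mem} forbids for a same-thread write of an earlier iteration.
\item For store-store forwarding, freezing equates the two values, which lie over vocabularies meeting only in $\Sigma_{\ell\mapsto\emptyset}$. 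The interpolation argument of Appendix~\ref{s:app-proofs-fwd} then fixes them by pre-loop symbols. Your parenthetical ``a pre-loop one'' assumes this without argument.
\end{itemize}

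Two smaller points. In Lifting, Condition~\ref{episodic:events} will not confine the $\pred$-recursion to iteration $i$. It orders earlier iterations before later ones, so a $\ppo$-predecessor of an event of iteration $i$ may well lie in iteration $i-1$. The paper instead takes the hypothesis on $\Lambda$ from Lemma~\ref{l:episodic-symb} and leaves the recursion to Lemma~\ref{l:restrict-relabeq}. In Value Assignment you should also cover $\alpha\notin\Sigma_{\ell\mapsto i}$; there the substitution is a no-op by Corollary~\ref{c:episodic-symb-copy}, and the bottom row is trivial.
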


\begin{proof}

  The proof follows the elaborations in Section~\ref{s:app-defs-gen-justs}.

  Two of the cases arise for writes only. Value Assignment and Forwarding are
  stated on a justified write, rewriting its location and value expressions, and
  by Definition~\ref{def:gen-just} do not apply to justifications of
  allocations or deallocations at all; the arguments below accordingly speak of
  $\loc(w)$ and $\val(w)$. The remaining four are generic in the justified
  event: Strengthening argues on the $\DP$-edges the restriction discards,
  Lifting on the disjunction, Weakening on the removed conjunct, and Write
  Elision constrains only the pair of writes it elides and not the event being
  justified, so none of them turns on $w$ being a write.

  \medskip \hyperref[def:elab-va]{\textbf{Value assignments}} $\elab{va}$
  applies constraints $\alpha=v$ implied by $P$ replacing symbols $\alpha$ with
  values $v$. Diagram~\ref{fig:restrict-elabs} fixes the dependency set and the
  write across its vertical arrows, so the two rows must agree not only on the
  restricted predicate but on the value substituted, the write of the elaborated
  justification carrying $x'=\eval{x}{\alpha}{v}$ and
  $\expr'=\eval{\expr}{\alpha}{v}$ and its dependency set being read off those.
  We distinguish two cases on $\alpha$.

  If $\alpha\in\Sigma_{\ell\mapsto i}$, then unfolding $\alpha\equiv_Pv$ by
  Definition~\ref{def:sem-equiv} and instantiating Lemma~\ref{l:restrict-props}
  at the expression $\alpha=v$, which is over $\Sigma_{\ell\mapsto i}$ as
  $v\in\Val$ carries no symbols, gives $\alpha\equiv_Pv$ iff
  $\alpha\equiv_\restrict{P}{i}v$. Both rows therefore substitute the same value
  for the same symbol, so $x'$, $\expr'$ and $D$ agree, and both rows leave the
  predicate unchanged.

  If $\alpha\notin\Sigma_{\ell\mapsto i}$, the predicate is again unchanged in
  both rows, as $\elab{va}$ does not modify it. Here $\loc(w)$ and $\val(w)$ are
  the expressions of the copy the justification records, which
  Corollary~\ref{c:episodic-symb-copy} bounds by $\Sigma_{\ell\mapsto i}$ save
  under the load forwarding of the case below, so $\alpha$ occurs in neither of
  them and $x'=x$, $\expr'=\expr$ and $D$ is unchanged. The bottom row is a
  no-op and the diagram commutes for that reason.

  \medskip \hyperref[def:elab-fwd]{\textbf{Forwarding}} $\elab{fwd}$ modifies
  the predicate $P_1$ by a map $g:\val(e_2)\mapsto\val(e_1)$ for events
  $e_1\xrightarrow{\fwdrel{j_1}}e_2$, and correspondingly the location and value
  expressions of the write. We distinguish whether the two events lie in one
  iteration, and if not, which of the shapes of Definition~\ref{def:fwd-rels} of
  the forwarding relations applies.

  If $e_1$ and $e_2$ lie in the same iteration then
  $\loc(e_1)\equiv_{P\wedge\psi_\delta}\loc(e_2)$ iff
  $\loc(e_1)\equiv_{\restrict{P}{i}}\loc(e_2)$, as both location expressions use
  only symbols from that iteration or from before the loop. The map $g$ likewise
  substitutes an expression over $\Sigma_{\ell\mapsto i}$ for a symbol of
  $\Sigma_{\ell\mapsto i}$, so that
  $\restrict{\evalenv{P}{g}}{i}=\evalenv{\restrict{P}{i}}{g}$ and
  Diagram~\ref{fig:restrict-elabs} commutes.

  Suppose then that $e_1$ lies in the $k$-th iteration and $e_2$ in the
  $k+1$-st, and recall Observation~\ref{o:fwd-crosses-boundaries} that the
  episodicity criteria do not by themselves preclude this.

  For store forwarding the case does not arise. Forwarding from a write is the
  constraint $\varphi_\rf$ that of Definition~\ref{def:freeze} of freezing
  justifications imposes when $e_2$ reads from $e_1$, and
  Condition~\ref{episodic:mem} of Definition~\ref{def:episodic} of episodic
  loops admits no read from a write of the preceding iteration on the same
  thread.

  For store-store forwarding the values $\val(e_1)$ and $\val(e_2)$ are equated
  by the predicate $\varphi$ of Definition~\ref{def:freeze}, and by the
  interpolation argument of Appendix~\ref{s:app-proofs-fwd} both are then fixed
  by symbols of $\Sigma_{\ell\mapsto \emptyset}$ or are constants. The map $g$
  therefore substitutes within $\Sigma_{\ell\mapsto \emptyset}$, which is
  contained in every $\Sigma_{\ell\mapsto i}$, and the diagram commutes as in
  the same-iteration case.

  Load forwarding is the exception noted in the statement of the lemma. Here $g$
  replaces the symbol $\val(e_2)\in\Sigma_{\ell\mapsto k+1}$ by
  $\val(e_1)\in\Sigma_{\ell\mapsto k}$, and the diagram does not commute:
  restricting to $\Sigma_{\ell\mapsto k+1}$ after the substitution discards the
  conjuncts that mention $\val(e_1)$, whereas substituting after the restriction
  retains them, and symmetrically at $\Sigma_{\ell\mapsto k}$. This is not a
  defect of the argument but of the shape: the substitution transports precisely
  the information across the loop boundary that the restriction is there to
  discard, and no restriction of a single window can be insensitive to it.

  The exception does not propagate. What Lemma~\ref{l:gamma-pres-justs} requires
  is that the restrictions of $P'$ and $P$ at corresponding windows of $\mathbb
  E_{n+1}$ and $\mathbb E_n$ agree. Those windows carry identical symbols by
  Property~\ref{prop:de-bruijn-tail} of Corollary~\ref{l:de-bruijn-props}, and
  the substitution is the same map on both sides. The argument is given in the
  forwarding case of the proof of Lemma~\ref{l:gamma-pres-justs}, and rests on
  the purity of $P$ in the iterations of $\ell$. A pre-justification's predicate
  is pure: it is $\valres(w)$, the conjunction of the branching conditions
  accumulated along $\po$ up to $w$ per Definition~\ref{def:gen-es}, and each
  conjunct is $\evalreg{b}{\rho}$ for a branching event, whose symbols lie in
  $\Sigma_{\ell\mapsto k}$ for the iteration $k$ of that event by
  Lemma~\ref{l:episodic-symb}, so every conjunct is pure in one iteration or in
  $\Sigma_{\ell\mapsto\emptyset}$. Forwarding preserves purity: it applies
  $g=[\val(e_2)\mapsto\val(e_1)]$, which within an iteration maps a conjunct
  pure in it to another, and for $e_1$ in the $k$-th iteration and $e_2$ in the
  $k+1$-st maps a conjunct pure in the $k+1$-st to one pure in the $k$-th,
  moving it between factors rather than mixing them.

  \medskip \hyperref[def:elab-we]{\textbf{Write-elision}} $\elab{we}$ does not
  modify the predicate, making Diagram~\ref{fig:restrict-elabs} commute
  trivially.

  \medskip \hyperref[def:elab-lift]{\textbf{Lifting}} $\elab{lift}$ forms
  $\evalenv{P_1}{\Lambda}\vee P_2$, and passes on the write $w_2$ and the
  dependency set $D_2$ of its second premise unchanged, so the two rows agree on
  those. For the predicate, let $\Lambda$ map $\Sigma_{\ell\mapsto i}$ onto
  itself. By the two properties of restriction noted after
  Lemma~\ref{l:restrict-props}, \[ \restrict{\left(\evalenv{P_1}{\Lambda}\vee
  P_2\right)}{i}~\equiv~
  \evalenv{\restrict{P_1}{i}}{\Lambda}\vee\restrict{P_2}{i} \] \noindent which
  is the predicate the bottom row forms from $\restrict{P_1}{i}$ and
  $\restrict{P_2}{i}$. The side conditions of Definition~\ref{def:elab-lift},
  closed relabel-equivalence of the two writes and of the origins of $D_1$,
  carry to the restricted predicates by Lemma~\ref{l:restrict-relabeq}, so the
  bottom row is again a lifting and Diagram~\ref{fig:restrict-elabs} commutes.

  Neither step looks inside the predicates: unlike the other cases, Lifting
  needs no separation of $P_1$ and $P_2$ by iteration, and the restriction
  commutes with the disjunction whether or not the two premises diverge at
  branches of several iterations.

  The hypothesis on $\Lambda$ holds where the lifted writes and the branch they
  follow lie in one iteration of $\ell$, as the symbols $\Lambda$ relates are
  then read in that iteration by Lemma~\ref{l:episodic-symb}.

  \medskip \hyperref[def:elab-str]{\textbf{Strengthening}} $\elab{str}$ takes
  $j_1\colon(P_1,D)\justifies^\delta w$ to $j\colon(P,D)\justifies^\delta w$,
  where $P$ is closed as $P=P\wedge P_1\wedge\bigwedge\limits_{e\in S}v(e)$ for
  the origins $S=\origin{\symbols(P)}\setminus\origin{\symbols(P_1)}$ it newly
  constrains. The write and the dependency set are unchanged, so the two rows
  agree on those, and the bottom row is again a strengthening. Its side
  conditions are on $S$, on $\remap_\delta$ and on the side condition
  $w\not\ppo^P_\delta e$ of Definition~\ref{def:elab-str}, and that condition
  carries to $\restrict{P_1}{i}$ by cases on the origin $e$. Where $e$ is read
  in the iteration of $w$, Claim~\ref{restrict-ppo:same} of
  Lemma~\ref{l:restrict-ppo} gives the equivalence. Where $e$ is read in an
  earlier iteration, $w\not\ppo_\delta e$ holds under either predicate, as
  $\ppo$ refines $\sqsubseteq$ and $e$ does not follow $w$. Where $e$ is read in
  a later iteration, Remark~\ref{r:str-earlier} shows the strengthened
  justification is not used in a consistent execution. The predicate the bottom
  row conjoins is the part of $P$ over $\Sigma_{\ell\mapsto i}$.

  What the two rows need not share is the rest of $P$. Restriction keeps the
  value restrictions $v(e)$ of origins $e\in S$ read in the iteration of $w$,
  whose symbols lie in $\Sigma_{\ell\mapsto i}$, and discards those of origins
  read elsewhere, and with them the edges $(e,w)$ that those origins contribute
  to $\DP$ on freezing. We argue that no ordering is lost, using that
  strengthening only adds $\DP$-edges: $P$ retains $P_1$ as a conjunct, so
  $\symbols(P)\supseteq\symbols(P_1)$ and freezing adds the edges from $S$ and
  removes none.

  For an origin $e$ read in an earlier iteration of $\ell$ than $w$,
  Remark~\ref{r:str-earlier} gives $(e,w)\in{(\ppoord\cup\DP)}^+$ in the
  execution without the strengthening, so the edge orders nothing that execution
  leaves unordered. For an origin read in a later iteration, the same remark
  shows that the strengthened justification is not used in any consistent
  execution. Origins read before the loop lie in $\Sigma_{\ell\mapsto i}$, so
  their conjuncts are retained.

  \medskip \hyperref[def:elab-weak]{\textbf{Weakening}} $\elab{weak}$ takes
  $j_1\colon(P'\wedge P,D)\justifies^\delta w$ to
  $j\colon(P',D)\justifies^\delta w$, where the global guarantees imply the
  conjunct it removes, $\Omega\implies P$. The write and the dependency set are
  unchanged, so the two rows agree on those. Restriction is monotone, so
  $\restrict{(P'\wedge P)}{i}\implies\restrict{P'}{i}$, and the bottom row is
  again a weakening as $\restrict{(P'\wedge P)}{i}$ is $\restrict{P'}{i}$
  conjoined with something the global guarantees imply. Two cases give that.

  If the removed conjunct is over $\Sigma_{\ell\mapsto i}$, then
  $\restrict{(P'\wedge P)}{i}\equiv\restrict{P'}{i}\wedge P$: a predicate $p$
  over $\Sigma_{\ell\mapsto i}$ is implied by $P'\wedge P$ exactly when
  $P\implies p$ is implied by $P'$, and $P\implies p$ is again over
  $\Sigma_{\ell\mapsto i}$. The bottom row removes the same conjunct $P$, which
  $\Omega$ implies.

  If the removed conjunct shares no symbol with $P'$, then $\restrict{(P'\wedge
  P)}{i}\equiv\restrict{P'}{i}\wedge\restrict{P}{i}$, as the two restrictions
  constrain disjoint vocabularies, and the bottom row removes $\restrict{P}{i}$,
  which $\Omega$ implies through $P$.

\end{proof}

Using the above properties we show that the restricted predicate
$\restrict{P}{i}$ suffices for the preserved program order $\ppo^P_\delta$
between events in the same iteration of the loop.

\begin{lemma}\label{l:restrict-ppo}

  Let $e_1$ and $e_2$ be events in the same loop $\ell$,
  $\ell\in\loopfun(e_1)\cap\loopfun(e_2)$.

  \begin{enumerate}

    \item\label{restrict-ppo:same} Where the two events lie in one iteration,
      $i=\iter(e_1)(\ell)=\iter(e_2)(\ell)$, then $e_1\ppo^P_\delta e_2$ iff
      $e_1\ppo^\restrict{P}{i}_\delta e_2$.

    \item\label{restrict-ppo:other} For any $i$, $\pposync$ is the same relation
      under $P$ and under $\restrict{P}{i}$, every pair of $\ppoalias^P$ is one
      of $\ppoalias^\restrict{P}{i}$, and every pair of
      $\ppormw^\restrict{P}{i}$ is one of $\ppormw^P$.

  \end{enumerate}

\end{lemma}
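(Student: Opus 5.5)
The plan is to unfold $\ppo^P_\delta=\remap_\delta(\pposync\cup\ppormw\cup\ppoalias)$ and treat the three constituents one at a time, establishing Claim~\ref{restrict-ppo:other} first and deriving Claim~\ref{restrict-ppo:same} from it. Since $\remap_\delta$ depends only on the forwarding context $\delta$ and not on the predicate, it commutes with any comparison we make between the relations obtained under $P$ and under $\restrict{P}{i}$. It therefore suffices to compare the base relations before remapping, and then apply $\remap_\delta$ to both sides.

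For Claim~\ref{restrict-ppo:other}, $\pposync$ is built only from $\po$, the diagonals on memory-order classes, and the restriction away from $\Fences\cup\Branches$. No predicate appears in it, so it is literally the same relation under $P$ and $\restrict{P}{i}$. For $\ppoalias$, the membership test is satisfiability of $P\wedge\loc(e_1)=\loc(e_2)$. Because $P\implies\restrict{P}{i}$, any $f$ witnessing satisfiability under $P$ also witnesses it under $\restrict{P}{i}$, so every pair of $\ppoalias^P$ is a pair of $\ppoalias^{\restrict{P}{i}}$. For $\ppormw$, the test is validity of $c\exeq_{P'}\top$ for the condition $c$ of a $\pormw$ triple. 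Validity under the weaker $\restrict{P}{i}$ implies validity under the stronger $P$. Composition with $\pposync$ is unchanged, so every pair of $\ppormw^{\restrict{P}{i}}$ is a pair of $\ppormw^P$. All three inclusions use only monotonicity of restriction, with no appeal to episodicity.

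For Claim~\ref{restrict-ppo:same}, I would upgrade these one-sided inclusions to equivalences on pairs within the $i$-th iteration, using Lemma~\ref{l:restrict-props}. For $\ppoalias$, the third line of Lemma~\ref{l:restrict-props} gives equivalence of the satisfiability test whenever both events lie in iteration $i$. For $\ppormw$, the condition $c$ of a \cas{} is $\alpha=\evalreg{\expr_1}{\rho}$, and by Lemma~\ref{l:episodic-symb} it lies over $\Sigma_{\ell\mapsto i}$ because the read and the branch sit in the same iteration. The fourth line of Lemma~\ref{l:restrict-props} then gives $c\exeq_P\top$ iff $c\exeq_{\restrict{P}{i}}\top$. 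For \fadd{}, $c=\top$ and there is nothing to show.

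The main obstacle is that $\ppo$ is a closure, so a derivation of $e_1\ppo e_2$ may route through intermediate events, for instance the $\pposync$ component of $\ppormw$, that lie in other iterations or outside the loop. There the equivalence fails, and only the inclusions of Claim~\ref{restrict-ppo:other} hold. I would handle this by noting that the only predicate-sensitive atoms are the $\ppoalias$ pair and the rmw triple $(e_r,c,e_w)$ itself. I would then argue that for an rmw pair occurring in a derivation between two events of iteration $i$, the read and write events are themselves in iteration $i$, because $\iter$ is monotone along $\po$ (Condition~\ref{iter:mono}) and the endpoints of the derivation fix the iteration. With that in hand, the equivalences apply atom by atom, and closing under $\remap_\delta$ gives the claim.
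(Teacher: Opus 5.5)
Your argument is the paper's: $\pposync$ reads no predicate and $\remap_\delta$ does not depend on $P$. Claim~2 follows from $P\Rightarrow\restrict{P}{i}$, with the $\ppoalias$ and $\ppormw$ inclusions running in opposite directions. Claim~1 follows from Lemma~\ref{l:restrict-props} once Lemma~\ref{l:episodic-symb} places the \cas{} condition over $\Sigma_{\ell\mapsto i}$. The paper's proof consists of exactly these steps and does not raise your ``main obstacle'' at all.

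That extra paragraph, however, is justified incorrectly. $\ppo$ in Definition~\ref{def:ppo} is not transitively closed, so the only intermediate events are the RMW partners inside the two compositions defining $\ppormw$, and these do not lie between the endpoints. A pair $(e_1,e_r)$ from $\pposync;\{(e_w,e_r)\}$ has its intermediate write $e_w$ $\po$-after both endpoints. A pair $(e_w,e_2)$ from $\{(e_w,e_r)\};\pposync$ has its intermediate read $e_r$ $\po$-before both. Condition~\ref{iter:mono} therefore bounds the iteration of the intermediate event on one side only.

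What you actually need is that the condition $c$ of the triple lies over $\Sigma_{\ell\mapsto i}$. In the first shape this holds because the read $e_r$, whose symbol $c$ mentions, is itself an endpoint. In the second it requires that no boundary drawn by $\iter$ falls between the read and the write of one RMW. That is an assumption, not a consequence of monotonicity. The paper makes it tacitly, in the proof of Lemma~\ref{l:episodic-symb}, where the branching event of a \cas{} is placed in the iteration of its read. State it explicitly and your argument goes through.

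A similar caveat, which the paper also passes over, applies to pushing $\remap_\delta$ through in Claim~1. An iteration-$i$ pair of the remapped relation may stem from a base pair with an event forwarded or elided across a boundary, and there Lemma~\ref{l:restrict-props} does not apply. This is harmless for Claim~2, which compares the base relations globally.
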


\begin{proof}

  The proof proceeds over the Definition~\ref{def:ppo} of $\ppo$. Only
  $\ppormw$ and $\ppoalias$ are parametric in $P$, and $\remap$ is not.

  Claim~\ref{restrict-ppo:same}. That $e_1\ppormw^P e_2$ iff
  $e_1\ppormw^\restrict{P}{i}e_2$ follows from $c\equiv_P\top$ iff
  $c\equiv_\restrict{P}{i}\top$ in Lemma~\ref{l:restrict-props} and $c$ being
  generated from expressions over symbols from $\Sigma_{\ell\mapsto i}$ only in
  the semantics of $\cas$ and $\fadd$. That $e_1\ppoalias^Pe_2$ iff
  $e_1\ppoalias^\restrict{P}{i}e_2$ from Lemma~\ref{l:restrict-props} alone.

  Claim~\ref{restrict-ppo:other} holds irrespective of the two events.
  $\pposync$ reads memory order annotations and not the predicate.
  $P\implies\restrict{P}{i}$, so a satisfiable $P\wedge\loc(e_1)=\loc(e_2)$ is a
  satisfiable $\restrict{P}{i}\wedge\loc(e_1)=\loc(e_2)$, which is $\ppoalias$,
  and a condition $c$ entailed by $\restrict{P}{i}$ is entailed by $P$, which is
  $\ppormw$. The two inclusions run in opposite directions, which is why
  Claim~\ref{restrict-ppo:same} cannot be had for a pair whose events lie in
  different iterations: there the restriction weakens the predicate, and the two
  relations move apart rather than together.

\end{proof}

\begin{remark}[Nested loops]\label{r:restrict-nested}

  The hypothesis $\ell\in\loopfun(e_1)\cap\loopfun(e_2)$ holds in the presence
  of nested loops with restrictions as follows:
  Let $\ell\prec\ell'$ with $\ell$ episodic, let $e_1$ lie in the $k$-th
  iteration of $\ell$ but before the inner loop $\ell'$, and let $e_2$ lie in
  the $m$-th iteration of $\ell'$ within that same iteration of $\ell$.

  Then $\ell$ in Lemma~\ref{l:restrict-ppo} means the outer loop
  $\ell\in\loopfun(e_1)\cap\loopfun(e_2)$ and
  $\iter(e_1)(\ell)=\iter(e_2)(\ell)=k$, so $i=k$. By
  Condition~\ref{iter:nesting} no iteration of $\ell'$ straddles a boundary of
  $\ell$, so every symbol $e_2$ reads inside $\ell'$ has
  $\iter(\origin\alpha)(\ell)=k$ and lies in $\Sigma_{\ell\mapsto k}$, which is
  what Lemma~\ref{l:restrict-props} asks of the expressions it compares.

  Read of the inner loop the hypothesis fails, as
  $\ell'\notin\loopfun(e_1)$ and $\iter(e_1)(\ell')$ is undefined. Falling
  through to $i=\emptyset$ would not recover the statement:
  $\restrictloop{P}{\ell'}{\emptyset}$ discards every conjunct mentioning a
  symbol read inside $\ell'$, among them the symbols of $\loc(e_2)$ and
  $\val(e_2)$, so the expressions compared are not over
  $\Sigma_{\ell'\mapsto\emptyset}$ and Lemma~\ref{l:restrict-props} does not
  apply. Weakening $P$ only adds satisfying assignments, so $\ppoalias$ is
  preserved but need not be reflected: where $P$ pins $\loc(e_2)$ to a symbol
  read in $\ell'$ and $\loc(e_1)$ to another location,
  $\loc(e_1)=\loc(e_2)$ is unsatisfiable with $P$ and satisfiable with
  $\restrictloop{P}{\ell'}{\emptyset}$.

  The same reading applies to a boundary-crossing pair of a single loop, with
  $\iter(e_1)(\ell)=k$ and $\iter(e_2)(\ell)=k+1$. Such a pair gets
  Claim~\ref{restrict-ppo:other} of Lemma~\ref{l:restrict-ppo} and no
  equivalence, and this is why: a restriction that holds the factor of neither
  iteration weakens the predicate, and $\ppoalias$ and $\ppormw$ answer a
  weakening in opposite directions, the first gaining pairs and the second
  losing them.
\end{remark}

Using the above result, the restricted predicate $\restrict{P}{i}$ is sufficient
to establish the predecessor $\pred_\delta(\cdot, P)$ relation between events in
the same iteration of the loop.

\begin{lemma}\label{l:restrict-pred}

  For any two events $e_1$ and $e_2$,
  \[
    e_1\in\pred_\delta(e_2, P)
    \quad\text{iff}\quad
    e_1\in\pred_\delta(e_2, \restrict{P}{i}),
  \]
  where $\ell$, as in Lemma~\ref{l:restrict-ppo}, is an episodic loop with
  $\ell\in\loopfun(e_1)\cap\loopfun(e_2)$ and
  $i=\iter(e_1)(\ell)=\iter(e_2)(\ell)$.

\end{lemma}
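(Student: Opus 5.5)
I would unfold the definition of $\pred_\delta(e_2,P)$ from Definition~\ref{def:ppo} and reduce both of its clauses to Lemma~\ref{l:restrict-ppo}. By definition, $e_1\in\pred_\delta(e_2,P)$ holds iff three things hold. First, $e_1\ppo^P_\delta e_2$. Second, $e_1\neq e_2$. Third, every $e''$ with $e_1\ppo^P_\delta e''\ppo^P_\delta e_2$ is $e_1$ or $e_2$. The inequality does not mention the predicate. Claim~\ref{restrict-ppo:same} of Lemma~\ref{l:restrict-ppo} applies to $e_1,e_2$ because they share loop $\ell$ and iteration $i$, so $e_1\ppo^P_\delta e_2$ iff $e_1\ppo^{\restrict{P}{i}}_\delta e_2$. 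The whole work therefore lies in the quantified betweenness clause.

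For that clause, I would argue that any candidate $e''$ strictly between $e_1$ and $e_2$ under either predicate also lies in the $i$-th iteration of $\ell$. Since $\ppo$ refines $\po$ (it is a remapped subrelation of $\po$), such an $e''$ satisfies $e_1\po e''\po e_2$. By Monotonicity of $\iter$ (Condition~\ref{iter:mono}), it then has $\iter(e'')(\ell)=i$, provided $\ell\in\loopfun(e'')$ and $e''$ agrees with $e_1,e_2$ on the iteration counts of the enclosing loops. Membership in $\ell$ follows because $e''$ is $\po$-sandwiched inside a single iteration. Agreement on enclosing loops follows from Compatibility with nesting (Condition~\ref{iter:nesting}). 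Once $e''$ is in iteration $i$, Claim~\ref{restrict-ppo:same} applies to both pairs $(e_1,e'')$ and $(e'',e_2)$. This gives $e_1\ppo^P_\delta e''\ppo^P_\delta e_2$ iff the same holds under $\restrict{P}{i}$. The set of intermediate events is thus identical under both predicates, and the universally quantified clause transfers in both directions.

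The main obstacle is the $\remap_\delta$ in $\ppo$. It can relate an event to the image of a forwarded or elided event, which may lie in another iteration. That would break the claim that every intermediate $e''$ stays within iteration $i$, and Claim~\ref{restrict-ppo:same} would not be available for it. I would handle this in two steps.

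\begin{itemize}
\item I would first note that $\remap_\delta$ does not depend on the predicate, as observed in the proof of Lemma~\ref{l:restrict-ppo}. So it suffices to compare the three base relations $\pposync$, $\ppormw$ and $\ppoalias$ on the relevant pairs.
\item For a pair straddling a boundary, I would fall back on Claim~\ref{restrict-ppo:other} of Lemma~\ref{l:restrict-ppo}. To rule out an intermediate point that exists under one predicate but not the other, I would use the boundary-crossing forwarding analysis of Appendix~\ref{s:app-proofs-fwd}. That analysis excludes store forwarding across the boundary and pins store-store values to $\Sigma_{\ell\mapsto\emptyset}$.
\end{itemize}

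If load forwarding across a boundary still leaks through, I would restrict the statement as Lemma~\ref{l:restrict-elabs} does, excepting load forwarding across a loop boundary.
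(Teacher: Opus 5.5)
Your first two paragraphs are the paper's proof. You unfold $\pred_\delta$ and transfer $e_1\ppo^P_\delta e_2$ by Claim~\ref{restrict-ppo:same} of Lemma~\ref{l:restrict-ppo}. You then note that an event $\ppo$-between $e_1$ and $e_2$ is $\po$-between them because $\ppo$ refines $\po$, so it lies in iteration $i$, and you apply the same claim to both halves of the betweenness condition. The paper stops there, taking the refinement at face value, and on those terms your argument is complete.

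The trouble is the final part, which you present as the main obstacle. Your remark that $\remap_\delta$ can relate events outside the $\po$-interval targets exactly the step the paper asserts without argument; with a boundary-crossing forwarding edge, such an event can even lie outside iteration $i$. But the remedy you offer does not close this.
\begin{itemize}
\item Claim~\ref{restrict-ppo:other} of Lemma~\ref{l:restrict-ppo} gives only one-sided inclusions, running in opposite directions: the alias component can only grow under restriction, and the RMW component can only shrink. So for a straddling pair it yields neither $\ppo^P_\delta\subseteq\ppo^{\restrict{P}{i}}_\delta$ nor the converse, and the sets of intermediate events may differ in either direction. Remark~\ref{r:restrict-nested} says as much.
\item Appendix~\ref{s:app-proofs-fwd} analyses the substitution forwarding applies to predicates and values, and the transport of $\delta$ along $\gamma$. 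It does not decide whether the base relations on a straddling pair are invariant under restriction.
\item The residual case there, load forwarding across a boundary, is the one Example~\ref{ex:fwd-four-algorithms} actually finds in the spinlock and seqlock. Excepting it, as you suggest, proves a weaker lemma than the one stated.
\end{itemize}

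The worry also applies just as much to $(e_1,e_2)$ itself. Its membership in $\ppo$ also passes through $\remap_\delta$, yet you invoke Claim~\ref{restrict-ppo:same} for that pair without the same caution. Either commit to the refinement as the paper does, or give a real argument that $\remap_\delta$ never identifies events across a boundary of $\ell$. Observation~\ref{o:fwd-crosses-boundaries} indicates that episodicity alone does not provide this.
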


\begin{proof}

  An event $\ppo$-between $e_1$ and $e_2$ lies between them in $\po$, as $\ppo$
  refines $\po$, and so in the same iteration $i$; the three events are
  therefore covered by Claim~\ref{restrict-ppo:same} of
  Lemma~\ref{l:restrict-ppo} alike.

  Let $e_1\in\pred_\delta(e_2, P)$, then $e_1\ppo^P_\delta e_2$ by definition of
  $\pred$, and thus $e_1\ppo^\restrict{P}{i}_\delta e_2$ by
  Lemma~\ref{l:restrict-ppo}. Let there be an event $e$ with
  $e_1\ppo^\restrict{P}{i}_\delta e$ and $e\ppo^\restrict{P}{i}_\delta e_2$.
  Thus $e_1\ppo^P_\delta e$ and $e\ppo^P_\delta e_2$, so that $e_1=e$ or $e=e_2$
  by definition of $\pred$. The converse direction follows analogously.

\end{proof}

\begin{lemma}\label{l:restrict-relabeq}

  Let $e_1$ and $e_2$ be events with justifications
  $j_1\colon(P_1,D_1)\justifies^\delta e_1$ and
  $j_2\colon(P_2,D_2)\justifies^\delta e_2$, let $\ell$ be an episodic loop with
  $\ell\in\loopfun(e_1)\cap\loopfun(e_2)$, and let
  $i=\iter(e_1)(\ell)=\iter(e_2)(\ell)$, then

  \begin{equation}
    \begin{array}{rcl}
      P_1\colon e_1\relabeq\Lambda\delta P_2\colon e_2
      & \text{iff} &
      \restrict{P_1}{i}\colon e_1\relabeq\Lambda\delta\restrict{P_2}{i}\colon e_2 \\
      P_1\colon e_1\relabeq\Lambda\delta^* P_2\colon e_2
      & \text{iff} &
      \restrict{P_1}{i}\colon e_1\relabeq\Lambda\delta^*\restrict{P_2}{i}\colon e_2 \\
    \end{array}
  \end{equation}

\end{lemma}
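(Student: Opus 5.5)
The plan is to prove the first equivalence by unfolding Definition~\ref{def:rel-eq} into its cases on event types, and to get the second by induction on the recursion through $\pred_\delta$.

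\textbf{The non-closed equivalence.} For fences the equivalence is $\top$ on both sides, and for mismatched types it is $\bot$ on both sides. For reads, writes, allocations and deallocations, $P_1\colon e_1\relabeq\Lambda\delta P_2\colon e_2$ is a conjunction of relabel equivalences on expressions. Each conjunct compares $\loc(e_k)$, and where present $\val(e_k)$, via
\[
  \exists\expr.\left(\evalenv{P_1\Rightarrow\expr_1=\expr}{\Lambda}\wedge(P_2\Rightarrow\expr_2=\expr)\equiv_{\psi_\delta}\top\right).
\]
This is exactly the shape of the last line of Lemma~\ref{l:restrict-props}. To invoke it, I must check that the compared expressions lie over $\Sigma_{\ell\mapsto i}$. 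Both events lie in iteration $i$ of $\ell$, so the events' own expressions satisfy this by Lemma~\ref{l:episodic-symb}. For the copies recorded by justifications, Corollary~\ref{c:episodic-symb-copy} gives the same bound, except under cross-boundary load forwarding; I would set that case aside as in Lemma~\ref{l:restrict-elabs}.

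Two further points need care. First, $\Lambda$ must preserve $\Sigma_{\ell\mapsto i}$. I would argue this from $\Lambda$ relating symbols of the two branches read in the same iteration, using the same hypothesis as in the Lifting case. Second, the witness $\expr$ should be chosen over $\Sigma_{\ell\mapsto i}$. This is possible because $\expr$ can always be taken to be $\evalenv{\expr_1}{\Lambda}$ or $\expr_2$ up to equivalence. With those two points in place, the restriction commutes with $\Lambda$ by the property noted after Lemma~\ref{l:restrict-props}, and each conjunct is equivalent under $P_k$ and under $\restrict{P_k}{i}$.

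\textbf{The closed equivalence.} I would proceed by well-founded induction on $\po$-depth; this is well-founded because $\poinv$ is well-founded in finite unravellings. The base condition and the head conjunct of the recursion follow from the first equivalence. The predecessor sets need two facts:
\[
  \pred_\delta(e_k,P_k)=\pred_\delta(e_k,\restrict{P_k}{i}),
\]
which is Lemma~\ref{l:restrict-pred}. That lemma applies because every predecessor is $\ppo$-before $e_k$, hence $\po$-before it, and lies in the same iteration $i$. The justification is that the predecessors of interest sit between a boundary and $e_k$: by Condition~\ref{episodic:events} an earlier-iteration event is separated from $e_k$ by intermediate events, so it cannot be an immediate predecessor. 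Once the predecessor sets agree, the induction hypothesis applies to each pair $e'_1,e'_2$, which again lie in iteration $i$.

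\textbf{Main obstacle.} I expect the hard part to be justifying that predecessors stay in iteration $i$. An event from an earlier iteration could be an immediate $\ppo$-predecessor if no intermediate event is $\ppo$-ordered between them, since Condition~\ref{episodic:events} only gives a path in $(\ppoord\cup\DP)^+$. My first attempt would be to bound this using the synchronisation-point-style boundary event. If that fails, I would weaken the claim for cross-boundary predecessors to the one-directional inclusions of Claim~\ref{restrict-ppo:other} of Lemma~\ref{l:restrict-ppo}, and argue that those events have $\loc$ and $\val$ outside both restrictions' discarded conjuncts.
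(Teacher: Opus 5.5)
Your decomposition is the paper's. You get the one-step equivalence from Corollary~\ref{c:episodic-symb-copy} and the last clause of Lemma~\ref{l:restrict-props}, and the closed one by induction through $\pred_\delta$ with Lemma~\ref{l:restrict-pred}. The gap is the step you flag yourself, and it does not close.

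Lemma~\ref{l:restrict-pred} needs both events in iteration $i$, but the recursion of the closed equivalence leaves that iteration in general. The $\po$-first event of iteration $i$ has all its predecessors in earlier iterations or before the loop. Your interim argument, that Condition~\ref{episodic:events} puts an intermediate event between iterations, is exactly what Observation~\ref{o:fwd-crosses-boundaries} denies: the spinlock and seqlock loops of Example~\ref{ex:fwd-four-algorithms} have $\pred$-adjacent events across a boundary.

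Outside the window the restriction genuinely changes $\pred$. Let iteration $i-1$ end with relaxed writes $a\colon\Writes~\pi~1$ and then $b\colon\Writes~\kappa~1$, through pointers read in that iteration and guarded by a branch on $\pi\neq\kappa$. Let iteration $i$ open with a releasing write $c$.
\begin{itemize}
\item Under $P$: since $P=\valres(c)$ entails $\pi\neq\kappa$, no alias edge orders $a$ before $b$, so $a,b\in\pred_\delta(c,P)$.
\item Under $\restrict{P}{i}$: the conjunct $\pi\neq\kappa$ is over $\Sigma_{\ell\mapsto i-1}$, so the restriction drops it. The alias edge $a\ppo b$ then appears, and $a\notin\pred_\delta(c,\restrict{P}{i})$.
\end{itemize}
Take the identity relabelling. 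The closed equivalence of $c$ with itself then fails under $P$ on the pair $(a,b)$, since no witness can equal both $\pi$ and $\kappa$ under $P$. Under $\restrict{P}{i}$ that pair is not among the predecessors compared at $c$.

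Neither of your fallbacks repairs this.
\begin{itemize}
\item A synchronisation point is not implied by episodicity. Where there is one, the recursion runs through it into iteration $i-1$, and the example applies with the synchronisation point in place of $c$.
\item Claim~\ref{restrict-ppo:other} of Lemma~\ref{l:restrict-ppo} gives inclusions in opposite directions for the alias and RMW components. $\pred$ is monotone in neither, because it needs an edge present and every intermediate absent; the example is an added alias edge deleting a predecessor.
\item The locations of $a$ and $b$ lie in exactly the vocabulary the restriction discards, so they are not ``outside the discarded conjuncts''.
\end{itemize}

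The paper's own proof applies Lemma~\ref{l:restrict-pred} only to predecessors with $\iter(e'_1)(\ell)=i$ and leaves the rest to ``an inductive argument'', so it does not settle this either. The second equivalence needs one of two things: a hypothesis confining the recursion to iteration $i$, or, for its use along $\gamma$, an argument that never shrinks the vocabulary to a single window. The latter is how Lemmas~\ref{l:gamma-pres-ppo} and~\ref{l:gamma-pres-pred} transport $\ppo$ and $\pred$.

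A smaller point in your first part: the witness need not be $\evalenv{\expr_1}{\Lambda}$ or $\expr_2$. Take $\expr_1=\alpha$, $\expr_2=\beta$, $P_1\equiv(\alpha=5)$, $P_2\equiv(\beta=5)$ and $\evalenv{\alpha}{\Lambda}\neq\beta$: the constant $5$ is a witness and neither expression is. Choosing the witness over $\Sigma_{\ell\mapsto i}$ therefore needs its own argument, or else you take the last clause of Lemma~\ref{l:restrict-props} as given, as the paper does.
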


\begin{proof}

  As $e_1$ and $e_2$ are the writes the two justifications record,
  Corollary~\ref{c:episodic-symb-copy} gives $\symbols(\loc(e_1)),
  \symbols(\loc(e_2)), \symbols(\val(e_1)),
  \symbols(\val(e_2)) \subseteq \Sigma_{\ell\mapsto i}$, so that $\exists\expr.\left(
  \evalenv{P_1\Rightarrow\expr_1=\expr}{\Lambda}
  \wedge(P_2\Rightarrow\expr_2=\expr)
  \equiv_{\hyperref[def:fwd-ctx]{\psi_\delta}}\top \right)$ iff \\
  $\exists\expr.\left(
  \evalenv{\restrict{P_1}{i}\Rightarrow\expr_1=\expr}{\Lambda}
  \wedge(\restrict{P_2}{i}\Rightarrow\expr_2=\expr)
  \equiv_{\hyperref[def:fwd-ctx]{\psi_\delta}}\top \right)$ by
  Lemma~\ref{l:restrict-props} for $\expr_1=\loc(e_1)$ and $\expr_2=\loc(e_2)$,
  and $\expr_1=\val(e_1)$ and $\expr_2=\val(e_2)$. Thus
  $P_1,e_1\relabeq\Lambda\delta P_2,e_2$ iff $\restrict{P_1}{i}\colon
  e_1\relabeq\Lambda\delta\restrict{P_2}{i}\colon e_2$. By
  Lemma~\ref{l:restrict-pred}, $e'_1\in\pred_\delta(e_1,P_1)$ iff
  $e'_1\in\pred_\delta(e_1,\restrict{P_1}{i})$ for $e'_1$ with
  $\iter(e'_1)(\ell)=i$ and similar for $e'_2$ and $e_2$. Thus, through an
  inductive argument the previous result generalises to complete relabel
  equivalences, so that $P_1\colon e_1\relabeq\Lambda\delta^* P_2\colon e_2$.

\end{proof}

\subsection{Forwarding across Loop Boundaries}\label{s:app-proofs-fwd}

Lemma~\ref{l:restrict-elabs} above covers the forwarding elaboration
$\elab{fwd}$, excepting load forwarding across a loop boundary, and
Lemma~\ref{l:gamma-pres-justs} below relies on that. This subsection supplies
the argument. We first observe that the episodicity criteria do not by
themselves prevent forwarding across a loop boundary,  and then take the three
shapes of Definition~\ref{def:fwd-rels} of the forwarding relations in turn.

\begin{observation}\label{o:fwd-crosses-boundaries}

  The conditions of Definition~\ref{def:episodic} of episodic loops do not
  preclude a forwarding edge between events of consecutive iterations.

\end{observation}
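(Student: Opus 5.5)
The observation is an existence claim, so I will prove it by exhibiting one episodic loop whose executions contain a forwarding edge $e_1\fwdrel e_2$ with $\iter(e_1)(\ell)=k$ and $\iter(e_2)(\ell)=k+1$. The edge must be of one of the three shapes of Definition~\ref{def:fwd-rels}. The natural candidate is load forwarding, $\Reads\times\Reads$, since that is the shape Lemma~\ref{l:restrict-elabs} singles out as the exception.

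\textbf{The example.} I will take a single-threaded loop whose body reads a location $x$ that is written only before the loop, for instance $x:=1;\ \code{while}(\reg{a}\neq 0)\{\reg{a}:=x;\}$.
\begin{itemize}
\item The register is reassigned before it is used in every iteration. The loop condition is evaluated at the semantic boundary drawn by $\iter$, after the read of the iteration it tests. So Condition~\ref{episodic:reg} holds.
\item The only read takes its value from the write before the loop, so Condition~\ref{episodic:mem} holds by case~\ref{episodic-casea}.
\item The branching condition mentions only the symbol read in the current iteration, so its restriction to symbols read before the loop is $\top$, giving Condition~\ref{episodic:cond}.
\end{itemize}

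\textbf{Condition~\ref{episodic:events}.} To ensure it, I will give the read acquire order, or place a release/acquire $\fadd$ at the head of the body as a synchronisation point in the sense of Definition~\ref{def:sp}. Then $\pposync$ orders every event of iteration $k$ before every event of iteration $k+1$.

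\textbf{The forwarding edge.} Next I will check that $e_1\xrightarrow{F'_j}e_2$ for the reads $e_1,e_2$ at the same program counter in consecutive iterations. The locations are literally $x$ on both sides, so $\loc(e_1)\equiv_{P\wedge\psi_\delta}\loc(e_2)$ is immediate. The delicate part is $e_1\in\pred_\delta(e_2,P)$: there must be $e_1\ppo e_2$ with no event $\ppo$-strictly between them. If the body contains nothing but the read, then $\ppoalias$ alone gives $e_1\ppo e_2$, because the two events access the same location. There are no intermediate memory events, and branches and fences are excluded from $\pposync$, so $e_1$ is a $\ppo$-predecessor of $e_2$.

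For $\fwdrel$ of shape $\Reads\times\Reads$ no memory-order restriction is imposed. So with a relaxed read the edge exists, and the elaboration $\elab{fwd}$ applies to the justification of any later write depending on $\val(e_2)$. To have such a write, I will add a store $y:=\reg{a}$ in the body or after the loop.

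\textbf{Main obstacle.} Conditions~\ref{episodic:events} and $\pred$ pull in opposite directions. Condition~\ref{episodic:events} needs a path across the boundary, and a synchronisation point would sit $\ppo$-between $e_1$ and $e_2$, destroying the predecessor property. I would therefore first try to discharge Condition~\ref{episodic:events} using $\ppoalias$ alone. Every event of the body accesses $x$ (or the write to $y$ is placed after the loop), so same-location ordering already orders consecutive iterations with no intermediate event. If that fails, I would make the read itself acquiring. Load forwarding $\Reads\times\Reads$ does not require $e_2$ to be relaxed, and the acquire on $e_1$ supplies the cross-boundary order through $\pposync$ directly, without introducing a middle event.
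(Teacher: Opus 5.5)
Your mechanism is the one the paper uses. Two reads of one literal location in consecutive iterations are ordered by $\ppoalias$. That meets Condition~\ref{episodic:events} in a single $\ppo$ step and leaves the reads $\pred$-adjacent, since branch events carry no $\ppo$. Load forwarding has no memory-order side condition, so an acquire on the read does not block it. The paper argues structurally: no condition of Definition~\ref{def:episodic} mentions $\pred$, and Condition~\ref{episodic:events} asks only for a path in ${(\ppoord\cup\DP)}^+$. It then names concrete instances, spinlock and seqlock, in Example~\ref{ex:fwd-four-algorithms}. A constructive proof is a reasonable alternative, but then the witness carries the whole argument, and yours does not work.

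The gap is that your loop never completes two iterations in any execution. It is single-threaded and $x$ is written only before the loop, so every read of $x$ returns $1$: once the post-loop store is present, $\ppoalias$ orders $x:=1$ before each read, and coherence forbids reading anything earlier. So $r_a\neq 0$ never fails. Since \smrd{} keeps only terminating executions, no execution contains the store $y:=r_a$ whose justification was to carry the edge. The episodicity conditions then hold only vacuously, and the edge lives at best in the justification set over the event structure, not in an execution as you claim. Vacuity is not harmless here. A never-exiting single-threaded loop that writes $x$ and reads it back in the next iteration would equally `witness' cross-boundary store forwarding, which the paper excludes by Condition~\ref{episodic:mem}.

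There are two smaller defects. The first test of the \code{while} reads $r_a$ before anything writes it, against Condition~\ref{episodic:reg}; use \code{do}--\code{while}. The in-body store is not an option either: a store to $y$ in iteration $k$ has no ${(\ppoord\cup\DP)}^+$ path to the read of $x$ in iteration $k+1$, whatever the annotations, so it violates Condition~\ref{episodic:events}.

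The repair is a second thread supplying the exit value, say $x:=0$, admitted by case~\ref{episodic-caseb} because nothing on the loop's thread reaches it. Executions with three or more iterations then exist. The reads of two non-final iterations both take $x:=1$ and are $\pred$-adjacent, so load forwarding between them is a satisfiable elaboration of the post-loop store's justification. Forwarding into the final read would equate a continuing value with an exiting one and be discarded as unsatisfiable, which is why you need a third iteration. This is the paper's spinlock instance.
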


By Definition~\ref{def:fwd-rels} of the forwarding relations, $e_1\fwdrel[j]e_2$
requires $e_1\in \pred_\delta(e_2,P)$, that is $e_1\ppo^P_\delta e_2$ with no
event $\ppo$-strictly between, together with
$\loc(e_1)\equiv_{P\wedge\psi_\delta} \loc(e_2)$ and a matching pair of event
types. Condition~\ref{episodic:events} of Definition~\ref{def:episodic} of
episodic loops orders events of distinct iterations by ${(\ppoord\cup\DP)}^+$,
that is by the transitive closure; it does not assert that any event lies
between them, and a single $\ppo$ step satisfies it while leaving $\pred$
intact. Conditions~\ref{episodic:reg} and~\ref{episodic:cond} constrain
expressions and branching conditions, and Condition~\ref{episodic:mem}
constrains $\rf$; none of them mentions $\pred$.

What does separate the iterations is an event lying strictly between them, which
is what a synchronisation point in the sense of Definition~\ref{def:sp}
provides. Synchronisation points are, however, sufficient and not necessary for
Condition~\ref{episodic:events}, as noted in Section~\ref{s:sync-points}, so a
loop may be episodic without one.

Note also that of the three shapes of Definition~\ref{def:fwd-rels} of the
forwarding relations, only store-load and store-store forwarding carry a
memory-order side condition. Load forwarding carries none, so an acquire
annotation on the reads does not exclude it.

\medskip\noindent We now take the shapes in turn. In each case the conclusion is
that the forwarding leaks no value across the boundary.

\medskip\noindent\textbf{Store forwarding.}
$\elab{fwd}$ applies the substitution $g=[\val(e_2)\mapsto\val(e_1)]$, replacing
the symbol read by $e_2$ with the value expression written by $e_1$. This is the
constraint that $\varphi_\rf$ of Definition~\ref{def:freeze} imposes when $e_2$
reads from $e_1$, so forwarding from a write is indistinguishable from reading
from it. Condition~\ref{episodic:mem} of Definition~\ref{def:episodic} of
episodic loops admits a read only from a $\po$-earlier write of the same
iteration, from a write before the loop, or from another thread under
case~\ref{episodic-caseb} or~\ref{episodic-casec}. A write of the preceding
iteration on the same thread is none of these, so the shape is excluded.

\medskip\noindent\textbf{Store-store forwarding.}
By Observation~\ref{o:elab-fwd-value-eq} the predicate $\varphi$ of
Definition~\ref{def:freeze} asserts the equality of values. In any execution in
which the shape occurs we therefore have $\val(e_1)\equiv\val(e_2)$. By
Lemma~\ref{l:episodic-symb}, $\val(e_1)$ is an expression over
$\Sigma_{\ell\mapsto i}$ and $\val(e_2)$ over $\Sigma_{\ell\mapsto i+1}$, and
these share only the symbols $\Sigma_{\ell\mapsto \emptyset}$ read before the
loop. An entailed equality between expressions over vocabularies that meet only
in $\Sigma_{\ell\mapsto \emptyset}$ has an interpolant over $\Sigma_{\ell\mapsto
\emptyset}$, by the same Craig interpolation that makes $\restrict{P}{i}$ well
defined in Appendix~\ref{s:app-proofs-restricted}. Both values are therefore
fixed by symbols read before the loop, or are constants, and the substitution
carries no symbol of the preceding iteration into the current one.

\medskip\noindent\textbf{Load forwarding.}
Here $\elab{fwd}$ replaces the symbol read by $e_2$ with the symbol read by
$e_1$. The same identification arises without forwarding whenever $e_1$ and
$e_2$ read from a common write $w$: then $\varphi_\rf$ contributes
$\val(e_1)=\val(w)$ and $\val(e_2)=\val(w)$, and hence $\val(e_1)=\val(e_2)$.
The two are thus indistinguishable in their constraints, and the shape
introduces no dependency that a read-from assignment could not.

Whether these shapes arise at all is a property of the program, and in the
algorithms this paper treats they do not, as Example~\ref{ex:fwd-four-algorithms}
checks.

\begin{example}\label{ex:fwd-four-algorithms}

  In the four algorithms of Section~\ref{s:identifying}, the shapes are excluded as
  follows.

  \begin{itemize}

    \item In RCU (Appendix~\ref{app:rcu}) the \fadd{} at the head of each
      iteration is a synchronisation point by Example~\ref{ex:rcu-sp}, and so
      lies $\ppo$-between the events of one iteration and those of the next. No
      pair across the boundary is $\pred$-adjacent.

    \item In hazard pointers (Appendix~\ref{app:hp}) the memory fence plays the
      same role.

    \item In spinlock (Appendix~\ref{app:spinlock}) the loop body is a single
      $\cas$. A failing $\cas$ contributes no write event, so consecutive
      iterations contribute their read events alone. These reference the same
      literal memory location, which is how
      Condition~\ref{episodic:events} is met, and are therefore
      $\pred$-adjacent: load forwarding applies across the
      boundary. It is harmless by the load-load case above. The shapes
      involving writes do not arise, as a failing iteration has none.

    \item In seqlock (Appendix~\ref{app:seqlock}) the read-side loop contains no
      write events at all, so only load forwarding can arise. The
      sample of the sequence counter closing one iteration and the sample
      opening the next are adjacent, as branching events are excluded from
      executions by Definition~\ref{def:executions}. Again the load-load case
      applies.

  \end{itemize}

  RCU and hazard pointers are thus excluded by a synchronisation point, and
  spinlock and seqlock by the load-load case.

\end{example}

\medskip\noindent The three cases above concern the substitution that
$\elab{fwd}$ performs on the predicate and on the location and value
expressions. $\elab{fwd}$ additionally extends the forwarding context $\delta$
by the pair $(e_1,e_2)$, and by Definition~\ref{def:fwd-ctx} $\remap_\delta$
then identifies $e_2$ with $e_1$, so that $\ppo^P_\delta$ is computed on the
quotient. Forwarding and write elision are the only elaborations that modify
$\delta$.

The effect of a boundary-crossing pair on $\delta$ is accounted for by the
commutation of the elaborations with the value restrictions,
Lemma~\ref{l:restrict-elabs}, which grounds the preservation and reflection of
justifications along $\gamma$ modulo restriction in
Lemma~\ref{l:gamma-pres-justs}. Extending Lemma~\ref{l:restrict-elabs} to
$\elab{fwd}$ leaves a single case. A pair $(e_1,e_2)$ spanning the $k$-th and
$k+1$-st iterations with $k\geq 1$ lies wholly within the domain of $\gamma$,
and is carried to the pair spanning the $k-1$-st and $k$-th, so $\remap$ in the
image mirrors $\remap$ in the domain. Only a pair from the first iteration into
the second has no image, $\gamma$ being undefined on the first. This is the same
residue as in Claim~\ref{gamma-dp:pres} of Lemma~\ref{l:gamma-pres-dp}, where
the $\DP$-edges $\gamma$ does not preserve are exactly those sourced in the
first iteration, and it is discharged in the same way, the posterior futures of
Definition~\ref{def:post-futures} discarding them once the history covers that
iteration.

\subsection{Mapping between Iterations}

In the following we define the partial embedding $\gamma$ from events in an
execution $\mathbb
E_{n+1}=\langle\prog\rangle_{n+1~\emptyset~\lambda\rho\,\varphi.\emptyset~\top}$ to
events in an execution $\mathbb
E_n=\langle\prog\rangle_{n~\emptyset~\lambda\rho\,\varphi.\emptyset~\top}$. $\gamma$ will
restrict an execution $\mathbb X$ to the iterations after the first, mapping the
$i+1$-st iteration of the loop to the $i$-th. $\gamma$ is undefined on the
events of the first iteration, which the de Bruijn indexing places at the start
of $\mathbb E_{n+1}$.

For convenience we define $\gamma$ as a two parted identity, with a first
matching events up until the beginning of the loop and a second matching events
from the $i+1$-st iteration in $\mathbb E_{n+1}$ to events from the $i$-th
iteration in $\mathbb E_n$. $\gamma$ is undefined on the events of the first
iteration in $\mathbb E_{n+1}$, which have no counterpart in $\mathbb E_n$.

From Definition~\ref{def:es-prefix} of event structure prefixing it follows
through an inductive argument over the generation of event structures that
events up to the start of the loop are identical in $\mathbb E_n$ and in
$\mathbb E_{n+1}$.

Recall that event structures $\mathbb E_n$ are defined from register state
$\rho$ inductively from the start of the program and from continuations $\kappa$
recursively from the end of the program. The event structure rooted in an event
$e$ is thus the continuation $\kappa_e$ accumulated from the end of the program
until $e$ applied to the register state $\rho_e$ accumulated from the start of
the program until $e$, that is $\kappa_e(\rho_e)$. It follows from a recursive
argument that $\kappa_e$ in $\mathbb E_n$ is identical to $\kappa_{e'}$ in
$\mathbb E_{n+1}$, where $e$ is the $\po$-least such event in $\mathbb E_n$ and
$e'$ the $\po$-least such event in $\mathbb E_{n+1}$. Using the register
Condition~\ref{episodic:reg} of episodic loops in Definition~\ref{def:episodic},
the register state $\rho_{e'}$ coincides with $\rho_{e}$ on all registers read
in the $i+1$-st iteration in $\mathbb E_{n+1}$ and in the $i$-th iteration in
$\mathbb E_n$, so that $\kappa_{e'}(\rho_{e'})=\kappa_e(\rho_e)$, that is the
event structures rooted in $e'$ for step-counter $n+1$ and $e$ for step-counter
$n$ are identical.

\begin{definition}\label{def:gamma-constr}\label{def:gamma}

  Given an execution $\mathbb X'$ over a set $X'$ of events in an event
  structure $\langle\prog\rangle_{n+1~\emptyset~\lambda\rho\,\varphi.\emptyset~\top}$,
  define $\gamma$ as the identity map on actions in $X'$ under the de
  Bruijn-style indexing of Definition~\ref{def:de-bruijn}, restricted to

  \begin{equation}
    \mathrm{dom}(\gamma)~=~\left\{e\in X'\mid\iter(e)(\ell)\neq 0\right\}
    \label{eq:gamma-dom}
  \end{equation}

  \noindent that is, to all events except those of the first iteration of
  $\ell$. Events before the loop, after the loop, and on threads not executing
  $\ell$ carry no $\iter(\cdot)(\ell)$ and are in the domain, where $\gamma$ is
  the identity.

  The de Bruijn-style indexing of Definition~\ref{def:de-bruijn} is derived
  from control labels, $\iota$ being built from $\iota_0$,
  $\innerloop(\loopfun(e))$ and $\thread(e)$. We therefore read $\gamma$ on
  the copy of a write carried by a justification, as in
  Definition~\ref{def:justs}, by the label that copy carries, $\gamma$ being
  the identity on its action; this is the sense in which $\gamma j$ and $\gamma
  w$ are written below for a justification $j$ of $w$.

\end{definition}

The first iteration is the only obstruction. The step-counter does not
contribute a second one: the de Bruijn indexing aligns the executions at their
\emph{ends}, so the truncation of the deepest unrolling in $\mathbb E_{n+1}$
matches that in $\mathbb E_n$, and the last iteration of $\mathbb E_{n+1}$ maps
to the last iteration of $\mathbb E_n$. For a program with several episodic
loops, $\gamma_\ell$ is undefined on the first iteration of each $\ell$
separately.

\begin{lemma}[$\gamma$ preserves expressions]\label{l:gamma-pres-expr}
  $\gamma$ preserves expressions, that is
  \begin{enumerate}
    \item $\loc(e)=\loc(\gamma e)$
    \item $\val(e)=\val(\gamma e)$

  \end{enumerate}
  \noindent for all events $e$ in the domain of $\gamma$.
\end{lemma}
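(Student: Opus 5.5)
We split the domain of $\gamma$ by Equation~\ref{eq:gamma-dom} into two parts and handle each with a different earlier fact.

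\emph{Events outside the loop.} Take first the events $e$ with $\iter(e)(\ell)$ undefined, on the same thread before the loop, or on threads not executing $\ell$. Here $\gamma$ is the identity on labels. By the inductive argument over Definition~\ref{def:es-prefix} given above, the events up to the start of the loop are identical in $\mathbb E_n$ and $\mathbb E_{n+1}$. Their register states $\rho_e$ therefore coincide, and so do $\loc(e)=\evalreg{\expr}{\rho_e}$ and $\val(e)$. Symbols introduced before the loop agree by Property~\ref{prop:de-bruijn-head} of Corollary~\ref{l:de-bruijn-props}, so the expressions are literally equal and not merely equal up to renaming.

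\emph{Events inside the loop.} For $e$ in the $(i+1)$-st iteration, $\gamma e$ is the event with the same program counter, event type and branching decisions in the $i$-th iteration of $\mathbb E_n$. By Definition~\ref{def:gen-es}, $\loc(e)$ and $\val(e)$ are $\evalreg{\expr}{\rho_e}$ for the \emph{same} syntactic $\expr$, since the program counters agree. It therefore suffices to show that $\rho_e$ and $\rho_{\gamma e}$ agree on every register occurring in $\expr$. We do this by induction along $\po$ within the iteration, reusing the invariant from the proof of Lemma~\ref{l:episodic-symb}. By Condition~\ref{episodic:reg}, each such register was written either earlier in the same iteration or before the loop.
\begin{itemize}
\item In the first case, the register's value is built by the same command from registers that agree by induction, together with symbols introduced by read and allocation events at corresponding positions. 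Those symbols are identical by Property~\ref{prop:de-bruijn-tail} of Corollary~\ref{l:de-bruijn-props}, because the two events agree on program counter and on the iteration counts of the other loops, the latter since $\gamma$ follows branching decisions.
\item In the second case, the value is the pre-loop one, handled as above.
\end{itemize}
The two \cas{} outcomes set $\top$ or $\bot$, and these agree because $\gamma$ follows branches. Nested inner loops are covered by Condition~\ref{iter:nesting}, and by the per-loop step-counter, which gives each iteration the same unravelling of the inner loops, so the induction extends through them.

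\emph{Main obstacle.} The delicate point is the first step inside the loop. Registers written in iteration $i$ (respectively $i-1$) are still present in $\rho$ but carry different symbols, and we must make sure none of them leaks into $\expr$. This is exactly where Condition~\ref{episodic:reg} must be read with respect to the boundary drawn by $\iter$ rather than the syntactic body, as noted in the proof of Lemma~\ref{l:episodic-symb}. A second subtlety is that the de Bruijn identification of symbols is stated for symbol-introducing events only. We therefore have to argue that every symbol in $\expr$ originates at such an event within $\Sigma_{\ell\mapsto i+1}$; Lemma~\ref{l:episodic-symb} guarantees this, after which Corollary~\ref{l:de-bruijn-props} identifies the symbols.
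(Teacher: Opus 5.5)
Inside the loop your argument is the paper's proof made explicit. The paper, too, identifies the symbols of corresponding symbol-introducing events through the de Bruijn indexing. It uses Condition~\ref{episodic:reg} to confine the registers an expression reads to the current iteration or to before the loop, and concludes by induction over the construction with $\gamma$ preserving $\pc$. One small point: for a read or allocation $e$ itself, $\val(e)$ resp.\ $\loc(e)$ is the fresh symbol and not some $\evalreg{\expr}{\rho_e}$. The paper settles that case first, directly by Property~\ref{prop:de-bruijn-tail}, and you should state it rather than leave it to the register argument.

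The genuine gap is in your case split. Definition~\ref{def:gamma-constr} puts the events \emph{after} the loop on the thread executing $\ell$ into $\mathrm{dom}(\gamma)$, with $\gamma$ the identity there. You list only events before the loop and events on other threads. Your argument for those, that the structures agree up to the start of the loop, does not reach post-loop events, whose register states carry symbols produced inside the loop. For example, \code{reclaim(s)} after the RCU loop takes \code{s} from the final iteration.

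To cover these events, carry your in-loop induction through the exit. Use that the de Bruijn indexing aligns the executions at their ends: a register last written in iteration $k\geq 1$ of $\mathbb X'$ is last written at the same position in iteration $k-1$ of $\gamma\mathbb X'$, and so holds the same expression.

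Even then one case stays open: a register whose last write lies in the first iteration of $\mathbb X'$, say one written only under a branch taken there. Condition~\ref{episodic:reg} restricts only accesses inside the loop, so nothing stops a post-loop event from reading such a register. Then $\loc$ or $\val$ of that event and of its image genuinely differ, because the image sees what the register held before the loop. Closing this needs an extra hypothesis excluding such reads, or restricting the claim to events that make none; it does not follow from Condition~\ref{episodic:reg}. The paper's proof, which appeals to that condition for every event in the domain, passes over the same point.
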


\begin{proof}

  By the de Bruijn-style indexing of symbols, an event $e$ in the domain of
  $\gamma$ introduces the same symbol as $\gamma e$, so that $\val(e)=\val(\gamma
  e)$ holds for all read events $e$ in the domain of $\gamma$, and likewise
  $\loc(e)=\loc(\gamma e)$ for allocation events. Any location expression
  $\loc(e)$ and value expression $\val(e)$ in an event $e$ in the domain of
  $\gamma$ will only read register values set in the same loop iteration or
  before the loop by Condition~\ref{episodic:reg} of episodic loops in
  Definition~\ref{def:episodic}. The proof then proceeds by induction over the
  derivation of event structures and thus construction of expressions. Using
  that $\gamma$ preserves $\pc$, expressions in values of registers match
  between the domain and the range of $\gamma$.

\end{proof}

\begin{corollary}\label{c:gamma-constr}

  For all events $e$ in the domain of $\gamma$, that is all events with
  $\iter(e)(\ell)\neq 0$ and all events outside the loop,

  \begin{equation}
    \begin{array}{rcll}
      \pc(\gamma e)&=&\pc(e)&\\
      \loopfun(\gamma e)&=&\loopfun(e)&\\
      \iter(\gamma e)(\ell)&=&\iter(e)(\ell)-1& and\\
      \gamma e\in\mathcal T&\iff&e\in\mathcal T&\text{for all event types}~
      \mathcal T\in\{\Reads,\Writes,\Allocs,\Deallocs,\Fences,\Branches\} \\
    \end{array}
    \label{gamma-constr:pc}
  \end{equation}
\end{corollary}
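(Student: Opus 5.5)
The plan is to read each of the four claims of Corollary~\ref{c:gamma-constr} directly off Definition~\ref{def:gamma-constr} of $\gamma$, together with the invariance of the de Bruijn index under $\gamma$. I would first recall what the index encodes. By Definition~\ref{def:de-bruijn}, $\iota$ is injective and determines three things: the base index $\iota_0(e)$, the innermost loop $\innerloop(\loopfun(e))$, and the thread $\thread(e)$. By construction, $\gamma e$ is the event of the $\mathbb E_n$ execution that carries the same index as $e$. So $\gamma$ preserves $\innerloop(\loopfun(e))$ and the thread outright.

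Next I would recover $\loopfun$ from the innermost loop. The loops nesting a line of code form a chain determined by the syntax, so $\loopfun(\gamma e)=\loopfun(e)$. For events outside $\ell$, before or after the loop or on other threads, $\gamma$ is the identity by Definition~\ref{def:gamma-constr}, and every claim is trivial there. The iteration claim is vacuous for them, since $\iter(\cdot)(\ell)$ is undefined.

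For events inside $\ell$, I would argue that $\pc$ and the event type are preserved, and that the iteration count drops by one. I would use the identification of rooted event structures established just before Definition~\ref{def:gamma-constr}. The structure rooted at the first event of iteration $i+1$ in $\mathbb E_{n+1}$ is identical, as $\kappa_{e'}(\rho_{e'})=\kappa_e(\rho_e)$, to the one rooted at the first event of iteration $i$ in $\mathbb E_n$. An event of the former and its counterpart in the latter are therefore generated by the same clause of Definition~\ref{def:gen-es} at the same label position. This gives equal $\pc$ and equal type in $\{\Reads,\Writes,\Allocs,\Deallocs,\Fences,\Branches\}$.

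Because the de Bruijn indexing counts from the end (Property~\ref{prop:de-bruijn-tail} of Corollary~\ref{l:de-bruijn-props}), aligning equal $\iota_0$ aligns the same position within the tail. The extra iteration of $\mathbb E_{n+1}$ sits at the front, so each tail iteration in $\mathbb E_{n+1}$ carries index one greater than its counterpart, giving $\iter(\gamma e)(\ell)=\iter(e)(\ell)-1$. For the iteration counts of loops nested in $\ell$, I would appeal to Condition~\ref{iter:nesting}: inner counts reset per iteration of $\ell$, so they agree.

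The main obstacle is the type claim for events that introduce no symbol, namely writes, deallocations, fences and branches. These carry no de Bruijn index of their own, so $\gamma$ on them has to be read through the identical rooted event structures. I would handle them by induction along $\po$ from the root of the iteration, matching events clause by clause. Branch outcomes stay consistent because the branching conditions are the same expressions, by Lemma~\ref{l:gamma-pres-expr}.
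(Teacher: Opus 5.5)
Your proposal is correct and takes essentially the paper's route. The paper states this corollary without a separate proof: its four clauses are read off the construction of $\gamma$ in Definition~\ref{def:gamma-constr} (the identity on actions under the de Bruijn indexing, sending the $i+1$-st iteration of $\ell$ to the $i$-th), together with the identification of rooted event structures just before that definition, and your argument unpacks exactly that. Two simplifications: executions exclude $\Branches\cup\Fences$, so those cases of the type claim are vacuous; and you should drop the appeal to Lemma~\ref{l:gamma-pres-expr} for branch outcomes, since $\gamma\mathbb X'$ follows the branching decisions of $\mathbb X'$ by construction, and that lemma's proof itself uses preservation of $\pc$, so citing it here risks circularity.
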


\begin{lemma}\label{l:gamma-restr-execs}

  Let $\mathbb X'~=~(X',J',\rf')$ be an execution in $\mathbb
  E_{n+1}=\langle\prog\rangle_{n+1~\emptyset~\lambda\rho\,\varphi.\emptyset~\top}$.
  Then $\gamma\mathbb X'=(\gamma X',\gamma J',\gamma\rf')$ is itself an
  execution in $\mathbb
  E_n=\langle\prog\rangle_{n~\emptyset~\lambda\rho\,\varphi.\emptyset~\top}$.

\end{lemma}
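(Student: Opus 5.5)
The plan is to verify, component by component, that $\gamma\mathbb X'=(\gamma X',\gamma J',\gamma\rf')$ meets Definition~\ref{def:executions} in $\mathbb E_n$. Throughout we use that $\gamma$ preserves program counter, event type and expressions (Corollary~\ref{c:gamma-constr} and Lemma~\ref{l:gamma-pres-expr}). We also use the identity of the event structures rooted at corresponding iteration heads, established just before Definition~\ref{def:gamma-constr} from Condition~\ref{episodic:reg}.

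\emph{Events.} First I show $\gamma X'\subseteq E_n$ and that it is conflict-free. Events before the loop coincide in $\mathbb E_n$ and $\mathbb E_{n+1}$ by the inductive argument over Definition~\ref{def:es-prefix}. For events of iteration $i+1\geq 1$, the rooted-continuation identity $\kappa_{e'}(\rho_{e'})=\kappa_e(\rho_e)$ sends them to events of iteration $i$ with the same label suffix. Conflict-freeness follows because $\gamma$ follows branching decisions. The value restriction of $\gamma e$ is the image of that of $e$ with the first-iteration conjuncts dropped. Condition~\ref{episodic:cond} guarantees these dropped conjuncts constrain no pre-loop symbol, so satisfiability transfers. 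Maximality is then argued as in Lemma~\ref{l:execs-transfer}: $\mathbb X'$ exits the loop, so its exiting branch decision has a $\gamma$-image. Any event of $E_n$ compatible with $\gamma X'$ would therefore have a preimage compatible with $X'$, contradicting maximality of $X'$. There is a subtlety: the first-iteration events are dropped rather than mapped, so $\gamma X'$ must be read as $\gamma$ of the domain plus the pre-loop part. I would check that $\mathbb E_n$ has exactly one fewer iteration, so nothing is missing.

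\emph{Justifications.} For each $j\in J'$ justifying an event in $\mathrm{dom}(\gamma)$, I show $\gamma j\in\mathbb J_n$ by induction on its generation (Definition~\ref{def:gen-just}). Pre-justifications map to pre-justifications, since $\valres$ and the origins of symbols in expressions are preserved, modulo restriction to the current iteration. The elaboration steps commute with $\restrict{}{i}$ by Lemma~\ref{l:restrict-elabs}, and $\gamma$ is the identity on restricted predicates by de Bruijn Property~\ref{prop:de-bruijn-tail} of Corollary~\ref{l:de-bruijn-props}. So each elaboration step on the $\mathbb E_{n+1}$ side is replayed on the $\mathbb E_n$ side. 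Two residual cases need the remarks. Strengthening by first-iteration origins is handled by Remark~\ref{r:str-earlier}, where the edge is already implied and may be dropped. Forwarding pairs sourced in the first iteration are handled by the load-forwarding discussion of Appendix~\ref{s:app-proofs-fwd}, where the identification is recovered via $\varphi_\rf$. Unique justification and the shared forwarding context then transfer directly.

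\emph{Read-from and axioms.} By Condition~\ref{episodic:mem}, no read in $\mathrm{dom}(\gamma)$ reads from a write of the first iteration of the same thread. Such a read draws only from same-iteration writes, pre-loop writes, other threads' writes unreachable from earlier iterations, or read-don't-modify-writes. So $\gamma\rf'$ restricted to the domain is total on the reads of $\gamma X'$ and injective. Freezing conditions (1)–(3) of Definition~\ref{def:freeze} carry over because $\varphi_\rf$ is expression-preserved and $P$ only loses first-iteration conjuncts. Acyclicity of $\DP\cup\ppoord\cup\rf$ and of $\ECO\cup\HB$ is reflected by $\gamma$, since $\ppo$ is preserved by Lemma~\ref{l:restrict-ppo}(\ref{restrict-ppo:same}) and $\DP$ only shrinks.

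I expect the justification step to be the main obstacle. The hard part is that first-iteration symbols occur in non-pure predicates through Lifting and Strengthening, so $\gamma j$ is literally ill-defined there. I would first attempt to replace such $j$ by its restriction $\restrict{P}{i}$ and argue via Lemma~\ref{l:restrict-elabs} and Remarks~\ref{r:str-earlier},~\ref{r:lift-earlier} that the restricted justification is generated in $\mathbb J_n$ and yields the same ordering.
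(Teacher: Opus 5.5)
Your treatment of events, value restrictions and $\rf$ follows the paper's argument. By Conditions~\ref{episodic:reg} and~\ref{episodic:mem}, nothing retained depends on the first iteration. The image's value restrictions are the old ones minus the first iteration's branching conditions, and $\rf'$ restricts. The paper's proof stops there. The correspondence of justifications is the separate Lemma~\ref{l:gamma-pres-justs}, and the axioms are Lemma~\ref{l:gamma-pres-consistency}.

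\textbf{Where your plan fails: the justification invariant.} When you go further, into $\gamma J'$, you track the wrong invariant, namely agreement of predicates after restriction to the current iteration. In fact $\valres(\gamma w')$ differs from $\valres(w')$ by exactly the conjunction $R'_0$ of the first iteration's branching conditions, $P'\equiv R'_0\wedge\gamma^{-1}P$. By contrast, $\restrict{P'}{i}$ also discards the conditions of iterations $1,\dots,i-1$. This breaks your argument twice.
\begin{enumerate}
\item A justification with predicate $\restrict{P}{i}$ is in general not in $\mathbb J_n$. Pre-justifications carry the full $\valres$, and no elaboration removes the branching conditions of earlier iterations: Lifting would need a relabel-equivalent write on the exit branch, and Weakening an $\Omega$ implying them. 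So replacing a problematic $j$ by its restriction produces something outside $\mathbb J_n$.
\item Agreement of restrictions does not transport side conditions on pairs straddling a boundary, such as $\pred$ and $\ppo$ between iterations, forwarding across a boundary, or strengthening by an origin of an earlier iteration. Claim~\ref{restrict-ppo:same} of Lemma~\ref{l:restrict-ppo} holds only within one iteration. Across a boundary, restriction moves $\ppoalias$ and $\ppormw$ in opposite directions. The same objection applies to your use of that claim for acyclicity, where cycles may cross boundaries.
\end{enumerate}
What carries these queries is Condition~\ref{episodic:cond} in the form $\restrict{R'_0}{\emptyset}\equiv\top$. For any $E$ avoiding first-iteration symbols, $P'\wedge E$ is satisfiable, and $P'$ entails $E$, exactly when the same holds of $\gamma^{-1}P$. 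This transparency of $R'_0$ (Invariant~\ref{gamma-pres-just:decomp}, Lemma~\ref{l:gamma-pres-ppo}) works for every pair in the domain of $\gamma$ without restricting predicates. Relatedly, Condition~\ref{episodic:cond} is not what makes the image's value restrictions satisfiable; dropping conjuncts already does that. Its role is to make the dropped part invisible.

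\textbf{Residual justifications.} You are right that $\gamma j'$ does not exist when the generation of $j'$ mixes the first iteration with later ones, but restriction is not the repair. The paper swaps a residual lift for its premise $j'_2$ and a residual strengthening for its premise $j'_1$; your use of Remark~\ref{r:str-earlier} is in this spirit. A forwarding edge from the first iteration into the second simply has no image, and the $\varphi_\rf$ remark of the load-forwarding case does not supply one. Each swap produces a different execution of $\mathbb E_{n+1}$. It agrees with the original only at the level of posterior futures at histories covering the first iteration. That is why the paper discharges residual justifications in Lemma~\ref{l:gamma-pres-post-futures} and not inside this lemma. Read literally, $\gamma J'$ is meaningful only for executions without residual justifications.

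\textbf{Two smaller points.}
\begin{itemize}
\item Exiting the loop is not a hypothesis of this lemma. Maximality of $\gamma X'$ should instead come from the identity of the event structures rooted at corresponding iteration heads, which also covers runs cut off by the step-counter.
\item Case~\ref{episodic-casec} of Condition~\ref{episodic:mem} does not by itself exclude a later read taking its value from a read-don't-modify-write of the first iteration. Your totality claim for $\gamma\rf'$ therefore needs more than that condition. The paper's one-line appeal to it has the same looseness.
\end{itemize}
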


\begin{proof}

  $\gamma\mathbb X'$ consists of the events of $\mathbb X'$ before the loop,
  together with the events of the iterations after the first, each mapped one
  iteration earlier. It is an execution of $\prog$ under step-counter $n$: it
  has one iteration fewer than $\mathbb X'$, and by
  Condition~\ref{episodic:reg} and Condition~\ref{episodic:mem} of
  Definition~\ref{def:episodic} of episodic loops no register value and no read
  of the iterations after the first depends on the first iteration, so the
  events dropped by $\gamma$ constrain none of the events retained. The value
  restrictions of the retained events are therefore satisfiable without the
  branching conditions of the first iteration, and the $\rf$-assignments of
  $\mathbb X'$ restrict to the retained events by Condition~\ref{episodic:mem},
  which admits no read in a later iteration from a write of the first.

  Note that, unlike the extension of an execution to a further iteration, the
  restriction requires no reachability assumption: the iterations of $\mathbb
  X'$ after the first are given, and $\gamma$ merely re-indexes them.

\end{proof}

This is what is meant by $\gamma$ embedding into the fixed point.
There is no family of maps relating executions of different event structures to
be reconciled: each $\gamma$ of Definition~\ref{def:gamma-constr} of $\gamma$ is
already a map between two executions of $\mathbb E_\infty$, and the embedding of
next enabled actions is the map it induces on their horizons by
Lemma~\ref{l:gamma-pres-post-futures}.

$\gamma$ reflects the $\DP$ relations, and preserves those outside the first
iteration of the loop in its domain. $\DP$ relations are defined by freezing
justifications per Definition~\ref{def:freeze} from the \emph{unrestricted}
predicates $P_j$, which accumulate the branching conditions of every earlier
iteration of the loop. The restricted predicates $\restrict{P}{i}$ of
Appendix~\ref{s:app-proofs-restricted} therefore do not suffice for $\DP$, as
they do for $\ppo$ in Lemma~\ref{l:restrict-ppo}. As the events of the first
iteration in $\mathbb E_{n+1}$ have no image under $\gamma$, neither do the
$\DP$-edges they source, and $\gamma$ does not preserve $\DP$ everywhere.
Lemma~\ref{l:gamma-pres-dp} below makes this precise. It suffices for the
narrowing argument, which uses $\DP$ only through the \emph{posterior} futures
of Definition~\ref{def:post-futures} of the posterior future set: there the
$\DP$-edges outside the domain of $\gamma$ are discarded, because their source
lies in the history.

The following lemma establishes that $\gamma$ preserves justifications in the
sense that

\begin{equation}
  \begin{array}{rcl}
  j'\colon(P',D')\justifies^{\delta'} w'
  &\text{iff}&
  j\colon(P,\gamma D')\justifies^{\gamma\delta'}\gamma w'
  \end{array}
\end{equation}

\noindent Here and throughout the remainder of this appendix, primed objects
are those of the domain of $\gamma$, in $\mathbb E_{n+1}$, and unprimed ones
their images in $\mathbb E_n$, following
Lemma~\ref{l:gamma-restr-execs}.

Due to the de Bruijn indexing, events in the domain and range of $\gamma$ have
identical actions, so that $\gamma D'=D'$, $\gamma\delta'=\delta'$, and $\gamma
w'=w'$. We keep $\gamma$ to make clear that the events lie in the range of
$\gamma$.

Even though $\gamma$ preserves symbols and expressions, $P$ is not a function of
$P'$ alone, at least because value restrictions of events in the range of
$\gamma$ contain constraints established in the previous iterations of the loop.
Recall from Definition~\ref{def:gen-just} that justifications are generated from
pre-justifications through refinement by elaborations. Pre-justifications depend
only on the actions, which are preserved verbatim by $\gamma$ assuming the de
Bruijn indexing of symbols. Predicates in pre-justifications are value
restrictions with constraints from previous loop iterations. Elaborations modify
existing justifications and depend on the predicates of the justifications they
modify. The proof of the following lemma uses the insight that elaborations only
use the restrictions of predicates to symbols introduced in the current
iteration of the loop or before the loop. The latter uses
Condition~\ref{episodic:cond} of Definition~\ref{def:episodic} of episodic
loops, and constitutes an invariant over the generation of justifications.

That $\gamma$ preserves and reflects justifications follows through an inductive
argument, where, formally, the image of the predicate of a justification under
$\gamma$ is itself inductively defined as follows. The preservation and
reflection is then established incrementally below.

\begin{lemma}[$\gamma$ preserves and reflects
  justifications]\label{l:gamma-pres-justs}\label{l:gamma-pres-just}

  \medskip\textbf{Pre-justifications.}
  Let $w'$ be a memory-effectful event outside the first iteration of $\ell$,
  where $\gamma$ is defined, in an execution $\mathbb X'~=~(X',J',\rf')$ in
  $\langle\prog\rangle_{n+1~\rho~\kappa~\varphi}$. Write $\Xi(w')$ for the
  dependency set Definition~\ref{def:gen-just} gives a pre-justification of
  $w'$: $\origin{x'}\cup\origin{\expr'}$ for a write $(w'\colon W~x'~\expr')$,
  $\origin{\expr'}$ for an allocation $(w'\colon\Allocs~\alpha'~\expr')$, and
  $\origin{\expr'}$ for a deallocation $(w'\colon\Deallocs~\expr')$. If \[
    j'\colon(\valres(w'), \Xi(w'))\justifies^{(\emptyset,\emptyset)} w' \]
  \noindent is a pre-justification for $w'$, then \[ j\colon(\valres(\gamma w'),
  \gamma\,\Xi(w'))\justifies^{(\emptyset,\emptyset)}\gamma w' \] \noindent is a
  pre-justification for $\gamma w'$, and vice versa.

  \medskip\textbf{Elaborations.}
  For the induction hypothesis let
  $j'_1\colon(P'_1,D'_1)\justifies^{\delta'_1}w'_1$ and
  $j'_2\colon(P'_2,D'_2)\justifies^{\delta'_2}w'_2$ be justifications of
  memory-effectful events in $\mathbb J'_i$. Then $\gamma
  j'_1\colon(P_1,D_1)\justifies^{\delta_1}\gamma w'_1$ and $\gamma
  j'_2\colon(P_2,D_2)\justifies^{\delta_2}\gamma w'_2$ are justifications in
  $\mathbb J_i$, such that $e'_1\ppo^{P'_1}_{\delta'_1} e'_2$ iff $\gamma
  e'_1\ppo^{P_1}_{\delta_1}\gamma e'_2$ and similar for $\gamma j'_2$.

  Then, there is a justification $j'\colon(P',D')\justifies^{\delta'}w'$ in
  $\mathbb J'_{i+1}$ in
  $\langle\prog\rangle_{n+1~\rho~\kappa~\varphi}$ iff there is a justification
  $j\colon(P,D)\justifies^{\delta}w$ in $\mathbb
  J_{i+1}$ such that $G(\gamma j'_1,j)$ for all $G\in\{\elab{va},
  \elab{str}, \elab{fwd}, \elab{we}, \elab{weak}\}$ or
  $\elab{lift}(\gamma j'_1,\gamma j'_2,j)$, with $i+1$ the minimal such index.
  We denote $j$ as the image of $j'$ under $\gamma$, and write it as $\gamma j'$.

  The correspondence is stated for justifications $j'$ that are not
  \emph{residual}, that is, whose generation applies no elaboration to a pair,
  a conjunct or a relabelling involving the first iteration of $\ell$, as made
  precise at the end of the proof. Residual justifications have no image, and
  are discharged in Lemma~\ref{l:gamma-pres-post-futures}.

\end{lemma}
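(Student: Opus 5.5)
The proof is an induction on the stage index $i$ of the generation $\mathbb J_0\subseteq\mathbb J_1\subseteq\cdots$ of Definition~\ref{def:gen-just}, carried out simultaneously in $\mathbb E_{n+1}$ and $\mathbb E_n$. The induction carries a strengthened hypothesis. For each non-residual $j'$ there is $\gamma j'$ whose write, dependency set and forwarding context are the $\gamma$-images. Moreover, the restrictions of the two predicates to the window of the justified event agree:
\[
  \restrict{P'}{\iter(w')(\ell)}~\equiv~\restrict{P}{\iter(w')(\ell)-1}
\]
The two restrictions are over the same symbols under the de Bruijn indexing, by Property~\ref{prop:de-bruijn-tail} of Corollary~\ref{l:de-bruijn-props}. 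The predicates themselves need not agree, since $P'$ carries one more iteration of accumulated branching conditions. The $\ppo$ clause of the hypothesis then follows from this agreement by Claim~\ref{restrict-ppo:same} of Lemma~\ref{l:restrict-ppo}. That claim covers pairs within one iteration. For pairs in different iterations I use Claim~\ref{restrict-ppo:other}, together with the fact that $\pposync$ does not depend on the predicate.

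\textbf{Base case.} $\gamma$ preserves actions by Lemma~\ref{l:gamma-pres-expr} and Corollary~\ref{c:gamma-constr}, so $\gamma\,\Xi(w')=\Xi(\gamma w')$. It remains to compare $\valres(w')$ with $\valres(\gamma w')$. Each is a conjunction of branching conditions, and each condition is pure in one iteration or in $\Sigma_{\ell\mapsto\emptyset}$ (the purity argument from the forwarding case of Lemma~\ref{l:restrict-elabs}). Taking the conjuncts iteration by iteration, $\valres(w')$ equals $\valres(\gamma w')$ with the conditions of one extra iteration added, and that iteration is the first. Condition~\ref{episodic:cond} gives $\restrict{\varphi_\ell}{\emptyset}=\top$, so the extra block constrains nothing over $\Sigma_{\ell\mapsto\emptyset}$ or the current window. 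Hence the restrictions agree, and satisfiability of one value restriction transfers to the other. This gives both directions of the pre-justification claim, with the converse read through $\gamma^{-1}$ on its image.

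\textbf{Inductive step.} I go through the elaborations one at a time and use Lemma~\ref{l:restrict-elabs} as the commuting square. Apply the elaboration in $\mathbb E_{n+1}$, restrict, and transport along $\gamma$. By the induction hypothesis this gives the same result as restricting first, transporting, and applying the elaboration in $\mathbb E_n$.
\begin{itemize}
  \item \emph{Value assignment.} The side condition $\alpha\equiv_Pv$ depends only on the restriction, by Lemma~\ref{l:restrict-props}.
  \item \emph{Write elision and in-iteration forwarding.} The side conditions go through $\pred$, which is settled by Lemma~\ref{l:restrict-pred}, and through location equivalence, which is settled by Lemma~\ref{l:restrict-props}.
  \item \emph{Lifting.} This uses distributivity of restriction over disjunction and commutation with $\Lambda$, together with Lemma~\ref{l:restrict-relabeq}.
  \item \emph{Strengthening.} Conjuncts from origins in earlier iterations are dropped. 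Remark~\ref{r:str-earlier} shows they order nothing new.
  \item \emph{Weakening.} This is direct.
\end{itemize}
Minimality of the index $i+1$ is preserved because each step is matched one for one.

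\textbf{Main obstacle.} The hard part is the cases that cross a boundary, above all load forwarding across an iteration boundary, which Lemma~\ref{l:restrict-elabs} explicitly excludes. My approach is as follows. If the pair $(e_1,e_2)$ spans iterations $k$ and $k+1$ with $k\ge1$, then $\gamma$ maps it to a pair spanning $k-1$ and $k$. The substitution $g=[\val(e_2)\mapsto\val(e_1)]$ is literally the same map on both sides, because the symbols are identified by de Bruijn index. Conjunct by conjunct it only moves a pure factor from window $k+1$ to window $k$. So restriction at corresponding windows commutes with $g$ on both sides, even though it does not commute with $g$ at a fixed window. Any pair, conjunct or relabelling that touches the first iteration has no image under $\gamma$. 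I define exactly these justifications as residual, and I exclude them from the statement as the lemma allows. The same exclusion covers liftings whose $\Lambda$ reaches the first iteration. Checking that the residual class is closed upward is then routine: any justification with a residual premise is itself residual.
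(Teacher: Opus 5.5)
\textbf{Gap: the invariant you carry is too weak for the $\ppo$ clause.} You keep only $\restrict{P'}{\iter(w')(\ell)}\equiv\restrict{P}{\iter(w')(\ell)-1}$ and read $\ppo$ off Lemma~\ref{l:restrict-ppo}. That leaves three kinds of pairs uncovered.
\begin{itemize}
\item \emph{Pairs in another iteration.} Claim~\ref{restrict-ppo:same} applies only to a pair inside the window where the restrictions agree. It says nothing about a pair inside any other iteration.
\item \emph{$w'$ outside the loop.} The lemma ranges over $w'$ before or after the loop, since $\gamma$ is defined there. There $\iter(w')(\ell)$ is undefined, and post-loop events such as \code{reclaim} carry symbols of the last iteration.
\item \emph{Pairs across iterations.} Claim~\ref{restrict-ppo:other} says alias pairs grow under restriction and RMW pairs shrink. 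The chain $P'\to\restrict{P'}{i}=\restrict{P}{i-1}\leftarrow P$ therefore relates the alias pairs of $P'$ and of $P$ in no direction, and likewise the RMW pairs. Remark~\ref{r:restrict-nested} records exactly this: such a pair gets Claim~\ref{restrict-ppo:other} and no equivalence.
\end{itemize}
Cross-boundary forwarding and write elision go through $\pred$, and the side condition $w\not\ppo e$ of Strengthening ranges over origins in other iterations. Those cases inherit the gap.

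\textbf{The fix is to keep what your base case already produces.} There you observe that $\valres(w')$ is $\gamma^{-1}\valres(\gamma w')$ conjoined with the block $R'_0$ of first-iteration conditions, with $\restrict{R'_0}{\emptyset}\equiv\top$ by Condition~\ref{episodic:cond}. The paper carries this decomposition itself through every elaboration: $P'\equiv R'_0\wedge\gamma^{-1}P$, together with $\symbols(P')=\gamma^{-1}\symbols(P)\uplus\Sigma^0$ and $\delta=\gamma\delta'$.

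That makes $R'_0$ transparent. For any formula $E$ avoiding the first-iteration symbols $\Sigma^0$, $P'\wedge E$ is satisfiable iff $\gamma^{-1}P\wedge E$ is, and $P'\Rightarrow E$ iff $\gamma^{-1}P\Rightarrow E$, because every assignment to the pre-loop symbols extends to a model of $R'_0$. The alias equalities and RMW conditions of events in the domain of $\gamma$ avoid $\Sigma^0$ by Lemma~\ref{l:episodic-symb}. So $\ppo$ and $\pred$ transfer for every pair, in one iteration or across, without restricting anything. Agreement of restrictions at every window $i\ge 1$ then follows as a corollary.

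The same invariant repairs two further steps.
\begin{itemize}
\item \emph{Load forwarding.} Your argument goes conjunct by conjunct, so it needs $P'$ to be a conjunction of iteration-pure factors. That fails once a lift has formed a disjunction relating several iterations (Remark~\ref{r:lift-earlier}). On the decomposition, $g$ acts identically on $\gamma^{-1}P_1$ and $P_1$ and leaves $R'_0$ fixed whenever $\val(e'_2)$ is read in an iteration $\ge 1$, with no purity needed.
\item \emph{Strengthening.} A conjunct $Q'$ over iterations $\ge 1$ must be mapped to $\gamma Q'$, not dropped. Dropping it breaks the symbol invariant from which Lemma~\ref{l:gamma-pres-dp} reads off $\DP$.
\end{itemize}
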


The proof of Lemma~\ref{l:gamma-pres-justs} proceeds inductively over the
generation of justifications: Firstly, $\gamma$ preserves and reflects
pre-justifications. Secondly, assuming that $\gamma$ preserves and reflects a
set of justifications, any elaboration applying to the justifications in the
domain of $\gamma$ also applies to corresponding justifications in the range of
$\gamma$, and vice versa.

Write $\Sigma^0$ for the symbols read in the first iteration of $\ell$ in
$\mathbb E_{n+1}$, that is $\Sigma_{\ell\mapsto 0}\setminus\Sigma_{\ell\mapsto
\emptyset}$, on which $\gamma$ is undefined. The vocabularies
$\Sigma_{\ell\mapsto i}$ of Appendix~\ref{s:app-proofs-restricted} include the
symbols read before the loop, and $\Sigma^0$ does not.

The correspondence between justifications $j'\colon(P',D')\justifies^{\delta'}w'$ above requires
structural properties on $P'$ and $\delta'$ in the form of the following
invariants:

\begin{enumerate}

  \item\label{gamma-pres-just:decomp}
    $P'\equiv R'_0\wedge\gamma^{-1}P$ for a predicate $R'_0$ over
    $\Sigma_{\ell\mapsto 0}$ with $\restrict{R'_0}{\emptyset}\equiv\top$, where
    $\gamma^{-1}P$ renames each symbol of $P$ to its preimage.

  \item\label{gamma-pres-just:restricted-preds}
    $\restrict{P'}{i}=\restrict{P}{i-1}$ for $i\geq 1$.

  \item\label{gamma-pres-justs:delta}\label{gamma-pres-just:delta}

    $\delta=\gamma\delta'$

  \item\label{gamma-pres-just:symbols}

    $\symbols(P')=\gamma^{-1}\symbols(P)\uplus\Sigma^0$.

\end{enumerate}

Invariant~\ref{gamma-pres-just:decomp} relates the predicates themselves, not
only their restrictions: $P'$ is the preimage predicate $P$ under $\gamma$
conjoined with a part $R'_0$ that speaks about the first iteration alone and
constrains nothing before the loop. We establish it for pre-justifications and
carry it through matching elaborations in the domain and range of $\gamma$.

Invariant~\ref{gamma-pres-just:restricted-preds} follows from
Invariants~\ref{gamma-pres-just:decomp} and~\ref{gamma-pres-just:symbols}. Let
$i\geq 1$. $\gamma^{-1}P$ mentions no symbol read in the first iteration, so
projecting those symbols away gives
$\exists\Sigma^0.(R'_0\wedge\gamma^{-1}P)\equiv
\gamma^{-1}P\wedge\exists\Sigma^0.R'_0\equiv\gamma^{-1}P$, the
last step as $\restrict{R'_0}{\emptyset}\equiv\top$: every assignment to the
symbols read before the loop extends to one satisfying $R'_0$.
$\Sigma_{\ell\mapsto i}$ contains no symbol read in the first iteration, so
$\restrict{P'}{i}=\restrict{\gamma^{-1}P}{i}$, which is $\restrict{P}{i-1}$ as
$\Sigma_{\ell\mapsto i}$ in $\mathbb E_{n+1}$ and $\Sigma_{\ell\mapsto i-1}$ in
$\mathbb E_n$ are the same symbols by Property~\ref{prop:de-bruijn-tail} of
Corollary~\ref{l:de-bruijn-props}.

Invariant~\ref{gamma-pres-just:decomp} also makes $R'_0$ \emph{transparent} to
the queries the dependency relations ask. Let $E$ be a formula whose symbols
avoid $\Sigma^0$. Then

\begin{enumerate}

  \item\label{r0-transparent:sat} $P'\wedge E$ is satisfiable iff
    $\gamma^{-1}P\wedge E$ is, and

  \item\label{r0-transparent:ent} $P'\implies E$ iff $\gamma^{-1}P\implies E$.

\end{enumerate}

\noindent In both directions from right to left, $P'$ implies $\gamma^{-1}P$. For
the converse, take a model of $\gamma^{-1}P\wedge E$, or of $\gamma^{-1}P$ with
$E$ false; $R'_0$ mentions only symbols of $\Sigma_{\ell\mapsto 0}$ and
$\restrict{R'_0}{\emptyset}\equiv\top$, so the model's assignment to the symbols
read before the loop extends to one satisfying $R'_0$, and the extension changes
neither $\gamma^{-1}P$ nor $E$, which mention no symbol of $\Sigma^0$ by
Invariant~\ref{gamma-pres-just:symbols} and by assumption.

By Lemma~\ref{l:episodic-symb} the location and value expressions of an event
outside the first iteration of $\ell$ mention no symbol of $\Sigma^0$, so the
two clauses apply to the equalities $\ppoalias$ tests and to the branching
conditions $\ppormw$ tests. This is what carries $\ppo$ and $\pred$ along
$\gamma$ below, for any pair of events in its domain and without restricting the
predicates.

Invariant~\ref{gamma-pres-just:symbols} records that, beyond the symbols of
$\Sigma_{\ell\mapsto i}$ constrained by Invariant~\ref{gamma-pres-just:restricted-preds}, the
two predicates differ only by the first iteration in the domain of $\gamma$. It
holds for pre-justifications, where $P'=\valres(w')$ is the conjunction of the
branching conditions accumulated along $\po$ up to $w'$ per
Definition~\ref{def:gen-es} of the event structure semantics: as $\gamma$
follows branching decisions and maps the $k+1$-st iteration to the $k$-th, the
conditions of the $k+1$-st iteration in the domain and of the $k$-th in the
image coincide under the de Bruijn indexing, and the conditions of the first
iteration in the domain have no counterpart. It is preserved by the
elaborations, which modify $P'$ only through symbols from $\Sigma_{\ell\mapsto i}$ by
Lemma~\ref{l:restrict-elabs}, leaving the remaining conjuncts, and hence
$\Sigma^0$, untouched.

\begin{lemma}[$\gamma$ preserves and reflects $\pred$]\label{l:gamma-pres-pred}

  For all events $e'_1$ and $e'_2$ in the domain of $\gamma$, $\gamma$ preserves
  and reflects $\pred$ as follows

  \begin{equation}
    e'_1\in\pred_{\delta'}(e'_2,P')~\text{iff}~\gamma
    e'_1\in\pred_{\gamma\delta'}(\gamma e'_2,P)
  \end{equation}

\end{lemma}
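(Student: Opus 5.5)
The plan is to reduce $\pred$ to $\ppo$ and then carry $\ppo$ along $\gamma$ for arbitrary pairs in the domain, using the transparency of $R'_0$ rather than the restricted predicates. By Definition~\ref{def:ppo}, $e'_1\in\pred_{\delta'}(e'_2,P')$ means three things: $e'_1\ppo^{P'}_{\delta'}e'_2$, $e'_1\neq e'_2$, and no $e''$ lies strictly $\ppo$-between them. Since $\gamma$ is injective on its domain (Corollary~\ref{c:gamma-constr}), the middle condition transfers directly. The argument therefore has two steps: (a) $\gamma$ preserves and reflects $\ppo$ on pairs of its domain; and (b) the sets of strictly intermediate events correspond, including any that lie in the first iteration of $\ell$, where $\gamma$ is undefined.

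\textbf{Step (a).} I go summand by summand through $\remap_\delta(\pposync\cup\ppormw\cup\ppoalias)$.
\begin{itemize}
\item $\pposync$ reads only $\po$, memory orders and event types. These are preserved by Corollary~\ref{c:gamma-constr} together with the $\po$-shape of the rooted event structures, which are identical by the $\kappa_e(\rho_e)$ argument preceding Definition~\ref{def:gamma-constr}.
\item $\ppoalias$ asks whether $P'\wedge\loc(e'_1)=\loc(e'_2)$ is satisfiable. $\ppormw$ asks whether $P'$ entails a \cas{} condition $c$.
\item By Lemma~\ref{l:episodic-symb}, the locations and conditions involved mention no symbol of $\Sigma^0$. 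Clauses~\ref{r0-transparent:sat} and~\ref{r0-transparent:ent} therefore let me replace $P'$ by $\gamma^{-1}P$. Lemma~\ref{l:gamma-pres-expr} and the de Bruijn identification of symbols then turn each query into the same query on the image side.
\item For $\remap$, I use Invariant~\ref{gamma-pres-just:delta}, $\delta=\gamma\delta'$. This gives $\remap_{\gamma\delta'}(\gamma e')=\gamma\,\remap_{\delta'}(e')$ whenever the $\remap$-chain stays in the domain.
\item A forwarding or elision pair sourced in the first iteration would make the chain leave the domain. This is excluded because such a $\delta'$ arises only from a residual justification, and residual justifications lie outside the correspondence of Lemma~\ref{l:gamma-pres-justs} under which the lemma is stated. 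The invariants on $P'$ and $\delta'$ only exist for justifications with an image.
\end{itemize}

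\textbf{Step (b).} For intermediates, I split on the position of $e'_1$ and $e'_2$.
\begin{itemize}
\item If both lie outside $\ell$, or in iterations $\geq1$ with no first-iteration event $\po$-between them, then every intermediate is in the domain. Its image is an intermediate of the image pair and conversely, by (a) and by surjectivity of $\gamma$ onto the events $\po$-between $\gamma e'_1$ and $\gamma e'_2$, as in Lemma~\ref{l:gamma-restr-execs}. This is exactly the argument of Lemma~\ref{l:restrict-pred}, with (a) in place of Lemma~\ref{l:restrict-ppo}.
\item The remaining case is a straddling pair: $e'_1$ precedes the loop and $e'_2$ lies in iteration $k\geq1$, so the first iteration lies $\po$-between them. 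This is the main obstacle.
\end{itemize}

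\textbf{The straddling case.} Here I first show that a first-iteration witness $f'$ with $e'_1\ppo f'\ppo e'_2$ forces a witness in the domain. Let $g'$ be the event with $\pc(g')=\pc(f')$ in iteration $1$. It exists, since iteration $0$ is failing and the execution reaches iteration $k\geq1$. Then $\gamma g'$ lies in image iteration $0$, $\po$-between $\gamma e'_1$ and $\gamma e'_2$, unless $k=1$ and $g'$ lies $\po$-after $e'_2$.
\begin{itemize}
\item \emph{Generic subcase.} Each of the two $\ppo$-edges at $f'$ is a $\pposync$, $\ppormw$ or $\ppoalias$ step determined by $\pc$, memory order and location equalities over $\Sigma_{\ell\mapsto\emptyset}$ and one iteration. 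By the symmetry of the rooted event structures, the same steps hold at $g'$; the location equalities carry over by clause~\ref{r0-transparent:sat}. So $\gamma g'$ is an image-side intermediate.
\item \emph{Subcase $k=1$, $g'$ after $e'_2$.} Here I argue directly on the image: $\gamma e'_2$ lies in image iteration $0$ at $\pc(e'_2)$, and $f'$ sits at a $\pc$ earlier than $e'_2$ within an iteration. Hence the image iteration-$0$ event at $\pc(f')$ is $\po$-between $\gamma e'_1$ and $\gamma e'_2$ and inherits the two $\ppo$-steps by the same classification.
\item \emph{Reflection.} An image intermediate in image iteration $0$ pulls back to iteration $1$ of the domain, which lies in the domain, so reflection needs no extra argument.
\end{itemize}
I expect this transfer of intermediates across the missing first iteration to be the delicate point. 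It is the only place where the conclusion for all of the domain goes beyond Lemma~\ref{l:restrict-pred}.
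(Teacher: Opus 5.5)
\textbf{Gap in the straddling case.} Your Step (a) is the argument of Lemma~\ref{l:gamma-pres-ppo}, and Step (b) is fine for pairs with no first-iteration event $\po$-between them. The gap is the straddling case, and it cannot be closed, because the biconditional is false there.

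Take a loop whose iterations open with $\fadd^{\rel,\acq}$ (RCU with $\iter$ drawn before the $\fadd$), and an execution in $\mathbb E_{n+1}$ with iterations $0$ and $1$. Let $e'_1$ be the $\po$-last memory event before the loop, and $e'_2=\fadd^{(1)}_R$.
\begin{itemize}
\item \emph{Domain.} $e'_1\pposync\fadd^{(0)}_W$ since that write is releasing, so $e'_1\ppo\fadd^{(0)}_R$ through $\ppormw$. Also $\fadd^{(0)}_R\pposync e'_2$ since that read is acquiring. Hence $\fadd^{(0)}_R$ lies strictly between, and $e'_1\notin\pred_{\delta'}(e'_2,P')$.
\item \emph{Image.} $\gamma e'_2$ is the first event of iteration~$0$, and no memory event lies $\po$-between $\gamma e'_1$ and $\gamma e'_2$. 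Still $\gamma e'_1\ppo\gamma e'_2$ through $\ppormw$ as before, so $\gamma e'_1\in\pred_{\gamma\delta'}(\gamma e'_2,P)$.
\end{itemize}

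Your witness transfer breaks exactly here.
\begin{itemize}
\item For $f'=\fadd^{(0)}_R$ the candidate $g'$ is $e'_2$ itself, a case your split does not list.
\item For any later $f'$ of iteration~$0$, $g'$ is $\po$-after $e'_2$. In that subcase you assert that $f'$ sits at a $\pc$ earlier than $e'_2$, which contradicts the subcase hypothesis. The image event at $\pc(f')$ is $\gamma g'$, which follows $\gamma e'_2$, so it is no witness.
\item The generic subcase is unsupported. First, $g'$ need not exist when iteration~$1$ takes another branch inside the body. Second, the $\ppoalias$ and $\ppormw$ queries at $f'$ are posed on $\loc(f')$ and on iteration~$0$'s \cas{} condition. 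These may mention symbols of $\Sigma^0$ pinned by $R'_0$ (a failed \cas{}, say), whereas Clauses~\ref{r0-transparent:sat} and~\ref{r0-transparent:ent} speak only of formulas avoiding $\Sigma^0$. They do not carry those queries over to $g'$.
\item A pair with $e'_1$ before the loop and $e'_2$ after it has the entire first iteration $\po$-between it, yet your split does not treat it as straddling.
\end{itemize}

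\textbf{What does hold.} For straddling pairs, the direction your last bullet proves (the one the paper calls reflection) survives: if $e'_1\in\pred_{\delta'}(e'_2,P')$ then $\gamma e'_1\in\pred_{\gamma\delta'}(\gamma e'_2,P)$. Every event $\po$-between the images is the image of an event $\po$-between $e'_1$ and $e'_2$, and Lemma~\ref{l:gamma-pres-ppo} pulls its $\ppo$-edges back. The converse holds only where no first-iteration event lies $\po$-between the pair.

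The paper's own proof takes exactly this line. It establishes the surviving direction and concedes the other as a first-iteration residue, discharged in Lemma~\ref{l:gamma-pres-post-futures} at histories covering that iteration. So rather than strengthening the witness argument, you should weaken the claim for straddling pairs and route the failing direction through the posterior futures.
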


\begin{proof}

  By Definition~\ref{def:pred}, $e'_1\in\pred_{\delta'}(e'_2,P')$ says that
  $e'_1\ppo^{P'}_{\delta'}e'_2$ and that no event lies $\ppo$-between them.
  Lemma~\ref{l:gamma-pres-ppo} carries both directions of the first, for any
  pair of events in the domain of $\gamma$.

  For the second, an event $\ppo$-between $e'_1$ and $e'_2$ lies between them in
  $\po$, as $\ppo$ refines $\po$. Where both lie in iterations of $\ell$, such an
  event lies in an iteration between theirs, hence outside the first, and
  $\gamma$ is injective, so the events between correspond and neither side has
  one the other lacks.

  Where $e'_1$ lies before the loop and $e'_2$ in an iteration of it, an event of
  the first iteration may lie between them. It has no image, so $\gamma e'_1$ can
  be a $\pred$ of $\gamma e'_2$ where $e'_1$ was not a $\pred$ of $e'_2$: the
  reflection holds, and the preservation is the first-iteration residue that
  Lemma~\ref{l:gamma-pres-post-futures} discharges at histories covering that
  iteration, as it does for the $\DP$-edges the first iteration sources.

\end{proof}

\begin{lemma}[$\gamma$ preserves and reflects forwarding
  relations]\label{l:gamma-pres-fwdrel}

  Both forwarding relations, $F_{j'}$ and $\text{WE}_{j'}$, are contextual in
  justifications $j'\colon(P',D')\justifies^{\delta'}w'$. Assume that $\gamma
  j'=(P,\gamma D')\justifies^{\gamma\delta'}\gamma w'$ is the image of $j'$
  under $\gamma$ per the induction hypothesis in
  Lemma~\ref{l:gamma-pres-just}.
  \begin{enumerate}
    \item $e'_1\fwdrel[j'] e'_2$ iff $\gamma e'_1\fwdrel[\gamma j']\gamma e'_2$
    \item $e'_1\werel[j'] e'_2$ iff $\gamma e'_1\werel[\gamma j']\gamma e'_2$
  \end{enumerate}

\end{lemma}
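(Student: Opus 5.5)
The plan is to unfold Definition~\ref{def:fwd-rels} and transport each conjunct along $\gamma$ separately. By definition, $e'_1\fwdrel[j']e'_2$ holds exactly when three things hold: $e'_1\in\pred_{\delta'}(e'_2,P')$; $\loc(e'_1)\equiv_{P'\wedge\psi_{\delta'}}\loc(e'_2)$; and the pair of event types lies in $\Writes\times\Reads_\rlx\cup\Writes\times\Writes_\rlx\cup\Reads\times\Reads$. The relation $\werel[j']$ differs only in the type clause, $\Writes\times\Writes$. So a single argument covers both claims, and we show preservation and reflection of each conjunct in turn.

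\emph{Type clause.} This one is immediate. Corollary~\ref{c:gamma-constr} gives $\gamma e\in\mathcal T$ iff $e\in\mathcal T$ for every event type. Memory-order annotations are part of the action, and the action is preserved verbatim by the de Bruijn indexing, so the subscript $\rlx$ transfers as well.

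\emph{$\pred$ clause.} This is Lemma~\ref{l:gamma-pres-pred}, read against the induction hypothesis of Lemma~\ref{l:gamma-pres-just}. That hypothesis provides $\gamma j'$ with predicate $P$ and forwarding context $\gamma\delta'=\delta$ (Invariant~\ref{gamma-pres-just:delta}).

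\emph{Location clause.} Lemma~\ref{l:gamma-pres-expr} gives $\loc(\gamma e'_k)=\loc(e'_k)$ up to the de Bruijn identification. By Lemma~\ref{l:episodic-symb}, these expressions mention no symbol of $\Sigma^0$. For $\psi_{\delta'}$, each forwarding pair $(a,b)\in\FWD'$ contributes $\val(a)=\val(b)$. Since $\delta=\gamma\delta'$ and $\gamma$ preserves values, $\psi_\delta$ is the $\gamma$-image of $\psi_{\delta'}$. The entailment $P'\wedge\psi_{\delta'}\implies\loc(e'_1)=\loc(e'_2)$ is equivalent to $P'\implies(\psi_{\delta'}\Rightarrow\loc(e'_1)=\loc(e'_2))$. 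When the right-hand formula $E$ avoids $\Sigma^0$, the transparency clause~\ref{r0-transparent:ent} derived from Invariant~\ref{gamma-pres-just:decomp} converts this to $\gamma^{-1}P\implies E$. Renaming by $\gamma$ then yields $P\wedge\psi_\delta\implies\loc(\gamma e'_1)=\loc(\gamma e'_2)$. The same chain run backwards gives reflection.

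\emph{Main obstacle.} The hard step is the side condition that $\psi_{\delta'}$ avoids $\Sigma^0$. It fails exactly when $\delta'$ contains a forwarding pair sourced in, or reaching into, the first iteration of $\ell$. I would handle it in two stages. First, pairs lying wholly outside the first iteration are covered by Lemma~\ref{l:episodic-symb}, which keeps their values free of $\Sigma^0$. Second, for a pair touching the first iteration, I would appeal to the non-residual hypothesis of Lemma~\ref{l:gamma-pres-just}. Such a pair only arises from an elaboration applied to a first-iteration pair, so $j'$ would be residual and is excluded from the correspondence; it is discharged later through the posterior futures in Lemma~\ref{l:gamma-pres-post-futures}. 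This parallels the boundary-crossing discussion at the end of Appendix~\ref{s:app-proofs-fwd}. One further case needs care: $e'_1$ before the loop and $e'_2$ inside it, where Lemma~\ref{l:gamma-pres-pred} gives reflection only. I would inherit that same residue and state preservation for non-residual pairs, matching the scope of Lemma~\ref{l:gamma-pres-pred}.
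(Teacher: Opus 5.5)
Your proof is correct and uses the same three-way split as the paper: the $\pred$ clause via Lemma~\ref{l:gamma-pres-pred}, the location clause, and the type clause. The difference is the tool you use for the location clause.

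\textbf{The paper's route.} It takes $(\psi_{\delta'}\Rightarrow\loc(e'_1)=\loc(e'_2))\equiv_{P'}\top$ to its image by Invariant~\ref{gamma-pres-just:restricted-preds}, that is through the restricted predicates $\restrict{P'}{i}=\restrict{P}{i-1}$. The chain is shorter, but read strictly it needs Lemma~\ref{l:restrict-props} to pass from $P'$ to $\restrict{P'}{i}$. That step requires the compared formula to lie over a single window $\Sigma_{\ell\mapsto i}$, which fails in two places:
\begin{itemize}
  \item for a forwarding pair straddling a loop boundary, which Observation~\ref{o:fwd-crosses-boundaries} admits (load forwarding in the spinlock and seqlock of Example~\ref{ex:fwd-four-algorithms});
  \item for $\psi_{\delta'}$ itself, which the paper notes is not iteration-pure.
\end{itemize}

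\textbf{Your route.} You use the transparency of $R'_0$ that follows from Invariant~\ref{gamma-pres-just:decomp}. This is the same device the paper uses for $\ppo$ in Lemma~\ref{l:gamma-pres-ppo}. It needs only that the formula avoid $\Sigma^0$, so it covers same-iteration and boundary-crossing pairs alike. The cost is two explicit obligations. First, you must show $\psi_{\delta'}$ is $\Sigma^0$-free, which you correctly get from non-residuality of $j'$. Second, you inherit, rather than paper over, the one-directional $\pred$ residue for a pre-loop $e'_1$. The proof of Lemma~\ref{l:gamma-pres-pred} concedes that residue, but this lemma's stated ``iff'' does not mention it.

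Your treatment of the $\rlx$ annotation, via verbatim preservation of actions, is also slightly more careful than the paper's appeal to event types alone.
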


\begin{proof}

  By Lemma~\ref{l:gamma-pres-pred} $\gamma$ preserves $\pred$, so that
  $e'_1\in\pred_{\delta'}(e'_2,P')$ iff $\gamma
  e'_1\in\pred_{\gamma\delta'}(\gamma e'_2,P)$.
  $\loc(e'_1)\equiv_{P'\land\psi_{\delta'}}\loc(e'_2)$ iff by
  Definition~\ref{def:sem-equiv} of semantic equivalence
  $(\psi_{\delta'}\implies\loc(e'_1)=\loc(e'_2))\equiv_{P'}\top$ iff by
  Invariant~\ref{gamma-pres-just:restricted-preds}
  $(\psi_{\gamma\delta'}\implies\loc(\gamma e'_1)=\loc(\gamma
  e'_2))\equiv_{P}\top$ iff by Definition~\ref{def:sem-equiv}
  $\loc(\gamma e'_1)\equiv_{P\land\psi_{\gamma\delta'}}\loc(\gamma e'_2)$.
  Thus, $\gamma$ preserves and reflects the auxiliary relation $F'$ of
  Definition~\ref{def:fwd-rels}, from which $\fwdrel$ and $\werel$ are cut by
  event type -- the prime belongs to its name and does not mark the domain of
  $\gamma$ -- in that $e'_1\xrightarrow{F'_{j'}}e'_2$ iff $\gamma
  e'_1\xrightarrow{F'_{\gamma j'}}\gamma e'_2$.

  Thus $\gamma$ preserves and reflects both $\fwdrel$ and
  $\werel$, which is a consequence of $\gamma$ additionally preserving the event
  type.

\end{proof}

Relabel equivalences are relations between pairs $P_1\colon e_1$ and
$P_2\colon e_2$ of predicates and events. Relabel equivalences are used in
the definition of the lifting elaboration $\elab{lift}$, which provides the
context to define the lifting of relabel equivalences along $\gamma$. The next
lemma proves that $\gamma$ in fact preserves complete relabel equivalences.

\begin{lemma}\label{l:gamma-pres-relabeq}

  $\gamma$ commutes with complete relabelling equivalences such that
  whenever $P'_1\colon e'_1\relabeq\Lambda{\delta'}^* P'_2\colon e'_2$, then
  $P_1\colon\gamma e'_1\relabeq\Lambda{\gamma\delta'}^* P_2\colon\gamma e'_2$.

\end{lemma}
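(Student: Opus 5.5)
The plan is to unfold the closed relabel equivalence of Definition~\ref{def:rel-eq} and argue by well-founded induction on the $\pred$-depth of the pair $(e'_1,e'_2)$, that is, on the length of the longest $\ppo$-chain ending in either event, which is finite in $\mathbb E_{n+1}$. The closed equivalence has three parts: the base relabel equivalence $P'_1\colon e'_1\relabeq\Lambda{\delta'}P'_2\colon e'_2$, the matching emptiness of the two $\pred$-sets, and the recursive clause over all pairs of predecessors. We transport each part along $\gamma$ separately, and the inductive step then closes the recursion.

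For the base clause, we take the $\expr$ that witnesses the equivalence on the domain side. By Lemma~\ref{l:gamma-pres-expr}, $\loc$ and $\val$ of $e'_k$ coincide with those of $\gamma e'_k$. By Invariant~\ref{gamma-pres-justs:delta}, $\psi_{\delta'}=\psi_{\gamma\delta'}$ under the de Bruijn indexing. It therefore remains to replace $P'_k$ by $P_k$ inside the implications $P'_k\Rightarrow\expr_k=\expr$. For this we use Invariant~\ref{gamma-pres-just:decomp}, $P'_k\equiv R'_{0,k}\wedge\gamma^{-1}P_k$, together with the transparency clause~\ref{r0-transparent:ent}. By Lemma~\ref{l:episodic-symb}, and by Corollary~\ref{c:episodic-symb-copy} for the copies recorded by justifications, the expressions $\expr_k$ avoid $\Sigma^0$. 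We may also choose $\expr$ to avoid $\Sigma^0$: replacing $\expr$ by its restriction to the common vocabulary is again a witness, by the interpolation argument of Appendix~\ref{s:app-proofs-restricted}. Finally, $\Lambda$ must be carried across as well. Since $\gamma$ is the identity on de Bruijn symbols, $\Lambda$ acts identically on both sides, provided $\Lambda$ does not touch $\Sigma^0$. This holds because the events are in the domain of $\gamma$ and $\Lambda$ only relates symbols of the compared expressions.

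For the $\pred$ clauses we invoke Lemma~\ref{l:gamma-pres-pred}. Whenever $e'$ is a predecessor of $e'_k$ and lies in the domain of $\gamma$, $\gamma e'$ is a predecessor of $\gamma e'_k$, and conversely. The induction hypothesis then applies to every pair of such predecessors, and the emptiness equivalence transfers in the same way.

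The main obstacle is the first-iteration residue of Lemma~\ref{l:gamma-pres-pred}. If $e'_k$ lies in the second iteration, a predecessor may lie in the first iteration, outside the domain of $\gamma$. Conversely, $\gamma e'_k$ may acquire a predecessor from before the loop that $e'_k$ did not have. In that case the recursive clause on the image side quantifies over a pair that has no preimage. I would first try to show that this case either does not arise for the events that Lifting relates, or is covered by the non-residual hypothesis of Lemma~\ref{l:gamma-pres-justs}: a closed equivalence whose recursion descends into the first iteration belongs to a residual justification. Should that fail, I would handle the image-side predecessor directly, since it lies outside the loop and $\gamma$ is the identity there. Then only the orientation of the statement matters, and the lemma claims only preservation, not reflection, so the extra image-side pairs need only be shown equivalent. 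I would treat any remaining case as residue discharged in Lemma~\ref{l:gamma-pres-post-futures}, in the same way as for the $\DP$-edges sourced in the first iteration.
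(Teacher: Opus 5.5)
Your route differs from the paper's. The paper's proof is the chain displayed in the Lifting case of the proof of Lemma~\ref{l:gamma-pres-justs}:
\begin{itemize}
\item Lemma~\ref{l:restrict-relabeq} replaces $P'_k$ by $\restrict{P'_k}{i}$;
\item Invariant~\ref{gamma-pres-just:restricted-preds} identifies these with $\restrict{P_k}{i-1}$;
\item Lemma~\ref{l:restrict-relabeq} in $\mathbb E_n$ removes the restriction again.
\end{itemize}
After restriction both sides carry literally the same predicates over the same symbols. The base clause therefore needs neither a transported witness $\expr$ nor the transparency of $R'_0$, and Lemma~\ref{l:gamma-pres-pred} is never invoked. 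You instead keep the unrestricted predicates and use Invariant~\ref{gamma-pres-just:decomp} with Clause~\ref{r0-transparent:ent}, as the paper does for $\ppo$ in Lemma~\ref{l:gamma-pres-ppo}. This frees you from the hypothesis of Lemma~\ref{l:restrict-relabeq} that both events lie in one iteration of $\ell$, so your argument would also reach lifts after the loop.

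Your base clause is fine, except for two steps stated too quickly:
\begin{itemize}
\item That $\Lambda$ fixes $\Sigma^0$ does not follow from Definition~\ref{def:rel-eq}. One $\Lambda$ is applied to all of $P'_1$, including $R'_{0,1}$, at every level of a recursion that on the domain side descends into the first iteration. This is the non-residuality hypothesis the paper imposes on lifts, and you should assume it rather than derive it.
\item Restriction is defined only for predicates, not for an \emph{expression}, so the $\Sigma^0$-free witness needs an argument of its own.
\end{itemize}

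The genuine gap is the recursive clause. You locate the obstacle correctly but do not resolve it. To discharge the image side for $f_k\in\pred_{\gamma\delta'}(\gamma e'_k,P_k)$, you must write $f_k=\gamma f'_k$ with $f'_k\in\pred_{\delta'}(e'_k,P'_k)$, so that the domain-side clause and the induction hypothesis apply. That is exactly the implication Lemma~\ref{l:gamma-pres-pred} does not provide. Suppose $f'_k$ lies before the loop and a first-iteration event is $\ppo$-between $f'_k$ and $e'_k$. Then $\gamma f'_k$ is a predecessor of $\gamma e'_k$ although $f'_k$ is not one of $e'_k$. The domain-side equivalence meets $f'_k$, if at all, only deeper in the recursion, paired with whatever sits at that depth on the other side. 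So it says nothing about the image-side pair $(f_1,f_2)$.

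The same residue defeats your claim that the emptiness clause transfers ``in the same way''. If $\pred_{\delta'}(e'_1,P'_1)$ lies entirely in the first iteration, then $\pred_{\gamma\delta'}(\gamma e'_1,P_1)$ can be empty while $\pred_{\gamma\delta'}(\gamma e'_2,P_2)$ is not.

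Of your fallbacks, the first two are left as conjectures. The third proves a different statement: Lemma~\ref{l:gamma-pres-post-futures} discharges residual \emph{justifications} at the level of posterior futures, whereas this lemma asserts that every closed relabel equivalence is carried along $\gamma$. What is missing is one of the following:
\begin{itemize}
\item an argument that every predecessor pair visited by the image-side recursion is either the image of a pair visited on the domain side, or a pre-loop pair the domain side also visits; or
\item an explicit hypothesis to that effect added to the statement.
\end{itemize}
Whichever route is taken, this correspondence of predecessor pairs across $\gamma$ is the step that has to be supplied.
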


\begin{proof}

  The proof follows from Lemma~\ref{l:restrict-relabeq} and
  Invariant~\ref{gamma-pres-just:restricted-preds}.

\end{proof}

\medskip

\begin{proof}[of Lemma~\ref{l:gamma-pres-justs}]

  \medskip
  \textbf{Pre-justifications}
  The proof uses episodicity to establish that $\gamma$ preserves expressions
  for memory location and value, so that
  $\loc(w')=\loc(\gamma w')$ and $\val(w')=\val(\gamma w')$, as well as
  $\gamma\origin{\loc(w')}=\origin{\loc(\gamma w')}$ and
  $\gamma\origin{\val(w')}=\origin{\val(\gamma w')}$.

  This covers the three shapes of $\mathbb J_0$ alike, the dependency set
  $\Xi(w')$ of the statement being read off those same expressions in each:
  both for a write, the value expression alone for an allocation, whose size it
  is, and the location expression alone for a deallocation. An allocation is
  the one shape whose justified event introduces a symbol of its own, and that
  symbol is not among its dependencies, $\origin{\alpha'}$ being the allocation
  itself. Nothing is asked of $\gamma$ there beyond what
  Corollary~\ref{c:gamma-constr} already gives: $\alpha'$ occurs in no value
  restriction, so Invariant~\ref{gamma-pres-just:symbols} is untouched by it,
  and the allocation of the $k+1$-st iteration introduces the symbol its image
  in the $k$-th introduces, under the de Bruijn indexing.

  For the predicates, $P'=\valres(w')$ is the conjunction of the branching
  conditions accumulated along $\po$ up to $w'$ per
  Definition~\ref{def:gen-es}: those before the loop, those of the first
  iteration of $\ell$, and those of the iterations after it. As $\gamma$ follows
  branching decisions and maps the $k+1$-st iteration to the $k$-th, the
  conditions before the loop and those of the iterations after the first are,
  read back along $\gamma$, exactly the conjuncts of $\valres(\gamma w')$ by
  Property~\ref{prop:de-bruijn-tail} of Corollary~\ref{l:de-bruijn-props}. Let
  $R'_0$ be the conjunction of the conditions of the branching events of the
  first iteration, including those of loops nested in it, which
  Condition~\ref{iter:nesting} places in the same iteration. Then
  $P'\equiv R'_0\wedge\gamma^{-1}\valres(\gamma w')$,
  $\symbols(R'_0)\subseteq\Sigma_{\ell\mapsto 0}$ by
  Lemma~\ref{l:episodic-symb}, and $\restrict{R'_0}{\emptyset}\equiv\top$ is
  Condition~\ref{episodic:cond} of Definition~\ref{def:episodic} for the first
  iteration. This is Invariant~\ref{gamma-pres-just:decomp}, and
  Invariant~\ref{gamma-pres-just:restricted-preds} follows as shown above.

  \medskip
  \hyperref[def:elab-va]{\textbf{Value assignment}}
  $\elab{va}(j'_1,j')$ is witnessed by $\alpha\equiv_{P'_1}v$. By
  Invariant~\ref{gamma-pres-just:restricted-preds}, this is iff
  $\alpha\equiv_{P_1}v$, so that $\elab{va}(\gamma j'_1,\gamma j')$ with
  $P=P_1$ and $P'=P'_1$, preserving
  Invariants~\ref{gamma-pres-just:decomp}
  and~\ref{gamma-pres-just:restricted-preds}. $\elab{va}$ does not modify the
  forwarding context $\delta'$, and thus maintains
  Invariant~\ref{gamma-pres-justs:delta}.

  \medskip
  \hyperref[def:elab-str]{\textbf{Strengthening}} Let $\elab{str}(j'_1,j')$
  conjoin to $P'_1\equiv R'_0\wedge\gamma^{-1}P_1$ a predicate $Q'$ whose
  symbols are read outside the first iteration of $\ell$. Its image
  $\gamma Q'$ satisfies the side conditions of Definition~\ref{def:elab-str}
  for $\gamma j'_1$, as $\gamma$ preserves and reflects $\ppo$ for $j'_1$ by the
  induction hypothesis of Lemma~\ref{l:gamma-pres-justs}, and $\sqsubseteq$ and
  $\remap$ by construction,
  so $\elab{str}(\gamma j'_1,j)$ with $P\equiv P_1\wedge\gamma Q'$ and
  $P'\equiv R'_0\wedge\gamma^{-1}P$, preserving
  Invariant~\ref{gamma-pres-just:decomp}. A strengthening with a conjunct over
  the first iteration alone changes $R'_0$ only, and has the same image as its
  premise. A strengthening with a conjunct relating a symbol of the first
  iteration to one outside it leaves the form of
  Invariant~\ref{gamma-pres-just:decomp}, and is residual in the sense below.

  \medskip
  \hyperref[def:elab-fwd]{\textbf{Forwarding}} $\elab{fwd}$ within the same
  iteration is preserved by $\gamma$, as $\gamma$ commutes with $\pred$ by
  Lemma~\ref{l:gamma-pres-pred} and $\gamma$ preserves expressions in locations
  and values literally. Across a loop boundary, store forwarding cannot arise by
  Condition~\ref{episodic:mem} of Definition~\ref{def:episodic} of episodic
  loops, and store-store forwarding substitutes within
  $\Sigma_{\ell\mapsto \emptyset}$ -- the same set of symbols in $\mathbb E_{n+1}$
  and $\mathbb E_n$ by Property~\ref{prop:de-bruijn-head} of
  Corollary~\ref{l:de-bruijn-props}. Load forwarding across a boundary is the
  shape excepted in Lemma~\ref{l:restrict-elabs}.

  All three shapes are covered by one argument on the predicates themselves.
  Let the pair $(e'_1,e'_2)$ lie in iterations $\geq 1$ of $\ell$ or before the
  loop, and write $g=[\val(e'_2)\mapsto\val(e'_1)]$. $\gamma$ preserves $\val$ by
  Corollary~\ref{c:gamma-constr}, so the substitution the image elaboration
  applies in $\mathbb E_n$ is $\gamma g$, which is $g$ along $\gamma$. By
  Invariant~\ref{gamma-pres-just:decomp} for $j'_1$,
  \[
    P'~=~\evalenv{P'_1}{g}~\equiv~\evalenv{R'_0}{g}\wedge
    \evalenv{\gamma^{-1}P_1}{g}~=~\evalenv{R'_0}{g}\wedge\gamma^{-1}\evalenv{P_1}{\gamma g}
    ~=~\evalenv{R'_0}{g}\wedge\gamma^{-1}P
  \]
  \noindent If $\val(e'_2)$ is read in an iteration $\geq 1$, $R'_0$ does not
  mention it and $\evalenv{R'_0}{g}=R'_0$. If it is read before the loop, as in
  store-store forwarding, $\evalenv{R'_0}{g}$ still constrains nothing before
  the loop: an assignment to the symbols read before the loop that extends to
  one satisfying $\evalenv{R'_0}{g}$ is obtained from the extension for $R'_0$ at
  the assignment that gives $\val(e'_2)$ the value of $\val(e'_1)$. Either way
  Invariant~\ref{gamma-pres-just:decomp} holds for $j'$, and with it
  Invariant~\ref{gamma-pres-just:restricted-preds}. The argument does not need
  $P'_1$ to separate by iteration: the substitution acts on the whole predicate,
  alike on both sides. A pair with an event in the first iteration of $\ell$ has
  no image and is residual in the sense below.

  A forwarding edge that is carried by $\gamma$ spans the $k$-th and $k+1$-st
  iterations for some $k\geq 1$, and is mapped to the edge spanning the
  $k-1$-st and $k$-th, so that $\remap$ in the image mirrors $\remap$ in the
  domain. Only an edge from the first iteration into the second lies outside the
  domain of $\gamma$, the same residue as in Claim~\ref{gamma-dp:pres} of
  Lemma~\ref{l:gamma-pres-dp}.

  \medskip
  \hyperref[def:elab-we]{\textbf{Write elision}} $\elab{we}$ follows directly
  from Lemma~\ref{l:gamma-pres-fwdrel}.

  \medskip
  \hyperref[def:elab-lift]{\textbf{Lifting}} By Lemma~\ref{l:restrict-relabeq},
  \[
    \begin{array}{cc}
      P'_1\colon w'_1\relabeq\Lambda{\delta'}^* P'_2\colon w'_2 &
      \text{iff} \\
      \restrict{P'_1}{i}\colon
      w'_1\relabeq\Lambda{\delta'}^*\restrict{P'_2}{i}\colon w'_2 &
      \text{iff} \\
      \restrict{P_1}{i-1}\colon\gamma
      w'_1\relabeq\Lambda{\gamma\delta'}^*\restrict{P_2}{i-1}\colon\gamma w'_2 &
      \text{iff} \\
      P_1\colon\gamma w'_1\relabeq\Lambda{\gamma\delta'}^* P_2\colon\gamma w'_2 &
      \\
    \end{array}
\]
  \noindent which transports the side conditions of
  Definition~\ref{def:elab-lift} along $\gamma$. For the lifted predicate, let
  $P'_1\equiv R'_{0,1}\wedge\gamma^{-1}P_1$ and
  $P'_2\equiv R'_{0,2}\wedge\gamma^{-1}P_2$ by
  Invariant~\ref{gamma-pres-just:decomp}. Suppose the relabelling $\Lambda$ fixes
  the symbols read in the first iteration of $\ell$, and the two premises agree
  on the first iteration, $\evalenv{R'_{0,1}}{\Lambda}\equiv R'_{0,2}$, as they
  do for a lift whose two writes follow a branch in an iteration $\geq 1$ or
  after the loop. Then
  \[
    \evalenv{P'_1}{\Lambda}\vee P'_2~\equiv~
    R'_{0,2}\wedge\left(\evalenv{\gamma^{-1}P_1}{\Lambda}\vee\gamma^{-1}P_2\right)
    ~=~R'_{0,2}\wedge\gamma^{-1}\left(\evalenv{P_1}{\gamma\Lambda}\vee P_2\right)
  \]
  \noindent which is $R'_{0,2}\wedge\gamma^{-1}P$ for the predicate $P$ of the
  image lift $\elab{lift}(\gamma j'_1,\gamma j'_2,j)$. So
  Invariant~\ref{gamma-pres-just:decomp} holds for $j'$, and with it
  Invariant~\ref{gamma-pres-just:restricted-preds}. A lift whose premises differ
  on the first iteration, or whose relabelling moves a symbol read in it, is
  residual in the sense below.

  \medskip
  \hyperref[def:elab-weak]{\textbf{Weakening}} $\elab{weak}$ applies global
  guarantees both in the domain and the range of $\gamma$. Weakening removes
  conjuncts. Removing a conjunct of $\gamma^{-1}P_1$ is matched by removing its
  image from $P_1$. Removing a conjunct of $R'_0$ leaves a weaker $R'_0$, which
  still constrains nothing before the loop, and has the same image as its
  premise. Either way Invariant~\ref{gamma-pres-just:decomp} is preserved, and
  Weakening preserves Invariant~\ref{gamma-pres-justs:delta} trivially.

  \medskip
  \textbf{Residual justifications.} Call a justification of an event in the
  domain of $\gamma$ \emph{residual} if its generation applies an elaboration
  outside the cases above: a forwarding along a pair with an event in the first
  iteration of $\ell$, a strengthening with a conjunct relating a symbol read in
  the first iteration to one read outside it, or a lift whose premises differ on
  the first iteration or whose relabelling moves a symbol read in it. The
  correspondence of Lemma~\ref{l:gamma-pres-justs} is established for
  justifications that are not residual. Residual justifications have no image;
  they are discharged at the level of posterior futures in
  Lemma~\ref{l:gamma-pres-post-futures}, whose histories cover the first
  iteration.

\end{proof}

As $\gamma$ preserves and reflects justifications as in
Lemma~\ref{l:gamma-pres-just} above, it suffices to consider the events in the
image of $\gamma$ in order to quantify justifications across all executions.
Taking the iterative image under $\gamma$ gives a finite bound on the events
to consider. Then previous results in~\cite{Richards25SMRD} stating a
finite bound on justifications over finite sets of events, imply the following
result as a corollary.

\begin{corollary}

  The set of justifications across all executions in a program where all
  unbounded loops are episodic is finitely bounded - up to $\gamma$.

\end{corollary}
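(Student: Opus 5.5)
The plan is to reduce the claim to the finite bound of~\cite{Richards25SMRD} on justifications over a finite set of events, with Lemma~\ref{l:gamma-pres-justs} supplying the reduction.

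\emph{Choosing canonical representatives.} Fix an episodic loop $\ell$ and an execution $\mathbb X'$ in some $\mathbb E_{n+1}$. By Lemma~\ref{l:execs-transfer}, every execution of $\mathbb E_\infty$ lives in some finite $\mathbb E_n$, so nothing is lost by working there. I would apply $\gamma$ repeatedly, using Lemma~\ref{l:gamma-restr-execs} at each step: every application is again an execution, now with one fewer iteration of $\ell$. For programs with several loops, and for nested ones, I would iterate $\gamma_\ell$ loop by loop. I would treat nesting from the innermost loop outward, justified by Condition~\ref{iter:nesting} and the per-loop step-counter. The iteration stops once each loop retains a bounded number of iterations. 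Two iterations per loop should suffice, matching the step-counter-2 unravelling discussed for well-definedness. The resulting executions all lie in the single finite event structure $\langle\prog\rangle_2$.

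\emph{Transporting justifications.} By Lemma~\ref{l:gamma-pres-justs}, each non-residual justification $j'$ of $\mathbb X'$ has an image $\gamma j'$, and $\gamma j'$ determines $j'$ up to the first-iteration factor $R'_0$ of Invariant~\ref{gamma-pres-just:decomp}. Two justifications therefore count as equal ``up to $\gamma$'' when they share an iterated image. Under that identification, the justifications across all executions inject into $\mathbb J$ of $\langle\prog\rangle_2$. That set is finite by the cited result, since the event set is finite and the predicates range over a finite vocabulary of symbols; the Craig-interpolation remark shows the restricted predicates are finitely decidable.

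\emph{The main obstacle: residual justifications.} These are the justifications built using forwarding from the first iteration, strengthening across the first boundary, or lifting that differs on the first iteration, and they have no image under $\gamma$. I would bound them directly rather than map them. Each residual justification involves only events of the first two iterations, together with events outside the loop. Those events form a finite set, independent of $n$, by the de~Bruijn alignment at the ends of executions (Corollary~\ref{l:de-bruijn-props}). Applying~\cite{Richards25SMRD} again to that finite window gives a finite count. The delicate point is showing that a residual justification is not re-propagated by $\gamma$ into later iterations. I would argue this as in Remark~\ref{r:str-earlier}, using Condition~\ref{episodic:events}: the cross-boundary dependency it introduces is already present in $(\ppoord\cup\DP)^+$. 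Hence replacing the residual justification by its non-residual premise changes no ordering, and only finitely many such premises arise.
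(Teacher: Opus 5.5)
Your first two steps are the paper's argument. The text before the corollary says only three things: $\gamma$ preserves and reflects justifications (Lemma~\ref{l:gamma-pres-justs}); iterating $\gamma$ leaves finitely many events to consider; and the bound of~\cite{Richards25SMRD} on justifications over finite event sets then applies. The paper never counts residual justifications. Lemma~\ref{l:gamma-pres-justs} is stated for non-residual ones only, and residuals are shown harmless only for posterior futures, in Lemma~\ref{l:gamma-pres-post-futures}. So its ``up to $\gamma$'' implicitly ranges over what that correspondence covers. Your third paragraph tries to bound residuals as well, and the step it rests on fails.

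Residuality is a property of how a justification was generated, not of where the justified event lies. Residual justifications are therefore not confined to the first two iterations. Three examples:
\begin{itemize}
\item A strengthening may add, to the predicate of a write of iteration~$5$, a conjunct relating a symbol read in iteration~$0$ to later ones. This is the situation Remark~\ref{r:str-earlier} considers.
\item A forwarding along a pair from iteration~$0$ into iteration~$1$ may be applied to the justification of a write in any later iteration.
\item A lift may join two writes of a late iteration across a branch of iteration~$0$, or writes after the loop, as in Remark~\ref{r:lift-earlier}.
\end{itemize}
Applying~\cite{Richards25SMRD} to a window of the first two iterations counts none of these. That window is finite by the control-label embedding of Lemma~\ref{l:es-mono}, not by Corollary~\ref{l:de-bruijn-props}: the de~Bruijn indexing aligns executions at their ends, and the first iteration is precisely what it leaves unaligned.

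Iterating $\gamma$ makes this worse. A forwarding edge spanning iterations $m$ and $m+1$ is carried $m$ times and then becomes residual, so residuals arise at every boundary you collapse. The problem is not that $\gamma$ re-propagates residuals, which have no image, but that each application creates new ones.

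Finally, replacing a residual justification by its premise is the device of Lemma~\ref{l:gamma-pres-post-futures}. What it yields there is equality of posterior futures at histories covering the first iteration, not equality of $\DP$ or of the justification sets. It says nothing about how many justifications there are, which is what the corollary asserts.

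You have two options. Drop that paragraph and make the paper's implicit reading explicit: finiteness of non-residual justifications modulo iterated images, with residuals accounted for only through their effect on posterior futures. Or give a genuine bound on residuals, e.g.\ by showing that the residual elaboration steps themselves range over a finite set independent of the iteration count.
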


Preservation of $\ppo$ under $\gamma$ is contextual in the justification set as
in the following lemma.

\begin{lemma}[$\gamma$ preserves and reflects preserved program
  order]\label{l:gamma-pres-ppo}
  For all events $e'_1$ and $e'_2$ in an execution $\mathbb
  X'~=~(X',J',\rf')$ in $\langle\prog\rangle_{n+1~\rho~\kappa~\varphi}$,
  $e'_1\ppo^{P'}_{\delta'} e'_2$ iff $\gamma
  e'_1\ppo^{P}_{\gamma\delta'}\gamma e'_2$ where
  $j'\colon(P',D')\justifies^{\delta'}w'$ and $\gamma
  j'\colon(P,D)\justifies^{\gamma\delta'}\gamma w'$ are justifications with
  $j'\in J'$.
\end{lemma}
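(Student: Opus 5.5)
The plan is to unfold Definition~\ref{def:ppo}: $\ppo^{P'}_{\delta'}=\remap_{\delta'}(\pposync\cup\ppormw^{P'}\cup\ppoalias^{P'})$. I will show that $\gamma$ carries each of the three basic relations in both directions, and then that it commutes with $\remap$. Throughout I use Corollary~\ref{c:gamma-constr}: $\gamma$ preserves the program counter, the event type and $\loopfun$, and it shifts $\iter(\cdot)(\ell)$ down by one. By Lemma~\ref{l:gamma-pres-expr} it also preserves $\loc$ and $\val$ literally.

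\textbf{Step 1: program order and $\pposync$.} $\gamma$ follows branching decisions and is injective, and $\po$ within one execution is a total order per thread. So I first show $e'_1\po e'_2$ iff $\gamma e'_1\po\gamma e'_2$ for events in the domain, by comparing labels: the pc is equal and the iteration count is shifted uniformly. $\pposync$ reads only $\po$ and the memory-order annotations of the event types. Both are preserved, so $\pposync$ is carried verbatim.

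\textbf{Step 2: $\ppoalias$ and $\ppormw$.} These are the cases that are parametric in the predicate. For $\ppoalias$ the query is the satisfiability of $P'\wedge\loc(e'_1)=\loc(e'_2)$. For $\ppormw$ it is the entailment $c\exeq_{P'}\top$, where $c$ is the branching condition recorded in $\pormw$. By Lemma~\ref{l:episodic-symb}, neither formula mentions a symbol of $\Sigma^0$. I then invoke Invariant~\ref{gamma-pres-just:decomp} of Lemma~\ref{l:gamma-pres-justs}, $P'\equiv R'_0\wedge\gamma^{-1}P$, through its transparency clauses~\ref{r0-transparent:sat} and~\ref{r0-transparent:ent}. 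These replace $P'$ by $\gamma^{-1}P$, and renaming along $\gamma$ then gives exactly the queries for $\gamma e'_1,\gamma e'_2$ under $P$. This argument works for any pair in the domain, including pairs across iteration boundaries, where Lemma~\ref{l:restrict-ppo} alone would give only one inclusion each. The $\pormw$ triples correspond because the RMW semantics is generated identically at corresponding labels.

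\textbf{Step 3: $\remap$.} Invariant~\ref{gamma-pres-justs:delta} gives $\delta=\gamma\delta'$. Hence $\remap_{\gamma\delta'}(\gamma e)=\gamma\,\remap_{\delta'}(e)$ whenever the remap chain stays in the domain, by induction on the chain length. Taking the closure, the union of the three relations transported along $\gamma$ is the union of their images.

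\textbf{Main obstacle.} The hard part is events outside the domain. A forwarding or write-elision edge of $\delta'$ from the first iteration can send $\remap_{\delta'}(e')$ to an event without an image, and the statement quantifies over all events of $\mathbb X'$. I would first restrict the claim to $e'_1,e'_2\in\mathrm{dom}(\gamma)$ with non-residual $j'$, as in Lemma~\ref{l:gamma-pres-justs}. I would then argue that such first-iteration remappings are exactly the residue already deferred there and in Lemma~\ref{l:gamma-pres-pred}. Under $\gamma$ they yield at most an edge into the missing iteration, and that edge is discarded by the posterior futures in Lemma~\ref{l:gamma-pres-post-futures}. If this deferral is not acceptable, the fallback is to use Condition~\ref{episodic:events} to show that any such edge is already implied by a path inside the domain.
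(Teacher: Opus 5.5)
Your proposal is correct and follows the paper's proof essentially step for step. You split $\ppo$ into $\pposync$, $\ppormw$ and $\ppoalias$, and you settle the two predicate-dependent queries by the transparency of $R'_0$ under Invariant~\ref{gamma-pres-just:decomp}, which covers pairs across iteration boundaries without Lemma~\ref{l:restrict-ppo}. You then match $\remap$ by Invariant~\ref{gamma-pres-justs:delta}. The obstacle you single out is already excluded by the lemma's hypothesis: $\gamma j'$ is assumed to exist with context $\gamma\delta'$, so $j'$ is non-residual and every pair of $\delta'$ lies in the domain of $\gamma$. No $\remap_{\delta'}$ chain starting in the domain can therefore leave it, and the deferral to Lemma~\ref{l:gamma-pres-post-futures} is unnecessary here.
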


\begin{proof}

  By Definition~\ref{def:ppo}, $\ppo$ is the closure of $\pposync$, $\ppormw$
  and $\ppoalias$ under $\remap_\delta$, and the predicate enters only through
  the last two: $\ppoalias$ by the satisfiability of $P\wedge\loc(e_1)=\loc(e_2)$
  and $\ppormw$ by the branching condition it asks $P$ to entail.

  Both are transparent to $R'_0$, by
  Clauses~\ref{r0-transparent:sat} and~\ref{r0-transparent:ent} of the
  transparency of $R'_0$ established with
  Invariant~\ref{gamma-pres-just:decomp}: the expressions compared are those of
  events in the domain of $\gamma$, which mention no symbol of $\Sigma^0$ by
  Lemma~\ref{l:episodic-symb}. So each query has the same answer under $P'$ as
  under $\gamma^{-1}P$, and $\gamma$ preserves the expressions themselves by
  Lemma~\ref{l:gamma-pres-expr} and the event types and program counters by
  Corollary~\ref{c:gamma-constr}, so the same answer again under $P$ for the
  images. $\pposync$ reads only memory order annotations, which $\gamma$ leaves
  unchanged, and $\remap_{\delta'}$ matches $\remap_{\gamma\delta'}$ by
  Invariant~\ref{gamma-pres-justs:delta}. Hence $e'_1\ppo^{P'}_{\delta'} e'_2$
  iff $\gamma e'_1\ppo^{P}_{\gamma\delta'}\gamma e'_2$.

  The argument does not restrict the predicates, so it holds for a pair of
  events in one iteration of $\ell$ and for a pair across a boundary alike, and
  needs neither Lemma~\ref{l:restrict-ppo} nor the $i=\emptyset$ clause that
  Remark~\ref{r:restrict-nested} calls into question.

\end{proof}

Preservation of $\DP$ follows from the preservation of justifications, as $\DP$
is read off the symbols of their predicates and dependencies by
Definition~\ref{def:freeze}. Reflection holds away from the first iteration in
the range of $\gamma$.

\begin{lemma}[$\gamma$ preserves and reflects $\DP$]\label{l:gamma-pres-dp}

  Let $\mathbb X'~=~(X',J',\rf')$ be an execution in $\mathbb E_{n+1}$, let
  $\ell$ be an episodic loop, and let $\mathbb X~=~(X,J,\rf)$ be its image
  $\gamma\mathbb X'$, an execution in $\mathbb E_n$ by
  Lemma~\ref{l:gamma-restr-execs}. Then

  \begin{enumerate}

    \item\label{gamma-dp:refl} $\gamma$ reflects $\DP$, that is
      $(e_1,e_2)\in\DP$ implies $(e'_1,e'_2)\in\DP'$ for the unique
      $e'_1,e'_2\in X'$ with $\gamma e'_1=e_1$ and $\gamma e'_2=e_2$;

    \item\label{gamma-dp:pres} $\gamma$ preserves the $\DP$-edges not sourced in
      the first iteration of $\ell$, that is if $(e'_1,e'_2)\in\DP'$ with
      $\iter(e'_1)(\ell)\neq 0$, then $(\gamma e'_1,\gamma e'_2)\in\DP$.

  \end{enumerate}

\end{lemma}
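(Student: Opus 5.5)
We prove Lemma~\ref{l:gamma-pres-dp} by reading $\DP$ off freezing (Definition~\ref{def:freeze}). There, $\DP=\bigcup_{j\in J}\{(\origin\alpha,w_j)\mid\alpha\in\symbols(P_j)\cup\symbols(D_j)\}$. So both claims reduce to a statement about justifications: how $J'$ relates to $J=\gamma J'$, and how the symbols of a justification's predicate and dependency set relate to those of its image. The tools are Lemma~\ref{l:gamma-pres-justs} together with Invariants~\ref{gamma-pres-just:decomp} and~\ref{gamma-pres-just:symbols}, and the de Bruijn identification of symbols (Corollary~\ref{l:de-bruijn-props}, Lemma~\ref{l:gamma-pres-expr}). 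Throughout, $\gamma$ is injective and commutes with $\origin{\cdot}$ on the symbols it is defined on. Hence $\origin{\gamma\alpha}=\gamma\,\origin\alpha$, and the source of an edge maps to the source of its image.

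For reflection (Claim~\ref{gamma-dp:refl}), take $(e_1,e_2)\in\DP$. It is contributed by some $j\in J$ with $w_j=e_2$ and a symbol $\alpha\in\symbols(P_j)\cup\symbols(D_j)$ with $\origin\alpha=e_1$. Since $J=\gamma J'$, we have $j=\gamma j'$ for some $j'\in J'$ justifying $e'_2$. By Invariant~\ref{gamma-pres-just:symbols}, $\symbols(P_{j'})\supseteq\gamma^{-1}\symbols(P_j)$. Because $\gamma D'=D$, also $\symbols(D_{j'})=\gamma^{-1}\symbols(D_j)$. Thus $\gamma^{-1}\alpha$ is a symbol of $j'$ with origin $e'_1$, and freezing $J'$ yields $(e'_1,e'_2)\in\DP'$. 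One subtlety must be settled first: every $j\in J$ must have a non-residual preimage. This holds because $J$ is obtained as the image of $J'$ in Lemma~\ref{l:gamma-restr-execs}, so every $j\in J$ is by construction $\gamma$ of some $j'$. I would make explicit that this construction only uses the non-residual correspondence. Where $j'$ is residual, we replace it by the non-residual justification that generated the image. If this replacement does not go through, we fall back on treating the residual case separately.

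For preservation (Claim~\ref{gamma-dp:pres}), take $(e'_1,e'_2)\in\DP'$ with $\iter(e'_1)(\ell)\neq0$. It is witnessed by $j'\in J'$ and a symbol $\alpha'$ with origin $e'_1$, so $\alpha'\notin\Sigma^0$. If $\alpha'\in\symbols(D_{j'})$, then $\gamma\alpha'\in\symbols(\gamma D')$ and we are done. If $\alpha'\in\symbols(P_{j'})$, Invariant~\ref{gamma-pres-just:symbols} gives $\symbols(P')=\gamma^{-1}\symbols(P)\uplus\Sigma^0$. Since $\alpha'\notin\Sigma^0$, we get $\gamma\alpha'\in\symbols(P_{\gamma j'})$, and the image edge $(\gamma e'_1,\gamma e'_2)$ appears in $\DP$.

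The main obstacle is again residual justifications in $J'$. These are justifications with a strengthening, forwarding or lift touching the first iteration, and they have no image. I would argue that any $\DP'$-edge from such a $j'$ that is not sourced in the first iteration is also produced by a non-residual justification of the same event. A residual elaboration only adds symbols of $\Sigma^0$ or relates them, since the non-$\Sigma^0$ part of the predicate is carried along unchanged by the cases of Lemma~\ref{l:gamma-pres-justs}. Failing a clean argument, the fallback is to restrict the claim to non-residual $j'$. The residue would then be deferred, like the first-iteration sources, to Lemma~\ref{l:gamma-pres-post-futures}.
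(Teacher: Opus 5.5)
Your argument is the paper's. You reduce via Definition~\ref{def:freeze} to comparing the symbols of $j'$ and $\gamma j'$. For dependency sets you use $\gamma D'=D$, and for predicates Invariant~\ref{gamma-pres-just:symbols}, $\symbols(P')=\gamma^{-1}\symbols(P)\uplus\Sigma^0$. Injectivity of $\gamma$ and its commuting with $\origin{\cdot}$ then give both claims. Your fallback for residual justifications (restrict to non-residual $j'$ and defer the rest to Lemma~\ref{l:gamma-pres-post-futures}) is exactly where the paper leaves them. Drop the speculative claim that residual elaborations carry the non-$\Sigma^0$ part of the predicate unchanged: a residual Strengthening (Definition~\ref{def:elab-str}) whose conjunct relates a first-iteration symbol to an outside one can introduce a new origin outside the first iteration, which is why the paper discharges residuals at the level of posterior futures rather than edge by edge.
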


\begin{proof}
  By Lemma~\ref{l:gamma-pres-just}, $j'\colon(P',D')\justifies^{\delta'}w'$ is
  a justification in $J'$ if and only if $\gamma j'\colon(P,\gamma
  D')\justifies^{\gamma\delta'}\gamma w'$ is a justification in $J$. By
  Definition~\ref{def:freeze} it therefore suffices to compare the origins of
  $\symbols(P'_{j'})\cup\symbols(D'_{j'})$ with the origins of
  $\symbols(P_{\gamma j'})\cup\symbols(D_{\gamma j'})$.

  \medskip\noindent\textbf{Dependencies.} Pre-justifications take $D'$ to be
  the origins of the symbols of the expressions the justified event carries:
  $\origin{x'}\cup\origin{\expr'}$ for a write $(w'\colon W~x'~\expr')$,
  $\origin{\expr'}$ for an allocation $(w'\colon\Allocs~\alpha'~\expr')$, whose
  size expression it is, and $\origin{\expr'}$ for a deallocation
  $(w'\colon\Deallocs~\expr')$, naming the location it frees. Each elaboration
  resets the dependency set to the origins of the symbols of those expressions
  as it rewrites them, per Definitions~\ref{def:elab-va}
  to~\ref{def:elab-weak}; of the six only $\elab{va}$ and $\elab{fwd}$ rewrite
  any, and neither applies to an allocation or a deallocation, whose dependency
  set is therefore the one its pre-justification fixed.
  By Lemma~\ref{l:episodic-symb} these symbols are read in the same iteration as
  $w'$ or before the loop. As $\gamma$ is a mapping from the $i+1$-st iteration
  in $\mathbb E_{n+1}$ to the $i$-th in $\mathbb E_n$ which preserves
  $\origin$ and $\symbols$ by Corollary~\ref{c:gamma-constr} and
  Lemma~\ref{l:gamma-pres-expr},
  $\gamma\origin{\symbols(D'_{j'})}=\origin{\symbols(D_{\gamma j'})}$ for every
  $w'$ in the domain of $\gamma$. The dependencies therefore contribute no edge
  of $\DP'\setminus\gamma^{-1}\DP$.

  \medskip\noindent\textbf{Predicates.} By
  Invariant~\ref{gamma-pres-just:symbols} of Lemma~\ref{l:gamma-pres-just},
  $\symbols(P'_{j'})=\gamma^{-1}\symbols(P_{\gamma j'})\uplus\Sigma^0$ with
  $\Sigma^0$ the symbols read in the first iteration of $\ell$ in $\mathbb
  E_{n+1}$. Applying $\origin{\cdot}$ and using that $\gamma$ commutes with
  $\origin$,

  \begin{equation}
    \origin{\symbols(P'_{j'})}~=~
    \gamma^{-1}\origin{\symbols(P_{\gamma j'})}~\uplus~\origin{\Sigma^0}
  \end{equation}

  \noindent where $\origin{\Sigma^0}$ lies in the first iteration of $\ell$ in
  $\mathbb E_{n+1}$, by the convention that each read event introduces a fresh
  symbol.

  \medskip\noindent Claim~\ref{gamma-dp:refl} follows from the two inclusions
  $\gamma^{-1}\origin{\symbols(P_{\gamma j'})}\subseteq\origin{\symbols(P'_{j'})}$
  and $\gamma\origin{\symbols(D'_{j'})}=\origin{\symbols(D_{\gamma j'})}$, as
  $\gamma$ is injective on its domain. Claim~\ref{gamma-dp:pres} follows as
  every symbol of $P'_{j'}$ outside $\Sigma^0$ lies in
  $\gamma^{-1}\symbols(P_{\gamma j'})$, and $\origin{\Sigma^0}$ lies in the
  first iteration of $\ell$.

  \medskip\noindent Finally, $\gamma$ does not preserve $\DP$ everywhere: the
  events of the first iteration of $\ell$ in $\mathbb E_{n+1}$ are not in the
  domain of $\gamma$, which maps the $k+1$-st iteration to the $k$-th, so
  neither are the edges they source.

\end{proof}

The posterior futures of Definition~\ref{def:post-futures} discard the
$\DP$-edges $\gamma$ does not preserve, as soon as the history covers the first
iteration in the domain of $\gamma$.

\begin{lemma}[$\gamma$ preserves and reflects posterior
  futures]\label{l:gamma-pres-post-futures}

  Let $H'$ be a history in $\mathbb X'$ containing the events of the first
  iteration of $\ell$, i.e.~$\left\{e'\in X'\mid\iter(e')(\ell)=0\right\}
  \subseteq H'$. Then $\gamma H'$ is a history in $\mathbb X$ and

  \begin{equation}
    \gamma\left(\Phi\mid_{H'}\right)~=~\Phi\mid_{\gamma H'}
  \end{equation}

\end{lemma}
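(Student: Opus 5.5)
The plan is to prove the two claims separately: first that $\gamma H'$ is a history in $\mathbb X$, and then the equality of posterior future sets by a double inclusion. Both directions are carried by $\gamma$ preserving and reflecting $\ppo$ (Lemma~\ref{l:gamma-pres-ppo}) and $\DP$ (Lemma~\ref{l:gamma-pres-dp}). The hypothesis that $H'$ covers the first iteration is exactly what neutralises the residues those lemmas leave open.

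\textbf{$\gamma H'$ is a history.} $\gamma H'$ is the image of $H'$ restricted to $\mathrm{dom}(\gamma)$, and it lies in $X=\gamma X'$ by Lemma~\ref{l:gamma-restr-execs}. Suppose $e\in\gamma H'$ and $e_0\,(\ppoord\cup\DP)\,e$ in $\mathbb X$. Since $\gamma$ is a bijection from $\mathrm{dom}(\gamma)$ onto $X$, we can write $e_0=\gamma e_0'$. Reflection of $\ppo$ (Lemma~\ref{l:gamma-pres-ppo}) and of $\DP$ (Claim~\ref{gamma-dp:refl}) then gives $e_0'\,(\ppoord\cup\DP')\,e'$ in $\mathbb X'$. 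Down-closure of $H'$ puts $e_0'\in H'$, and hence $e_0\in\gamma H'$.

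\textbf{Inclusion $\subseteq$.} Take a posterior future $\phi_{\mathbb X',H'}$ and a pair $(e_1',e_2')$ in it with $e_1'\notin H'$. Because $H'$ contains every event with $\iter(\cdot)(\ell)=0$, the source $e_1'$ lies in $\mathrm{dom}(\gamma)$. The target $e_2'$ is $\ppo\cup\DP$-after $e_1'$, so it is not in the first iteration either, since the first iteration is down-closed-before $e_1'$ by Condition~\ref{episodic:events}.

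Two residues have to be handled at this point. The $\DP$-edges not preserved by Claim~\ref{gamma-dp:pres} are exactly those sourced in the first iteration. The $\pred$ and forwarding residues of Lemmas~\ref{l:gamma-pres-pred} and~\ref{l:gamma-pres-justs} are likewise pairs touching the first iteration. All of these are excluded by $e_1'\notin H'$, so the pair maps to $(\gamma e_1',\gamma e_2')\in\phi_{\mathbb X,\gamma H'}$.

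Residual justifications need more care. Their contribution to $\DP'$ is only edges out of $\origin{\Sigma^0}$, which lies in $H'$. Their contribution to $\ppo$ also has to be addressed: the plan is to argue via Lemma~\ref{l:restrict-ppo} and Invariant~\ref{gamma-pres-just:restricted-preds} that the restricted predicates on iterations $\geq1$ coincide with those of a non-residual justification. This is the step I expect to be the main obstacle, because a residual strengthening or lift can alter $\ppoalias$/$\ppormw$ on later-iteration pairs. I would try first to show that such a justification can be replaced by its non-residual premise without changing the frozen relations restricted to $X'\setminus H'$, using the transparency of $R_0'$ (Clauses~\ref{r0-transparent:sat},~\ref{r0-transparent:ent}).

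\textbf{Inclusion $\supseteq$.} Take an execution $\mathbb X$ of $\mathbb E_n$ whose posterior future belongs to $\Phi\mid_{\gamma H'}$. We must exhibit a preimage execution $\mathbb X'$ of $\mathbb E_{n+1}$: prefix $\mathbb X$ with the first iteration of $H'$ and shift its iterations by one. Conditions~\ref{episodic:reg}, \ref{episodic:mem} and~\ref{episodic:cond} ensure that this prefix neither constrains nor is constrained by the later iterations, with the needed satisfiability coming from $\restrict{R_0'}{\emptyset}\equiv\top$. The justifications lift by the reflection half of Lemma~\ref{l:gamma-pres-justs}, and Lemma~\ref{l:execs-transfer} places both executions in $\mathbb E_\infty$. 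Each pair in the image future then lifts by the same preservation and reflection facts used above, which gives the equality.
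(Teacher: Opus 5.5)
Your history argument and your inclusion $\subseteq$ are the paper's. Your inclusion $\supseteq$, however, does not go through.

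\textbf{The reverse inclusion.} You read $\Phi\mid_{\gamma H'}$ as ranging over arbitrary executions of $\mathbb E_n$ and give each a preimage by prefixing a failing first iteration. That is the \emph{extending} direction, which the paper deliberately avoids (see the remark after Lemma~\ref{l:gamma-pres-consistency}). Nothing you cite supports it:
\begin{itemize}
\item $H'$ is down-closed under $\ppoord\cup\DP$ but not under $\rf$. The writes from which the first iteration's reads took their values therefore need not lie in $\mathbb X$ at all.
\item $\restrict{R'_0}{\emptyset}\equiv\top$ only says the branching conditions leave the pre-loop symbols free. It says nothing once $\varphi_{\rf}$ pins the first-iteration symbols to the values of the writes available in $\mathbb X$, so the failing outcome of the prefixed iteration can be unsatisfiable there.
\item You do not address the no-thin-air and coherence axioms for the new writes against the other threads' events of $\mathbb X$.
\item The $\pred$ residue of Lemma~\ref{l:gamma-pres-pred} is not disposed of by $e'_1\notin H'$ either. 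Its $e'_1$ lies before the loop, and it lets executions of $\mathbb E_n$ forward along pairs whose preimages are not $\pred$-adjacent, so those executions have no preimage.
\end{itemize}

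At the level of sets, your inclusion is exactly the surjectivity the paper disclaims: the embedding is not surjective because $\rf$ may entail termination of the loop. That is why Lemma~\ref{l:post-futures-narrowed} is only a narrowing. The lemma is about the fixed pair $\mathbb X'$ and $\mathbb X=\gamma\mathbb X'$, the convention set before Lemma~\ref{l:gamma-pres-justs} and used in Lemma~\ref{l:gamma-pres-dp}, and your own history argument already uses it. With that reading the reverse inclusion needs no construction. A pair $(e_1,e_2)$ of the future of $\gamma\mathbb X'$ with $e_1\notin\gamma H'$ pulls back, by Lemma~\ref{l:gamma-pres-ppo}, Claim~\ref{gamma-dp:refl} and injectivity, to a pair $(e'_1,e'_2)$ with $e'_1\in\mathrm{dom}(\gamma)\setminus H'$. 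This is the same argument you gave for downward closure.

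\textbf{Residual justifications.} Your idea of replacing them by a non-residual premise is the paper's, but it rests on a false claim and aims too high. Residual justifications do not contribute only edges out of $\origin{\Sigma^0}$. A residual strengthening whose conjunct relates a first-iteration symbol to one read in an iteration $k$ with $1\le k<\iter(w')(\ell)$ also adds an edge $(e,w')$, and $e$ may lie outside $H'$. Replacing it by its premise therefore does change the frozen relations on $X'\setminus H'$, and your plan cannot succeed as stated.

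What survives is weaker, and it is what the paper proves. By Remark~\ref{r:str-earlier}, Condition~\ref{episodic:events} applied to the execution \emph{without} the strengthening still orders $e$ before $w'$ by a $(\ppoord\cup\DP)^+$-path. Since $H'$ is downward closed, that path enters $w'$ by an edge whose source lies outside $H'$, so $w'$ is outside the horizon either way. The comparison is of horizons, not of frozen relations.

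You also owe a check that the replaced justification set still yields a consistent execution. For a residual lift the paper observes that the premise adds only edges from the first iteration into $w'$. A cycle through one of them would need a path from $w'$ back into the first iteration, which Condition~\ref{episodic:events} and the acyclicity of $\mathbb X'$ exclude.
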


\begin{proof}

  We first check that $\gamma H'$ is a history in $\mathbb X$, that is downward
  closed in $\ppoord\cup\DP$ per Definition~\ref{def:histories}. Let $e\in\gamma
  H'$ and $\tilde e~(\ppoord\cup\DP)~e$. Then $e=\gamma e'$ for some $e'\in H'$,
  and by Lemma~\ref{l:gamma-pres-ppo} and Claim~\ref{gamma-dp:refl} of
  Lemma~\ref{l:gamma-pres-dp} there is $\tilde e'\in X'$ with $\gamma\tilde
  e'=\tilde e$ and $\tilde e'~(\ppo'\cup\DP')~e'$. As $H'$ is downward closed,
  $\tilde e'\in H'$ and hence $\tilde e\in\gamma H'$.

  By Definition~\ref{def:post-futures} a posterior future retains a pair
  $(e'_1,e'_2)$ of a future only when $e'_1\notin H'$. By
  Lemma~\ref{l:gamma-pres-ppo} $\gamma$ preserves and reflects $\ppo$. By
  Claims~\ref{gamma-dp:refl} and~\ref{gamma-dp:pres} of
  Lemma~\ref{l:gamma-pres-dp} $\gamma$ reflects $\DP$ and preserves those edges
  of $\DP'$ not sourced in the first iteration of $\ell$. Contraposing
  Claim~\ref{gamma-dp:pres}, an edge of $\DP'$ that is not preserved is sourced
  in the first iteration of $\ell$, hence in $H'$ by assumption, and is
  therefore discarded in $\Phi\mid_{H'}$.

  It remains to account for executions whose justifications are residual in the
  sense of the proof of Lemma~\ref{l:gamma-pres-justs}, which have no image.
  Unlike the $\DP$-edges above, a residual elaboration may \emph{remove} a
  dependency, which the posterior future cannot discard. We show instead that
  each such execution $\mathbb X'$ has the same posterior future at $H'$ as an
  execution without the residual step, which has an image.

  Let $j'\in J'$ justify $w'$ by a residual lift of premises
  $j'_1\colon(P'_1,D'_1)\justifies w'_1$ and
  $j'_2\colon(P'_2,D'_2)\justifies w'_2$, with $w'=w'_2$, whose predicates agree
  outside the first iteration: $P'_1\equiv R'_{0,1}\wedge Q'_1$ and
  $P'_2\equiv R'_{0,2}\wedge Q'_2$ with $\evalenv{Q'_1}{\Lambda}\equiv Q'_2$ and
  $R'_{0,1},R'_{0,2}$ over $\Sigma_{\ell\mapsto 0}$, each constraining nothing
  before the loop, and $\Lambda$ maps the symbols read in the first iteration
  among themselves. The lifted predicate is
  $(\evalenv{R'_{0,1}}{\Lambda}\vee R'_{0,2})\wedge Q'_2$. Let $J'_2$ be $J'$ with
  $j'_2$ in place of $j'$. Then $(X',J'_2,\rf')$ is an execution with the same
  posterior future at $H'$:

  \begin{itemize}

    \item $\DP$: the two predicates differ only in their first-iteration part,
      so the edges into $w'$ in which the two frozen relations differ are
      sourced in the first iteration, hence in $H'$, and discarded.

    \item $\ppo$: the part in which they differ is over $\Sigma_{\ell\mapsto 0}$
      and constrains nothing before the loop, so for expressions over symbols
      read before the loop or in iterations $\geq 1$ -- the locations of the
      events $\ppoalias$ compares with $w'$ outside $H'$, by
      Lemma~\ref{l:episodic-symb} -- satisfiability is the same under either
      predicate, by the projection argument for
      Invariant~\ref{gamma-pres-just:restricted-preds}. The same holds for the
      branching conditions $\ppormw$ tests.

    \item Consistency: $J'_2$ adds only edges from the first iteration into
      $w'$, which lies in an iteration $\geq 1$. A cycle through one of them
      would need a path from $w'$ back into the first iteration, which
      Condition~\ref{episodic:events} and acyclicity of $\mathbb X'$ exclude.
      $P'_2$ is consistent with the value restrictions of $X'$, as $X'$ takes
      the branch whose conditions $R'_{0,2}$ records.

  \end{itemize}

  \noindent A residual strengthening $\elab{str}(j'_1,j')$ is replaced by its
  premise $j'_1$ likewise, as far as the $\DP$-edges it contributes go. The
  replacement removes the edges from the origins the strengthening newly
  constrains. Those in the first iteration lie in $H'$. For those read in a
  later iteration than the first but in an earlier one than $w'$,
  Remark~\ref{r:str-earlier} gives the ordering without the strengthening, and
  as a history is downward closed, a path from an event outside $H'$ to $w'$
  ends in an edge whose source is outside $H'$, so $w'$ is outside the horizon
  either way. For those read in a later iteration than $w'$,
  Remark~\ref{r:str-earlier} shows the strengthened justification is not used in
  a consistent execution at all.
  A residual forwarding along a pair from the first iteration into the second
  is the case already discharged above for $\DP$.

  Replacing every residual justification in this way yields an execution whose
  justifications all have images, with the same posterior future at $H'$. Thus
  $\gamma\left(\Phi\mid_{H'}\right)=\Phi\mid_{\gamma H'}$.

\end{proof}

Taking $H'$ to be the events of the iterations before the $i+1$-st, $\gamma H'$
is the set of events of the iterations before the $i$-th, which is the
correspondence between histories used in Lemma~\ref{l:post-futures-narrowed}:
there $H_{i+1}$ and $H_i$ are exactly these two sets. The hypothesis that $H'$
cover the first iteration of $\ell$ is met for every $i\geq 0$, as $H'$ collects
the iterations $0$ through $i$. In the boundary case $i=0$, $H'$ is the first
iteration alone and $\gamma H'$ is empty, matching $H_0$, since $\gamma$ is
undefined on that iteration.

The histories this correspondence compares all cover the first iteration, and
that is where residual justifications are discharged. Histories that end inside
the first iteration are never compared, and Theorem~\ref{t:finite-post-futures}
needs no $\gamma$ for them: by Condition~\ref{episodic:events} every event of a
later iteration of $\ell$ has a $(\ppoord\cup\DP)$-predecessor in the first
iteration outside such a history -- as a history is downward closed, a path
from outside it ends in an edge from outside it -- so of the events of $\ell$
their horizons contain only events of the first iteration, of which there are
finitely many.

Iterating the lemma bounds how much of a history the semantics needs to retain.
In an episodic loop the histories $H_{i+1}$ of
Lemma~\ref{l:post-futures-narrowed} form a chain under $\gamma$, and after $i$
applications only the events before the loop and those of one iteration remain.

\begin{corollary}[Histories collapse to one iteration]\label{c:gamma-history-reset}

  Let $H_{i+1}$ be the history of the events of the iterations of $\ell$ before
  the $i+1$-st, together with the events before the loop. Then

  \begin{equation}
    \gamma^i H_{i+1}~=~H_1
  \end{equation}

  \noindent the events before the loop together with those of the $i$-th
  iteration, re-indexed as the first, and
  $\gamma^i\left(\Phi\mid_{H_{i+1}}\right)=\Phi\mid_{H_1}$.

\end{corollary}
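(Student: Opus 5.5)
The plan is induction on $i$, applying Lemma~\ref{l:gamma-pres-post-futures} once per step and composing the resulting maps. The base case $i=0$ is trivial: $\gamma^0$ is the identity and $H_1=H_1$. For the inductive step, suppose $\gamma^{i-1}H_i=H_1$ and $\gamma^{i-1}(\Phi\mid_{H_i})=\Phi\mid_{H_1}$. We want a single application of $\gamma$ to take $H_{i+1}$ in $\mathbb E_{n+1}$ to $H_i$ in $\mathbb E_n$, in the sense fixed after Lemma~\ref{l:gamma-pres-post-futures}.

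The first step checks the hypothesis of Lemma~\ref{l:gamma-pres-post-futures}. $H_{i+1}$ contains the events before the loop and the iterations $0,\dots,i$ of $\ell$, so it contains $\{e'\mid\iter(e')(\ell)=0\}$. $H_{i+1}$ is also $\ppo\cup\DP$-downward closed, for the following reason. By Condition~\ref{episodic:events}, any predecessor of an event in iteration $k\le i$ is either before the loop or in an iteration $\le k$. It cannot lie in a later iteration, because that would close a cycle with the ordering Condition~\ref{episodic:events} supplies, and Axiom~\ref{def:mem-model-axiom:nta} forbids such cycles. The lemma then gives that $\gamma H_{i+1}$ is a history and that $\gamma(\Phi\mid_{H_{i+1}})=\Phi\mid_{\gamma H_{i+1}}$.

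Next we identify $\gamma H_{i+1}$. By Definition~\ref{def:gamma}, $\gamma$ drops iteration $0$, is the identity before the loop, and lowers $\iter(\cdot)(\ell)$ by one (Corollary~\ref{c:gamma-constr}). Hence $\gamma H_{i+1}$ is the events before the loop together with iterations $0,\dots,i-1$ in $\mathbb E_n$, which is $H_i$. Lemma~\ref{l:execs-transfer} places both executions in $\mathbb E_\infty$, so the maps compose. Composing with the induction hypothesis gives $\gamma^iH_{i+1}=\gamma^{i-1}H_i=H_1$, and likewise for the posterior futures.

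The expected obstacle is the reading of $H_1$ as ``the events before the loop together with those of the $i$-th iteration, re-indexed as the first''. This conflicts with the literal indexing, under which $H_1$ covers iteration $0$ only. It is settled by tracking, across the $i$ applications, where the de Bruijn indexing places the retained iteration. By Property~\ref{prop:de-bruijn-tail} of Corollary~\ref{l:de-bruijn-props}, symbols agree from the end, so the original $i$-th iteration is carried index by index onto the surviving leading iteration. Property~\ref{prop:de-bruijn-head} keeps the pre-loop events fixed throughout. I would state this bookkeeping as a short auxiliary claim, $\iter(\gamma^ie)(\ell)=\iter(e)(\ell)-i$ on $\mathrm{dom}(\gamma^i)$, proved by iterating Corollary~\ref{c:gamma-constr}, and derive the identification of $H_1$ from it.
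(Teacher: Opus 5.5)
Your proof is correct and follows the paper's own argument. The paper shows $\gamma H_{j+1}=H_j$, because $\gamma$ is the identity before the loop, is undefined on the first iteration and lowers $\iter(\cdot)(\ell)$ by one, and it carries the posterior futures by $i$ applications of Lemma~\ref{l:gamma-pres-post-futures}, whose hypothesis holds at each step since every $H_{j+1}$ covers the first iteration; your induction and de Bruijn bookkeeping only make that iteration explicit, and your downward-closure check is superfluous because the statement, like the paper, already takes $H_{i+1}$ to be a history (as written, that check also excludes only predecessors in later iterations and not same-thread events after the loop, on which Condition~\ref{episodic:events} is silent).
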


\begin{proof}

  $\gamma$ maps the $k$-th iteration to the $k-1$-st and is undefined on the
  first, so $\gamma H_{j+1}=H_j$ for every $j\geq 0$, and $i$ applications give
  $\gamma^i H_{i+1}=H_1$. The events of the $i$-th iteration are carried to the
  first by the same count. $\gamma$ is the identity before the loop by
  Definition~\ref{def:gamma-constr} of $\gamma$, so those events are retained
  throughout. The equality of posterior futures follows by $i$ applications of
  Lemma~\ref{l:gamma-pres-post-futures}, whose hypothesis is met at each step as
  $H_{j+1}$ covers the first iteration for every $j\geq 0$.

\end{proof}

\noindent This is the shape of the reset performed by the operational semantics
at a loop boundary, discussed in Appendix~\ref{app:opsem-sound-complete}.

Lemma~\ref{l:gamma-restr-execs} establishes that $\gamma\mathbb X'$ is an
execution of $\prog$ under step-counter $n$, but not that it is
\emph{consistent} in the sense of Paragraph~\ref{def:mem-model-axiom}. Since
Definition~\ref{def:post-futures} of the posterior future set quantifies over
complete executions, and Lemma~\ref{l:post-futures-narrowed} transports
horizons along $\gamma$, the image must be admitted by the memory model. It is,
and for a reason particular to the direction of $\gamma$: $\gamma$ restricts an
execution rather than extending one, so it introduces no edges, and every edge
of the image is the image of an edge of the domain.

\begin{lemma}[$\gamma$ preserves consistency]\label{l:gamma-pres-consistency}

  Let $\mathbb X'~=~(X',J',\rf')$ be a consistent execution in $\mathbb
  E_{n+1}$, that is one satisfying Axioms~\ref{def:mem-model-axiom:nta}
  and~\ref{def:mem-model-axiom:co}. Then $\gamma\mathbb X'$ is a consistent
  execution in $\mathbb E_n$.

\end{lemma}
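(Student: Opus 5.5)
The plan is to show that every cycle the axioms forbid in $\gamma\mathbb X'$ pulls back along $\gamma$ to a forbidden cycle in $\mathbb X'$. The tool is the observation before the statement: $\gamma$ restricts, it never extends, so each edge of the image comes from an edge of the domain. We treat the two axioms separately, and first fix the relations of $\gamma\mathbb X'$. Its $\rf$ is $\gamma\rf'$, well defined by Lemma~\ref{l:gamma-restr-execs} and Condition~\ref{episodic:mem}. Its $\ppo$ and $\DP$ are the relations frozen from $\gamma J'$. For $\ppo$ we use Lemma~\ref{l:gamma-pres-ppo}. For $\DP$ we use the reflection of Claim~\ref{gamma-dp:refl} of Lemma~\ref{l:gamma-pres-dp}: every $\ppo$- or $\DP$-edge of the image has a preimage edge between the unique preimages. $\gamma$ is injective and preserves event type, location and value (Corollary~\ref{c:gamma-constr}, Lemma~\ref{l:gamma-pres-expr}). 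So $\Wrel$, $\Racq$, and hence $\SW$, pull back as well.

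\textbf{No-thin-air.} Suppose there is a cycle in $\DP\cup\ppoord\cup\rf$ on $\gamma X'$. Replace each edge by its preimage. This gives a cycle in $\DP'\cup\ppoord'\cup\rf'$ on $X'$, which contradicts Axiom~\ref{def:mem-model-axiom:nta} for $\mathbb X'$.

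\textbf{Coherence.} $\HB$ pulls back edgewise by the same argument. For $\ECO$ we must also transport $\CO$ and $\FR=\rf^{-1};\CO$. We take the coherence order of the image to be the restriction $\gamma\CO'$, which is a total order per location on the retained writes. Location equality under the image constraint agrees with that under the domain constraint, by the transparency clauses of $R'_0$ established with Invariant~\ref{gamma-pres-just:decomp}. With that choice every $\CO$- and $\FR$-edge of the image is the image of one in the domain. A cycle in $\ECO\cup\HB$ therefore pulls back to a cycle in $\ECO'\cup\HB'$, contradicting Axiom~\ref{def:mem-model-axiom:co}. Completeness of the image follows from the completeness of $\mathbb X'$: no retained read loses its write, since by Condition~\ref{episodic:mem} no retained read reads from the first iteration.

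\textbf{Main obstacle.} The delicate step is justifying that the image's relations are generated by preimages at all. There are two concerns. First, residual justifications (proof of Lemma~\ref{l:gamma-pres-justs}) have no image. Here we follow Lemma~\ref{l:gamma-pres-post-futures} and replace each residual justification by its non-residual premise before applying $\gamma$. The replacement removes the dependencies that the residual steps introduced, and the edges it adds run from the first iteration. The first iteration is dropped by $\gamma$, so no new image edge arises. Second, the satisfiability condition (3) of freezing must survive. We argue from Invariant~\ref{gamma-pres-just:decomp}: $P'\wedge\varphi_{\rf'}$ is satisfiable, and projecting away $\Sigma^0$ gives satisfiability of $P\wedge\varphi_{\gamma\rf}$. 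This uses that $\varphi_{\gamma\rf}$ mentions no symbol of $\Sigma^0$, by Lemma~\ref{l:episodic-symb}. The allocation-disjointness conjuncts only shrink under restriction.
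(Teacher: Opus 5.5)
Your core argument is the paper's proof. It reflects $\ppo$ by Lemma~\ref{l:gamma-pres-ppo}, $\DP$ by Claim~\ref{gamma-dp:refl} of Lemma~\ref{l:gamma-pres-dp}, and $\rf$ by construction. It carries $\SW$ along the unchanged memory-order annotations, takes $\gamma\CO'$ as the coherence order, and pulls every forbidden cycle back edgewise along the injective $\gamma$. The paper's proof stops there. It takes the well-formedness of $\gamma\mathbb X'$ from Lemma~\ref{l:gamma-restr-execs} and the images of justifications from Lemma~\ref{l:gamma-pres-justs}. It therefore implicitly assumes $J'$ has no residual justifications, and it says nothing about completeness or Condition~(3) of freezing.

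Two of the steps you add beyond that fail as stated.

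\emph{Completeness.} It does not follow from Condition~\ref{episodic:mem}, which constrains only the reads of $\ell$ itself. A read outside $\ell$ can take a first-iteration write. An example is the reclaimer's read of \code{rcu[}$t_2$\code{]} taking the RCU exit that closes a failing first iteration. That read stays in the domain of $\gamma$ as the identity, while its write is dropped. Neither axiom needs completeness, so drop the claim.

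\emph{Residual justifications.} Your patch asserts that reverting to the premise adds only edges sourced in the first iteration. This fails for a residual load forwarding along $(e'_1,e'_2)$ with $e'_1$ in the first iteration, $e'_2$ in the second, and the premise mentioning the symbol of $e'_2$. Reverting then adds $(e'_2,w')$ to $\DP$. That is an edge between two retained events which is the image of no edge of $\mathbb X'$, so the pullback argument does not reach it. Such an edge can close a cycle
$e'_2\mathrel{\DP}w'\mathrel{\rf}r\ppo w\mathrel{\rf}e'_2$
through another thread. $\mathbb X'$ avoids this cycle because the forwarding removed its first edge. Case~\ref{episodic-caseb} of Condition~\ref{episodic:mem} does not exclude it, because the $\ppo$ step keeps $w$ unreachable from the first iteration in ${(\DP\cup\rf)}^+$.

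Reverting a residual strengthening has a similar effect on $\ppo$. It weakens $P_j$, and under a weaker predicate $\ppoalias$ can only gain pairs (Claim~\ref{restrict-ppo:other} of Lemma~\ref{l:restrict-ppo}).

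So the residual case needs the reverted execution to be shown consistent. Neither your argument nor Lemma~\ref{l:gamma-pres-post-futures} supplies that for these shapes. Absent it, state the result for executions without residual justifications, as the paper's proof tacitly does.
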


\begin{proof}

  By Lemma~\ref{l:gamma-restr-execs}, $\gamma\mathbb X'$ is an execution in
  $\mathbb E_n$, whose read-from relation is the image $\gamma\rf'$. It remains
  to establish the two axioms.

  $\gamma$ reflects each relation the axioms constrain: $\ppo$ by
  Lemma~\ref{l:gamma-pres-ppo}, $\DP$ by Claim~\ref{gamma-dp:refl} of
  Lemma~\ref{l:gamma-pres-dp}, and $\rf$ by construction. As $\gamma$ is
  injective on its domain, $\gamma^{-1}$ is a function on $\gamma\mathbb X'$, so
  every edge of $\DP\cup\ppoord\cup\rf$ in $\gamma\mathbb X'$ is the image of an
  edge of $\DP'\cup\ppoord'\cup\rf'$ in $\mathbb X'$.

  Suppose $\gamma\mathbb X'$ carried a cycle in $\DP\cup\ppoord\cup\rf$. Taking
  $\gamma^{-1}$ of each of its edges yields a cycle in
  $\DP'\cup\ppoord'\cup\rf'$ in $\mathbb X'$, contradicting
  Axiom~\ref{def:mem-model-axiom:nta} for $\mathbb X'$.

  For Axiom~\ref{def:mem-model-axiom:co}, let $\CO'$ be a coherence order
  witnessing the axiom for $\mathbb X'$, and take $\gamma\CO'$ as the coherence
  order on $\gamma\mathbb X'$. It is again a total order on the writes at each
  location, as $\gamma$ is injective and preserves $\loc$ by
  Lemma~\ref{l:gamma-pres-expr}. The synchronises-with component of $\HB$ is
  carried along too: $\SW$ is $\rf$ restricted to a releasing write and an
  acquiring read, $\gamma$ preserves and reflects $\rf$ by construction, and it
  leaves an event's action, and hence its memory order annotation, unchanged, so
  $\SW$ in $\gamma\mathbb X'$ is the image of $\SW'$. Hence $\FR$, $\ECO$ and
  $\HB$ in $\gamma\mathbb X'$ are the images of their counterparts in $\mathbb
  X'$, and a cycle in $\ECO\cup\HB$ pulls back as above.

\end{proof}

Note that no argument about cycles crossing the boundary between iterations is
needed. Such an argument would be required of a map that \emph{extends} an
execution by an iteration, where the added events may close a cycle with the
existing ones; the reachability of a further iteration would then have to be
assumed, and a cycle forced across one loop boundary would recur across every
other by the symmetry of episodic loops. $\gamma$ runs the other way.

The following lemma allows us to construct executions by exhaustively following
posterior future horizons.

\begin{lemma}\label{l:pfh-maximal}

  Exhaustively following posterior future horizons produces maximal consistent
  sets of events.

\end{lemma}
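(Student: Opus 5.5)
The plan is to read ``exhaustively following posterior future horizons'' as the process that starts from a history $H_0$ (the empty set, or any history of a given execution $\mathbb X=(X,J,\rf)$). At each step it adds one event $e\in\horizon\phi_{\mathbb X,H_k}$ to obtain $H_{k+1}=H_k\cup\{e\}$, and it stops when every horizon is empty. We must show two things about the terminal set $H_\omega$: that it is a maximal conflict-free set of events, and that it is consistent.

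We first establish two invariants by induction on $k$.

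\emph{History invariant.} $H_k$ is downward closed in $\ppoord\cup\DP$ (Definition~\ref{def:histories}). Adding a horizon element preserves this, because by Definition~\ref{def:horizons} every $(\ppoord\cup\DP)$-predecessor of $e$ outside $H_k$ would contradict minimality.

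\emph{Conflict-freeness.} $H_k\subseteq X$ for the execution whose future $\phi_{\mathbb X}$ the horizon was drawn from, and $X$ is conflict-free (Definition~\ref{def:executions}). Subsets of conflict-free sets are conflict-free. Here we commit to following horizons of a single future. Where the process switches between futures, we would instead note that any history is also a history of every execution whose future it is a prefix of, and fix one such execution at the end.

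Termination and exhaustion follow next. For finite step-counter $n$, $X\subseteq E_n$ is finite, so the process stops. We argue that it stops exactly at $H_\omega=X$. If $X\setminus H_\omega$ were nonempty, then since $\phi_{\mathbb X}$ is acyclic (Definition~\ref{def:futures} via Axiom~\ref{def:mem-model-axiom:nta}) and finite, $X\setminus H_\omega$ has a $(\ppoord\cup\DP)$-minimal element. That element lies in $\horizon\phi_{\mathbb X,H_\omega}$, contradicting termination. Maximality of $H_\omega$ then is maximality of $X$ as required by Definition~\ref{def:executions}. For loops we pass to $\mathbb E_\infty$ through Lemma~\ref{l:execs-transfer}: executions there are executions of some $\mathbb E_n$, so the finite argument applies. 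For the infinite-family picture one may instead use Corollary~\ref{c:gamma-history-reset} and Theorem~\ref{t:finite-post-futures}, which guarantee that only finitely many distinct horizons arise.

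The main obstacle is consistency, and specifically the fact that $\rf$ is not part of $\phi$. We must show that the order in which events were added is compatible with $\rf$, so that no read is added before the write it reads from is ``available''. Our approach is to note that consistency is a property of $(X,J,\rf)$ alone, independent of the linearisation, so $H_\omega=X$ inherits Axioms~\ref{def:mem-model-axiom:nta} and~\ref{def:mem-model-axiom:co} from the consistent execution whose future was followed. This is the case because the posterior future set (Definition~\ref{def:post-futures}) ranges only over complete, consistent executions, with consistency preserved by $\gamma$ through Lemma~\ref{l:gamma-pres-consistency}.

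What remains delicate is when the horizon is taken over all of $\Phi_H$ rather than over one future. There we would show that any choice from $\horizon\Phi_{H_k}$ can be extended to an execution containing $H_{k+1}$. The argument takes the union of the futures whose horizon contains the chosen event, and uses Lemma~\ref{l:co-axiom-rf-viswrite} to ensure that reads are matched only to writes observable at that point, so that the completed $\rf$ satisfies coherence.
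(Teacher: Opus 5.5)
\textbf{Gap: only the single-future case is proved.} The argument you actually carry out fixes one future $\phi_{\mathbb X}$ and follows only its horizons. In that case the lemma is vacuous: you are enumerating $X$ along a linear extension of $\ppoord\cup\DP$, so of course you end at $X$.

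The lemma is invoked in the proof of Theorem~\ref{t:opsem-sound} for the case you defer as ``delicate''. There each step is drawn from $\horizon\Phi_H$, the horizon of \emph{any} future in $\Phi_H$, and consecutive steps may be witnessed by different executions. Your plan for that case does not work. Futures of distinct executions live on different, possibly conflicting, event sets, with $\ppo$ and $\DP$ frozen from different justification sets. Their union is therefore neither conflict-free nor the future of any execution. Lemma~\ref{l:co-axiom-rf-viswrite} also gives you no way to assemble a coherent $\rf$ for an execution built from scratch.

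\textbf{The fix is the remark you make in passing and then drop.} No construction is needed. By Definition~\ref{def:post-futures}, $\phi_{\mathbb X,H}$ is only formed for complete executions $\mathbb X$ in which $H$ is a history. So suppose the event $e$ added at step $k$ lies in $\horizon\phi_{\mathbb X_k,H_k}$. Then $H_k$ is a history of $\mathbb X_k$, and since $e$ is minimal in $X_k\setminus H_k$, so is $H_{k+1}$. This is exactly the paper's observation that every horizon event comes with an execution containing both the history and the event.

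At termination, $H_\omega$ is a history of the last witness $\mathbb X_k$. The horizon of $\phi_{\mathbb X_k,H_\omega}$ is among those forming $\horizon\Phi_{H_\omega}=\emptyset$, so your minimal-element argument gives $H_\omega=X_k$. Maximality and consistency are then inherited from $\mathbb X_k$, with no coherence reasoning. With this invariant, your exhaustion argument replaces the paper's contradiction argument (a missing consistent event would have appeared in a horizon of a prefix history), and it is somewhat more direct.

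\textbf{Termination.} Once steps switch between executions, finiteness of a single $X$ no longer bounds the process. Bound the strictly increasing chain $H_0\subsetneq H_1\subsetneq\cdots$ by $E_n$ for a fixed step-counter instead. In $\mathbb E_\infty$, a retry loop admits an infinite chain of histories, each contained in some longer terminating execution. So there the statement can only concern runs that terminate.
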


\begin{proof}

  Posterior future horizons are consistent with their respective histories. For
  every event in the posterior future horizon, there is a posterior future and
  thus also a future and an execution containing history and event. If the set
  of events produced by exhaustively following posterior future horizons was not
  maximal, there would be an event consistent with the constructed set missing.
  The same event would occur in an execution and thus in the posterior future
  horizon for a history which is also a prefix to the constructed set of events.

\end{proof}

\subsection{Deciding Equality of Memory Locations}\label{s:app-proofs-locations}

Two of the definitions ask whether two events access the same memory location:
$\ppoalias$ of Definition~\ref{def:ppo}, which orders the accesses a predicate
admits as aliasing, and the constraint $\varphi$ of Definition~\ref{def:freeze}
of freezing justifications, which keeps the locations of distinct allocations
apart. Both are the satisfiability of
\[
  P\wedge\loc(e_1)=\loc(e_2)
\]
\noindent for the predicate $P$ of the justifications in play.

Comparing symbolic memory locations means tracing the assignment of pointers
through the program. In the fragment of Definition~\ref{def:prog-syntax}, the
expressions of Definition~\ref{def:expressions} admit multiplication between two
non-constant operands, so a location expression may be a multivariate polynomial
over symbols, and the query above decides the solvability of a Diophantine
equation. That is undecidable in general -- Hilbert's tenth
problem~\cite{Matiyasevich1993Hilbert10} -- and the undecidability sits in the
semantics rather than in an analysis layered on top of it, as the same query
decides which sets of justifications freeze into an execution. It does not
depend on pointer arithmetic: a polynomial in a branching condition enters $P$
through the value restriction of the accesses it guards.

The programs whose locations we can decide are those that do neither.

\begin{definition}[Constant pointer offsets]\label{def:const-offsets}

  Call a symbol a \emph{base} symbol if it is used as the base of a memory
  location, and call an expression a \emph{location expression} if it is
  $\alpha+c$ for a base symbol $\alpha$ and a constant $c\in\Val$. A program has
  \emph{constant pointer offsets} if

  \begin{enumerate}

    \item\label{const-offsets:loc} every expression it uses as a memory location
      is a location expression,

    \item\label{const-offsets:val} every expression it writes to memory either is
      a location expression or mentions no base symbol, and

    \item\label{const-offsets:cmp} every branching condition mentioning a base
      symbol is an equality or disequality between two location expressions.

  \end{enumerate}

\end{definition}

\noindent Condition~\ref{const-offsets:loc} admits the array access
$\code{*(rcu}+\code{tid)}$, whose offset is fixed for the accessing thread, and
rules out indexing at a computed position.
Condition~\ref{const-offsets:val} is what makes the first condition stable under
reading back what a program stores: a pointer is written as it stands, or
shifted by a constant, and never combined into a larger expression, so a read of
such a location again holds a location expression.
Condition~\ref{const-offsets:cmp} keeps the base symbols out of the arithmetic of
the program, where the reduction above would place a polynomial over them.

\begin{lemma}[Location equality is decidable under constant
  offsets]\label{l:loc-decidable}

  Let $\prog$ have constant pointer offsets, let $P$ be the predicate of a
  justification generated per Definition~\ref{def:gen-just}, together with the
  constraints $\varphi$ of Definition~\ref{def:freeze}, and let $e_1$ and $e_2$
  be events. Then satisfiability of $P\wedge\loc(e_1)=\loc(e_2)$ is decidable.

\end{lemma}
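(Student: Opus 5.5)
The plan is to reduce the query to satisfiability of a quantifier-free formula in a decidable theory: linear arithmetic with equalities, i.e.\ Presburger arithmetic, or more precisely its fragment of difference equalities over base symbols. I will proceed in three steps.

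First, I will establish by induction over Definition~\ref{def:gen-es} an invariant on register states and events under Definition~\ref{def:const-offsets}. Every register that ever holds a base symbol holds a location expression $\alpha+c$, and every symbol introduced by a read from memory that is used as a base is equated with a location expression.
\begin{itemize}
\item Condition~\ref{const-offsets:loc} of Definition~\ref{def:const-offsets} gives the shape of $\loc(e)$ directly.
\item Condition~\ref{const-offsets:val}, together with $\varphi_\rf$ of Definition~\ref{def:freeze}, shows that a read of a pointer from memory contributes only an equation $\beta=\alpha+c$ or an equation between a symbol and an expression containing no base symbol.
\item Condition~\ref{const-offsets:cmp} shows that the branching conjuncts of $\valres$ mentioning base symbols are equalities or disequalities $\alpha+c=\alpha'+c'$.
\end{itemize}
I will then check that the elaborations of Definition~\ref{def:gen-just} preserve this shape. $\elab{va}$ and $\elab{fwd}$ substitute values or symbols for symbols, strengthening conjoins value restrictions of the same shape, and lifting forms disjunctions and relabellings of such formulas. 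The allocation-disjointness part of $\varphi$ is a conjunction of disequalities between base symbols and globals.

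Second, I will split $P\wedge\varphi\wedge\loc(e_1)=\loc(e_2)$ into two parts over disjoint vocabularies. The first is the part $B$ over base symbols, a Boolean combination of difference (dis)equalities $\alpha-\alpha'=c$. The second is the remainder $A$, over non-base symbols, which may contain arbitrary arithmetic. The key claim is that no atom mixes the two vocabularies: a base symbol never meets non-base arithmetic by Conditions~\ref{const-offsets:val} and~\ref{const-offsets:cmp}. The location query $\alpha_1+c_1=\alpha_2+c_2$ falls entirely in $B$. Putting the formula in disjunctive normal form, each disjunct is $B_k\wedge A_k$ with disjoint symbols, so it is satisfiable iff both conjuncts are. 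Hence the query equals satisfiability of $B$ conjoined with the query for the whole predicate restricted to $A$.

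Third, I will decide the $B_k$ part. It is a conjunction of difference constraints over $\nat$ and $\Var$, with global locations treated as distinct constants. This is decidable by Presburger arithmetic, or directly by congruence closure with offsets followed by checking disequalities. That leaves the obstacle I expect to be hardest. The $A$-part may itself be nonlinear, for instance a branch \code{if(r1*r2==7)} on data, so it is not decidable in general. I would address this by strengthening the invariant to show that the location query is independent of $A$. For the query as posed, what is needed is only whether adding the location equation preserves satisfiability of an already-consistent $P$. Since $P$ arises from a justification whose consistency is assumed (Definition~\ref{def:gen-just} requires $P_j\wedge\Omega\not\equiv\bot$), it is the satisfiability of $P\wedge\loc(e_1)=\loc(e_2)$ relative to a satisfiable $P$ that matters. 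By the vocabulary separation this reduces to satisfiability of $B\wedge\loc(e_1)=\loc(e_2)$ within each satisfiable DNF disjunct. If that reading does not suffice, I would fall back to stating the lemma with the hypothesis that $A$ is itself decidable, e.g.\ linear, which holds for all four algorithms.
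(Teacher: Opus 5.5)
Your ingredients are the paper's. You derive the shape of every conjunct that mentions a base symbol from Definition~\ref{def:const-offsets} together with the $\rf$-equalities and allocation disequalities of $\varphi$. You separate the formula into parts over disjoint vocabularies, appeal to the satisfiability of $P$ so that the data part never has to be decided, and use congruence closure with offsets for the rest. Where $P$ is a conjunction your argument is the paper's.

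\textbf{The gap.} It lies in the DNF step you add for Lifting, which destroys the inference you then rely on. The paper writes $P\equiv Q\wedge R$, with $Q$ the conjuncts mentioning a base symbol and $R$ the rest. Since $Q$ and $R$ share no symbol and $P$ is satisfiable, $R$ is satisfiable. Hence $P\wedge\loc(e_1)=\loc(e_2)$ is satisfiable iff $Q\wedge\loc(e_1)=\loc(e_2)$ is, and the possibly nonlinear $R$ is never decided.

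For $P\equiv\bigvee_k(B_k\wedge A_k)$ this fails. Satisfiability of $P$ only tells you that \emph{some} disjunct has $A_k$ satisfiable. The query holds iff some $k$ has both $B_k\wedge\loc(e_1)=\loc(e_2)$ and $A_k$ satisfiable. Take two disjuncts, with $P$ satisfied through the first and the location equation contradicting $B_1$ but not $B_2$. The answer is then exactly whether $A_2$ is satisfiable. $A_2$ may be an arbitrary Diophantine condition on data, which Condition~\ref{const-offsets:cmp} does not restrict. So ``within each satisfiable DNF disjunct'' presupposes the very decision you are trying to avoid. Your main line yields no decision procedure, and your fallback changes the statement.

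\textbf{What would close it.} The argument needs the factorisation the paper uses tacitly: $P\equiv B\wedge A$, with $B$ any Boolean combination of difference (dis)equalities over base symbols (still decidable) and $A$ over non-base symbols alone. Then satisfiability of $P$ discharges $A$, and the query reduces to $B\wedge\loc(e_1)=\loc(e_2)$. The paper obtains this factorisation by reading the conjuncts of $P$ as branching conditions, $\rf$-equalities and allocation disequalities, with value assignment preserving their shape. Its proof does not discuss Lifting, or Forwarding, at all, so the case you raise is not covered there either. Your observation is a genuine point about the paper's argument, not only about yours.

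To recover the lemma as stated you would have to do one of two things:
\begin{itemize}
\item show that every disjunction $\evalenv{P_1}{\Lambda}\vee P_2$ formed by Lifting factors in this way, i.e.\ its disjuncts differ only in base-symbol atoms or only in data atoms; or
\item restrict the lemma, for instance to justifications generated without Lifting.
\end{itemize}

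A smaller point: the satisfiability you need is that of $P$ together with $\varphi$. That comes from Condition~(3) of Definition~\ref{def:freeze}, not from the side condition $P_j\wedge\Omega\not\equiv\bot$ of Definition~\ref{def:gen-just} that you cite.
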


\begin{proof}

  By Condition~\ref{const-offsets:loc}, $\loc(e_1)=\alpha+c$ and
  $\loc(e_2)=\beta+d$ for base symbols $\alpha,\beta$ and constants $c,d$, so the
  equation is $\alpha-\beta=d-c$.

  We first check that every conjunct in play that mentions a base symbol is an
  equality or disequality between location expressions. A base symbol is
  introduced by an allocation event, or by a read of a location that holds one.
  The conjuncts of a justification's predicate are the branching conditions
  accumulated along $\po$ and the value restrictions of Strengthening, by
  Definition~\ref{def:gen-just}, and those are branching conditions again;
  Condition~\ref{const-offsets:cmp} gives the shape for these. The constraints
  $\varphi$ of Definition~\ref{def:freeze} contribute the equality of the value
  of each read with the value of the write it reads from, and the disjointness of
  the locations of distinct allocations. By Condition~\ref{const-offsets:val} the
  value a write puts at a location that holds a pointer is itself a location
  expression, so the first is an equality between location expressions, and the
  second a disequality between the allocation symbols themselves. Value
  assignment substitutes a constant for a symbol under $\alpha\equiv_Pv$, which
  preserves the shape. So no conjunct relates a base symbol to an arithmetic
  expression over other symbols.

  Split $P$ into the conjuncts that mention a base symbol and those that do not,
  $P\equiv Q\wedge R$, so that every conjunct of $Q$ has the shape just
  established. $R$ mentions no base symbol, and $P$ is satisfiable, so
  $P\wedge\loc(e_1)=\loc(e_2)$ is satisfiable exactly when
  $Q\wedge\alpha-\beta=d-c$ is: the two conjunctions share no symbol, and a
  satisfying assignment of each extends to one of both.

  $Q\wedge\alpha-\beta=d-c$ is a finite conjunction of equalities and
  disequalities between symbols shifted by constants, that is a
  quantifier-free formula of the theory of equality with integer offsets.
  Satisfiability in that theory is decidable, by congruence closure over the
  finitely many base symbols with the offsets carried along the equalities.

\end{proof}

\noindent The four algorithms of Section~\ref{s:identifying} have constant
pointer offsets. In RCU and hazard pointers the only modifiable pointer-valued
locations are \code{C} and, in hazard pointers, the entries of \code{hp}; each is
only ever assigned a basic memory location, never an expression over symbols,
and the only conditions over them are the equality tests of the \cas{} and of the
\code{while} in the protect loop. The per-thread flags are read and written
through $\code{rcu}+\code{tid}$ and $\code{hp}+\code{tid}$, an offset fixed for
the accessing thread, and the value stored is a flag rather than a pointer. In
seqlock, \code{rseq} and \code{rdata} are fixed at allocation and accessed
without offsets, and in spinlock the location of \code{mutex} is not
modified. Their location queries are therefore decidable, and this is what
\mordor{} decides: it discharges $\ppoalias$ by asking its solver whether the
locations of two accesses can be equal under the predicate and the constraints
of the execution. Outside the fragment the tool inherits the undecidability of
the general query, and falls back on the syntactic over-approximation of
Section~\ref{s:identifying}.

\vfill

\pagebreak
\section{Appendix: Operational Semantics}

Section~\ref{s:opsem} gives an operational semantics for programs in \smrd{}
whose loops are episodic: it steps a configuration through the next enabled
actions a future offers. Futures are what join the two semantics, read off the
event structure and consumed by the rules, and this appendix makes the join
explicit.

Section~\ref{s:app-opsem-states} sets out what a configuration holds and the
rules that rewrite it. Section~\ref{app:opsem-finite} bounds the state space: a
program with episodic loops reaches finitely many configurations, once
timestamps are identified up to order-isomorphism. That is the operational
counterpart of Theorem~\ref{t:finite-post-futures}, which bounds the next
enabled actions in the event structure semantics; here the bound is on the
states themselves, which is what lets a search for a safety violation terminate,
under the side conditions Corollary~\ref{c:opsem-lfp} collects.
Section~\ref{app:opsem-consistency} shows that the executions the derivations
construct are consistent with the memory model, and
Section~\ref{app:opsem-sound-complete} establishes the correspondence with the
event structure semantics in both directions.

\subsection{Configurations}\label{s:app-opsem-states}

\paragraph{Histories.}
A history $H$ tracks progression through the program. In the event structure
semantics we introduced histories as sets of events. As events are occurrences
of actions, we conceive histories in the operational semantics equivalently as
sets of actions with control labels. Similarly future sets are sets of sets of
pairs of labelled actions, and posterior future horizons are sets of labelled
actions.

\paragraph{Events as labelled actions.}
Events are occurrences of actions, i.e.~pairs $l:a$ of control label
$l\in\Labels$ and actions. Labels $l$ are associated with a line in code and thus an
instruction in the program, a program counter $\pc(l)$, loop indices
$\ell\in\loopfun(l)$, and for each loop index $\ell$ an iteration
$\iter(l)(\ell)$. The latter allows us to compare labels $l$ and $l'$, $l\leq
l'$ lexicographically over all loop indices $\ell$ in $\loopfun(l)$ and
$\loopfun(l')$, respectively. The comparison $\leq$ extends naturally to labels
and histories, $l\leq H$, which is used to detect a change of loop boundaries in
the rules of the operational semantics.

\paragraph{Register states.}
A register state $\rho$ in the operational
semantics augments the register state in the event structure semantics with the
index of the loop in which the last write occurred. The loop index serves to
correctly reset the register state at boundaries of loop iterations.

\paragraph{Constraints.}
A constraint $\varphi$ is a conjunction, where
conjuncts are augmented with loop indices. The latter ensure that $\varphi$ can
be reset correctly at boundaries of loop iterations.

\paragraph{Global guarantees.}
$\Omega$ is a meta-predicate which gets instantiated as predicates once symbols
are introduced during read and allocation events. $\Omega$ is used in the
Weakening elaboration $\elab{weak}$ and describes global guarantees such as
value ranges. For the sake of a simple presentation we will ignore $\Omega$ in
the operational semantics, and assume that the predicates instantiated from
$\Omega$ are implicitly folded into $\varphi$.

\paragraph{Timestamped writes.}
Writes are write actions stamped with a rational timestamp from $\tstamps$.
Internally, they are stored as lists of tuples of symbolic memory location,
write action, timestamp and loop index. The loop index is needed to reset the
writes at boundaries of loop iterations. The symbolic memory location can only
be evaluated relative to a predicate $\varphi$ from $\Expressions_B$, so the
observable writes
$\OW:\Threads\times\Expressions_A\to\Set(\Writes\times\tstamps)$ are taken
relative to the $\varphi$ retrieved from the program state $\sigma$: they are
the writes at a location $x'$ with $x\equiv_\varphi x'$ whose timestamp is at
least that of the reading thread's viewfront at $x$,
\[
  \sigma.\OW(t,x)~\triangleq~\left\{
    (w,q)\in\sigma.\writeset
    \mid
    \loc(w)\equiv_{\sigma.\varphi}x \wedge
    \tst(\sigma.\tview(x))\leq q
  \right\}.
\]
Timestamps totally order the writes at each location, and that order is the
modification order: $(w_1,w_2)\in\CO$ iff their locations are
$\varphi$-equivalent and $\tst(\hat w_1)<\tst(\hat w_2)$. Only this induced
order is significant -- program states that differ by an order-isomorphism of
timestamps are identified as in Definition~\ref{def:opsem-ts-iso}, which we
rely on for Theorem~\ref{t:opsem-finite}.

\paragraph{Viewfronts.}
A viewfront is a map from symbolic memory locations to timestamped writes, and
two families of them are recorded. $\sigma.\tview$ is the viewfront of thread
$t$; it bounds the observable writes $\OW$ above as written. $\sigma.\mview$ is
the viewfront of the write $\hat w$, fixed at the value the writing thread's
viewfront had when $\hat w$ was performed. The rules compare viewfront entries
only through their timestamps, in $\OW$ and in $\viewcomb$, so an entry is
determined by the timestamp it carries and may be a bare timestamp with no write
of the state at it -- which is what the reset at a loop boundary leaves behind
where it removes the write an entry pointed at. Viewfronts are combined
pointwise by taking the later write at each location,
\[ (v_1\viewcomb v_2)(x)~\triangleq~ v_1(x)\text{ if
}\tst(v_2(x))\leq\tst(v_1(x)),\text{ and }v_2(x)\text{ otherwise,} \]
which is how an acquiring read of a releasing write takes on the writer's view.
This is the only construct in the operational semantics by which one thread's
state influences what another thread may read; in particular the future set
$\Phi$ carries no inter-thread edge (see Section~\ref{s:opsem}).

\paragraph{Allocated memory.}\label{par:opsem-memory}
$\memory\subseteq\Symbols$ collects symbolic memory locations allocated. That
the allocated addresses are mutually disjoint, and distinct from the global
locations, is recorded in the program state's $\varphi$ predicate, alongside the
distinctness of the globals themselves that $\sigma_0.\varphi$ carries. None of
these conjuncts carries a thread or a loop index, so the reset
$\varphi|^t_{\setminus\ell}$ at a loop boundary retains them.

\paragraph{Read-from relation.}
The read-from relation $\RF$ of a program state assigns write actions from
$\Writes$ to read actions from $\Reads$. It is written $\RF$ throughout, and is
not to be confused with the $\rf$ of an execution in the event structure
semantics: $\RF$ is what a derivation has recorded so far and is reset at loop
boundaries, whereas $\rf$ is fixed for a complete execution. In order to reset
read-from edges at boundaries of loop iterations, we need to track threads from
$\Threads$ and control labels from $\Labels$ for the read actions.

\begin{definition}[Program States]\label{def:prog-state}

  Program states $\sigma$ contain
  \begin{itemize}

    \item timestamped writes
      $\writeset\subseteq(\Labels\times\Writes\times\Threads)\times\tstamps$,
      with observable writes $\sigma.\OW(t,x)$ defined relative to
      $\sigma.\varphi$ and $\sigma.\tview$ as above

    \item for each thread $t\in\Threads$ a viewfront
      $\tview:\Expressions_A\to\sigma.\writeset\cup\tstamps$

    \item for each $\hat w\in\sigma.\writeset$ a viewfront
      $\mview:\Expressions_A\to\sigma.\writeset\cup\tstamps$

    \item for each thread $t\in\Threads$ the \emph{anchors}
      $\anchors_t\subseteq\sigma.\writeset$, the writes Rule~\ref{fig:step:lb}~(lb)
      retains at loop boundaries $t$ has crossed, which $t$ may no longer read

    \item allocated memory $\memory=\Set(\Symbols)$

    \item a predicate $\varphi$ capturing branching decisions and allocation
      constraints

    \item a read-from relation
      $\RF\subseteq\Writes\times(\Labels\times\Reads\times\Threads)$ assigning
      labelled write actions to labelled read actions

  \end{itemize}

  $\varphi_\RF$ is a predicate which captures the read-from relation such that

  \[
    \varphi_\RF
    =
    \bigwedge\limits_{(\Writes~x~\expr,\Reads~x'~\alpha)\in\RF}(\alpha=\expr)
  \]

\end{definition}

\paragraph{Configurations.}
Configurations are tuples $(\sigma,\rho,H)$ of program state $\sigma$, register
state $\rho$, and history $H$. The initial configuration $(\sigma_0,\rho_0,H_0)$
consists of the initial program state $\sigma_0$ with $\sigma_0.\writeset$
holding the initialising write at each global location at timestamp $0$, every
$\sigma_0.\tview$ and $\sigma_0.\mview$ mapping each location to its
initialising write, $\sigma_0.\memory$ and every $\sigma_0.\anchors_t$ empty,
and $\sigma_0.\varphi$ the conjunction of $x\neq y$ over distinct global
locations $x$ and $y$, which is the disjointness
Definition~\ref{def:freeze} imposes in the event structure semantics; register
state $\rho_0$ and history $H_0$ both empty.

The semantics unpacks an instruction in three layers, each with its own
judgement and each rewriting a different component of a configuration, as the
headers of Figure~\ref{fig:opsem:overview} record. We give the innermost first.

\paragraph{Action Semantics.}
The action one-step semantics $\sigma\overset{a}{\rightsquigarrow}_t\sigma'$
applies a single action of thread $t$ to the program state $\sigma$. The
following rules define the semantics of actions as events on the program state.

The rules are Rule~\ref{fig:action:write}~(write),
 Rule~\ref{fig:action:read}~(read), Rule~\ref{fig:action:allocate}~(allocate)
and Rule~\ref{fig:action:deallocate}~(deallocate) of the overview in
 Figure~\ref{fig:opsem:overview}.

\paragraph{Command Semantics.}
The command one-step semantics $\sigma\vdash\left(l\colon
c,\rho\right)\xrightarrow{\overline a}_t\left(\skipcmd,\rho'\right)$ interprets
an instruction into the actions $\overline a$ it performs and rewrites the
register state $\rho$; the program state is context here, the actions reaching
it only when future stepping applies them. The following rules define the
semantics of individual commands on the actions they perform, that is events
they emit.

The rules are Rule~\ref{fig:command:set}~(set),
 Rule~\ref{fig:command:read-var}~(read-var),
 Rule~\ref{fig:command:read-ref}~(read-ref),
 Rule~\ref{fig:command:read-ptr}~(read-ptr),
 Rule~\ref{fig:command:write-var}~(write-var),
 Rule~\ref{fig:command:write-ptr}~(write-ptr),
 Rule~\ref{fig:command:alloc}~(malloc), Rule~\ref{fig:command:free}~(free),
 Rule~\ref{fig:command:fence}~(fence) and
 Rule~\ref{fig:command:faa}~(faa) of the overview in
Figure~\ref{fig:opsem:overview}.

\paragraph{Branching Semantics and Path-based
reasoning.}
Branching adds both alternative branches to the derivation. In order to verify a
safety property, both branches need to be followed. We employ path-based
reasoning over non-deterministic rules, such that a safety property holds if it
can be verified in all derivation paths.

Recall from Definition~\ref{def:executions} that branching events are filtered
from the executions. In order to detect the transition over a branching
instruction in an $\code{if}$-statement into either $\code{then}$- or
$\code{else}$-branch, we use auxiliary functions $\enterThen$ and $\enterElse$
defined on the executing thread $t$, the control label, and the syntactic
program $\prog$. The functions
$\enterThen$ and $\enterElse$ take the place of the test of loop boundaries in
 Rule~\ref{fig:step:non-lb_non-branch}~(non-lb/non-branch). The
function $\ifCond$ shall return the syntactic branching condition for the label
on thread $t$.
For simplicity of notation we assume that $\code{if}$-branchings are not nested.

The rules are Rule~\ref{fig:step:branch-then}~(then) and
 Rule~\ref{fig:step:branch-else}~(else) of the overview in
Figure~\ref{fig:opsem:overview}.

\paragraph{Safety properties.} A safety property over the program state
$\sigma$ is a predicate which holds for a program if any configuration
reachable from the initial configuration satisfies it.

\paragraph{Operational Command Semantics of CAS.}
We define the semantics of CAS with two non-deterministic rules for a path-based
reasoning approach to the operational semantics as for $\code{if}$-statements
above. As there, the outcome of the test is accumulated in $\sigma.\varphi$: each
rule emits a branching action for its outcome, $\alpha=\evalreg{\expr_s}{\rho}$
or its negation, which the branch rule of
Rule~\ref{fig:action:fence-branch}~(branch)
conjoins to $\sigma.\varphi$ and admits only if the conjunction is satisfiable.

The rules are Rule~\ref{fig:command:cas-success}~(CAS success) and
 Rule~\ref{fig:command:cas-failure}~(CAS failure) of the overview in
Figure~\ref{fig:opsem:overview}.

\paragraph{Future-Stepping Rules.}
Outermost, future stepping
$\overline\prog\vdash\left(\sigma,\rho,H\right)\rightarrow\left(\sigma',\rho',H'\right)$
advances a whole configuration along the next enabled actions in $\horizon\Phi$,
running a command through the two layers below, applying the actions it emits to
the program state, and recording them in the history. It is also where the
program state is reset at a loop boundary. Over the operational semantics
defined in \cite{Wright2023OpSem}, we add to the configuration histories $H$ to
model progress in the program and predicates $\varphi$ to model value
restrictions in the context of symbolic MRD.

In the following we use several shorthand notations to support multiple actions
as needed for the semantics of RMW operations:
\begin{itemize}

  \item $\overline a~=~a_0\ldots a_n$ is a list of actions such as modelling
    \fadd{}
    ($a_r;a_w$) or \cas{} ($a_r;a_b;a_w$ or $a_r;a_b$)

  \item $(l\colon\overline a)\in^*\horizon\Phi_H$ tests if $(l\colon
    a_0)\in\horizon\Phi_H$ and $(l\colon a_1)\in\horizon\Phi_{H\cup\{(l\colon
    a_0)\}}$ and so forth, skipping branching and fence actions: executions
    contain no events of $\Branches\cup\Fences$
    (Definition~\ref{def:executions}), so neither does $\Phi$, and such an
    action is enabled by the command alone. For the same reason a history
    records only the actions of $\overline a\setminus(\Branches\cup\Fences)$
    (Definition~\ref{def:histories}); the branching action of \cas{} acts on
    $\sigma.\varphi$ but leaves no trace in $H$.

  \item $\iter(l)(\ell)>\iter(H)(\ell)$ if $l$ is in a loop $\ell$, there is a
    pair $(l',\_)$ in $H$, and $l$ is in a later iteration than any such
    $(l',\_)$.

\end{itemize}

The rule is Rule~\ref{fig:step:non-lb_non-branch}~(non-lb/non-branch) of
the overview in Figure~\ref{fig:opsem:overview}.

At boundaries of loop iterations the timestamped writes, register state, value
restrictions, and history are reset to the beginning of the loop; the viewfronts
over the writes are not:

\begin{itemize}

  \item Let $\writeset|^t_{\setminus\ell}$ for a loop index $\ell$ denote
    $\writeset$ with the writes of thread $t$ with loop index $\ell$ removed,
    except that at each location the one of greatest timestamp is retained. We
    call the retained writes the \emph{anchors} of the boundary; the rule adds
    them to $\anchors_t$ and drops from it any anchor of $t$ at the same
    location they supersede. An anchor is thus at most one write per thread,
    location and loop, and the locations are finitely many by
    Lemma~\ref{l:opsem-locs-across-boundary}, so retaining anchors leaves
    Theorem~\ref{t:opsem-finite} untouched.

  \item An anchor is retained for the position in the coherence order it holds,
    not to be read again: Rule~\ref{fig:action:read}~(read) denies thread $t$
    the writes in $\sigma.\anchors_t$, which is where
    Condition~\ref{episodic:mem} of Definition~\ref{def:episodic} forbids $t$ to
    take a value from an iteration it has closed. Other threads are unaffected:
    an anchor is observable to them after the boundary exactly as it was before
    it.

  \item No viewfront entry moves. An entry pointing at a write the previous item
    removes keeps that write's timestamp, entries being read only through their
    timestamps; we write $\tview|^t_{\setminus\ell}$ and
    $\mview|^t_{\setminus\ell}$ for the viewfronts so understood, the $\mview$s
    of removed writes being discarded with them. $\OW$ is therefore nowhere
    enlarged by the boundary: it shrinks by the removed writes, and for $t$ by
    its anchors.

    Retaining the anchor is what keeps the crossing thread's own viewfront
    pointing at a write of the state rather than at a bare timestamp
    (Lemma~\ref{l:opsem-anchor}), and it is what gives the writes of a closed
    iteration a witness in every later configuration, which
    Lemma~\ref{l:opsem-co-glue} assembles the modification order from.

  \item Recall from Condition~\ref{episodic:reg} in
    Definition~\ref{def:episodic} that registers must not be read from in
    episodic loops unless they have been written to in the same iteration of the
    loop. Translated to the reset at boundaries of loop iterations in the
    operational semantics, registers are removed if their last write occurred in
    the current loop $\ell$. We denote the result as $\rho|^t_{\setminus\ell}$.

  \item Similarly, let $\varphi|^t_{\setminus\ell}$ denote the conjunction
    $\varphi$ with all conjuncts of thread $t$ and loop index $\ell$ removed.

  \item And, let $\RF|^t_{\setminus\ell}$ denote the read-from relation $\RF$
    with all edges $(r,w)$ from loop $l$ of thread $t$ removed.

  \item On a program state, $\sigma|^t_{\setminus\ell}$ denotes the
    componentwise restriction: $\writeset$, each $\tview$, each surviving
    $\mview$, $\anchors_t$, $\RF$ and $\varphi$ restricted as above, with
    $\memory$ unchanged. This is the form Rule~\ref{fig:step:lb}~(lb) uses.

\end{itemize}

The rule is Rule~\ref{fig:step:lb}~(lb) of the overview in
Figure~\ref{fig:opsem:overview}.

\subsection{Finite Bound on Operational Semantics}\label{app:opsem-finite}

A loop accesses memory through address expressions, and what a given expression
denotes may differ from one iteration to the next. The first result fixes which
locations a loop can return to, and so how much of a configuration a boundary
has to carry across: an iteration cannot use a location carried over in a
register from the iteration before, only one created either $\po$-before in
the same iteration or before the loop, or read as a value from another thread.

\begin{lemma}[Locations crossing a loop boundary]\label{l:opsem-locs-across-boundary}

  Let $\ell$ be an episodic loop of a thread $t$. The value of an address
  expression evaluated by $t$ in an iteration of $\ell$ depends only on values
  fixed before the loop and values read in that same iteration from another
  thread. Consequently $t$ accesses a location in two iterations of $\ell$ only
  by retrieving it in each of them from the same source outside the loop, and
  the locations $t$ accesses across all iterations of $\ell$ are drawn from one
  finite set fixed by the program.

\end{lemma}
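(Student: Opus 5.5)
The plan is to reduce the statement to the register-state invariant already established in the proof of Lemma~\ref{l:episodic-symb}, and then to use Condition~\ref{episodic:mem} to trace where each symbol occurring in an address expression originates.

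\emph{First step: address expressions.} Fix an event $e$ of thread $t$ in the $i$-th iteration of $\ell$ whose location is given by an address expression. By Definition~\ref{def:gen-es}, $\loc(e)$ is $\evalreg{\expr}{\rho}$ for the register state $\rho$ at $e$. By Lemma~\ref{l:episodic-symb} (which holds for the iterations drawn by $\iter$, not the syntactic body), $\symbols(\loc(e))\subseteq\Sigma_{\ell\mapsto i}$. So every symbol in the address is either read before the loop or introduced by a read or allocation event of the same iteration.

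\emph{Second step: origins of same-iteration symbols.} Take a symbol $\alpha$ with $\iter(\origin\alpha)(\ell)=i$, and do a case analysis on its origin. An allocation yields a location that is fresh in every iteration. A read is covered by Condition~\ref{episodic:mem}, whose cases I handle as follows.

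Case~\ref{episodic-casea}, a write before the loop: the value is fixed before the loop.

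Case~\ref{episodic-casea}, a $\po$-earlier write of the same iteration: the written value again lies over $\Sigma_{\ell\mapsto i}$ by Lemma~\ref{l:episodic-symb}. Here I induct along $\po$ within the iteration; this is well founded because an iteration of a finite unravelling is finite.

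Case~\ref{episodic-caseb}, a write on another thread: the value is by definition read in this iteration from another thread. Moreover, no earlier iteration reaches that write along $(\DP\cup\rf)^+$, so its value carries nothing from an earlier iteration of $t$.

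Case~\ref{episodic-casec}, a read-don't-modify-write: this reduces to the source it passes through unchanged, by induction on the length of the RdMW chain.

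Together these give the first sentence of the statement. The consequence about accessing a location in two iterations follows because any shared symbol lies in $\Sigma_{\ell\mapsto\emptyset}$ or arrives through the same external source. The operational counterpart is that Rule~\ref{fig:step:lb}~(lb) removes the registers and $\varphi$-conjuncts of the closed iteration.

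\emph{Third step: finiteness.} I expect this to be the main obstacle, because values read from another thread can in principle be unboundedly many distinct symbolic locations. I would argue finiteness up to the de Bruijn identification. By Property~\ref{prop:de-bruijn-tail} of Corollary~\ref{l:de-bruijn-props}, reads at the same program counter in different iterations introduce the same symbol. So the address expressions occurring in iterations of $\ell$ range over finitely many expressions, namely one per program counter in the body, built over $\Sigma_{\ell\mapsto\emptyset}$ and one iteration's worth of symbols. By Corollary~\ref{c:gamma-history-reset}, only one iteration's symbols are live at any time.

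Where locations are concrete, I would additionally use Lemma~\ref{l:loc-decidable}, in the constant-offset fragment, to note that location expressions are of the form $\alpha+c$ with finitely many bases and offsets. The care needed is to state finiteness modulo this symbol identification rather than for concrete values. If a stronger concrete bound is required, I would fall back on the assumption that global locations and allocation sites are finitely many in $\prog$.
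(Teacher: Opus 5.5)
Your proposal is essentially the paper's argument. Condition~\ref{episodic:reg}, which you enter through Lemma~\ref{l:episodic-symb}, confines the symbols of an address to the current iteration or to before the loop; the case split on Condition~\ref{episodic:mem} then traces each of them to a pre-loop value or an outside source; and finiteness comes from symbols being indexed by program counter. One correction to the last step: Property~\ref{prop:de-bruijn-tail} identifies iteration $i+1$ of $\langle\prog\rangle_{n+1}$ with iteration $i$ of $\langle\prog\rangle_n$, not two iterations of one execution, which carry distinct symbols. The one-symbol-per-program-counter bound you need therefore comes either from $\gamma$ via Corollary~\ref{c:gamma-history-reset}, or, as the paper argues, directly from the operational read rules, which fix the symbol by program counter and are made sound by the reset of Rule~\ref{fig:step:lb}~(lb).
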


\begin{proof}

  Address expressions are evaluated in the register state $\rho$ under the
  constraint $\varphi$. By Condition~\ref{episodic:reg} of
  Definition~\ref{def:episodic} a register read in an iteration was written in
  that iteration or before the loop, so an address expression draws only on
  values read in the iteration and values fixed before the loop. By
  Condition~\ref{episodic:mem} a read in the iteration takes a $\po$-earlier
  write of the same iteration, a write from before the loop, a write of another
  thread whose value is not derived from a write of $t$, or a
  read-don't-modify-write resolving to one of these. None of the cases lets the
  value depend on a write of an earlier iteration: the first two lie within the
  iteration or before the loop, the third excludes derivation from $t$ and hence
  from what $t$ wrote in an earlier iteration, and the fourth reduces to them.
  A thread may thus read a location from outside itself and write to it, but the
  next iteration can only address that location by reading it from the same
  outside source again -- it cannot take it from the iteration that has closed.
   Rule~\ref{fig:step:lb}~(lb) is the operational counterpart: resetting $\rho$
  and $\varphi$ leaves no register and no conjunct of the closed iteration by
  which a location it computed could be named.

  Theorem~\ref{t:opsem-finite} below hinges on the finiteness of the set of
  symbolic memory locations. Symbols are a function of the program counter by
   Rule~\ref{fig:command:read-ptr}(read-ptr),
  \ref{fig:command:read-var}(read-var), \ref{fig:command:faa}(faa) and
  \ref{fig:command:cas-success}(CAS success), so every iteration reads the same
  finitely many symbols, and the address expressions of the program are finitely
  many; each denotes one location per iteration, and by the above two iterations
  agree on it exactly when the reads it is built from return the same values
  from outside the loop.

\end{proof}

The next result bounds the state space of our operational semantics: the
configurations a program can reach are finitely many, once timestamps are
identified up to order-isomorphism.

\begin{definition}[Order-isomorphism of timestamps]\label{def:opsem-ts-iso}

  For a program state $\sigma$ let $Q(\sigma)\subseteq\tstamps$ be the finite
  set of timestamps occurring in $\sigma$: those of the writes in
  $\sigma.\writeset$, of $\sigma.\tview$, and of each $\mview$, an entry being a
  bare timestamp where the reset at a loop boundary left no write at it.
  Configurations $(\sigma,\rho,H)$ and $(\sigma',\rho',H')$ are \emph{identified
  up to order-isomorphism of timestamps} if $\rho=\rho'$, $H=H'$, and there is a
  bijection $f\colon Q(\sigma)\to Q(\sigma')$ with $q<q'$ iff $f(q)<f(q')$ such
  that replacing every timestamp $q$ in $\sigma$ by $f(q)$ yields $\sigma'$, the
  labelled write actions, the symbolic locations they address, $\anchors_t$,
  $\memory$, $\RF$ and $\varphi$ being equal.

\end{definition}

$Q(\sigma)$ is finite because the symbolic memory locations are. Viewfronts map
each location to a write, by the initial state of
Appendix~\ref{s:app-opsem-states}, so $Q(\sigma)$ holds a timestamp per location
before it holds one per write, and finiteness is not simply a matter of having
taken finitely many steps. The count is the one that opens the proof of
Theorem~\ref{t:opsem-finite}: symbols are a function of the program counter, the
reset of Rule~\ref{fig:step:lb}~(lb) leaves finitely many register states,
hence finitely many expressions in instructions and finitely many locations they
denote. That count uses this definition nowhere, so the argument is not
circular, as Lemma~\ref{l:opsem-locs-across-boundary} also notes.

The bijection is one order-isomorphism of $Q(\sigma)$, not of individual memory
locations.
Per location would follow the rules more closely, which compare timestamps only
at a location; but a location is symbolic, and which locations are
$\varphi$-equivalent is settled by the $\varphi$ the configuration carries, which
moves as the history does. An identification indexed by location would be
reindexed at every branch and boundary; one of $Q(\sigma)$ is not. The canonical
representative of a class replaces each timestamp by its rank in $Q(\sigma)$, as
in Example~\ref{ex:opsem-ts-iso} below.

\begin{example}[Ranking a program state]\label{ex:opsem-ts-iso}

  Let $x$ and $y$ be locations $\varphi$ does not identify, and let a state of
  thread $t$ hold

  \[
    \begin{array}{r@{~}c@{~}l}
      \sigma.\writeset &=& \{(w_1\colon\Writes~x~0,~0),~
                            (w_2\colon\Writes~x~1,~\tfrac{3}{4}),\\
                         && \phantom{\{}(w_3\colon\Writes~y~0,~0),~
                            (w_4\colon\Writes~y~1,~5)\}\\[2pt]
      \sigma.\tview &=& \{x\mapsto(w_2,\tfrac{3}{4}),~y\mapsto(w_3,0)\}.
    \end{array}
  \]

  \noindent Then $Q(\sigma)=\{0,\tfrac{3}{4},5\}$, ranking sends $0\mapsto0$,
  $\tfrac{3}{4}\mapsto1$ and $5\mapsto2$, and the representative of the class of
  $\sigma$ is

  \[
    \begin{array}{r@{~}c@{~}l}
      \writeset &=& \{(w_1,0),~(w_2,1),~(w_3,0),~(w_4,2)\}\\[2pt]
      \tview &=& \{x\mapsto(w_2,1),~y\mapsto(w_3,0)\}.
    \end{array}
  \]

  \noindent A state carrying $0,\tfrac{1}{2},9$ where this one carries
  $0,\tfrac{3}{4},5$ has the same representative, and a derivation reaching it
  has returned to a configuration it has already explored. Two things the
  example shows. The writes $w_1$ and $w_3$ keep the timestamp they share,
  ranking being of $Q(\sigma)$ and not of the writes, and $\freshts$ separating
  only writes at $\varphi$-equivalent locations. And the viewfront is carried
  along by the same relabelling rather than ranked on its own, so
  $\OW(t,x)=\{(w_2,\tfrac{3}{4})\}$ before and $\{(w_2,1)\}$ after, with $w_1$
  below the viewfront either way.

\end{example}

\begin{lemma}[The identification is a congruence]\label{l:opsem-ts-congruence}

  Let $f$ identify $(\sigma,\rho,H)$ with $(\sigma',\rho',H')$ as in
  Definition~\ref{def:opsem-ts-iso}. If
  $(\sigma,\rho,H)\rightarrow(\sigma_1,\rho_1,H_1)$ by a rule of thread $t$,
  then $(\sigma',\rho',H')\rightarrow(\sigma_1',\rho_1',H_1')$ by the same rule
  on the same action for some $(\sigma_1',\rho_1',H_1')$ identified with
  $(\sigma_1,\rho_1,H_1)$. The identification being symmetric, it is a
  bisimulation.

\end{lemma}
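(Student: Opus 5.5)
The plan is to proceed rule by rule. Let $f\colon Q(\sigma)\to Q(\sigma')$ be the witnessing order-isomorphism. First extend $f$ to an order-isomorphism $\hat f$ of all of $\tstamps$, fixing $f$ on $Q(\sigma)$. Such an extension exists because $Q(\sigma)$ is finite and $\tstamps$ is dense without endpoints: interpolate piecewise-linearly between consecutive points of $Q(\sigma)$, and translate beyond the extremes. Next record the invariance fact that every rule consults timestamps only through $\leq$ and $<$. This holds for the definition of $\OW$, for $\freshts$, for $\viewcomb$ and $\syncview_t$, and for the choice of anchor of greatest timestamp in $\writeset|^t_{\setminus\ell}$. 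Every other component is copied verbatim by $f$: the labelled write actions, the locations and $\varphi$, $\anchors_t$, $\memory$, $\RF$, $\rho$ and $H$. So each of these auxiliary notions commutes with $\hat f$. For example, $\sigma'.\OW(t,x)=\{(w,\hat f(q))\mid (w,q)\in\sigma.\OW(t,x)\}$, since $\varphi$ is the same on both sides and $\tst(\sigma'.\tview(x))=\hat f(\tst(\sigma.\tview(x)))$.

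With this in hand, the cases are as follows.

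\begin{itemize}
  \item For the future stepping rules, the side conditions $(l\colon\overline a)\in^*\horizon\Phi_H$, $\iter(H)<\iter(l)$, $\enterThen$, $\enterElse$ and $\ifCond$ depend only on $H$, $l$ and $\overline\prog$. These are equal in the two configurations, so the same rule is applicable to the same command.
  \item The command rules read only $\rho$ and, through the actions, $\sigma$. Hence they emit the same actions $\overline a$ with the same $\rho_1$.
  \item For the action rules, Rule~\ref{fig:action:read}~(read) picks some $\hat w\in\sigma.\OW(t,x)\setminus\sigma.\anchors_t$. On the primed side we pick $\hat f(\hat w)$. This changes $\RF$ and $\varphi$ identically and sets the viewfront to $\funup{\sigma'.\syncview_t(\hat f\hat w,a)}{x}{\hat f\hat w}$, which is the $\hat f$-image of the unprimed one.
  \item Rules~\ref{fig:action:allocate}, \ref{fig:action:deallocate} and~\ref{fig:action:fence-branch} touch no timestamp, or only copy viewfronts, and are immediate.
  \item Rule~\ref{fig:step:lb}~(lb) removes writes and $\mview$s determined by thread, loop index and the order of timestamps at a location. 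It therefore commutes with $\hat f$ and yields $\sigma_1'=\hat f(\sigma_1)$, with $Q$ shrinking compatibly.
\end{itemize}

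In each case the result is $(\sigma_1',\rho_1,H_1)$ with $\sigma_1'=\hat f(\sigma_1)$. The restriction of $\hat f$ to $Q(\sigma_1)$ is then an order-isomorphism onto $Q(\sigma_1')$ that witnesses the identification. This uses the fact that the values in $Q(\sigma_1)$ are exactly the timestamps present in $\sigma_1$.

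The one step needing care is Rule~\ref{fig:action:write}~(write). The new timestamp $q'$ satisfies $\sigma.\freshts(x,q,q')$ for some observable $q$, and we take $\hat f(q')$ on the primed side. Because $\hat f$ is an order-isomorphism of $\tstamps$, the condition $q<q'$ transfers, and so does $q'<q''$ for all $q''$ at $\varphi$-equivalent locations above $q$. This gives $\sigma'.\freshts(x,\hat f q,\hat f q')$. The one remaining point is that $q'$ may coincide with a timestamp already in $Q(\sigma)$ at a non-equivalent location, or may lie strictly between such timestamps. The global extension $\hat f$ handles both situations uniformly: it preserves equalities and relative positions with every element of $Q(\sigma)$. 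This is why we extend $f$ to all of $\tstamps$ rather than merely choosing some fresh point by density. The writer's viewfront update and the new $\mview$ are then $\hat f$-images as before.

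Finally, the identification is symmetric, since $f^{-1}$ is again an order-isomorphism. So the simulation just built runs in both directions and is a bisimulation. I expect the main obstacle to be bookkeeping rather than ideas. Specifically, one has to verify that every use of a timestamp in Rule~\ref{fig:step:lb}~(lb), including dropping superseded anchors and retaining bare-timestamp viewfront entries, is order-only.
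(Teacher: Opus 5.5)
Your proposal is correct and follows the paper's proof. Both arguments rest on the rules reading timestamps only through $\leq$: in $\OW$, in $\freshts$, in $\viewcomb$, and in the choice of anchor in Rule~\ref{fig:step:lb}~(lb). The read rule, the boundary rule and the timestamp-free rules therefore carry over along $f$, and Rule~\ref{fig:action:write}~(write) is the only place where a new timestamp has to be matched.

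The one difference is cosmetic. The paper extends $f$ by a single point $q'\mapsto q_1'$, chosen in the corresponding gap of $Q(\sigma')$, after splitting on whether $q'\in Q(\sigma)$. You instead extend $f$ once to an automorphism $\hat f$ of $(\tstamps,<)$. That uses the same density and unboundedness of $\tstamps$, but it handles both cases uniformly and turns every rule into an instance of equivariance.
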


\begin{proof}

  The rules read timestamps only through $\leq$, and only at timestamps of
  $Q(\sigma)$: $\OW(t,x)$ compares $\tst(\sigma.\tview(x))$ with the timestamps
  of writes at locations $\varphi$-equivalent to $x$, $\viewcomb$ compares the
  entries of two viewfronts at one location, and $\freshts(x,q,q')$ compares
  $q'$ with $q$ and with the timestamps at $x$ above $q$. As $f$ is a monotone
  bijection of $Q(\sigma)$ onto $Q(\sigma')$ and $\varphi=\varphi'$, every such
  comparison has the same value in $\sigma'$ as in $\sigma$. Hence
  $\sigma'.\OW(t,x)$ is the $f$-image of $\sigma.\OW(t,x)$ at every thread and
  location, and $\sigma'.\anchors_t$ the $f$-image of $\sigma.\anchors_t$.

  A rule that touches neither $\writeset$ nor the viewfronts acts on $\rho$,
  $\varphi$ and $H$ alone, which the two configurations share, and $f$ itself
  identifies the results. Rule~\ref{fig:action:read}~(read) takes an
  observable write other than an anchor of $t$: the step from $\sigma'$ takes
  its $f$-image, observable and not an anchor by the paragraph above, and
  advances the viewfronts by $\viewcomb$, whose result is again an $f$-image.
   Rule~\ref{fig:step:lb}~(lb) removes the writes of $t$ carrying the closing
  loop index except the one of greatest timestamp at each location, and resets
  the register state, the value restrictions and the history; which write is
  greatest is settled by the order alone, so the removal commutes with $f$, and
  the entries the reset leaves behind carry $f$-images.

   Rule~\ref{fig:action:write}~(write) is the only rule to introduce a
  timestamp. Let $q'$ be the one it chooses, and $q$ the timestamp of the
  writing thread's viewfront at $x$, so that $q<q'$. If $q'\in Q(\sigma)$ --
  possible, as $\freshts$ separates $q'$ only from the timestamps at $x$ -- the
  step from $\sigma'$ takes $f(q')$ and $f_1=f$. Otherwise $q'$ falls strictly
  between two adjacent elements of $Q(\sigma)$, or above all of them; the step
  from $\sigma'$ takes any $q_1'$ in the corresponding gap of $Q(\sigma')$,
  which is inhabited because $\tstamps$ is dense and unbounded above, and
  $f_1=f\cup\{q'\mapsto q_1'\}$ is again monotone. Either way
  $\freshts(x,q,q_1')$ holds in $\sigma'$, since $q_1'$ lies above $f(q)$ and
  below the image of the least timestamp at $x$ exceeding $q$, both of which are
  outside the gap. The successors are identified by $f_1$.

\end{proof}

\begin{lemma}[Counting the classes]\label{l:opsem-ts-count}

  Suppose the write actions, address expressions, register states, value
  restrictions and histories of $\prog$ are finitely many, and let $W$ bound the
  writes a reachable $\sigma.\writeset$ holds and $L$ its address expressions.
  Then $|Q(\sigma)|\leq W+(|\Threads|+W)\cdot L$, and writing
  $\mathit{skel}(\sigma)$ for $\sigma$ with every timestamp replaced by its rank
  in $Q(\sigma)$, two configurations are identified exactly when their
  $\mathit{skel}$, register state and history agree. The reachable
  configurations therefore fall into finitely many classes.

\end{lemma}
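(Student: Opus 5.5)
The plan has three parts: a bound on $|Q(\sigma)|$, a proof that ranking gives a complete invariant for the identification of Definition~\ref{def:opsem-ts-iso}, and a finite count of the skeletons.

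\textbf{Bounding $|Q(\sigma)|$.} I would list every place a timestamp can occur in a program state $\sigma$, as Definition~\ref{def:opsem-ts-iso} collects them:
\begin{itemize}
\item the timestamps of the writes in $\sigma.\writeset$, at most $W$ of them;
\item the entries $\sigma.\tview(x)$, one per thread and per location;
\item the entries $\sigma.\mview(x)$, one per retained write and per location.
\end{itemize}
Locations are denoted by address expressions, and by Lemma~\ref{l:opsem-locs-across-boundary} these are drawn from a finite set of size at most $L$. So the viewfronts contribute at most $(|\Threads|+W)\cdot L$ entries. An entry may be a bare timestamp that the reset of Rule~\ref{fig:step:lb}~(lb) left behind, and counting it separately only over-approximates. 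Adding the two contributions gives $|Q(\sigma)|\leq W+(|\Threads|+W)\cdot L$.

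\textbf{Ranking is a complete invariant.} Write $\mathit{rk}_\sigma\colon Q(\sigma)\to\{0,\dots,|Q(\sigma)|-1\}$ for the rank map. It is the unique order-isomorphism of a finite linear order onto an initial segment of $\nat$.
\begin{itemize}
\item Suppose two configurations are identified by a monotone bijection $f$. Then $\mathit{rk}_{\sigma'}\circ f=\mathit{rk}_\sigma$, because both are order-isomorphisms onto the same initial segment and such an isomorphism is unique. Replacing timestamps by ranks therefore gives the same $\mathit{skel}$, and $\rho$ and $H$ agree by definition.
\item Conversely, suppose $\mathit{skel}(\sigma)=\mathit{skel}(\sigma')$. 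Then $|Q(\sigma)|=|Q(\sigma')|$, and $f\triangleq\mathit{rk}_{\sigma'}^{-1}\circ\mathit{rk}_\sigma$ is a monotone bijection. Since the skeletons coincide componentwise, $f$ carries $\sigma$ to $\sigma'$ with the non-timestamp components ($\anchors_t$, $\memory$, $\RF$, $\varphi$, labelled write actions) equal.
\end{itemize}

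\textbf{Counting the skeletons.} A skeleton is a finite object assembled from three kinds of choice:
\begin{itemize}
\item a subset of the finitely many write actions, each with a rank below the bound above;
\item for each thread and each retained write, a viewfront assigning each of the at most $L$ locations a rank or a write;
\item anchors (a subset of the writes), $\memory$ (a subset of the finitely many symbols), $\RF$ (a relation over finitely many labelled actions), and $\varphi$ (one of the finitely many value restrictions assumed).
\end{itemize}
Each choice ranges over a finite set, so there are finitely many skeletons. Combined with finitely many register states and histories, this gives finitely many classes. The number of distinct rank patterns on tuples of bounded length is the Fubini number mentioned in the main text, but only its finiteness is needed here.

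\textbf{Main obstacle.} The delicate step is the bound on $Q(\sigma)$. It must not grow with the number of steps taken, so I have to argue that $W$ is genuinely a bound on reachable states and not an accumulating count. I would take $W$ as given by the hypothesis and check that anchors, one per thread, location and loop, are already among the at most $W$ writes. I would also check that discarding the $\mview$s of removed writes at a boundary keeps the $\mview$ entries indexed by live writes only.
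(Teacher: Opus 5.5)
Your proposal is correct and follows the paper's proof essentially step for step. It uses the same count of timestamp carriers (writes, the $|\Threads|$ thread viewfronts, the at most $W$ write viewfronts) for the bound on $|Q(\sigma)|$, the same $i$-th-to-$i$-th rank bijection in both directions of the identification, and the same finiteness of ranked states, with $W$ and $L$ taken as hypotheses. The only superfluous step is your appeal to Lemma~\ref{l:opsem-locs-across-boundary} for $L$: the statement already assumes that bound, and that lemma does not quantify it anyway.
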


\begin{proof}

  A timestamp of $\sigma$ is carried by a write, of which there are at most $W$,
  or by an entry of one of the $|\Threads|$ thread viewfronts or of the at most
  $W$ write viewfronts, each with at most $L$ entries; whence the bound on
  $|Q(\sigma)|$. An order-isomorphism carries the $i$-th element of $Q(\sigma)$
  to the $i$-th of $Q(\sigma')$, so it leaves ranks fixed and
  $\mathit{skel}(\sigma)=\mathit{skel}(\sigma')$. Conversely that equality
  exhibits an isomorphism, namely the map on $Q(\sigma)$ sending the $i$-th
  element to the $i$-th element of $Q(\sigma')$, monotone and carrying $\sigma$
  to $\sigma'$ because the two agree once ranked.

  The count is then over states whose timestamps are $0,\dots,|Q(\sigma)|-1$,
  and the sets these are built from are finite by hypothesis, which the proof of
  Theorem~\ref{t:opsem-finite} discharges. Counting instead the timestamps of
  the writes as a tuple, with the ties it may carry, replaces the one ranked
  arrangement per size by the orbits of $\mathrm{Aut}(\tstamps,<)$ on tuples of
  that length, of which there are the ordered Bell number many. That this is
  finite is the oligomorphy of the group, and it is all the finiteness of the
  timestamp component comes to; the work is the bound on $W$ and $L$.

\end{proof}

\opsemfinite*

The proof is an exercise in program analysis, and relies on the resets at
boundaries of loop iterations at the beginning of bodies in episodic loops.

\begin{proof}

  There are only finitely many symbols read in programs with episodic loops, as
  symbols are a function of program counter per
   Rule~\ref{fig:command:read-ptr}(read-ptr),
  \ref{fig:command:read-var}(read-var), \ref{fig:command:faa}(faa), and
  \ref{fig:command:cas-success}(CAS success).

  As the register state is reset at boundaries of loop iterations in
   Rule~\ref{fig:step:lb}~(lb), there are then only finitely many register
  states.

  Expressions in instructions are made up of constants and references to
  register values. As there are only finitely many register states, there are
  only finitely many expressions in instructions. As there are thus only
  finitely many expressions in branching instructions, there are then only
  finitely many $\varphi$ produced by Rule~\ref{fig:step:branch-then}(then) and
  \ref{fig:step:branch-else}(else).

  As there are only finitely many expressions and thus symbolic memory
  locations, there are only finitely many write actions, and thus only finitely
  many possible read-from pairs in $\RF$, and thus only finitely many
  constraints produced from $\RF$ in Rule~\ref{fig:action:read}~(read).

  Timestamps are drawn from $\tstamps$, which is infinite, so
   Rule~\ref{fig:action:write}~(write) has infinitely many choices of a fresh
  $q'$ at each step. This does not make the reachable configurations infinite,
  because program states are identified up to order-isomorphism of timestamps
  (Definition~\ref{def:opsem-ts-iso}, Lemmas~\ref{l:opsem-ts-congruence}
  and~\ref{l:opsem-ts-count}):
  $\freshts(x,q,q')$ constrains $q'$ only by where it falls in the order at $x$,
  and the rules read timestamps only through $\leq$ -- in $\OW$, in $\freshts$,
  and in $\viewcomb$. Two states differing by an order-isomorphism therefore
  have the same $\OW$ at every thread and location and step to states again
  related by one. Up to that identification a state is determined by the finite
  set of write actions together with a total preorder on them, total at each
  location, and, for each thread and each write, a position in that preorder at
  each location as its viewfront -- a position rather than a write, a viewfront
  entry left behind by a loop boundary needing no write at it: finitely many,
  since the writes are.

   Rule~\ref{fig:step:lb}~(lb) retains one write per location of the block it
  closes as an anchor rather than removing the block outright, and this does not
  disturb the count. An anchor is at most one write per thread, location and
  loop; by Lemma~\ref{l:opsem-locs-across-boundary} the locations a loop
  accesses across its iterations are finitely many, so the anchors are finitely
  many; and the anchor a boundary retains at a location supersedes the one
  retained there by the previous boundary, so they do not accumulate with the
  iterations. An anchor at a location the loop never addresses again is simply
  never consulted.

\end{proof}

\paragraph{Revisit detection.} The theorem bounds the state space and not the
length of derivations: the semantics has no final configuration and imposes no
fairness condition, so a program with an unbounded loop has derivations of every
length. A finite state space with runs of unbounded length makes a safety
property decidable by exhaustive search only if the search recognises that it
has returned to a configuration it has already explored, and
Lemma~\ref{l:opsem-ts-count} is that test for the timestamps: rank them, and
compare the ranked forms, at the cost of a sort. The other components are
compared as they stand, with two exceptions. A constraint $\varphi$ is a
conjunction of atoms, and two configurations may carry logically equivalent ones
written differently; and each $\tview$ is indexed by address expressions where
the location meant is the $\equiv_\varphi$-class, so a configuration is free to
spell one location two ways. Comparing them syntactically instead leaves the
test sound but incomplete: it distinguishes configurations the identification
merges, and the search then explores more classes than the theorem counts.
Termination is not at risk -- the conjuncts and the address expressions are
finitely many by the proof above.

\begin{corollary}[Reachability as a least fixed point]\label{c:opsem-lfp}

  Write $\mathcal C$ for the configurations of $\prog$ reachable from
  $(\sigma_0,\rho_0,H_0)$, taken modulo the identification of
  Definition~\ref{def:opsem-ts-iso}, and put
  \[
    F(S)~\triangleq~\left\{\left[(\sigma_0,\rho_0,H_0)\right]\right\}
    \cup\mathit{post}(S)
    \qquad\text{for }S\subseteq\mathcal C.
  \]
  Then $F$ is monotone on a finite lattice, its least fixed point $\mu F$ is
  $\mathcal C$, and $\emptyset\subseteq F(\emptyset)\subseteq
  F^2(\emptyset)\subseteq\dots$ reaches it in at most $|\mathcal C|$ steps. A
  safety property whose violating configurations are a union of classes $B$
  holds of $\prog$ exactly when $\mu F\cap B=\emptyset$.

\end{corollary}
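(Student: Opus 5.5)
The plan is to read the corollary as the standard Kleene iteration on a finite powerset lattice and let the substantive work fall to Theorem~\ref{t:opsem-finite} and Lemma~\ref{l:opsem-ts-congruence}. I would first make $\mathit{post}$ well defined on classes. For a class $[c]$, set $\mathit{post}(\{[c]\})$ to be the set of classes of one-step successors of any representative $c$ under the future stepping rules, and extend pointwise, $\mathit{post}(S)=\bigcup_{[c]\in S}\mathit{post}(\{[c]\})$. Lemma~\ref{l:opsem-ts-congruence} states that the identification is a bisimulation, so every successor of $c'$ identified with $c$ is identified with some successor of $c$, and conversely. The set of successor classes therefore does not depend on the representative chosen. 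One point needs a check: the successors of a reachable class are again reachable, so $\mathit{post}$ maps subsets of $\mathcal C$ into $\mathcal C$ and $F$ is an endomap of $\mathcal P(\mathcal C)$. This holds by the definition of reachability.

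Next I would establish monotonicity and finiteness. Monotonicity holds because $\mathit{post}$ is a union over elements of $S$. Finiteness of the lattice follows because $\mathcal C$ is finite by Theorem~\ref{t:opsem-finite} and the class count of Lemma~\ref{l:opsem-ts-count}, which gives $|\mathcal P(\mathcal C)|=2^{|\mathcal C|}$. By Knaster--Tarski $F$ has a least fixed point. The chain $F^k(\emptyset)$ is increasing, by induction from $\emptyset\subseteq F(\emptyset)$ and monotonicity. Whenever $F^k(\emptyset)\subsetneq F^{k+1}(\emptyset)$ the chain gains at least one class, so it stabilises after at most $|\mathcal C|$ strict steps. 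Its limit is a fixed point below every prefixed point, hence it equals $\mu F$.

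To identify $\mu F$ with $\mathcal C$, I would argue in two directions. For $\mathcal C\subseteq\mu F$, induct on derivation length: the initial class lies in $F(\emptyset)$, and if the class of a configuration reached in $k$ steps lies in $F^{k+1}(\emptyset)$, then its successors lie in $F^{k+2}(\emptyset)$. Here bisimulation again lets us pick the representative the derivation actually visits. For $\mu F\subseteq\mathcal C$, note that $\mathcal C$ contains the initial class and is closed under $\mathit{post}$, so it is a prefixed point, and $\mu F$ lies below every prefixed point. The safety clause then follows immediately. A violating configuration is reachable iff its class lies in $\mathcal C=\mu F$, and because $B$ is a union of classes, membership in $B$ is invariant under the identification. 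So the property holds iff $\mu F\cap B=\emptyset$.

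The main obstacle is the well-definedness of $\mathit{post}$ on classes, which rests entirely on the congruence lemma. I need to confirm that it covers every rule, including Rule~\ref{fig:step:lb}~(lb) and the branching rules, in both directions of the bisimulation. I would re-read the proof of Lemma~\ref{l:opsem-ts-congruence} rule by rule for this. The remaining steps are routine lattice arguments.
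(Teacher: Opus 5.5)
Your proposal is correct and follows essentially the paper's own argument. You lift $\mathit{post}$ to classes via Lemma~\ref{l:opsem-ts-congruence}, take finiteness of $\mathcal C$ from Theorem~\ref{t:opsem-finite}, apply Knaster--Tarski with the ascending-chain bound, and read off the safety clause from $B$ being a union of classes. Your explicit two-inclusion argument for $\mu F=\mathcal C$ (induction on derivation length in one direction, $\mathcal C$ as a prefixed point in the other) only spells out what the paper states in one line: that the limit of the chain is the set of reachable classes.
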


\begin{proof}

  Successors are well defined on classes by Lemma~\ref{l:opsem-ts-congruence}:
  identified configurations step to identified configurations, so the class of a
  successor depends only on the class stepped from and $\mathit{post}$ lifts to
  $\mathcal C$. That $\mathcal C$ is finite is Theorem~\ref{t:opsem-finite}, so
  $(\Set(\mathcal C),\subseteq)$ is a finite complete lattice, on which $F$ -- a
  constant joined to a monotone image -- is monotone. Knaster-Tarski gives $\mu
  F$, and the ascending chain of the iteration stabilises within $|\mathcal C|$
  steps for want of room; its limit is the set of classes of configurations
  reachable in finitely many steps, which is $\mathcal C$. The property holds
  iff no reachable configuration violates it, and $B$ being a union of classes,
  that is $\mu F\cap B=\emptyset$.

\end{proof}

\noindent The corollary requires three conditions to be met.
\emph{(1)} $B$ must be a union of classes: a condition read through $\leq$ on
timestamps is one, a condition naming a timestamp is not.
\emph{(2)} The step relation must be decidable, which here is the satisfiability
of the constraints the rules conjoin to $\varphi$.
\emph{(3)} The iteration is the search above, so it inherits the membership test
there, sound under a syntactic comparison and of the theorem's own size only
under an equivalence one.

Safety is the greatest fixed point of the operator taking $S$ to the classes
outside $B$ all of whose successors lie in $S$, whose post-fixed points are
exactly the inductive invariants of $\prog$; the least fixed point above is the
forward reading of the same check, and the one the use-after-free of
Section~\ref{sec:bug} calls for.

The following result contributes to an upper bound on the size of derivations in the
operational semantics.

\begin{lemma}

  Read-from relations multiply reachable configurations in the operational
  semantics over the number of threads.

\end{lemma}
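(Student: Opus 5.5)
The plan is to read the statement as a two-sided count: the $\RF$ component of a program state contributes a multiplicative factor to the reachable configurations, and that factor is governed by the number of threads. I will prove an upper bound and exhibit a witness that it is attained, up to constants fixed by the program. Throughout I work modulo the identification of Definition~\ref{def:opsem-ts-iso}, so that by Lemma~\ref{l:opsem-ts-count} configurations are counted by their ranked skeleton, register state and history. Separating the $\RF$ component from the rest, the count becomes a count of distinct pairs $(\sigma.\RF,\sigma.\varphi)$ over a fixed skeleton of writes and history.

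\textbf{Upper bound.} I fix a reachable configuration and a read action $l\colon\Reads~x~\alpha$ of thread $t$ enabled in $\horizon\Phi_H$. Rule~\ref{fig:action:read}~(read) branches over $\sigma.\OW(t,x)\setminus\sigma.\anchors_t$. By the reset of Rule~\ref{fig:step:lb}~(lb) and Lemma~\ref{l:opsem-locs-across-boundary}, each thread contributes at most a bounded number $k$ of writes at a location $\varphi$-equivalent to $x$ to $\sigma.\writeset$. Here $k$ is determined by the write instructions of one iteration plus one anchor per loop, together with the initialising write. Hence $|\sigma.\OW(t,x)|\le 1+k\cdot|\Threads|$.

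Each read edge recorded in $\sigma.\RF$ is bounded in this way, and $\RF$ records at most the reads of one iteration per loop and thread, again by the reset. It follows that the number of $\RF$ components over a fixed skeleton is at most $(1+k|\Threads|)^{R}$, where $R$ bounds the labelled reads retained. This is the multiplicative dependence on $|\Threads|$. The factor $\varphi_\RF$ that the read rule conjoins is a function of $\RF$, so it adds no further factor.

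\textbf{Lower bound.} For the matching witness, I take $N$ threads each performing one relaxed write to a common location $x$, followed by one read of $x$. I then exhibit interleavings in which all $N$ writes are performed before any read. In such an interleaving every write is in $\OW(t,x)$ for every reader, since no viewfront has yet advanced past them. Each reader can therefore choose any of $N+1$ sources independently of the others. The resulting states differ in $\RF$ and in $\varphi$, and the writes carry distinct values, so they are pairwise non-identified. This gives at least $(N+1)^N$ classes. The RCU example of Figure~\ref{f:rf-mult-example} is the instance where these choices collapse into two symbolic outcomes, which I will remark on but not rely on.

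\textbf{Main obstacle.} I expect the hard step to be showing that distinct $\RF$ choices really yield non-identified configurations. Two choices could produce logically equivalent $\varphi$, for instance with a read-don't-modify-write, as in the bottom case of Figure~\ref{f:rf-mult-example}. They could also become equal after ranking, because the read advances $\tview$. My first attempt is to use the identification of Definition~\ref{def:opsem-ts-iso} literally, which demands that $\RF$ be equal syntactically, so that distinct edges give distinct classes. I will choose distinct constant values in the witness so that the semantic comparison of the revisit paragraph also separates them.
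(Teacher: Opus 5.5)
\emph{The gap is in your lower-bound witness.} The upper bound is a reasonable sharpening of what the paper states. The witness, however, is the part that actually exhibits the multiplication, and it rests on a false step. You claim that once all $N$ writes are performed, every write is in $\sigma.\OW(t,x)$ for every reader ``since no viewfront has yet advanced past them''.

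In your program each reader has itself written $x$ before reading it. Rule~\ref{fig:action:write}~(write) advances the writing thread's viewfront at $x$ to the write it adds, at a timestamp $\freshts$ places strictly above an existing one. So when $t$ reads, $\sigma.\OW(t,x)$ holds only $t$'s own write and the writes whose timestamps lie above it. Three consequences follow:
\begin{itemize}
\item The initialising write is never available to $t$.
\item A thread whose write has rank $p$ among the $N$, counting from the lowest, has $N-p+1$ sources, not $N+1$.
\item The readers' choices are coupled through the timestamp order. For a fixed order there are $N!$ read-from combinations, not $(N+1)^N$.
\end{itemize}
This is exactly the pruning by the thread's viewfront that the paper's proof singles out.

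The obstacle you anticipate does not arise. Distinct $\RF$ choices cannot be identified, because Definition~\ref{def:opsem-ts-iso} requires $\RF$ to agree syntactically. The real obstacle is the viewfront advance, which your witness overlooks.

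The repair is to put the reads on threads that never write $x$. Take $N$ writer threads, each storing a distinct constant to $x$, and $M$ reader threads, each performing one relaxed read of $x$. The readers' viewfronts at $x$ stay at the initialising write, and a relaxed read moves only its own reader's viewfront. After all writes, each reader therefore chooses independently among all $N+1$ writes, giving $(N+1)^M$ pairwise distinct $\RF$ components. Alternatively, keep your program and count $N!$ combinations per order of the writes.

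With either change your argument goes through and says more than the paper's. The paper's proof is qualitative. It locates the branching at the choice of $\hat w\in\sigma.\OW(t,x)$ in Rule~\ref{fig:action:read}~(read). It observes that each choice conjoins a different equation $\alpha=\val(w)$ to $\varphi$, and that in the worst case these are pairwise incompatible. It notes that $\OW(t,x)$ may hold several writes, while the viewfront prunes it. It commits to no count, so it gives neither your upper bound $(1+k|\Threads|)^R$ nor an attained witness. It also never needs the independence of choices that your witness asserted.
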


\begin{proof}

  The read action Rule~\ref{fig:action:read} assigns an observable write from
  $\OW(t,x)$ to the read action, and thereby adds a constraint equating the
  fresh symbol read and the write value expression to $\varphi$. In the worst
  case, the constraints are incompatible between different choices of writes.
  Note that $\OW(t,x)$ may hold more than one write -- reading a stale write is
  what makes the semantics non multi-copy atomic -- so the branching here is
  over the observable writes, and a thread's viewfront is what prunes it.

\end{proof}

\subsection{Consistency of Constructed Executions}\label{app:opsem-consistency}

Given a derivation in the operational semantics, soundness has to produce not
merely an execution but a \emph{consistent} one, satisfying
Axioms~\ref{def:mem-model-axiom:nta} and~\ref{def:mem-model-axiom:co}. Neither
axiom appears in the rules of this appendix: no rule mentions
a cycle, and no component of a configuration is a modification order. Both are
met for structural reasons instead, and the three lemmas of this subsection
isolate them. The first concerns the order in which a derivation performs
actions and needs no timestamps at all; the second and third concern the
timestamps, which are what the operational semantics carries in place of a
modification order.

The third is the only one that has to say anything about loop boundaries, and it
is worth being explicit about why, since
Lemma~\ref{l:gamma-pres-consistency} does not.  That lemma needs no argument
about cycles crossing a boundary because $\gamma$ \emph{restricts} an execution,
so every cycle in the image pulls back along it.  Soundness runs the other way,
from a derivation that has been collapsed at every boundary to an execution that
has not, and that is exactly the extending direction the remark following
Lemma~\ref{l:gamma-pres-consistency} names as the one requiring a cross-boundary
argument.  Lemma~\ref{l:opsem-co-glue} supplies it.

\begin{lemma}[Derivations are thin-air free by construction]\label{l:opsem-nta}

  Let $D$ be a derivation from $(\sigma_0,\rho_0,H_0)$ and let $\mathbb X$ be
  the execution it constructs.  Then $\DP\cup\ppoord\cup\rf$ is acyclic on $\mathbb
  X$, so $\mathbb X$ satisfies Axiom~\ref{def:mem-model-axiom:nta}.

\end{lemma}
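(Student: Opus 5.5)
The plan is to build a linear extension of $\DP\cup\ppoord\cup\rf$ directly from the order in which $D$ performs actions. Number the actions of $D$ by the step at which future stepping adds them to the history. Where a reset of Rule~\ref{fig:step:lb}~(lb) removes a block, keep counting globally across the derivation rather than per history, so that each event of $\mathbb X$ receives a unique stamp $\tau(e)$. This stamp is its position in the derivation. It is not its position in any one configuration's history. The claim then reduces to showing that every edge $(e_1,e_2)$ of $\DP\cup\ppoord\cup\rf$ satisfies $\tau(e_1)<\tau(e_2)$. A cycle would force $\tau(e)<\tau(e)$.

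I treat the intra-thread and inter-thread edges separately.

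For $\rf$: Rule~\ref{fig:action:read}~(read) only selects a write from $\sigma.\OW(t,x)\subseteq\sigma.\writeset$. Such a write was added to $\sigma.\writeset$ by an earlier application of Rule~\ref{fig:action:write}~(write), so its stamp is strictly smaller than that of the read.

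For $\ppoord\cup\DP$: Rule~\ref{fig:step:non-lb_non-branch}, Rule~\ref{fig:step:lb} and the branching rules fire only when $(l\colon\overline a)\in^*\horizon\Phi_H$. Every $(\ppoord\cup\DP)$-predecessor of a horizon element lies in $H$, by Definition~\ref{def:horizons}, so within one history the predecessors are already stamped. Branching and fence actions carry no stamp. They are not events of $\mathbb X$ by Definition~\ref{def:executions}, and $\pposync$ relates only the accesses around a fence.

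The main obstacle is the history reset at loop boundaries. After Rule~\ref{fig:step:lb}~(lb), $H$ no longer contains the events of the closed iteration. A $\ppoord\cup\DP$ edge from iteration $k$ into iteration $k+1$ is therefore not witnessed by the horizon test. What the horizon test sees is an edge from the $\gamma$-image, that is, from the first iteration's reindexed copy. For such an edge I argue by cases on the iterations of its endpoints.

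First, when $\iter(e_1)(\ell)<\iter(e_2)(\ell)$, $e_1$ was performed before the boundary that $e_2$ follows. This is immediate because the rules never return to an iteration once its label has been exceeded; that is the test $\iter(H)<\iter(l)$. Hence $\tau(e_1)<\tau(e_2)$ regardless of the reset.

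Second, the reverse direction, $\iter(e_1)(\ell)>\iter(e_2)(\ell)$, cannot occur at all. By Condition~\ref{episodic:events} of Definition~\ref{def:episodic}, the closure $(\ppoord\cup\DP)^+$ orders earlier iterations before later ones. An edge running backwards would therefore close a $(\ppoord\cup\DP)$-cycle inside $\mathbb X$, and the futures of Definition~\ref{def:futures} are acyclic.

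Third, edges between events in the same iteration, or between an event before the loop and one inside it, are handled by the horizon argument. I transport it along Corollary~\ref{c:gamma-history-reset}: the history after the reset is $\gamma^iH_{i+1}$. Lemma~\ref{l:gamma-pres-ppo} together with Lemma~\ref{l:gamma-pres-dp} guarantees that the edges checked there are the images of the edges of the actual iteration.

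The delicate point is the residual justifications of Lemma~\ref{l:gamma-pres-justs}. I expect to discharge them as in Lemma~\ref{l:gamma-pres-post-futures}: they differ only in edges sourced in the first iteration, and by the first case those sources are stamped earlier. With all edges increasing in $\tau$, acyclicity follows.
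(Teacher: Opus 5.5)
Your skeleton is the paper's. You take the step order of $D$ as a strict linear order. $\rf$ points forward because Rule~\ref{fig:action:read}~(read) only takes writes already added to $\sigma.\writeset$. $\ppoord\cup\DP$ points forward by horizon minimality within an iteration, and across a boundary by the order in which iterations are performed.

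The step that fails as written is your exclusion of backward cross-iteration edges. Condition~\ref{episodic:events} does not exclude such an edge by itself; it only produces a $(\ppoord\cup\DP)$-cycle. You then reject that cycle by citing the acyclicity of futures. But Definition~\ref{def:futures} derives that acyclicity from Axiom~\ref{def:mem-model-axiom:nta}, and for the $\mathbb X$ built by $D$ that axiom is exactly what the lemma asserts. If you meant the futures in $\Phi$ that the rules consult, you would first have to place the offending edge in one of them, which is the premise your horizon argument already rests on.

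So use the horizon argument directly. Suppose $(e_1,e_2)\in\ppoord\cup\DP$ with $\iter(e_1)(\ell)>\iter(e_2)(\ell)$. Then $e_2$ is performed before $e_1$, so at that step $e_1$ has not been performed and lies outside $H$. Hence $e_2$ is not minimal, and so not in $\horizon\Phi_H$. Read this way, minimality handles every edge whose source is still outstanding when its target is stepped. A reset only removes sources that were performed earlier, which is all the paper's cross-boundary clause needs.

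Drop the transport along $\gamma$ and the treatment of residual justifications rather than completing them. The paper does not use $\gamma$ in this lemma, and your proposed discharge reintroduces the circularity. Lemma~\ref{l:gamma-pres-post-futures} disposes of residual lifts by appealing to acyclicity of the execution. It disposes of residual strengthenings whose newly constrained origins lie in an iteration after $w'$ through Remark~\ref{r:str-earlier}, which only shows such strengthenings are absent from \emph{consistent} executions.

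Those later-iteration origins also contradict your summary that residual justifications differ only in edges sourced in the first iteration. They contribute exactly the backward edges above. Your first case does not cover them; the horizon argument does.
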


\begin{proof}

  $D$ performs one labelled action per step and thereby linearly orders the
  events of $\mathbb X$; write $<_D$ for that order.  As a subrelation of a
  strict linear order is acyclic, it suffices that each of the three relations
  is contained in $<_D$.

  \emph{$\ppo$ and $\DP$ within an iteration.}  Every future-stepping rule of
   Rule~\ref{fig:step:non-lb_non-branch} and~\ref{fig:step:lb} selects its
  action from the posterior future horizon $\horizon\Phi_H$, whose members are
  by Definition~\ref{def:horizons} the $\phi$-minimal events not yet in $H$.  By
  Definition~\ref{def:futures} $\phi_{\mathbb X}=X^2\cap(\ppoord\cup\DP)$, so an
  event with an outstanding $\ppo$- or $\DP$-predecessor is not minimal and is
  not stepped.  Both relations therefore point forward in $<_D$.  This is the
  one place where $\Phi$ being built from exactly $\ppo$ and $\DP$ is used, and
  it is why the no-thin-air axiom is discharged by the shape of the rules rather
  than by a side condition on them.

  \emph{$\ppo$ and $\DP$ across a boundary.}  After Rule~\ref{fig:step:lb}~(lb)
  the history no longer records the events of the iteration just closed, so the
  minimality argument no longer sees them.  It does not have to: by
  Condition~\ref{episodic:events} of Definition~\ref{def:episodic} every event of
  an earlier iteration is $(\ppoord\cup\DP)^+$-before every event of a later one,
  and $D$ performed the earlier iteration's events at earlier steps, so
  cross-boundary edges of $\ppoord\cup\DP$ agree with $<_D$ as well.

  \emph{$\rf$.} Rule~\ref{fig:action:read} chooses $\hat w\in\sigma.\OW(t,x)$,
  and $\OW(t,x)\subseteq\sigma.\writeset$, to which a write is added only by
   Rule~\ref{fig:action:write} at the step performing it.  The write of an $\rf$
  edge has therefore been stepped before the read taking it, and
  $\rf\subseteq{<_D}$.

\end{proof}

\paragraph{The boundary and the viewfronts.}
Both remaining lemmas rest on a viewfront never moving back, and the reset at a
loop boundary is the one point in the semantics at which it might.  It does not,
and the anchors of the future-stepping rules above are what secure that for the
thread crossing the boundary.

\begin{lemma}[Viewfronts at loop boundaries]\label{l:opsem-anchor}

  Let a thread $t$ cross a boundary of a loop $\ell$ in a state $\sigma$.  Then
  $\tview|^t_{\setminus\ell}=\sigma.\tview$, and this viewfront points at a
  write of $\sigma.\writeset|^t_{\setminus\ell}$ at every location.  No entry of
  any other viewfront moves either.  Hence
  $\sigma|^t_{\setminus\ell}.\OW(t',x)\subseteq\sigma.\OW(t',x)$ for every
  thread $t'$ and location $x$.

\end{lemma}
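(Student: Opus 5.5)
The plan is to unfold the componentwise definition of the reset $\sigma|^t_{\setminus\ell}$ in Rule~\ref{fig:step:lb}~(lb) and check the three claims in turn. The conclusion about $\OW$ then follows directly from the definition of $\sigma.\OW(t',x)$.

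\emph{First claim: the crossing thread's viewfront.} By the reset's definition, no viewfront entry is rewritten; only writes are removed. So $\tview|^t_{\setminus\ell}=\sigma.\tview$ is immediate. What needs checking is that each entry still points at a write of the reduced write set. Fix a location $x$ and let $\hat w=\sigma.\tview(x)$.
\begin{itemize}
\item If $\hat w$ is not a write of $t$ carrying loop index $\ell$, the reset does not remove it.
\item Otherwise, $\hat w$ is a write of $t$ in the closing iteration. Here I will argue that $\hat w$ has the greatest timestamp among $t$'s $\ell$-writes at $x$, so it is the anchor the reset retains.
\end{itemize}
For the second case, the write rule (Rule~\ref{fig:action:write}~(write)) sets $t$'s viewfront at $x$ to the write it creates. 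Viewfronts only advance: the read rule takes $\viewcomb$ and then updates to an observable write, whose timestamp is at least $\tst(\sigma.\tview(x))$. So any later write of $t$ at $x$ would have moved the entry past $\hat w$, and any earlier one sits below it. Hence $\hat w$ is the greatest $\ell$-write of $t$ at $x$.

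The main obstacle is that locations are symbolic and compared under $\equiv_{\sigma.\varphi}$, while the reset also drops conjuncts of $\varphi$. So "greatest at each location" has to be read with respect to the $\varphi$ before the reset. I would fix the location classes by the pre-reset $\varphi$. Then I would argue, by Lemma~\ref{l:opsem-locs-across-boundary}, that the identification of the addresses $t$ reaches is unaffected by dropping the iteration's conjuncts, since those addresses come from sources outside the loop.

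\emph{Second claim: other viewfronts.} An entry of $\tview[t']$ for $t'\neq t$, or of a surviving $\mview$, may point at a removed write. By construction it keeps that write's timestamp as a bare entry, so its timestamp is unchanged and no entry moves.

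\emph{Third claim: the inclusion on $\OW$.} Unfold $\OW$ on both states. The viewfront timestamps are the same, and the write set shrinks, so every member of $\sigma|^t_{\setminus\ell}.\OW(t',x)$ is a write of $\sigma.\writeset$ at an equivalent location with a timestamp at least the same viewfront timestamp, and hence lies in $\sigma.\OW(t',x)$.

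One point needs care in this last step: the reset weakens $\varphi$, which could in principle enlarge $\equiv_\varphi$-equivalence of locations. I would rule this out in the same way as in the first claim: the dropped conjuncts concern only $t$'s iteration, by Lemma~\ref{l:opsem-locs-across-boundary}, and the allocation and global disjointness conjuncts are retained. So equivalence between the surviving writes' locations and $x$ does not grow.
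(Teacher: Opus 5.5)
Your argument is essentially the paper's. It uses the same case split on whether $\sigma.\tview(x)$ is a write of $t$ carrying loop index $\ell$, and the same monotonicity of $t$'s viewfront to show that such an entry is the greatest of those writes at $x$ and hence the retained anchor. It also makes the same observation that every other entry keeps its timestamp, so that $\OW$ can only shrink. One point to make explicit, as the paper does: the write rule advances the viewfront rather than lowering it because $\freshts$ places the new write above an observable write, hence above $\tst(\sigma.\tview(x))$. Both your ``later'' and ``earlier'' comparisons rest on this.

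The detour through $\varphi$, however, is unnecessary and partly wrong, and should go.
\begin{itemize}
\item The reset only drops conjuncts, and a weaker predicate entails fewer equalities. So $\equiv_\varphi$ can only shrink at the boundary, never grow, which is already the direction the inclusion on $\OW$ needs.
\item Lemma~\ref{l:opsem-locs-across-boundary} does not support your claim that identifications are unaffected. It bounds which locations a loop can address, not which equalities survive the reset. An address built from a value read in the closing iteration can owe its identification with a concrete location entirely to that read's equality in $\varphi$, which the reset removes.
\end{itemize}
For the first claim it suffices to read ``greatest at each location'' under the pre-reset $\varphi$, as you propose. Since the post-reset classes refine the pre-reset ones, the viewfront's write is then retained whichever of the two the reset uses to pick anchors.
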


\begin{proof}

  By induction on the derivation reaching $\sigma$, with the invariant that a
  thread's viewfront at a location is at or above every write that thread has
  performed there. Rule~\ref{fig:action:write}~(write) advances $t$'s
  viewfront at $x$ to the write it adds, whose timestamp $\freshts$ places above
  that of an observable write and hence above $\tst(\sigma.\tview(x))$;
   Rule~\ref{fig:action:read}~(read) advances it too, $\viewcomb$ taking the
  later entry at each location; and the boundaries before this one leave it
  where it stands, by the induction hypothesis and what follows.

  Fix a location $x$ and let $\hat u=\sigma.\tview(x)$.  Suppose first that
  $\hat u$ is a write of $t$ carrying loop index $\ell$.  It is then the
  greatest such write at $x$, by the invariant and being one of them.  It is
  therefore the anchor the boundary retains at $x$, and the restriction leaves
  the viewfront pointing at it.  If $\hat u$ is not a write of $t$ with loop
  index $\ell$, the boundary does not remove it, and again the viewfront does
  not move.

  For the other viewfronts, an entry whose write the boundary removes keeps that
  write's timestamp, so no entry is assigned a smaller one.  $\OW(t',x)$ is
  determined by $\writeset$ and the timestamp of $t'$'s viewfront at $x$, of
  which the boundary shrinks the first and leaves the second, so it can only
  shrink.

\end{proof}

\begin{corollary}[Coherence is monotone in the loop index]\label{c:opsem-co-monotone}

  Let $t$ be a thread, $x$ a location, and $i<j$ iterations of a loop $\ell$ of
  $t$ in a derivation $D$.  Then every write of $t$ at $x$ in iteration $i$ is
  $\CO$-before every write of $t$ at $x$ in iteration $j$.

\end{corollary}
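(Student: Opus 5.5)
The plan is to read $\CO$ off the timestamps the derivation $D$ assigns, as Section~\ref{s:opsem} does: $(w_1,w_2)\in\CO$ iff their locations are $\varphi$-equivalent and $\tst(\hat w_1)<\tst(\hat w_2)$. A write removed at a later boundary still keeps the timestamp it was given when Rule~\ref{fig:action:write}~(write) performed it. So the corollary reduces to a single inequality. Every timestamp $t$ assigns at $x$ in iteration $j$ must be strictly greater than every timestamp $t$ assigned at $x$ in iteration $i$.

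The key invariant is the one established inside the proof of Lemma~\ref{l:opsem-anchor}: at every point of $D$, $\tst(\sigma.\tview(x))$ is at or above the timestamp of every write $t$ has performed at $x$. I would first make explicit that this quantity is non-decreasing along $D$, using three facts.
\begin{itemize}
\item Rule~\ref{fig:action:write}~(write) moves it up, since $\freshts$ gives $q<q'$ for the observable write at $q\geq\tst(\sigma.\tview(x))$.
\item Rule~\ref{fig:action:read}~(read) moves it up, since $\viewcomb$ takes the later entry.
\item A boundary leaves it fixed, by Lemma~\ref{l:opsem-anchor}, which is stated for $t$'s own viewfront as well.
\end{itemize}

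Now take a write $\hat w_1$ of $t$ at $x$ in iteration $i$ and a write $\hat w_2$ of $t$ at $x$ in iteration $j$. At the step performing $\hat w_2$, the invariant and monotonicity give
\[
  \tst(\hat w_1)\leq\tst(\sigma.\tview(x)).
\]
Rule~\ref{fig:action:write}~(write) chooses $q'$ with $\freshts(x,q,q')$ for some $(w,q)\in\sigma.\OW(t,x)$. By the definition of $\OW$, $q\geq\tst(\sigma.\tview(x))$. Hence
\[
  \tst(\hat w_1)\leq q<q'=\tst(\hat w_2),
\]
which is the required strict inequality.

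The step I expect to be the main obstacle is the location side condition. The two writes address "$x$" in different iterations, while $\varphi$ is reset at the boundary by $\varphi|^t_{\setminus\ell}$. So one has to justify that the viewfront entry at $x$ seen in iteration $j$ is the same location-indexed entry that was updated in iteration $i$. I would handle this with Lemma~\ref{l:opsem-locs-across-boundary}. Two iterations agree on a location only by retrieving it from the same source outside the loop, and that source is fixed by conjuncts the reset retains (pre-loop and allocation conjuncts carry no loop index). Hence $\varphi$-equivalence of the two addresses, and the identity of the viewfront entry, survive the boundary. If the addresses are not equivalent, the two writes are not $\CO$-related and there is nothing to prove.

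Finally, by transitivity of $<$ on $\tstamps$, the argument extends over any number of intervening boundaries between $i$ and $j$.
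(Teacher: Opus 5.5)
Your argument is correct and uses the same two ingredients as the paper's proof. Lemma~\ref{l:opsem-anchor} keeps $t$'s viewfront at $x$ from moving back at a loop boundary, and Rule~\ref{fig:action:write}~(write), through $\freshts$, places each new write strictly above that viewfront. The paper phrases this through the anchor, where the viewfront sits at the $\CO$-greatest write of iteration $i$, and inducts on $j-i$; your invariant that the viewfront dominates every write $t$ has ever made at $x$ covers all $i<j$ in one step, and your paragraph on the location side condition goes beyond the paper, which simply treats $t$'s viewfront at $x$ as one entry across iterations.
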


\begin{proof}

  Take $j=i+1$ first.  By Lemma~\ref{l:opsem-anchor} $t$'s viewfront at $x$
  stands after the boundary where it stood before it, at the anchor at $x$,
  which is the $\CO$-greatest write of $t$ at $x$ in iteration $i$.
   Rule~\ref{fig:action:write}~(write) draws the timestamp of a write of iteration
  $i+1$ at $x$ from $\freshts$ above the timestamp of a write in $\OW(t,x)$,
  hence above the viewfront, hence above the anchor and above every write of
  iteration $i$ at $x$.  The anchor retained at $x$ by the boundary closing
  $i+1$ is a write of $i+1$ where $t$ wrote $x$ in it and the anchor of $i$
  otherwise, so the argument iterates and the general case follows by induction
  on $j-i$.

\end{proof}

Monotonicity follows from the properties of episodic loops per
Definition~\ref{def:episodic}:  By Condition~\ref{episodic:events} the events of
iteration $i$ precede those of iteration $j$ in $(\ppoord\cup\DP)^+$ and hence
in $\HB$, so a write of $j$ $\CO$-before a write of $i$ at the same location
would close an $\HB;\CO$ cycle, which Axiom~\ref{def:mem-model-axiom:co}
forbids. Corollary~\ref{c:opsem-co-monotone} is the counterpart of that in the
operational semantics.

The anchor enables the monotonicity result, carrying the timestamp of the last
write visible in the viewfront across the loop boundary. Without the anchor the
restriction would drop $\sigma.\tview(x)$ to the greatest surviving write below
the write $t$ made in the iteration -- possibly the initialising write at $x$,
which Condition~\ref{episodic:mem} case~\ref{episodic-casea} expressly permits a
later iteration to read.  $t$ would then be free to take a write its own
viewfront had passed, which is precisely the $\FR$ edge back into the closed
iteration that Lemma~\ref{l:opsem-co-glue} has to exclude below, and which
Definition~\ref{def:episodic} does not forbid on its own.  Retaining the anchor
keeps the viewfront where it stands; denying $t$ the anchor in
 Rule~\ref{fig:action:read}~(read) keeps Condition~\ref{episodic:mem}, which
removing the write was there to implement; and leaving the anchor readable by
the other threads keeps what they could observe of the iteration before the
boundary.  The same slack for a thread that is not crossing the boundary -- its
viewfront pointing at a write of $t$ that the boundary removes -- is closed by
the entry keeping its timestamp instead of being pulled back to a surviving
write, there being no bound on how many writes a boundary would have to retain
to keep every viewfront of every thread on a write of the state.

\paragraph{Coherence within a block.}
The coherence Axiom~\ref{def:mem-model-axiom:co} is defined on a modification
order, which is read off the timestamps as in Section~\ref{s:app-opsem-states}.

\begin{lemma}[Timestamps are an adequate coherence
  order]\label{l:opsem-co-adequate}

  Let $\sigma$ be a program state reachable from $\sigma_0$, and let
  $\CO_\sigma$ relate $\hat w_1$ to $\hat w_2$ when their locations are
  $\sigma.\varphi$-equivalent and $\tst(\hat w_1)<\tst(\hat w_2)$.  Then

  \begin{enumerate}

    \item\label{co-adq:total} $\CO_\sigma$ is a strict total order on the writes
      of $\sigma.\writeset$ at each location;

    \item\label{co-adq:view} for every thread $t$, location $x$, and event $e$
      of $t$, at the moment $e$ is performed $\tst(\sigma.\tview(x))$ is at
      least $\tst(\hat w)$ for every $\hat w\in\sigma.\writeset$ at $x$ which
      $\HB$-precedes $e$, and for every $\hat w\in\sigma.\writeset$ at $x$ read
      by an action that $\HB$-precedes $e$ or is $e$ itself;

    \item\label{co-adq:ww} if $\hat w_1,\hat w_2\in\sigma.\writeset$ write
      $\varphi$-equivalent locations and $(w_1,w_2)\in\HB$, then $(\hat w_1,\hat
      w_2)\in\CO_\sigma$;

    \item\label{co-adq:rd} if a read $r$ by $t$ at $x$ takes $\hat w$, then
      $\tst(\hat w)$ is at least $\tst(\hat w')$ for every $\hat
      w'\in\sigma.\writeset$ at $x$ that $\HB$-precedes $r$ or is read by an
      action $\HB$-preceding $r$.

  \end{enumerate}

\end{lemma}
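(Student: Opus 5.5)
I would prove the four claims of Lemma~\ref{l:opsem-co-adequate} together, by induction on the derivation reaching $\sigma$ from $\sigma_0$, strengthening the statement into an invariant over all reachable states and case-splitting on the last rule applied. The base case is $\sigma_0$: one initialising write per global location at timestamp $0$, every viewfront pointing at it, and $\sigma_0.\varphi$ separating distinct globals. Here Claim~\ref{co-adq:total} is immediate, and Claims~\ref{co-adq:view}--\ref{co-adq:rd} hold vacuously or trivially.

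\textbf{Rules that leave order and views fixed.} Rules that touch neither $\writeset$ nor the viewfronts preserve everything, with one caveat: branch rules may strengthen $\varphi$. So for Claim~\ref{co-adq:total} I would first check that $\varphi$-equivalence of locations only coarsens as $\varphi$ grows. Two newly identified writes must still carry distinct timestamps. I expect to argue this from $\freshts$, which separates a new write from every write at a location already $\varphi$-equivalent at the time it was performed. Where two locations become equivalent only later, I would appeal to Lemma~\ref{l:loc-decidable}'s shape of location expressions together with Lemma~\ref{l:opsem-locs-across-boundary} to restrict which identifications can arise. This is the point I am least sure of, and I expect to need the extra invariant that no two writes share a timestamp unless their locations are $\varphi$-disjoint in every extension.

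\textbf{Write and read rules.} For Rule~\ref{fig:action:write}, $\freshts(x,q,q')$ places $q'$ strictly between $q$ and the next write at $x$, which gives totality (Claim~\ref{co-adq:total}). It also places $q'$ above the writer's viewfront, so Claim~\ref{co-adq:ww} follows for $\HB$-predecessors on the same thread. The writer's viewfront advances to the new write, and $\mview$ records it, which preserves Claim~\ref{co-adq:view}. For Rule~\ref{fig:action:read}, the chosen write lies in $\OW(t,x)$ and therefore at or above $\tst(\tview(x))$. Claim~\ref{co-adq:view} for earlier events then yields Claim~\ref{co-adq:rd}. The viewfront update $\funup{\syncview_t(\hat w,a)}{x}{\hat w}$ preserves Claim~\ref{co-adq:view}, since $\HB$ grows only along $\ppo\cup\DP$ (same thread, already covered by $\tview$) or along $\SW$, where $\viewcomb$ folds in $\mview$. $\mview$ satisfies Claim~\ref{co-adq:view} for the releasing write by the induction hypothesis applied at the moment it was performed. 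The inter-thread case of Claim~\ref{co-adq:ww} reduces to Claim~\ref{co-adq:view} in the same way: $w_1\,\HB\,w_2$ puts $\tst(w_1)$ at or below the writer's view when $w_2$ is stamped, and $\freshts$ places $w_2$ above that view.

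\textbf{Loop boundaries.} The main obstacle is Rule~\ref{fig:step:lb}. Here I invoke Lemma~\ref{l:opsem-anchor}: no viewfront entry moves, and $\OW$ only shrinks. Removing writes cannot break totality on the survivors, and entries keep their timestamps, so Claim~\ref{co-adq:view} survives. The $\HB$-predecessors of later events that lie in the closed iteration are dominated by the anchor, by Corollary~\ref{c:opsem-co-monotone}. Claims~\ref{co-adq:ww} and~\ref{co-adq:rd} are then statements about surviving writes only. They follow because the read rule excludes anchors for $t$ and because viewfront entries were not pulled back. The cross-iteration gluing into one global $\CO$ I would defer to Lemma~\ref{l:opsem-co-glue}, keeping this lemma about a single state.
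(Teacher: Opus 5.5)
Your treatment of Claims~\ref{co-adq:view}--\ref{co-adq:rd} and of the loop boundary follows the paper's route. Claim~\ref{co-adq:view} is the real invariant. It rests on no rule ever lowering $\sigma.\tview(x)$, the boundary included by Lemma~\ref{l:opsem-anchor}, and on the $\SW$ case folding in $\mview$, which bounded the writer's $\HB$-past when it was fixed. Claims~\ref{co-adq:ww} and~\ref{co-adq:rd} are then read off from $\freshts$ and from the read taking a write of $\OW(t,x)$. Your appeals to Corollary~\ref{c:opsem-co-monotone} and to anchors being denied to $t$ are not needed at the boundary: the obligation of Claim~\ref{co-adq:view} ranges over $\sigma.\writeset$, and removed writes simply leave it.

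The gap is the step you flag in Claim~\ref{co-adq:total}, and the route you sketch would not close it. Lemma~\ref{l:loc-decidable} only says the alias query is decidable in the constant-offset fragment. Lemma~\ref{l:opsem-locs-across-boundary} only bounds how many locations a loop touches. Neither says which identifications a later conjunct can force. Your fallback invariant is also not maintained by Rule~\ref{fig:action:write}, because $\freshts(x,q,q')$ separates $q'$ only from timestamps at locations already $\varphi$-equivalent to $x$.

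The missing idea is the paper's: look at what holds locations apart, and at how $\varphi$ can change.
\begin{itemize}
\item Distinctness of globals sits in $\sigma_0.\varphi$, and disjointness of allocations is conjoined at allocation. Neither conjunct carries a thread or loop index, so both survive $\varphi|^t_{\setminus\ell}$.
\item Otherwise the reset only weakens $\varphi$, which can split classes but never merge them.
\item Everything else $\varphi$ gains keeps it satisfiable. A branch or \cas{} outcome is admitted by Rule~\ref{fig:action:fence-branch} only when $\varphi\wedge b\not\equiv\bot$, and a read's equation constrains a fresh symbol. You also omitted the read rule as a source of growth.
\item A satisfiable strengthening of a $\varphi$ that entails $x\neq y$ cannot entail $x=y$.
\end{itemize}
So no two classes those constraints separate are ever merged.

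That argument covers a pair of written locations only if, when the later write is stamped, the two are either $\varphi$-equivalent or separated by those constraints. If you want the step airtight, that dichotomy is the extra invariant to state: each written location resolves under $\varphi$ to a global or an allocation symbol. An invariant about decidability does not do this job.
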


\begin{proof}

  Claim~\ref{co-adq:total}.  Timestamps are rationals. $<$ orders them
  totally, such that no two writes at $\varphi$-equivalent locations carry
  the same timestamp.  $\sigma_0.\writeset$ holds one write per location at
  timestamp $0$, and $\freshts(x,q,q')$ requires $q<q'$ together with $q'<q''$
  for every $q''$ at a location $\varphi$-equivalent to $x$ above $q$, so $q'$
  falls strictly between two adjacent existing timestamps of that class and
  coincides with neither.  Removing writes at a boundary does not disturb this.

  The classes are compared under the $\varphi$ of the state the write is made
  in, and $\varphi$ grows along a derivation, so it remains that no class
  absorbs another after its timestamps have been chosen.  The initialising
  writes are held apart by $\sigma_0.\varphi$, which keeps distinct global
  locations distinct, and the locations of allocations by the allocation
  constraints of Paragraph~\ref{par:opsem-memory}; both survive the reset at a
  loop boundary, which drops only conjuncts carrying a thread and a loop index.
  A branching action conjoins $b$ to $\varphi$ only where $\varphi\wedge
  b\not\equiv\bot$, so no branch, and no test of a $\cas$, can identify two
  locations these constraints hold apart.

  Claim~\ref{co-adq:view}, by induction on the length of the derivation.  The
  initial state satisfies the claim trivially.  For the inductive step, note
  first that no rule ever lowers $\sigma.\tview(x)$, the boundary included:
  Rule~\ref{fig:action:write}~(write) advances it to a write of strictly greater
  timestamp, and Rule~\ref{fig:action:read}~(read) sets it to
  $\funup{\sigma.\syncview_t(\hat w,a)}{x}{\hat w}$, where $\viewcomb$ takes the
  pointwise later write and $\hat w\in\OW(t,x)$ lies at or above the current
  value at $x$.  The hypothesis is therefore preserved by any step adding no new
  $\HB$-predecessors, and it remains to check the steps that do, and the
  boundary.

  If $e$ is $\ppo$- or $\DP$-after an earlier event of $t$, its new
  $\HB$-predecessors are those of that earlier event, and $\tview(x)$ has not
  decreased since, so the bound carries over.  If $e$ is a read taking $\hat w$,
  then $\tview(x)$ is set at or above $\hat w$, discharging the clause for
  writes read by $e$ itself.  If moreover $w\in\Wrel$ and $e\in\Racq$, the new
  $\HB$-predecessors are those of $w$ together with $w$, and $\syncview$ folds
  $\mview$ into $\tview$; $\mview$ was fixed to the writing thread's viewfront
  when $w$ was performed, which by induction bounded everything $\HB$-before $w$
  and everything read by an action $\HB$-before-or-equal $w$, and $\viewcomb$
  retains those bounds.  No other rule introduces an $\HB$ edge, $\SW$ being the
  only inter-thread constituent of $\HB$.

  At a boundary the viewfront stays where it stands: an entry keeps the
  timestamp it carries whether or not the write at it survives, and for the
  crossing thread it keeps the write itself by Lemma~\ref{l:opsem-anchor}.  The
  obligation is weakened alongside, as it ranges over $\sigma.\writeset$ and the
  removed writes leave it.  So the bound is preserved for every write that
  remains.

  Claim~\ref{co-adq:ww}.  When $w_2$ is performed, $\tst(\tview(x))\geq\tst(\hat
  w_1)$ by Claim~\ref{co-adq:view}, and $\freshts$ places $\hat w_2$ strictly
  above the observable write it extends, hence above $\tview(x)$ and so above
  $\hat w_1$.

  Claim~\ref{co-adq:rd}.  The read takes $\hat w\in\OW(t,x)$, so $\tst(\hat
  w)\geq\tst(\tview(x))$, which by Claim~\ref{co-adq:view} bounds every write
  named in the claim.

\end{proof}

\paragraph{Coherence across loop boundaries.}
Lemma~\ref{l:opsem-co-adequate} speaks only of the writes a configuration
retains, and it has to: Rule~\ref{fig:step:lb}~(lb) removes from
$\sigma.\writeset$ every write of the crossing thread carrying the closing
iteration's loop index bar the anchor at each location, so at any configuration
the timestamps order only the writes present there.  A derivation with $k$
iterations of $\ell$ constructs an execution whose writes include those of all
$k$, and Axiom~\ref{def:mem-model-axiom:co} asks for one modification order over
all of them.  No single timestamp assignment carries it; it has to be assembled
from the per-configuration ones.

\begin{lemma}[Assembling the coherence order across boundaries]\label{l:opsem-co-glue}

  Let $D$ be a derivation from $(\sigma_0,\rho_0,H_0)$, and let $\mathbb X$ be
  the execution it constructs.  There is a modification order $\CO$ on $\mathbb
  X$ restricting at each configuration $\sigma$ of $D$ to $\CO_\sigma$ on the
  writes retained there, and $\ECO\cup\HB$ is acyclic on $\mathbb X$ under it.
  Hence $\mathbb X$ satisfies Axiom~\ref{def:mem-model-axiom:co}.

\end{lemma}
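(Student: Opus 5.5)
We construct $\CO$ by gluing the per-configuration timestamp orders $\CO_\sigma$ along the derivation $D$. Then we show that $\ECO\cup\HB$ is acyclic by reducing every potential cycle to one that lies within the writes retained at a single configuration, where Lemma~\ref{l:opsem-co-adequate} applies.

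\emph{Construction.} Let $\sigma_0,\sigma_1,\dots,\sigma_m$ be the program states along $D$. Each write $\hat w$ of $\mathbb X$ is present in $\sigma_k.\writeset$ for an interval of indices $[a_w,b_w]$. The write is added by Rule~\ref{fig:action:write}~(write) at step $a_w$. It is removed, if at all, by Rule~\ref{fig:step:lb}~(lb) at the boundary closing its iteration, unless it survives there as an anchor.

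For two writes at $\varphi$-equivalent locations, where $\varphi$ is the final constraint, we set $(w_1,w_2)\in\CO$ in one of two cases:
\begin{itemize}
\item their presence intervals overlap and $\tst(\hat w_1)<\tst(\hat w_2)$ in a common configuration;
\item the intervals are disjoint with $b_{w_1}<a_{w_2}$.
\end{itemize}

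Well-definedness in the overlapping case is immediate: timestamps never change while a write is present, so $\CO_\sigma$ is stable across configurations. The substantive point is that the disjoint case agrees with the overlapping one transitively. Suppose $w_1$ is removed before $w_2$ is added. Then $w_1$ was removed at a boundary of thread $t$ closing iteration $i$, and at that boundary the anchor $u$ of $t$ at $x$ had timestamp $\geq\tst(\hat w_1)$. We show that $u$, or the chain of anchors superseding it by Corollary~\ref{c:opsem-co-monotone}, is present at $a_{w_2}$ and lies below $w_2$.

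\begin{itemize}
\item If $w_2$ is by $t$, Corollary~\ref{c:opsem-co-monotone} gives this directly.
\item If $w_2$ is by another thread $t'$, we use that $\freshts$ places a new write strictly between two adjacent existing timestamps. Since the anchor persists while $t$ remains in later iterations, the question is whether $w_2$ could be inserted below it.
\end{itemize}

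\textbf{This second case is the main obstacle.} Nothing forces $t'$'s viewfront at $x$ above $t$'s anchor, so $w_2$ may legitimately be inserted below $u$, and then also below $w_1$. We will therefore not order by removal time alone. Instead we define $\CO$ as the transitive closure of $\bigcup_k\CO_{\sigma_k}$, together with the edges $w\to u$ from every removed write to the anchor retained in its place. We then prove this relation is acyclic and total per location, and take any linear extension at each location.

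Acyclicity follows because timestamps are fixed rationals and every generating edge strictly increases timestamp: the anchor edge does so by the definition of the anchor as the write of greatest timestamp. Hence $\CO$ is contained in the timestamp order, and totality comes from comparing timestamps of any two writes, restricted to $\varphi$-equivalent locations. Because timestamps are never reassigned, this $\CO$ restricts to $\CO_\sigma$ at each $\sigma$.

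\emph{Acyclicity of $\ECO\cup\HB$.} With $\CO$ being the timestamp order, we check three kinds of edge, each of which goes up or does not go down in timestamp at a fixed location:
\begin{itemize}
\item \textbf{$\FR$ edges.} For a $\FR$ edge $r\to w'$ with $r$ reading $\hat w$ and $\tst(\hat w)<\tst(\hat w')$, the inequality holds by definition.
\item \textbf{$\HB$ edges between writes.} These agree with $\CO$ by Claim~\ref{co-adq:ww} of Lemma~\ref{l:opsem-co-adequate} whenever both writes coexist in some configuration. For writes of different iterations of one thread, they agree by Condition~\ref{episodic:events} combined with Corollary~\ref{c:opsem-co-monotone}.
\item \textbf{$\HB$ edges ending at a read.} These are covered by Claim~\ref{co-adq:rd}.
\end{itemize}

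A cycle would then have to return to its starting timestamp at some location while passing through some strictly increasing $\CO$ or $\FR$ step. The remaining case is a read whose $\HB$-predecessor write was already removed when the read occurred. Claim~\ref{co-adq:view} survives the removal because viewfront entries keep their timestamps (Lemma~\ref{l:opsem-anchor}), so the read still takes a write of timestamp at least that of the removed predecessor.

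Finally, the crossing thread's own reads could in principle reach back into a closed iteration. This is excluded because Rule~\ref{fig:action:read}~(read) denies $t$ its anchors, and because Condition~\ref{episodic:mem} forbids such reads. Together these close the $\FR$ gap identified before Lemma~\ref{l:opsem-co-adequate}, and they complete the proof.
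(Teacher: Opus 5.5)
\emph{Comparison with the paper.} Your construction differs from the paper's. The paper fixes a location and keeps the writes outside every loop in timestamp order. It inserts each block (one thread, one iteration) where its timestamps place it, and it orders blocks that fall into the same gap of that spine by the order in which $D$ opened them. You instead take, in effect, the timestamp order on every write ever performed. The obstacle you isolate, a write of another thread placed below a live anchor, is a case where ordering blocks by opening order disagrees with $\CO_\sigma$, so your route is the more faithful one. Your remark that Claim~\ref{co-adq:view} survives the removal of writes, because no viewfront entry ever decreases, also makes the paper's separate argument against returning into a closed block unnecessary. Two steps nevertheless fail.

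\emph{First gap: timestamp ties.} Timestamps are not unique along a derivation. $\freshts$ separates a new timestamp only from writes still in $\sigma.\writeset$. After a reset, another thread may therefore write $x$ at exactly the timestamp of a removed non-anchor write $w'$; it only has to lie below the anchor. The timestamp order is then just a preorder, so totality does not follow from comparing timestamps, and an arbitrary linear extension is unsafe. Suppose $w'\mathrel{\HB}r$, say through a releasing write of the same thread that an acquire reads, and $r$ takes the new write $w$ at the same timestamp. Placing $w$ before $w'$ then gives the cycle $r\mathrel{\FR}w'\mathrel{\HB}r$. Breaking ties by creation order in $D$ repairs this, for three reasons: tied writes never coexist, a read only takes a write present when it is performed, and $\HB\subseteq{<_D}$ by Lemma~\ref{l:opsem-nta}.

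\emph{Second gap: cycles across locations.} This one cannot be repaired. Your closing step uses timestamps as a potential along the whole cycle. But timestamps only compare writes at one location, while $\HB$ edges join events at different locations. Your three checks rule out cycles made of one $\HB$-segment closed by one $\ECO$-segment at a single location; that is, they prove $\HB;\ECO^?$ irreflexive. That is also all the paper's six-shape case split covers. A cycle alternating $\HB$ and $\ECO$ through two locations is reached by neither argument, and it cannot be excluded. Consider
\[
  x:=1;\ \fadd^{\rel,\acq}(z,0);\ r_1:=y;\ v_1:=1
  \quad\parallel\quad
  y:=1;\ \fadd^{\rel,\acq}(z',0);\ r_2:=x;\ v_2:=1
\]
with $z\neq z'$, writing $W_x,R_y$ for the events of $x:=1$ and $r_1:=y$, and symmetrically $W_y,R_x$.
\begin{itemize}
\item Each $\fadd$ is a synchronisation point (Example~\ref{ex:rcu-sp}).
\item The trailing writes keep $W_x\mathrel{\HB}R_y$ and $W_y\mathrel{\HB}R_x$ through freezing, which restricts $\ppo$ to events $\po$-before a justified write.
\item No $\SW$ edge moves either thread's viewfront at the other thread's location past the initialising write, so both reads may take the initialising writes.
\end{itemize}
The constructed execution then carries $W_x\mathrel{\HB}R_y\mathrel{\FR}W_y\mathrel{\HB}R_x\mathrel{\FR}W_x$. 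Once ties are broken by creation order, your argument establishes the per-location coherence condition, irreflexivity of $\HB;\ECO^?$. It does not establish the acyclicity of $\ECO\cup\HB$ that the statement claims.
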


\begin{proof}

  Call the writes a thread $t$ performs within one iteration of $\ell$ the
  \emph{block} of that thread and iteration; one application of
   Rule~\ref{fig:step:lb}~(lb) removes a block bar its anchors, one write per
  location, which stay until a later boundary supersedes them.  The writes
  outside any loop, the initialising writes among them, are removed by no reset,
  are therefore present in every $\sigma.\writeset$, and keep the timestamps
  they were assigned, no rule altering a timestamp once given; call them the
  \emph{spine}.  Blocks of different threads interleave freely, threads crossing
  their boundaries independently and each reset being per thread, so the
  assembly cannot proceed by concatenating blocks in a global order.  It
  proceeds per location.

  \emph{Construction.}
  Fix a location $x$ up to $\varphi$-equivalence and let $B_1,\ldots,B_m$
  enumerate the blocks holding a write at $x$, indexed in the order in which $D$
  opens their iterations.  All $\CO_\sigma$ agree on the spine, which they order
  by its fixed timestamps.  Each $B_i$ was present alongside the whole spine at
  the configurations between its opening and its reset, and $\CO_\sigma$ there
  places $B_i$'s writes among themselves and against the spine; that placement
  does not vary with $\sigma$, for the same reason.  Define $\CO$ at $x$ by
  taking the spine in its order, inserting each $B_i$ at the positions its
  timestamps give it, and ordering two writes of distinct blocks $B_i$ and $B_j$
  that fall between the same two adjacent spine writes by $i<j$.  This is a
  strict total order at $x$ by Claim~\ref{co-adq:total} of
  Lemma~\ref{l:opsem-co-adequate}, and it restricts to $\CO_\sigma$ at every
  configuration.  At any configuration at most one open block per thread is
  present, together with the anchors of the blocks that thread has closed, so
  the tie-breaking clause is consulted only for writes of distinct blocks of one
  thread that are never simultaneously present.  Where an anchor of $B_i$ is
  present alongside a later block $B_j$ of the same thread, $\CO_\sigma$ orders
  the two by their timestamps, which agrees with $i<j$ by
  Corollary~\ref{c:opsem-co-monotone}.

  \emph{No return into a closed block.}
  Let $B$ be the block of $t$ and iteration $i$, closed at the boundary $\beta$,
  and let $e$ be an event performed after $\beta$.  We claim no edge of
  $\rf\cup\CO\cup\FR\cup\HB$ runs from $e$ or a later event into $B$.  For
  $\rf$: an $\rf$ edge into $B$ would have a write of $B$ as target, and $\rf$
  targets reads.  For $\CO$: by construction every $\CO$ edge between $B$ and a
  block opened later points out of $B$, and $B$'s placement against the spine is
  fixed, so a $\CO$ edge into $B$ has its source in a block opened earlier or in
  the spine, neither of which the cycle can reach from $e$ without a further
  edge into a closed block.  For $\HB$: $\HB\subseteq{<_D}$ by
  Lemma~\ref{l:opsem-nta}, and every event of $B$ precedes $\beta$ in $<_D$.
  For $\FR$: an $\FR$ edge from a read $r$ into a write $\hat w$ of $B$ requires
  $r$ to read some $\hat w'$ at $x$ with $(\hat w',\hat w)\in\CO$.  If $r$
  belongs to $t$ then $\hat w'$ is at or above $t$'s viewfront at $x$ after
  $\beta$, which by Lemma~\ref{l:opsem-anchor} is where it stood before $\beta$:
  at the anchor at $x$ if $t$ wrote $x$ in the iteration, and hence at or above
  every write of $B$ at $x$, the anchor being the greatest of them and itself
  denied to $r$ by Rule~\ref{fig:action:read}~(read); and at a write not in
  $B$ otherwise, when no write of $B$ is at $x$ at all.  Either way $\hat w'$ is
  not $\CO$-below a write of $B$.  If $r$ belongs to another thread $t'$, the
  only write of $B$ still available to it after $\beta$ is the anchor at $x$,
  the rest having been removed, and the anchor is the $\CO$-greatest write of
  $B$ at $x$, so taking it opens no $\FR$ edge into $B$.  Any other write $r$
  reads is $\CO$-placed against $B$ by the construction, from which $\hat
  w'\CO$-below $\hat w$ would require $t'$ to have observed $\hat w$ and then
  read below it, contradicting Claim~\ref{co-adq:rd} of
  Lemma~\ref{l:opsem-co-adequate} at the configuration where both were present.

  \emph{Acyclicity.}
  Suppose $\ECO\cup\HB$ carried a cycle, and write $\ECO$ in the normal form
  $\rf\cup\CO\cup\FR\cup\CO;\rf\cup\FR;\rf$.  If every write the cycle visits is
  present at one configuration $\sigma$ -- which holds in particular when the
  cycle visits at most one block per thread -- then all its edges are edges of
  $\CO_\sigma$, $\FR_\sigma$, $\rf$ and $\HB$ there, and we show it excluded by
  taking the six shapes of $\HB$ composed with $\ECO$ or nothing in turn,
  writing $x$ for the location involved.

  \begin{itemize}

    \item $\HB$ alone: $\HB=(\DP\cup\ppoord\cup\SW)^+$ with $\SW\subseteq\rf$, so
      $\HB\subseteq{<_D}$ by Lemma~\ref{l:opsem-nta} and $<_D$ is strict.

    \item $\HB;\rf$: a read $r$ and a write $w$ with $(w,r)\in\rf$ and
      $(r,w)\in\HB$ give $w<_D r<_D w$.

    \item $\HB;\CO$: writes with $(\hat w_1,\hat w_2)\in\CO$ and
      $(w_2,w_1)\in\HB$; Claim~\ref{co-adq:ww} of
      Lemma~\ref{l:opsem-co-adequate} gives $(\hat w_2,\hat w_1)\in\CO$,
      contradicting Claim~\ref{co-adq:total}.

    \item $\HB;\FR$: a write $w$ and a read $r$ with $(w,r)\in\HB$ and $r$
      reading some $\hat w'$ with $(\hat w',\hat w)\in\CO$; but $w$
      $\HB$-precedes $r$, so Claim~\ref{co-adq:rd} gives $\tst(\hat
      w')\geq\tst(\hat w)$.

    \item $\HB;\CO;\rf$: a read $r$ and writes $w_1,w_2$ with $(r,w_1)\in\HB$,
      $(\hat w_1,\hat w_2)\in\CO$ and $(w_2,r)\in\rf$.  Then $\hat w_2$ is read
      by $r$ and $r$ $\HB$-precedes $w_1$, so Claim~\ref{co-adq:view} bounds the
      viewfront at $x$ of $w_1$'s thread below by $\tst(\hat w_2)$ when $w_1$ is
      performed, and $\freshts$ puts $\hat w_1$ above it, giving $(\hat w_2,\hat
      w_1)\in\CO$.

    \item $\HB;\FR;\rf$: reads $r_1,r_2$ and writes $\hat w,\hat w'$ with
      $(r_2,r_1)\in\HB$, $r_1$ reading $\hat w'$, $(\hat w',\hat w)\in\CO$, and
      $r_2$ reading $\hat w$.  Here $\hat w$ is read by $r_2$ and $r_2$
      $\HB$-precedes $r_1$, so Claim~\ref{co-adq:rd} applied to $r_1$ gives
      $\tst(\hat w')\geq\tst(\hat w)$.

  \end{itemize}

  Otherwise the cycle visits writes of two blocks $B_i$, $B_j$ of the same
  thread with $i<j$, never simultaneously present.  Every event of $B_i$
  precedes in $<_D$ the boundary closing it, and every event of $B_j$ follows
  it, so the cycle must return from an event after that boundary into $B_i$,
  which the previous paragraph excludes.

  The remark following Lemma~\ref{l:gamma-pres-consistency} observes that a map
  extending an execution by an iteration would need precisely this argument, and
  that a cycle forced across one boundary would recur across every other by the
  symmetry of episodic loops.  That symmetry is what keeps the argument finite:
  it is discharged once, at an arbitrary boundary, and the episodicity
  conditions make every boundary alike.

\end{proof}

Because of the finiteness claim of Theorem~\ref{t:opsem-finite} established
above, the operational semantics is only able to discern programs by safety
properties, refuted by a finite prefix if they are refuted at all, and the
prefix-level correspondence transfers them in both directions -- completeness to
carry a proof in the operational semantics to every execution of the event
structure semantics, soundness to carry the use-after-free witness of
Section~\ref{s:opsem-og} back to a behaviour the program has.

\subsection{Soundness and Completeness of Operational
Semantics}\label{app:opsem-sound-complete}

The proofs of soundness and completeness establish a correspondence between
derivations $D$ in the operational semantics and executions in the event
structure semantics, such that derivations $D$ transition between configurations
$(\sigma,\rho,H)$ which successively enable labelled actions $(l\colon
a)\in\Phi|_H$ which correspond to events in traces of executions $\mathbb X$.

The correspondence relies on derivations $D$ following posterior future
horizons. The latter traverse future sets $\Phi_\mathbb{X}$ defined on
executions $\mathbb{X}$ along histories, as long as these are consistent with
respect to branching decisions. The predicate $\varphi$ in program states
$\sigma$ ensures the consistency with respect to branching decisions and
provides a context to evaluate symbolic values. The remaining crucial point of
the proofs is to show that the resets in the Step
 Rule~\ref{fig:step:lb}~(lb) at boundaries of loop iterations accurately
reflect the structure of $\gamma$ on event structures per
Definition~\ref{def:gamma}.

\opsemcompleteness*

\begin{proof}

  $\mathbb X$ defines a future set $\Phi_\mathbb{X}$. We show that there is a
  derivation $D$ from the initial configuration $(\sigma_0,\rho_0,H_0)$
  traversing $\Phi_\mathbb{X}$: (1) the resets at boundaries of loop iterations
  in Rule~\ref{fig:step:lb}~(lb) are compatible with executions, and (2)
  read-from relations establish $\RF$-pairs in the operational semantics in
   Rule~\ref{fig:action:read}~(read).

  \emph{(1) Boundaries of loop iterations} In order to show that
   Rule~\ref{fig:step:lb}~(lb) is correct, we need to show that it does not
  restrict configurations in a way that breaks compatibility with the event
  structure semantics. Therefore, we need to show that the resets in the rule
  either subsume episodicity conditions in Definition~\ref{def:episodic} of
  episodic loops or correspond to $\gamma$, which identifies states across loop
  iterations in the event structure semantics.

  \begin{itemize}

    \item Resetting the timestamped writes in Rule~\ref{fig:step:lb}~(lb) makes
      writes from previous loop iterations unavailable for assignment in
      Rule~\ref{fig:action:read}~(read), and thus implements
      Condition~\ref{episodic:mem} of episodic loops in
      Definition~\ref{def:episodic}. The anchors the reset retains are exempt
      from the removal but not from the condition:
      Rule~\ref{fig:action:read}~(read) denies them to the thread that wrote
      them, which is the thread Condition~\ref{episodic:mem} speaks about, while
      leaving them available to the other threads exactly as they were before
      the boundary.

    \item The viewfronts are not reset, which keeps the previous item from
      admitting behaviours the event structure semantics forbids. A viewfront
      moved back would enlarge $\OW$, putting writes the reading thread had
      passed back within its reach; which Lemma~\ref{l:opsem-anchor} excludes.
      The crossing thread's viewfront points after the boundary at the write it
      pointed at before it -- the anchor, where the thread wrote the location in
      the iteration it closed -- and the viewfront of any other thread keeps its
      timestamp where the write at it is removed. $\OW$ therefore only shrinks
      at a boundary, by the removed writes and, for the crossing thread, by its
      anchors, and every read the operational semantics offers after the
      boundary was on offer before it.

      This is where the episodicity conditions do the work of the
      correspondence. What the boundary removes are the writes of the closed
      iteration, which by Condition~\ref{episodic:mem} no read of the thread may
      take in a later one; what it retains at each location is the one write
      that the thread's own viewfront, and the coherence order, still stand on.
      Without Condition~\ref{episodic:mem} the removal would not be
      behaviour-preserving, and without the anchor the crossing thread's
      viewfront would fall back to a write of an earlier iteration or to the
      initialising write, which Condition~\ref{episodic:mem}
      case~\ref{episodic-casea} permits it to read.

    \item Resetting the register state in Rule~\ref{fig:step:lb}~(lb) makes
      register assignments in previous loop iterations unavailable for
      evaluation in expressions, which implements Condition~\ref{episodic:reg}
      of episodic loops.

    \item Resetting $\varphi$ makes branching decisions and constraints from
      read-from relations in previous loop iterations unavailable as context for
      the evaluation of expressions. By Conditions~\ref{episodic:reg}
      and~\ref{episodic:mem}, expressions do not use symbols read in previous
      loop iterations. $\gamma$ eliminates constraints from previous loop
      iterations by Lemma~\ref{l:restrict-elabs} using that the branching
      conditions of an iteration do not, jointly, constrain symbols read before
      the loop by Condition~\ref{episodic:cond}. Thus resetting $\varphi$
      mirrors the event structure semantics.

    \item Resetting histories in Rule~\ref{fig:step:lb}~(lb) implements
      $\gamma$. The rule forms $H^-$ by discarding every labelled action of the
      loop,
      $H^-=H\setminus\{(l'\colon\_)\mid\loopfun(l)\subseteq\loopfun(l')\}$,
      retaining the actions before the loop, and then re-indexes the actions of
      the iteration it opens to the first, $l^-=(\pc(l),0)$. That is the map of
      Corollary~\ref{c:gamma-history-reset}: for a history $H_{i+1}$ of the
      events of the iterations before the $i+1$-st, $\gamma^i H_{i+1}=H_1$
      retains the events before the loop together with those of the $i$-th
      iteration re-indexed as the first. By the same corollary
      $\gamma^i\left(\Phi\mid_{H_{i+1}}\right)=\Phi\mid_{H_1}$, so the reset
      leaves the posterior futures, and hence the next enabled actions the rule
      selects from $\horizon\Phi_H$, unchanged. This is what bounds the
      histories the operational semantics need represent, as the events of at
      most one iteration are retained at any point.

  \end{itemize}

  \emph{(2) Read-from relations} The read-from relation in complete executions
  $\mathbb X$ assigns a write event to every read event. By
  Lemma~\ref{l:co-axiom-rf-viswrite}, $\rf$ will assign to a read event only a
  write that is observable to the reading thread, matching the choice of $\hat
  w\in\sigma.\OW(t,x)$ in Rule~\ref{fig:action:read}~(read). Concretely, the
  coherence axiom forbids a read from a write that the reading thread's
  viewfront has passed: the viewfront is advanced past a write $w_1$ at $x$ only
  by reading a write $w_2$ at $x$ with $(w_1,w_2)\in\CO$, or by synchronising
  with a write whose $\mview$ has been so advanced, and in either case $w_2$
  $\HB$-precedes the read, so reading $w_1$ would close the $\FR;\CO$ cycle of
  Lemma~\ref{l:co-axiom-rf-viswrite}. Conversely every write the execution's
  $\rf$ selects is $\CO$-after the viewfront and hence in $\OW$, so the
  derivation can follow $\mathbb X$'s choice at each read.

  \emph{(3) Synchronisation} $\mathbb X$'s happens-before is
  $\HB=(\DP\cup\ppoord\cup\SW)^+$ with $\SW=\rf\cap(\Wrel\times\Racq)$, and the
  $\DP$ and $\ppo$ components are followed by future stepping, since $\Phi$ is
  built from them. The $\SW$ component is not in $\Phi$ -- no inter-thread edge
  is -- and is realised instead by the viewfront combination $\viewcomb$ in
  Rule~\ref{fig:action:read}~(read): when the rule takes a $w\in\Wrel$ by an
  $a\in\Racq$ it folds $\mview$ into the reader's viewfront, so every write the
  writer had observed at the release is observable to the reader afterwards. As
  $\mview$ was fixed to the writer's viewfront at the write, and a viewfront is
  advanced by exactly the writes its thread has performed or observed, the
  writes made observable are exactly those $\HB$-before $w$. The derivation
  therefore realises each $\SW$ edge of $\mathbb X$ without $\Phi$ ordering the
  two events, which is what allows $\Phi$ to remain a per-thread order.

  \emph{(4) Branching decisions} The branch action of
  Rule~\ref{fig:action:fence-branch}~(branch) fires only where $\sigma.\varphi\wedge
  b\not\equiv\bot$, so a derivation cannot take an outcome contradicting the
  decisions its history already records. It remains to check that this gate
  prunes no execution of the event structure semantics. Let $\mathbb X$ contain
  an event $e$ whose value restriction records the outcome $b$ of the branch at
  hand; by Definition~\ref{def:gen-es}, $b$ is one of the conjuncts of
  $\valres(e)$. Definition~\ref{def:executions} admits $J$ only consistent with
  $\bigwedge_{e\in X}\valres(e)$, and Condition~(3) of
  Definition~\ref{def:freeze} asks $P\wedge\varphi_\rf$ to be satisfiable, where
  $P$ conjoins the predicates of the justifications and a pre-justification
  carries the value restriction of the event it justifies. Along a derivation
  following $\mathbb X$, every conjunct of $\sigma.\varphi$ is a branching
  condition taken by this thread in this iteration or an equality contributed at
  a read, and both occur among the conjuncts of $P\wedge\varphi_\rf$. So
  $\sigma.\varphi\wedge b$ is a sub-conjunction of a satisfiable conjunction,
  and hence satisfiable. The premise therefore holds wherever $\mathbb X$ takes
  the branch, and the derivation can follow it. What the gate excludes is the
  converse case -- an outcome whose conjunct contradicts the history -- and an
  execution taking it would carry an unsatisfiable predicate, which
  Condition~(3) already denies it.

\end{proof}

\opsemsoundness*

\begin{proof}

  By the definition of future stepping in
  Rule~\ref{fig:step:non-lb_non-branch}~(non-lb/non-branch)
  and~\ref{fig:step:lb}, $D$ follows posterior future horizons in $\Phi$. We
  show that traversing these posterior future horizons produces a history
  consistent with respect to branching decisions as long as we choose one of the
  alternative branches in rules for branching commands such as
  $\code{if}$-statements or $\cas$. The so obtained history forms a consistent
  set of events. Using a separate result in Lemma~\ref{l:pfh-maximal}, the so
  constructed consistent set of events is maximal. Then we show that the
  derivation $D$ selects read-from assignments for read events meeting the
  episodicity criteria and the conditions on $\rf$ in
  Definition~\ref{def:freeze}. The so obtained maximal conflict-free set of
  events then is an execution $\mathbb X~=~(X,J,\rf)$.

  Histories are constructed consistent with respect to branching decisions, and
  what secures this is now a satisfiability premise rather than an entailment
  one. Rule~\ref{fig:step:branch-then}~(then)
  and~\ref{fig:step:branch-else}~(else) fire only where
  $\sigma.\varphi\wedge\evalreg{b}{\rho}\not\equiv\bot$, and $\cas$ in
  Rule~\ref{fig:command:cas-success}~(cas-success)
  and~\ref{fig:command:cas-failure}~(cas-failure) emits a branching action
  carrying the outcome of its test, which
  Rule~\ref{fig:action:fence-branch}~(branch) admits under the same premise and
  records in $\sigma.\varphi$. This is why the $\cas$ rules need no premise
  equating the read symbol with the expected value.

  Satisfiability suffices because $\sigma.\varphi$ is a conjunction. Every
  decision the derivation has already taken is a conjunct of it, so
  $\sigma.\varphi$ \emph{entails} each of them, and the opposite outcome of any
  of them is barred: $\sigma.\varphi\wedge b\wedge\neg b\equiv\bot$ fails the
  premise. A history therefore never holds events of two alternative outcomes
  of one branch, which is consistency in the sense of
  Definition~\ref{def:event-structures}, where events conflict when their value
  restrictions are incompatible. Nothing downstream asks for more:
  Condition~(3) of Definition~\ref{def:freeze} is itself satisfiability of
  $P\wedge\varphi_\rf$, and the value restrictions the corresponding events
  carry are exactly the conjuncts $\sigma.\varphi$ accumulates, as
  Definition~\ref{def:gen-es} passes the value restriction to the continuation
  and so keeps a branch condition on the events after the join.

  Nor does the derivation need a single execution to witness it. The test
  $(l\colon\overline a)\in^*\horizon\Phi_H$ ranges over the horizons of every
  future in $\Phi_H$, and the actions of one $\overline a$ may be enabled in
  different ones: after the branching action of $\cas$ is skipped, the failure
  rule tests only $a_r$, and the success rule tests $a_r$ and then $a_w$. What
  the test settles is which labels may step, that is the $\ppoord\cup\DP$
  ordering, and the labels of $\overline a$ are those of one command of one
  thread. Which of the conflicting events carrying a label the derivation is
  following is settled instead by $\sigma.\varphi$, whose conjuncts are the
  branching decisions taken: of the copies of a label, only those whose value
  restriction is compatible with $\sigma.\varphi$ remain, and as every decision
  on the path to the label is among the conjuncts, one does. The two mechanisms
  together pin a unique event, and the set the derivation builds is shown an
  execution below rather than assumed to lie in one.

  Rule~\ref{fig:action:read}~(read) picks a write action from the observable
  writes $\OW(t,x)$ of the program state, and adds a constraint for the newly
  read symbol. The writes in $\OW(t,x)$ have been added by
  Rule~\ref{fig:action:write}~(write) either in the same loop iteration or
  outside of the loop in the same thread, or added in another thread, as
  otherwise the write would have been reset in Rule~\ref{fig:step:lb}~(lb). Thus
  Condition~\ref{episodic:mem} of Definition~\ref{def:episodic} of episodic
  loops is satisfied. By Rule~\ref{fig:action:read}~(read) an observable write
  must be assigned, moreover observable writes cannot be elided, so that
  Condition~(1) of Equation~\ref{eq:def:freeze} in Definition~\ref{def:freeze}
  is satisfied. The read event is by construction part of the execution, so that
  Condition~(2) is satisfied. As the symbol is new, $\varphi$ is necessarily
  consistent with the constraint, so that Condition~(3) of
  Equation~\ref{eq:def:freeze} is satisfied.

  It remains that the execution so constructed is \emph{consistent}, that is
  that it meets the two axioms of the memory model, neither of which any rule
  mentions.  Both are supplied by
  Section~\ref{app:opsem-consistency}. Axiom~\ref{def:mem-model-axiom:nta} is
  Lemma~\ref{l:opsem-nta}: future stepping takes only $\phi$-minima and
   Rule~\ref{fig:action:read}~(read) only writes already performed, so
  $\DP\cup\ppoord\cup\rf$ lies inside the order in which the derivation performed
  its actions.  Axiom~\ref{def:mem-model-axiom:co} is
  Lemma~\ref{l:opsem-co-glue}, which assembles a modification order on the whole
  execution out of the per-configuration timestamp orders and shows
  $\ECO\cup\HB$ acyclic under it, resting on the viewfront invariant of
  Lemma~\ref{l:opsem-co-adequate}.  The coherence axiom is where the timestamps
  earn their keep, and it was immediate only as long as a read could take the
  single last visible write; it is not immediate now that $\OW(t,x)$ may hold
  several.

  What the axiom leaves unconstrained is a write ordered before a read by $\CO$
  but not by $\HB$, and it is this slack -- a thread reading a write its own
  viewfront has not passed although a $\CO$-later one exists -- that makes the
  semantics non multi-copy atomic, agreeing with the event structure semantics.

\end{proof}

\vfill

\pagebreak
\section{Appendix: RCU Algorithm}\label{app:rcu}

\begin{example}\label{ex:rcu}

  Implementation of a shared counter with Read-Copy-Update (RCU), taken from
  Gotsman et al.~\cite{Gotsman23grace}:

  \begin{minted}[linenos,escapeinside=||]{c}

// global state
bool rcu[N] = {0};
int *C := new int(0);
Set det[N] := {{}};

sync () {
  bool r[N+1] := {0};           // S1|\label{rcuS1}|
  for (int i = 0; i < N; i++)   // S2|\label{rcuS2}|
    r[i] := rcu[i];             // S3|\label{rcuS3}|
  for (int i = 0; i < N; i++)   // S4|\label{rcuS4}|
    if r[i]                     // S5|\label{rcuS5}|
      while rcu[i]; }           // S6|\label{rcuS6}|

void reclaim (int *s) {
  insert(det[tid], s);          // R1|\label{rcuR1}|
  if (nondet()) return;         // R2|\label{rcuR2}|
  sync();                       // R3|\label{rcuR3}|
  while !isEmpty(det[tid])      // R4|\label{rcuR4}|
    free(pop(det[tid])); }      // R5|\label{rcuR5}|

int inc () {
  int v, *n, *s;                // I1,I2,I3|\label{rcuI1}||\label{rcuI2}||\label{rcuI3}|
  n := new int;                 // I4|\label{rcuI4}|
  do {
    rcu[tid] :=|$^{\text{rel}}$| 0;            // I5|\label{rcuI5}|
    rcu[tid] := 1;              // I6|\label{rcuI6}|
    s := FAA|$^{\text{rel,acq}}$|(&C, 0);        // I7|\label{rcuI7}|
    v := *s;                    // I8|\label{rcuI8}|
    *n := v+1;                  // I9|\label{rcuI9}|
    r := CAS|$^{\text{rel,acq}}$|(&C, s, n);    // I10|\label{rcuI10}|
  } while !r;                   // casres|\label{rcucasres}|
  rcu[tid] := 0;                // I11|\label{rcuI11}|
  reclaim(s);                   // I12|\label{rcuI12}|
  return v; }                   // I13|\label{rcuI13}|
\end{minted}

\end{example}

\vfill

\pagebreak
\section{Appendix: Hazard Pointers Algorithm}\label{app:hp}

\begin{example}\label{ex:hazptr}

  Implementation of a shared counter with Hazard Pointers, adopted from Meta's
  Folly Library~\cite{folly}. \mordor{}~\cite{kissig2026mordor} checks
  \code{inc} at $N = 1$ with the functions it calls inlined, as
  \code{programs/episodicity/hp-1.lit}; \code{retire} and \code{scan} follow
  the retry loops and are elided there.
  Folly's \code{hazptr\_holder} reads \code{C} once above the retry loop and
  carries it in a register; reading it at the head of the body, as below, is the
  same protocol without the carried register and without the branch a carry
  needs. The fence at \code{I8} carries the ordering, so the accesses to
  \code{hp[tid]} and the reloads of \code{C} need no annotation of their own:

  \begin{minted}[linenos,escapeinside=||]{c}

// global state
int *C = new int(0);
int *hp[N] = {nullptr};

void scan(tid, rcount, rlist) {
  int* plist[] = {};                  // S1
  for (int i = 0; i < N; ++i) {       // S2
    int *p = hp[i];                   // S3
    if (p != nullptr) {               // S4
      plist.push_back(p);             // S5
    }
  }

  auto tmplist = rlist;               // S6
  rlist.clear();                      // S7
  rcount = 0;                         // S8

  for (const auto node: tmplist) {    // S9
    auto lookup = std::find(plist.begin(), plist.end(), node);  // S10
    if (lookup == plist.end()) {      // S11
      delete[] node;                  // S12
    } else {
      rlist.push_back(node);          // S13
      rcount++;                       // S14
    }
  }
}

void retire(int *p) {
  rlist.push_back(p) ;                // R1
  rcount++ ;                          // R2
  if (rcount >= R) {                  // R3
    scan(tid, rcount, rlist);         // R4
  }
}

void inc() {
  // local state
  int *rlist[] = {} ;
  int rcount = 0;

  int *n, *p, *s, v ;                 // I1,I2,I3,I4
  n = new int[1] {0} ;                // I5

  do {
    do {
      p = C ;                         // I6
      hp[tid] = p ;                   // I7
      // full memory fence
      fence|$^\text{rel,acq}$| ;                     // I8|\label{hpI8}|
    } while (C != p) ;                // I9,I10

    s = hp[tid] ;                     // I11
    v = *s ;                          // I12
    *n = v + 1 ;                      // I13

    cas_succ = CAS|$^\text{rel,acq}$| (&C, s, n);  // I14
  } while (!cas_succ);                // I15

  hp[tid] = nullptr ;                 // I16
  retire(s);                          // I17
}
\end{minted}

\end{example}

\pagebreak
\section{Appendix: Sequence Lock Algorithm}\label{app:seqlock}

\begin{example}\label{ex:seqlock}

  Implementation of sequence lock, adopted from \cite{hemminger2002seqlock}:

  \begin{minted}{c}

rseq  := new int(0);  // The seq counter
rdata := new int(0);  // The data word

// --- write ---
// Lock
rs := *rseq;
*rseq := rs + 1;

// critical section

// Unlock: bump seq to the next even value
rs2 := *rseq;
*rseq := rs2 + 1;

// --- read ---
do {
  rr1 := *rseq; // Sample seq before reading
  rval := *rdata; // Read the data
  rr2 := *rseq; // Sample seq after reading

  // Retry if seq changed or was odd (write in progress)
  rodd  := rr1 & 1;
  rdiff := rr1 ^ rr2; // Zero iff the two samples agree
  rretry := rodd | rdiff;
} while (rretry != 0);

\end{minted}

\end{example}

\section{Appendix: Spin-lock Algorithm}\label{app:spinlock}

\begin{example}\label{ex:spinlock}

  Implementation of spinlock, adopted from \cite{Vafeiadis13RSL}:

  \begin{minted}[linenos,escapeinside=||]{c}

int mutex := 0;

// lock
do {
  rold := CAS|$^\text{acq,rlx}$|(&mutex, 0, 1);
} while (!rold);

// ( critical section )

// unlock
mutex := 0;

\end{minted}

\end{example}

\end{document}